\title{Black hole microstates and supersymmetric localization}
\author{Seyed Morteza Hosseini}
\keywords{{} {PhD Thesis} {Physics} {University of Milano-Bicocca}}
\begin{document}

\frontmatter

\maketitle


\begin{thesisright}


\textcircled{c} Copyright by Seyed Morteza Hosseini, 2018.\\
All Rights Reserved.

\end{thesisright}

\begin{dedication} 

Missione compiuta!

\end{dedication}


\begin{declaration}

This dissertation is a result of my own efforts. The work to which it refers is based on my PhD research projects:

\begin{enumerate}[label=\Roman*]
 \item
  ``Large $N$ matrix models for 3d ${\cal N}=2$ theories: twisted index, free energy and black holes,''
  with A.~Zaffaroni,
  JHEP {\bf 1608}, 064 (2016)
  \href{http://arxiv.org/abs/1604.03122}{[arXiv:1604.03122 [hep-th]]}.

 \item
  ``Large $N$ topologically twisted index: necklace quivers, dualities, and Sasaki-Einstein spaces,''
  with N.~Mekareeya,
  JHEP {\bf 1608}, 089 (2016)
  \href{http://arxiv.org/abs/1604.03397}{[arXiv:1604.03397 [hep-th]]}.

 \item
  ``The Cardy limit of the topologically twisted index and black strings in AdS$_{5}$,''
  with A.~Nedelin and A.~Zaffaroni,
  JHEP {\bf 1704}, 014 (2017)
  \href{https://arxiv.org/abs/1611.09374}{[arXiv:1611.09374 [hep-th]]}.
 \item
  ``An extremization principle for the entropy of rotating BPS black holes in AdS$_{5}$,''
  with K.~Hristov and A.~Zaffaroni,
  JHEP {\bf 1707}, 106 (2017)
  \href{https://arxiv.org/abs/1705.05383}{[arXiv:1705.05383 [hep-th]]}.
 \item
  ``Holographic microstate counting for AdS$_4$ black holes in massive IIA supergravity,''
  with K.~Hristov and A.~Passias,
  JHEP {\bf 1710}, 190 (2017)
  \href{https://arxiv.org/abs/1707.06884}{[arXiv:1707.06884 [hep-th]]}.
\end{enumerate}
I hereby declare that except where specific reference is made to the work of 
others, the contents of this dissertation are original and have not been 
submitted in whole or in part for consideration for any other degree or 
qualification in this, or any other university. 


\end{declaration}


\begin{acknowledgements}      

I would like to acknowledge all the people who have supported and helped me along the way to complete this thesis.

During my time in Milano I had the privilege of interacting with many talented individuals.
I would foremost like to thank my advisor Alberto Zaffaroni for his trust (although he will deny it), continuous support,
encouragement, inspiration and for sharing his ideas with me.
He has been a great mentor and the miglior amico for me.
A very BIG thank you to him for his patience not to kill me over the years!
I am also very grateful to Alessandro Tomasiello for his constant support
and for helping me to start my carrier as a PhD student in Milano-Bicocca.
Our discussions have always been a pleasure and I wish to thank him
for sharing his thoughts with me.
I am thankful to all of my wonderful colleagues who contributed to the results presented in this thesis:
Kiril Hristov, Noppadol Mekareeya, Anton Nedelin and Achilleas Passias.
I also wish to extend my gratitude towards Francesco Benini, Nikolay Bobev,
Gianguido Dall'Agata, Sameer Murthy and Alessandro Tomasiello 
for agreeing to be my thesis committee and for their valuable comments on the draft of this thesis.

Thanks to all group members I always enjoyed working here.
I would like to thank Antonio Amariti, Tom\'a\v{s} Je\v{z}o,
Sara Pasquetti, Silvia Penati, Valentin Reys, Paul Richmond, Gautier Solard and \'{A}lvaro V\'{e}liz-Osorio
for many useful discussions (both about physics and life).
I would like to thank my Iranian friends in Milano:
Mani Azadmand, Maryam Barzegar, Iman Mehrabinezhad, Ziba Najmi, Hedieh Rashidi and Somayeh Razzaghi
for supporting me throughout the years.
Special thanks go to Federica Crespi for her fantastic work over the years.
I would like to thank my fellow PhD students and young colleagues for a wonderful time together:
Francesco Azzurli, Stefano Baiguera, Luca Cassia, Riccardo Colabraro, Erika Colombo, Niccol\`o Cribiori,
G. Bruno De Luca, Silvia Ferrario Ravasio, Andrea Galliani, Carolina G\'omez, Federico Granata, Andrea Legramandi,
Gabriele Lo Monaco, Andrea Rota and Michele Turelli.

On a personal level I should admit that I am in love with Italy, Italians and EATalian food!
Last but definitely not least, I would like to express my deepest gratitude towards my family for their love, support and
for always believing in me.
I would not have made it to this point without them.

\end{acknowledgements}

\begin{abstract}

This thesis focuses mainly on understanding the origin of the Bekenstein-Hawking entropy
for a class of four- and five-dimensional BPS black holes in string/M-theory.
To this aim, important ingredients are holography and supersymmetric localization.

For supersymmetric field theories with at least four real supercharges
the Euclidean path integrals on $\Sigma_\fg \times T^{n}$ $(n=1,2)$ can be calculated exactly
using the method of supersymmetric localization.
The path integral reduces to a matrix integral that depends on background magnetic fluxes
and chemical potentials for the global symmetries of the theory.
This defines the topologically twisted index
which, upon extremization with respect to the chemical potentials,
is conjectured to reproduce the entropy of magnetically charged static BPS AdS$_{4/5}$ black holes/strings.

We solve a number of such matrix models both in three and four dimensions and provide general formulae in the large $N$ limit by which one can construct
the large $N$ matrix model associated with a particular quiver.
This is found by rewriting the matrix integral of quiver theories,
such that the poles in the Cartan contour integral are described
by the so-called Bethe ansatz equations and then by an effective twisted superpotential.
One of the main results of this thesis is a \emph{universal} formula -- named the index theorem --
for extracting the index from the twisted superpotential, leading to the conjecture
that the field theory extremization principle equals the attractor mechanism in 4D $\mathcal{N}=2$ gauged supergravity.

We then use these results to provide the microscopic realization of
the entropy of a class of BPS black holes in $\mathcal{N}=2$ gauged supergravity.
In particular, for the near-horizon geometries constructed in the four-dimensional
dyonic $\mathcal{N}=2$ gauged supergravity, that arises as a consistent truncation of
massive type IIA supergravity on $S^6$, we derive the Bekenstein-Hawking area law.
Finally, inspired by our previous results, we put forward an extremization principle for reproducing the Bekenstein-Hawking entropy
of a class of BPS electrically charged rotating black holes in AdS$_5\times S^5$.

\end{abstract}
%
%


\tableofcontents




\printnomenclature

\mainmatter


\chapter{Introduction and summary}  
\label{ch:1}
\ifpdf
    \graphicspath{{Chapter1/Figs/Raster/}{Chapter1/Figs/PDF/}{Chapter1/Figs/}}
\else
    \graphicspath{{Chapter1/Figs/Vector/}{Chapter1/Figs/}}
\fi

Black holes have more lessons in store for us.
We can assign a macroscopic entropy to a black hole
equal to one quarter of the horizon area measured in Planck
units \cite{Bekenstein:1972tm,Bekenstein:1973ur,Bekenstein:1974ax,Hawking:1974rv,Hawking:1974sw}.
More specifically, we find that
\be
 \label{intro:area:law}
 S_{\text{BH}} = \frac{\text{Area}}{4 G_{\text{N}}} \, ,
\ee
where $G_{\text{N}}$ is the Newton's gravitational constant.
The number of black hole microstates $d_{\rm micro}$ should then be given by
\bea
 \label{d:micro}
 d_{\rm micro} = \e^{S_{\rm BH}} \, .
\eea
But where are the microstates accounting for the black hole entropy?
A consistent theory of quantum gravity seems to be required in order to answer this question.
String theory, the prime candidate for such a theory,
provides a precise statistical mechanical interpretation of the Bekenstein-Hawking entropy \eqref{intro:area:law}
by representing the black holes as bound states of D-branes \cite{Dai:1989ua,Horava:1989ga,Polchinski:1995mt} and strings,
which allow us to construct and study supersymmetric gauge theories.
In the seminal paper \cite{Strominger:1996sh} the statistical entropy of asymptotically flat BPS black holes
(Reissner-Nordstr\"{o}m black holes) in type IIB string theory compactified to
five dimensions on $S^1 \times K3$ has been successfully identified
with the logarithm of the bound states degeneracy.

The situation for asymptotically anti de Sitter (AdS) black holes in ${\rm D} \geq 4$ is rather different
since we do not know the D-branes description of this class of black holes.
The question is then, what is the microscopic origin of the Bekenstein-Hawking entropy
for AdS black holes?
The gauge/gravity duality (holography) provides a nonperturbative definition of quantum gravity.
More concretely, one can formulate questions regarding quantum gravity in bulk spacetimes
as problems in lower-dimensional gauge theories living on their boundaries.
The gauge/gravity duality therefore renders a natural way of understanding the thermodynamic
black hole entropy \eqref{intro:area:law} in asymptotically AdS spacetimes
in terms of states in a dual conformal field theory (CFT).

The counting of microscopic BPS states has been recently achieved \cite{Benini:2015eyy,Benini:2016rke}
for a class of AdS black holes in four dimensions coming from compactification of M-theory on $S^7$,
thanks to the \emph{supersymmetric localization} \cite{Pestun:2007rz}.
See \cite{Cabo-Bizet:2017jsl} for a generalization of this setup to AdS$_4$ black holes with hyperbolic horizon.
Recent attempts to compute the logarithmic corrections to the entropy of this class of black holes can be found in \cite{Liu:2017vll,Jeon:2017aif}.

The localization principle allows one to reduce the path integral of the theory into
a finite-dimensional integral, \ie\;{\it matrix integral}, and compute some \emph{exact}
results for supersymmetric observables in strongly coupled quantum field theories (QFTs).
It thus gives a very precise predictions for the gauge/gravity duality.
In this thesis we report on the progresses in this direction.
Of particular importance is the \emph{topologically twisted index} of three-dimensional $\cN = 2$
and four-dimensional $\cN = 1$ gauge theories --- with an R-symmetry and $(\fg - 1) R_I \in \bZ$
($R_I$ being the R-charges of the matter fields) ---
on $\Sigma_\fg \times T^n$ with a partial topological A-twist \cite{Witten:1988ze,Witten:1991zz}, along
the genus $\fg$ Riemann surface $\Sigma_\fg$ \cite{Benini:2015noa}.
Here $T^n$ is a torus with $n= 1, 2$.

Before we move on, let us review some of the basic building blocks of this dissertation.

\section{Topological twist}
\label{ch:1:susy:twist}

If one attempts to put a supersymmetric theory on a curved spacetime without touching its Lagrangian,
supersymmetry will usually be broken by the curvature terms.
Thus, one has to be careful about how to define the theory on a curved spacetime.
In this section we give a brief overview of placing a QFT on a compact curved manifold $\cM$ while preserving some supersymmetry.
We refer the reader to \cite{Festuccia:2011ws} for a detailed analysis of rigid supersymmetric field theories on curved manifolds.

A uniform approach to this problem is to couple the flat space theory to off-shell supergravity.%
\footnote{The same supergravity theory can have different off-shell formulations,
and depending on which supercurrent multiplets exist in the matter theory,
one can couple the theory to different off-shell formulations.
This can lead to different classes of supersymmetric backgrounds for the same theory.
For instance in four dimensions, the Ferrara-Zumino multiplet \cite{Ferrara:1974pz}
can be coupled to ``old minimal supergravity'' \cite{Stelle:1978ye,Ferrara:1978em,Fradkin:1978jq}
while the R-multiplet (which contains the conserved R-current) to the ``new minimal supergravity'' \cite{Akulov:1976ck,Sohnius:1981tp}.}
Next, we require that the fluctuations in the gravitational field being decoupled, such that it remains a classical background.
This can be done by taking the rigid limit: sending the Newton's constant to zero while keeping fixed the background for
the metric and the bosonic auxiliary fields.
In this limit, we do not solve the equations of motion and we only impose the supersymmetry condition
(the vanishing of the gravitino variation):%
\footnote{Here, for simplicity, we only include the metric and the background gauge field $V_\mu$ that couples to the R-symmetry.}
\be
 \label{ch:1:GKS}
 D_\mu \epsilon
 = \left( \partial_\mu + \frac14 \omega_\mu^{a b} \gamma_{a b} + i V_{\mu} \right) \epsilon = 0 \, ,
\ee
where $V_\mu$ is a background gauge field for the R-symmetry.
We call this the \emph{generalized Killing spinor} (GKS) equation, and it should be solved for the spinors $\epsilon$.
The number of solutions for $\epsilon$ is the number of preserved supercharges.

If the theory at hand has a continuous non-anomalous R-symmetry one may perform a \emph{topological A-twist}
\cite{Witten:1988ze,Witten:1991zz} to solve the GKS equation.
The topological twist amounts to an identification of the spin connection with the R-symmetry
\be
 V_{\mu} \epsilon = \frac{i}{4} \omega_\mu^{a b} \gamma_{a b} \epsilon \, .
\ee
This corresponds to a flux $\frac{1}{2\pi} \int_{C_2} W$ for the R-symmetry curvature $W=\rd V$, where $C_2$ is any
compact two-cycle in $\cM$. Due to the R-symmetry background magnetic flux we will restrict to theories with integer R-charges.
Hence, a simple solution to \eqref{ch:1:GKS} is a constant spinor $\epsilon$.
In this background the spinors behave as scalars since the R-symmetry background has twisted their spin.
Let us stress that, the topological twist works only if the spacetime holonomy group can be embedded
into the R-symmetry group of the QFT in flat space.

The topological twist is precisely the way that branes, wrapping on nontrivial cycles in string/M-theory compactifications,
preserve supersymmetry \cite{Bershadsky:1995qy}.

\section[Magnetic black holes/strings in gauged supergravity]{Magnetic black holes/strings in gauged supergravity}
\label{ch:1:intro:magnetic:BH}

In this section we discuss the main features of the black holes/strings we consider.
We look for four-dimensional BPS black hole solutions, preserving at least two real supercharges,
which interpolate between an AdS$_2 \times \Sigma_\fg$ near-horizon region and an asymptotic AdS$_4$ vacuum.
They are the near-horizon geometry of $N$ D2$_k$/M2-branes wrapping $\Sigma_\fg$ \cite{Gauntlett:2001qs}.
The first examples of such analytic solutions, with $\fg > 1$ and constant scalar fields, found in \cite{Romans:1991nq}
and later studied further in \cite{Caldarelli:1998hg}.
The numeric evidence for black holes whose event horizons are Riemann surfaces of arbitrary genus ($\fg \neq 1$)
and have nontrivial scalar fields appeared first in \cite{Cucu:2003yk} but their analytic
construction was discovered in \cite{Cacciatori:2009iz}
(see also \cite{DallAgata:2010ejj, Hristov:2010ri,Halmagyi:2013sla,Katmadas:2014faa,Halmagyi:2014qza}).
They are solutions of ${\cal N} = 2$ supergravity with a gauged $\U(1)$ R-symmetry group.
The aforementioned topological twist consists of the cancellation of the spin connection on $\Sigma_\fg$
by the R-symmetry gauge vector field, and requires that the black hole solutions carry nontrivial Abelian magnetic charges $\fn^\Lambda$.
In general, the black holes can also support electric charges $q_\Lambda$.
In the case of the maximally supersymmetric $\SO(8)$ gauged supergravity (arising from Kaluza-Klein reduction of eleven-dimensional
supergravity on $S^7$) this allows one to consider only an Abelian $\U(1)^4$ truncation
$(\Lambda=1,2,3,4)$, often called ``STU model'',
as was the case in \cite{Benini:2015eyy,Benini:2016rke}.

The dual boundary theory is a relevant deformation of some 3D supersymmetric Chern-Simons-matter theories,
semi-topologically twisted by the presence of the magnetic charges.
Reducing down to $S^1$, the theory gives rise to a supersymmetric quantum mechanics and
the partition function on $\Sigma_\fg \times S^1$ computes the Witten index of the $\cN=2$
quantum mechanical sigma model \cite{Hori:2014tda,Hwang:2014uwa,Cordova:2014oxa}.
This naturally leads to a renormalization group (RG) flow across dimensions,
connecting the CFT$_3$ dual to asymptotic AdS$_4$
vacuum in the ultraviolet (UV) and the CFT$_1$ dual to the near-horizon
AdS$_2 \times \Sigma_\fg$ geometry in the infrared (IR).
Along the RG flow, the UV superconformal R-symmetry of the three-dimensional theory
generically mixes with the flavor symmetries
and, at the one-dimensional fixed point, it becomes a linear combination of the reference
R-symmetry and a subgroup of the flavor symmetries.
The R-symmetry that sits in the $\mathfrak{su}(1,1|1)$ superconformal algebra in the IR
is determined by extremizing the topologically twisted index,
whose value at the extremum is the regularized number of ground states.%
\footnote{We are evaluating an equivariant holomorphic index that provides
a regularization for the continuum of the ground states of the corresponding IR quantum mechanics.}
This is the so-called $\cI$-\emph{extremization principle} proposed in \cite{Benini:2015eyy,Benini:2016rke}
(see section \ref{sec:intro:I-extremization}).

We will also consider \emph{magnetically} charged BPS black strings in five-dimensional $\cN = 2$ Abelian gauged supergravity.
Black string solutions corresponding to D3-branes at a Calabi-Yau singularity 
have been recently studied in details in \cite{Benini:2012cz,Benini:2013cda}
(see also \cite{Bobev:2014jva,Benini:2015bwz,Klemm:2016kxw,Amariti:2016mnz}).
They can be viewed as domain-walls interpolating between maximally supersymmetric AdS$_5$ vacuum at infinity
and the near-horizon AdS$_3 \times \Sigma_\fg$ geometry.
This can be interpreted as an RG flow from an UV four-dimensional $\cN=1$ superconformal field theory (SCFT)
and an IR two-dimensional $\cN=(0,2)$ one.
The two-dimensional CFT is obtained by compactifying the four-dimensional theory on $\Sigma_\fg$ 
with a topological twist parameterized by a set of background magnetic charges.
The right-moving central charge of the two-dimensional CFT has been computed in
\cite{Benini:2012cz,Benini:2013cda,Bobev:2014jva,Benini:2015bwz}, and successfully compared with the
supergravity result for a variety of models.


\section{Supersymmetric localization}
\label{ch:1:sec:susy:localization}

The partition function of a local QFT is given by the Euclidean Feynman path integral
\be
 \label{ch:1:partition function}
 Z = \int \cD \phi \, \e^{- S[\phi]} \, ,
\ee
where $\phi$ denotes the set of fields in the theory and $S$ is the action functional.
In order to evaluate the partition function of a theory one needs to integrate over all possible classical field configurations.
Thus, besides Gaussian free theories, the partition function is too hard to compute.
Most of the time one has to work with the so called saddle-point approximation:
expanding the action around free fields and using perturbation theory.
However, if a theory is supersymmetric one may use the technique of supersymmetric localization
and compute the partition function exactly (in the sense of reducing them to a matrix model).

The argument for localization proceeds as follows.
Consider path integrals of supersymmetric gauge theories on compact manifolds $\cM$:
\be
 \label{ch:1:susy:path integral}
Z_{\cM} = \int \cD \phi \, \e^{- S[\phi]} \, ,
\ee
and let $\delta$ be a Grassmann-odd symmetry of these theories $(\delta S = 0)$.
We assume that $\delta$ is not anomalous and thus the measure of the path integral is invariant under $\delta$.
Consider now a deformation of the theories by a $\delta$-exact term
\be
 \label{ch:1:deformed path integral}
 Z_\cM (t) = \int \cD \phi \, \e^{- S[\phi] - t \delta V} \, ,
\ee
with $t \in \bR_{>0}$.
It is easy to see that the value of the partition function is independent of $t$:
\be
 \label{ch:1:localization:argument}
 \frac{\partial}{\partial t} Z_\cM (t)
 = - \int \cD \phi \, \e^{- S[\phi] - t \delta V} \delta V
 = - \int \cD \phi \delta \left( \e^{- S[\phi] - t \delta V} V \right) = 0 \, ,
\ee
and hence one can evaluate it as $t \to \infty$. Here we assumed that the integral decays sufficiently fast in field space
so there are no boundary terms at infinity. In this limit, if $\delta V$ has a positive definite bosonic part $(\delta V)_{\text{B}}$,
the integral \emph{localizes} to a submanifold of field space where
\be
 \label{ch:1:localizing locus}
 (\delta V)_{\text{B}} (\phi_0) = 0 \, .
\ee
Obviously, the above argument still holds true if we insert $\delta$-exact operators (observables), \ie\,$\cO=\delta \cX$.
One may insert both local operators (located at a point in spacetime)
and nonlocal operators (located along a submanifold) in the path integral.
They can be defined in different ways: order operators such as Wilson lines;
disorder operators such as monopole and 't Hooft line operators in three and four dimensions, respectively;
defect operators with their own actions and coupled to the bulk, \eg
$\,S_{\text{D}} = \int_\gamma \rd t \bar \psi \left( \partial_t - i A_t \right) \psi \, .$
These operators have vanishing vacuum expectation values since
\be
 \langle \cO \rangle_\cM = \int \cD \phi \, \e^{- S[\phi]} \delta \cX
 = \int \cD \phi \delta \left( \e^{-S[\phi]} \cX \right) = 0 \, .
\ee

Let us parameterize the fields around the localizing locus \eqref{ch:1:localizing locus}:
\be
 \phi = \phi_0 + t^{-1/2} \hat \phi\, ,
\ee
where the factor $t^{-1/2}$ is chosen because when the deformation term dominates at large $t$,
the kinetic term should be canonically normalized with no powers of $t$.
For large $t$, we can Taylor expand the action around $\phi_0$ as
\be
 S + t \delta V = S[\phi_0] + (\delta V)^{(2)} [\hat \phi] + \cO(t^{-1/2}) \, .
\ee
Only the value of the classical action on the saddle-point configuration $S[\phi_0]$
and the quadratic expansion of $\delta V$ around the fixed points matters.
We thus obtain the localization formula:
\be
 Z_\cM = \int_{\delta V_{\text{B}} (\phi_0) =0} \cD\phi_0 \, \e^{-S[\phi_o]} \,Z_{1\text{-loop}} [\phi_0] \, ,
\ee
by Gaussian integration. $Z_{1\text{-loop}}$ is the one-loop determinant (the ratio of fermionic and bosonic determinants)
of the deformation term $\delta V$.

\section{Supersymmetric Chern-Simons-matter theories in three dimensions}

In this section we review the construction of 3D supersymmetric Chern-Simons-matter
theories on $S^2 \times S^1$ \cite{Benini:2015noa}.%
\footnote{The result of localization for 4D $\cN=1$ field theories on $S^2 \times T^2$
is simply the elliptic generalization of the result in three dimensions \cite{Closset:2013sxa,Benini:2015noa}.
We do not consider the detailed construction of this class of theories here; rather,
we present the final formula for the matrix model.}
In order to preserve supersymmetry on this background one must perform a topological twist on $S^2$.
Thus, one of the important assumptions is that the theory should have a continuous $\U(1)_R$ symmetry.

After summarizing our conventions for spinors we describe the $S^2 \times S^1$ background of interest
and the background fields that we need to turn on in order to preserve some supersymmetry.
Then we concentrate on the supersymmetry variations corresponding to the topologically twisted theory.
We introduce the anticommuting supercharges by trading the anticommuting Killing spinors
in $\delta_{\epsilon}$ and $\delta_{\bar{\epsilon}}$ for their commuting counterparts.
Finally, we write the supersymmetric Lagrangians for gauge and matter fields.
Lagrangians invariant under the supersymmetry transformations were studied in \cite{Hama:2010av, Hama:2011ea}.
We refer the reader to \cite{Klare:2012gn, Closset:2012ru} for a more systematic analysis of supersymmetry on three-manifolds.
In this section we will closely follow the presentation of appendix B of \cite{Benini:2013yva}.

\subsection{Spinor conventions}

We will work with Euclidean space.
In Euclidean signature all fields get complexified and we will consider $\;\bar{}\;$-ed fields as independent fields.
The Dirac spinors are in the $\bf{2}$ of $\SU(2)$ with the index structure: $\psi^\alpha$, $\bar \psi^\alpha$.
We use the standard Pauli’s matrices for the Dirac matrices in vielbein space:
$\gamma^a = \smat{0 & 1 \\ 1 & 0}$, $\smat{0 & -i \\ i & 0}$, $\smat{1 & 0 \\ 0 & -1}$,
and also $\gamma^{a b} = \frac12 (\gamma^a \gamma^b - \gamma^b \gamma^a)$.
We take the charge conjugation matrix $C = -i \epsilon_{\alpha \beta} = \gamma_2$
(where $\epsilon^{1 2} = \epsilon_{1 2} = 1$) so that $C = C^{-1} = C^\dag = - C^\trans = - C^*$.
The charge conjugate spinors are $\epsilon^c = C \epsilon^*$ and  $\epsilon^{c\dag} = \epsilon^\trans C$.
Note that $\epsilon^{cc} = - \epsilon$.
Writing the spinor indices explicitly, the bilinear products are constructed as
\be
 \bar \epsilon \lambda \equiv \bar \epsilon^\alpha C_{\alpha \beta} \lambda^\beta \, , \qquad
 \bar \epsilon \gamma^\mu \lambda \equiv \bar \epsilon^\alpha \left( C \gamma^{\mu} \right)_{\alpha \beta} \lambda^\beta \, ,
 \qquad \etc. \, .
\ee
Noticing that the charge conjugation matrix $C$ is antisymmetric and $C \gamma^\mu$ are symmetric, it is easy to check that
\be
 \bar \epsilon \lambda = \lambda \bar \epsilon \, , \qquad \bar \epsilon \gamma^\mu \lambda = - \lambda \gamma^\mu \bar \epsilon \, .
\ee
$C \gamma^{\mu \nu}$ are also symmetric since $\gamma^{\mu \nu} = i \epsilon^{\mu \nu \rho} \gamma_\rho/ \sqrt{g}$
(where $\epsilon^{\mu \nu \rho}=1$).
We also have the following Fierz identity for anticommuting 3D Dirac fermions
\bea
 \left( \bar \lambda_1 \lambda_2 \right) \lambda_3 =
 - \frac12 \left( \bar \lambda_1 \lambda_3 \right) \lambda_2
 - \frac12 \left( \bar \lambda_1 \gamma^\rho \lambda_3 \right) \gamma_\rho \lambda_2 \, .
\eea

\subsection{Background geometry}
\label{ch1:susy background}

We will consider three-dimensional $\cN=2$ field theories on $S^2 \times S^1$ with the round metric
\be
 \rd s^2 = R^2 \big( \rd\theta^2 + \sin^2\theta\, \rd\varphi^2 \big) + \beta^2 \rd t^2 \, .
\ee
In order to preserve supersymmetry we perform a partial topological A-twist on $S^2$.
The vielbein one-forms are $e^1 = R\, \rd\theta$, $e^2 = R\sin\theta\, \rd\varphi$ on $S^2$
and $e^3 = \beta\, \rd t$ with $t \sim t+1$ on $S^1$.
As we already discussed in the previous section, to perform the topological twist
we turn on a background gauge field that couples to the $\U(1)_R$ symmetry
such that it cancels the spin connection for half of the supercharges:
\be
 \label{ch1:twist:R-charge}
 V = \frac12 \cos\theta\, d\varphi = - \frac12 \omega^{12} \, .
\ee
This corresponds to a magnetic flux $\frac1{2\pi} \int_{S^2} W = - 1$ for the R-symmetry curvature $W=\rd V$.
Here $\omega_\mu^{a b}$ is the spin connection.
In our notation the supersymmetry spinor $\epsilon = \smat{\epsilon_+ \\ \epsilon_-}$
has R-charge $-1$ so that the GKS equations is solved by
\be
 \label{ch1:sol:GKS}
 \epsilon = \Big( \begin{array}{c} \epsilon_+ \\ 0 \end{array} \Big)
 \qquad\quad \text{with} \qquad\quad \epsilon_+ = \text{const.} \, .
\ee
Due to the R-symmetry background flux, we will limit to theories with integer R-charges.

\paragraph*{$\fakebold{\Omega}$-background on $\fakebold{S^2\times S^1}$.}%
If the metric on $S^2$ has a rotational symmetry around an axis,
we may introduce a one-parameter family of deformations called the \emph{$\Omega$-background} in \cite{Closset:2014pda},
\be
 \label{ch1:Omega:bg}
 \rd s^2 = R^2 \big( \rd\theta^2 + \sin^2\theta (\rd\varphi - \varsigma\, \rd t)^2 \big) + \beta^2 \rd t^2 \, .
\ee
We refer to the parameter $\varsigma$ as an angular momentum parameter and note that $\varsigma=0$ is the round $S^2\times S^1$.
Here we take vielbein
\be
 e\ud{a}{\mu} = \mat{ R & 0 & 0 \\ 0 & R\sin\theta & -R\varsigma \sin\theta \\ 0 & 0 & \beta} \, ,
\ee
and the coordinates have the same periodicity as before, \ie\;$t \sim t+1$, $\varphi \sim \varphi + 2\pi$.
We can still perform the topological twist by turning on the background connection $V = - \frac12 \omega^{12}$
coupled to the R-symmetry current, and the covariantly constant spinor \eqref{ch1:sol:GKS}.
We call this the ``refined'' case.

\paragraph*{$\fakebold{\Sigma_\fg \times S^1}$ background.}%
We can preserve supersymmetry on $\Sigma_\fg \times S^1$ where $\Sigma_\fg$ is a Riemann surface of
arbitrary genus $\fg$, with the same choice of $V=- \frac12 \omega^{12}$ and the same covariantly constant spinor \eqref{ch1:sol:GKS}.
In general the R-symmetry field strength is given by
\be
 W_{12} = \frac12 \varepsilon^{\mu\nu} W_{\mu\nu} =
 - \frac14 R_s \qquad\quad\text{and}\quad\qquad \frac1{2\pi} \int_{\Sigma_\fg} W = \fg - 1 \, ,
\ee
where $R_s$ is the scalar curvature on $\Sigma_\fg$.

\subsection{Supersymmetry transformations}

Let us define the gauge field strength $F_{\mu \nu} = \partial_\mu A_\nu - \partial_\nu A_\mu - i \left[ A_\mu , A_\nu \right]$
and the gauge and metric covariant derivative $D_\mu = \nabla_\mu - i A_\mu$,
where $\nabla_\mu = \partial_\mu + \frac14 \omega_\mu^{ab} \gamma_{ab}$ is the metric covariant derivative.
We will also turn on a background connection coupled to the $\U(1)_R$ symmetry current and thus
$
 D_\mu = \nabla_\mu - i A_\mu - i V_\mu \, .
$

The basic multiplets of three-dimensional $\cN=2$ supersymmetry are the gauge (vector) multiplet, the chiral multiplet and
the anti-chiral multiplet, arising by dimensional reduction to three dimensions of the four-dimensional $\cN=1$ supersymmetry multiplets.
The R-charge and the scaling weight of the various fields are:
\begin{equation*}
\label{table of charges used}
\begin{array}{c|cc|ccccc|cccccc}
 & \epsilon & \bar \epsilon & A_\mu & \sigma & \lambda & \bar \lambda & D & \phi & \bar \phi & \psi & \bar \psi & F & \bar F\\
 \hline
 R & -1 & 1 & 0 & 0 & -1 & 1 & 0 & r & -r & r-1 & 1-r & r-2 & 2-r\\
 \Delta & 1/2 & 1/2 & 1 & 1 & 3/2 & 3/2 & 2 & r & r & r+1/2 & r+1/2 & r+1 & r+1 \\
\end{array}
\end{equation*}
%
%
%
The gauge multiplet $\cV$, in Lorentzian signature, includes a vector $A_\mu$, one real scalar $\sigma$,
a Dirac spinor $\lambda$ and a real auxiliary scalar $D$, all in the adjoint representation of the gauge group $G$. 
They transform under supersymmetry as
\bea
 \label{gauge:susy:variations}
 \delta A_\mu &= - \frac i2 (\bar\epsilon \gamma_\mu \lambda - \bar\lambda \gamma_\mu \epsilon) \, ,
 \hspace{5cm} \delta \sigma = \frac12 (\bar\epsilon \lambda - \bar\lambda \epsilon) \, , \\
 \delta\lambda &= \frac12 \gamma^{\mu\nu} \epsilon F_{\mu\nu} - D \epsilon + i \gamma^\mu \epsilon D_\mu \sigma
 + \frac{2i}3 \sigma \gamma^\mu D_\mu \epsilon \, , \\
 \delta\bar\lambda &= \frac12 \gamma^{\mu\nu} \bar\epsilon F_{\mu\nu} + D\bar\epsilon
 - i \gamma^\mu \bar\epsilon D_\mu \sigma - \frac{2i}3 \sigma \gamma^\mu D_\mu \bar\epsilon \, , \\
 \delta D &= - \frac i2 \bar\epsilon \gamma^\mu D_\mu \lambda - \frac i2 D_\mu \bar\lambda \gamma^\mu \epsilon
 + \frac i2 [\bar\epsilon \lambda, \sigma] + \frac i2 [\bar\lambda \epsilon,\sigma]
 - \frac i6 ( D_\mu \bar\epsilon \gamma^\mu \lambda + \bar\lambda \gamma^\mu D_\mu \epsilon) \, ,
\eea
The chiral multiplet $\Phi$ consists of a complex scalar $\phi$, a Dirac spinor $\psi$ and a complex auxiliary scalar $F$,
all in a representation $\fR$ of the gauge group.
The anti-chiral multiplets $\wb \Phi = (\bar \phi, \bar \psi, \bar F)$ has the same components as a chiral multiplet,
all in the conjugate representation $\wb\fR$.
The supersymmetry transformations of a chiral multiplet are given by
\bea
  \label{matter:susy:variations}
 \delta\phi &= \bar\epsilon \psi \, , \qquad\qquad
 \delta\psi = i \gamma^\mu \epsilon\, D_\mu\phi + i \epsilon \sigma \phi
 + \frac{2 i r}3 \gamma^\mu D_\mu \epsilon \, \phi + \bar\epsilon F \, , \\
 \delta\bar\phi & = \bar\psi \epsilon \, , \qquad\qquad
 \delta\bar\psi = i \gamma^\mu \bar\epsilon \, D_\mu\bar\phi + i \bar\epsilon \bar\phi \sigma
 + \frac{2 i r}3 \gamma^\mu D_\mu \bar\epsilon \, \bar\phi + \epsilon \bar F \, , \\
 \delta F &= \epsilon \big( i \gamma^\mu D_\mu \psi - i \sigma \psi - i \lambda \phi \big)
 + \frac{i (2 r - 1)}{3} D_\mu\epsilon \, \gamma^\mu \psi \, , \\
 \delta \bar F &= \bar\epsilon \big( i \gamma^\mu D_\mu \bar\psi - i \bar\psi \sigma + i \bar\phi \bar\lambda \big)
 + \frac{i (2 r - 1)}{3} D_\mu \bar\epsilon \, \gamma^\mu \bar\psi \, .
\eea
Here $\epsilon$ and $\bar{\epsilon}$ are independent spinors fulfilling the Killing spinor equations
\be
 \label{ch1:Killing:spinor:eq}
 D_\mu \epsilon = \gamma_\mu \hat\epsilon \, ,
 \qquad\qquad D_\mu \bar\epsilon = \gamma_\mu \hat{\bar\epsilon} \, ,
\ee
for some other spinors $\hat\epsilon, \hat{\bar\epsilon}$.
Splitting $\delta = \delta_\epsilon + \delta_{\bar\epsilon}$, these transformations realize the $\mathfrak{su}(1|1)$ superalgebra off-shell.
In order for the algebra to close, the Killing spinors need to satisfy the additional constraints
\be
 \label{ch1:additional:constraint}
 \gamma^\mu \gamma^\nu D_\mu D_\nu \epsilon
 = -\frac38 \big(R_s - 2 i W_{\mu\nu} \gamma^{\mu\nu}\big) \, \epsilon \, ,
 \quad\quad \gamma^\mu \gamma^\nu D_\mu D_\nu \bar\epsilon
 = -\frac38 \big(R_s + 2 i W_{\mu\nu} \gamma^{\mu\nu}\big) \, \bar\epsilon\, ,
\ee
with the same functions $R_s$ and $W_{\mu\nu}$ \cite{Hama:2010av,Hama:2011ea}.
Consistency requires that $R_s$ is the scalar curvature of the three-manifold and
$W_{\mu\nu} = \partial_\mu V_\nu - \partial_\nu V_\mu$ is the background gauge field strength.
Then, the resulting algebra reads
\be
 \label{ch:1:susy:commutators}
 \left[\delta_\epsilon, \delta_{\bar\epsilon}\right] = \cL^A_\xi  + i\Lambda +  \rho \Delta + i\alpha R \, ,
 \qquad\qquad \left[\delta_\epsilon, \delta_{\epsilon} \right]=0, \qquad\qquad \left[\delta_{\bar\epsilon}, \delta_{\bar\epsilon}\right]=0 \;,
\ee
where the parameters are defined as
\bea
 \xi^\mu & = i \bar\epsilon \gamma^\mu \epsilon \, , \qquad\qquad &
 \rho & = \frac i3 ( D_\mu \bar\epsilon \gamma^\mu \epsilon + \bar\epsilon \gamma^\mu D_\mu \epsilon) = \frac13 D_\mu \xi^\mu \,  , \\
 \Lambda & = \bar\epsilon \epsilon \sigma \, , \qquad\qquad &
 \alpha & = - \frac 13 ( D_\mu \bar\epsilon \gamma^\mu \epsilon - \bar\epsilon \gamma^\mu D_\mu \epsilon) - \xi^\mu V_\mu \, .
\eea
The algebra \eqref{ch:1:susy:commutators} implies that the commutator $\left[\delta_\epsilon, \delta_{\bar\epsilon}\right]$
is a sum of translation by $\xi_\mu$, a gauge transformation by $\Lambda$, a dilation by $\rho$ and a vector-like R-rotation by $\alpha.$
The Lie derivative $\cL_\xi$ along the Killing vector field $\xi$ is a \emph{metric independent} derivation.
On a $p$-form it is defined as $\cL_\xi = \{\rd, \iota_\xi\}$ in terms of the contraction $\iota_X$;
using the normalization $\alpha = \frac1{p!} \alpha_{\mu_1 \cdots \mu_p} \rd x^{\mu_1 \cdots \mu_p}$,
in components we have
\be
 [\cL_\xi \alpha]_{\mu_1 \cdots \mu_p} =
 \xi^\mu \partial_\mu \alpha_{\mu_1 \cdots \mu_p}
 + p \, (\partial_{[\mu_1} \xi^\mu) \, \alpha_{\mu | \mu_2 \cdots \mu_p]} \, .
\ee
The Lie derivative of spinors \cite{Kosmann:1972} reads (see \cite{Godina:2003tc} for a more thorough discussion.)
\be
\cL_\xi \psi = \xi^\mu \nabla_\mu \psi + \frac14 \nabla_\mu \xi_\nu \, \gamma^{\mu\nu} \psi \, .
\ee
We denoted by $\cL^A_\xi$ the ``gauge covariant'' version of the Lie derivative along $\xi$.
It acts on sections of some gauge bundle. On tensors it is obtained by replacing
$\partial_\mu \to \partial^A_\mu = \partial_\mu - i A_\mu$,
while on spinors it is obtained by replacing $\nabla_\mu \to \nabla_\mu^A$ in the first term.
The gauge covariant Lie derivative of the connection, which does not transform as a section of the adjoint bundle, is defined as
\be
 \cL_\xi^A A = \cL_\xi A - \rd^A(\iota_\xi A) \, ,\quad \quad
 (\cL_\xi^A A)_\mu = \xi^\rho F_{\rho\mu} = \xi^\rho \big( 2\partial_{[\rho} A_{\mu]} - i[A_\rho, A_\mu] \big) \, .
\ee

\subsection{Localizing supercharges}

We now proceed with the construction of two complex supercharges $Q,\wt Q$ in terms of \emph{commuting} covariantly constant spinors
$\epsilon$ and $\tilde\epsilon = - C \bar \epsilon^*$.%
\footnote{$D_\mu \epsilon = 0$, $\gamma_3 \epsilon = \epsilon$ and similarly for $\tilde \epsilon$, with the same R-charge $-1$.}
They are built as follows
\be
 \delta = \delta_\epsilon + \delta_{\bar\epsilon} = \epsilon^\alpha Q_\alpha + \bar\epsilon^\alpha \tilde Q_\alpha \, ,\qquad\qquad
 Q = \epsilon^\alpha Q_\alpha \;,\qquad\qquad \tilde Q = \tilde\epsilon^{c\, \alpha} \tilde Q_\alpha = - (\tilde\epsilon^\dag C)^\alpha \tilde Q_\alpha \;.
\ee
For convenience, we also rewrite $\;\bar{}\;$-ed spinors as $\bar\lambda = C (\lambda^\dag)^\trans$.
We obtain for the vector multiplet:
\bea
\label{ch1:susy:transformations:gauge:multiplet}
 Q A_\mu &= \frac i2 \lambda^\dag \gamma_\mu \epsilon \, , \hspace{2cm}
 Q\lambda = \frac12 \gamma^{\mu\nu} \epsilon F_{\mu\nu} - D\epsilon
 + i \gamma^\mu \epsilon \, D_\mu\sigma + \frac{2i}3 \sigma \gamma^\mu D_\mu\epsilon \, , \hspace{-4.3cm} \\
 \wt Q A_\mu &= \frac i2 \tilde\epsilon^\dag \gamma_\mu \lambda  \, , \hspace{2cm}
 \wt Q \lambda^\dag = - \frac12 \tilde\epsilon^\dag \gamma^{\mu\nu} F_{\mu\nu}
 + \tilde\epsilon^\dag D + i \tilde\epsilon^\dag \gamma^\mu D_\mu\sigma
 + \frac{2i}3 D_\mu \tilde\epsilon^\dag \gamma^\mu \sigma \, , \hspace{-6cm} \\
 QD &= - \frac i2 D_\mu \lambda^\dag \gamma^\mu \epsilon
 + \frac i2 [\lambda^\dag \epsilon, \sigma ] - \frac i6 \lambda^\dag \gamma^\mu D_\mu\epsilon \, , &
 Q \lambda^\dag &= 0 \, , \quad \quad &
 Q\sigma &= - \frac12 \lambda^\dag \epsilon \, , \\
 \wt Q D &= \frac i2 \tilde\epsilon^\dag \gamma^\mu D_\mu \lambda
 + \frac i2 [\sigma, \tilde\epsilon^\dag\lambda] + \frac i6 D_\mu \tilde\epsilon^\dag \gamma^\mu \lambda \, , &
 \wt Q \lambda &= 0 \, , &
 \wt Q \sigma &= - \frac12 \tilde\epsilon^\dag \lambda \, .
\eea
and for the chiral multiplet:
\bea
\label{ch1:susy:transformations:chiral:multiplet}
 Q\phi &= 0 \, , &
 \wt Q \phi &= - \tilde\epsilon^\dag \psi \, , \\
 Q\phi^\dag &= \psi^\dag \epsilon \, , &
 \wt Q \phi^\dag &= 0 \, , \\
 Q\psi &= \big( i \gamma^\mu D_\mu \phi + i\sigma \phi) \epsilon + \frac{2ir}3 \phi \, \gamma^\mu D_\mu \epsilon \, , &
 \wt Q \psi &= \tilde\epsilon^\dag F \, , \\
 \wt Q \psi^\dag &= \tilde\epsilon^\dag \big( -i\gamma^\mu D_\mu \phi^\dag
 + i \phi^\dag \sigma \big) - \frac{2ir}3 D_\mu \tilde\epsilon^\dag \gamma^\mu \phi^\dag \, , &
 Q \psi^\dag &= - \epsilon^{c \dag} F^\dag \, , \\
 QF &= \epsilon^{c \dag} \big( i \gamma^\mu D_\mu\psi - i \sigma\psi - i \lambda\phi \big)
 + \frac{i(2r-1)}3 D_\mu \epsilon^{c \dag} \gamma^\mu \psi \, , &
 \wt Q F &= 0 \, , \\
 \wt Q F^\dag &= \big( - i D_\mu \psi^\dag \gamma^\mu - i \psi^\dag \sigma
 + i \phi^\dag \lambda^\dag \big) \tilde\epsilon^c
 - \frac{i(2r-1)}3 \psi^\dag \gamma^\mu D_\mu \tilde\epsilon^\dag \, , \quad \; &
 Q F^\dag &= 0 \, .
\eea
In the localization computation we will use the supercharge $\cQ \equiv Q + \wt Q$.

\subsection{Supersymmetric Lagrangians}
\label{ch:1:ssec:susy:Lag}

One can easily construct supersymmetric actions on $S^2 \times S^1$,
\be
 S = \int \rd^3 x \sqrt{g} \, \cL \, ,
\ee
with $Q \cL = \wt{Q} \cL =0$ up to a total derivative.
In the following, we will consider the Yang-Mills Lagrangian, the various Chern-Simons terms,
the matter kinetic Lagrangian and superpotential interactions.
When the theory has some continuous flavor symmetry $J^f$, we may turn on supersymmetric backgrounds for the bosonic fields
in the corresponding flavor vector multiplet $\cV^f = (A_\mu^f , \sigma^f, \lambda^f, \lambda^{f\dag} , D^f)$.
This accounts for turning on magnetic flavor fluxes on $S^2$, flat flavor connections on $S^1$, and real masses.
We remark that whenever the gauge group has an Abelian factor, the flavor group includes a ``topological'' $\U(1)$ subgroup.

We work in Euclidean signature and this requires to double the number of degrees of freedom in a given multiplet.
This can be realized formally by considering each field and its Hermitian conjugate as transforming independently under supersymmetry.
When performing the path integral over the fields of a multiplet,
we will have to choose a middle-dimensional contour reducing the number of real independent fields to its canonical value.
We pick the ``natural'' one, in which ``real'' fields are real while $^\dag$ is identified with the adjoint operation.
We call such a contour the \emph{real contour}. In our conventions all Lagrangian terms have a positive definite real bosonic part,
which guarantee the convergence of the path integral.

The supersymmetric Yang-Mills (YM) Lagrangian is
\be
 \label{ch1:YM:Lag}
 \cL_\text{YM} = \Tr\bigg[ \frac14 F_{\mu\nu} F^{\mu\nu} + \frac12 D_\mu\sigma D^\mu\sigma
 + \frac12 D^2 - \frac i2 \lambda^\dag \gamma^\mu D_\mu\lambda - \frac i2 \lambda^\dag [\sigma,\lambda] \bigg] \, .
\ee
The Lagrangian $\cL_\text{YM}$ can be written as a $\cQ$-exact term, up to total derivatives:
\be
 Q \wt Q \Tr \Big( \tfrac12 \lambda^\dag\lambda + 2D\sigma \Big) \,\cong\,
 \tilde\epsilon^\dag\epsilon\, \cL_\text{YM} \, .
\ee
When performing localization, the path integral is only sensitive to the cohomology of $\cQ$ meaning that
$\cQ$-exact operators do not affect the integral. Thus, this Lagrangian can be used in the localization procedure
as the deformation term (see section \ref{ch:1:sec:susy:localization}).

The supersymmetric completion of a bosonic Chern-Simons (CS) action, for each simple or Abelian factor, reads
\be
 \label{ch1:CS:Lag}
 \cL_{\text{CS}} = - \frac{ik}{4\pi} \Tr \bigg[ \epsilon^{\mu\nu\rho} \Big( A_\mu \partial_\nu A_\rho
 - \frac{2i}3 A_\mu A_\nu A_\rho \Big) + \lambda^\dag \lambda + 2D\sigma \bigg] \, .
\ee
In general one can have a different CS level $k \in \bZ$ for each simple or Abelian factor in the gauge group;
however, we will be schematic with our notation and use the simple expression above.
The CS action is not $\cQ$-exact. If there are various Abelian factors in the gauge group,
we can consider adding to the action mixed CS terms between them:
\be
 \label{ch1:L mixed CS}
 \cL_\text{mCS} = - \frac{ik_{12}}{2\pi} \bigg[ \epsilon^{\mu\nu\rho} A_\mu^{(1)} \partial_\nu A_\rho^{(2)}
 + \frac12 \lambda^{(1)\dag} \lambda^{(2)} + \frac12 \lambda^{(2)\dag}\lambda^{(1)} + D^{(1)}\sigma^{(2)}
 + D^{(2)}\sigma^{(1)} \bigg] \, .
\ee
The mixed CS terms play a crucial r\^ole in turning on background fluxes or holonomies for the topological symmetries.
Remember that in three dimensions, any $\U(1)$ gauge symmetry yields a global symmetry
associated to the current $J_T^\mu = (\star F)^\mu = \frac12 \epsilon^{\mu\nu\rho} F_{\nu\rho}$,
being automatically conserved by the Bianchi identity: $\rd \star J_T=\rd F=0$.
The corresponding global symmetry $\U(1)_T$ is called \emph{topological symmetry}.
We may turn on a background gauge field $A^{(T)}$ for the $\U(1)_T$ symmetry by coupling it through
\be
 \label{ch1:top:Lag}
 \int A^{(T)} \wedge * J_T = \int \rd^3 x\, \sqrt g\, \epsilon^{\mu\nu\rho} A_\mu^{(T)} \partial_\nu A_\rho \, ,
\ee
where $A^{(T)}$ belongs to an external vector multiplet
$\cV^{(T)} = (A_\mu^{(T)} , \sigma^{(T)} , \lambda^{(T)} , \lambda^{(T)\dag} , D^{(T)})$.
On the supersymmetric background we have to set to zero the variation of the fermions in the external multiplet.
From \eqref{ch1:susy:transformations:gauge:multiplet} we get the conditions $D^{(T)} = i F_{12}^{(T)}$ and $\sigma^{(T)} = \text{const.}$.
The full Lagrangian is the supersymmetric completion of \eqref{ch1:top:Lag},
being obtained from \eqref{ch1:L mixed CS} by setting $k_{12}=1$ and thinking of $(1)$
as the background topological symmetry and $(2)$ as the gauge symmetry:
\be
 \label{ch1:L topological}
 \cL_\text{T} = - i \frac{A_3^{(T)}}{2\pi} \Tr F_{12} - i \frac{F_{12}^{(T)}}{2\pi} \Tr (A_3 + i\sigma)
 - i \frac{\sigma^{(T)}}{2\pi} \Tr D \, .
\ee
Notice that the three terms are separately supersymmetric and $\sigma^{(T)}$, a real mass for the topological symmetry,
is indeed a Fayet-Iliopoulos (FI) term. 

We can also add a mixed CS term between the R-symmetry and an Abelian flavor (or gauge) symmetry:
\be
 \label{ch1:RCS}
 \cL_\text{RCS} = -\frac{ik_R}{2\pi} \Big( \epsilon^{\mu\nu\rho} A_\mu \partial_\nu V_\rho + i \sigma W_{12} \Big) \, .
\ee

The standard kinetic matter Lagrangian for the chiral and antichiral multiplets $\Phi$, $\Phi^\dag$ coupled to $\cV$ is given by:
\be
 \label{ch1:matter:Lag}
 \cL_\text{mat} = D_\mu \phi^\dag D^\mu\phi + \phi^\dag \big( \sigma^2 + iD - r W_{12} \big) \phi
 + F^\dag F + i \psi^\dag ( \gamma^\mu D_\mu -\sigma ) \psi - i \psi^\dag \lambda\phi + i \phi^\dag \lambda^\dag \psi \, ,
\ee
where $r$ is the R-charge of $\phi$.
The covariant derivative $D_\mu$ in \eqref{ch1:matter:Lag} comprise the gauge fields,
the background field $V$ for the R-symmetry and background fields for the flavor symmetries of the theory.
A background flavor vector multiplet consists of the bosonic components $F_{12}^f$, $A_3^f$, $\sigma^f$
and $D^f$ which need to fulfill $D^f = i F_{12}^f$ in order to preserve supersymmetry.
We see that $F_{12}^f$ represents a Cartan-valued background magnetic flux for the flavor symmetry,
\be
 \frac1{2\pi} \int_{S^2} F^f = \fn \, ,
\ee
$A_3^f$ is a flat connection (or Wilson line) along $S^1$, and $\sigma^f$ is a real mass associated with the flavor symmetry.
The flavor magnetic flux $\fn$ will join the magnetic flux for the R-symmetry, providing a family of topological twists. 

The Lagrangian $\cL_\text{mat}$ is $\cQ$-exact, up to total derivatives:
\be
 Q \wt Q \big( \psi^\dag \psi + 2i \phi^\dag \sigma \phi \big) \,\cong\, \tilde\epsilon^\dag\epsilon\, \cL_\text{mat} \, ,
\ee
and will be used in the localization procedure.

Given a gauge-invariant, holomorphic function $W(\Phi)$, and of R-charge $r=2$,
one can write the superpotential Lagrangians:%
\footnote{The two Lagrangians are seperately supersymmetric.}
\be
 \cL_W = i F_W \, ,\qquad\qquad \cL_{\wb W} = i F_W^\dag \, ,
\ee
where
\be
 F_W = \parfrac{W}{\Phi_i} F_i - \frac12\, \parfrac{^2W}{\Phi_i \partial\Phi_j} \psi_j^{c\dag} \psi_i\pdag \, ,
 \qquad\qquad F_W^\dag = \parfrac{\wb W}{\Phi_i^\dag} F_i^\dag - \frac12\,
 \parfrac{^2 \wb W}{\Phi_i^\dag \partial\Phi_j^\dag} \psi_j^\dag \psi_i^c \, ,
\ee
are the F-terms of the chiral multiplet $W(\Phi)$ and its antichiral sister.
The two Lagrangians are $\cQ$-exact, up to total derivatives, due to
$\cQ \big( i \epsilon^{c\dag} \psi_W \big) \cong \tilde\epsilon^\dag\epsilon\, \cL_W$
and $\cQ\big( {-i \psi_W^\dag \tilde\epsilon^c} \big) \cong \tilde\epsilon^\dag\epsilon\, \cL_{\wb W}$.
Since we are working with the Wick rotation of real Lorentzian Lagrangians,
we take the two functions $W$ and $\wb W$ complex conjugate.

Finally, we can include the following supersymmetric Wilson loop in a representation $R$:
\be
 \label{ch1:def Wilson loop}
 W = \Tr_R \Pexp \oint \rd\tau \big( iA_\mu \dot x^\mu - \sigma \, | \dot x| \big) \, ,
\ee
as in \cite{Kapustin:2009kz}.
Here $\mathcal{P}$ denotes the usual path-ordering operator, $x^\mu(\tau)$ is the closed world-line of the Wilson loop,
$\tau$ is a parameter on it, $\dot x^\mu \equiv \rd x / \rd \tau$ and $| \dot x|$ is the length of $\dot x^\mu$.
Its supersymmetry variation reads
\be
 Q W \;\propto\; - \frac12 \lambda^\dag \gamma_\mu \epsilon \, \dot x^\mu + \frac12 \lambda^\dag \epsilon \, |\dot x| \, .
\ee
For $QW=0$ (and $\wt QW = 0$) we need $\dot x^1 = \dot x^2 = 0$, \ie\;the loop should be along the vector field $e_3$.
In the unrefined case $e_3 = \beta^{-1} \partial_t$ so that we can place the loop at an arbitrary point on $S^2$ and along $t$.
In the refined case $e_3 = \beta^{-1}(\partial_t + \varsigma \partial_\varphi)$ and
$x^\mu(\tau) = (\theta_0, \varsigma\tau, \tau)$. Thus, for irrational values of $\varsigma$ the loop does not close leaving us
with two choices: tune $\varsigma$ to rational values or place the loop at one of the two poles of $S^2$.

\section{ABJM theory}
\label{ch1:ABJM theory}

The ABJM theory \cite{Aharony:2008ug} describes the low-energy dynamics of $N$ M2-branes on $\bC^4 / \bZ_k$.
It is a three-dimensional supersymmetric Chern-Simons-matter theory with gauge group $\U(N)_k \times \U(N)_{-k}$,
four matter supermultiplets $(A_i, B_j)$, $i,j=1,2$, in bi-fundamental representations
and a manifest $\cN=6$ superconformal symmetry.
For $k=1,2$, the ABJM theory has monopole operators being conformal primaries of dimension 2
and transforming as vectors under Lorenz transformations. Such operators must be conserved currents,
which enables one to show that the superconformal symmetry is enhanced to $\cN=8$ \cite{Aharony:2008ug,Benna:2009xd,Bashkirov:2010kz}.
Using standard $\cN=2$ notation, this theory can be described by the quiver diagram
\bea
\begin{tikzpicture}[baseline, font=\footnotesize, scale=0.8]
\begin{scope}[auto,%
  every node/.style={draw, minimum size=0.5cm}, node distance=2cm];
\node[circle] (USp2k) at (-0.1, 0) {$N_k$};
\node[circle, right=of USp2k] (BN)  {$N_{-k}$};
\end{scope}
\draw[decoration={markings, mark=at position 0.9 with {\arrow[scale=1.5]{>}}, mark=at position 0.95 with {\arrow[scale=1.5]{>}}}, postaction={decorate}, shorten >=0.7pt]  (USp2k) to[bend left=40] node[midway,above] {$A_{i}$} node[midway,above] {} (BN) ; 
\draw[decoration={markings, mark=at position 0.1 with {\arrow[scale=1.5]{<}}, mark=at position 0.15 with {\arrow[scale=1.5]{<}}}, postaction={decorate}, shorten >=0.7pt]  (USp2k) to[bend right=40] node[midway,above] {$B_{j}$}node[midway,above] {}  (BN) ;  
\end{tikzpicture}
\eea
with two nodes representing the Chern–Simons theories, and four arrows  between the nodes representing
the bi-fundamental chiral multiplets. In addition, there is a quartic superpotential
\be
 W = \Tr ( A_1 B_1 A_2 B_2 - A_1 B_2 A_2 B_1 ) \, .
\ee
The ABJM theory has a manifest $\SU(2) \times \SU(2) \times \U(1)_T \times \U(1)_R$ global symmetry:
under the first $\SU(2)$ factor the $A_i$ transform as a doublet, and under the second $\SU(2)$ factor the $B_j$ transform as a doublet;
$\U(1)_T$ is the topological symmetry associated to the topological current $J_T = \star \Tr (F - \wt F)$
where $F$, $\wt F$ are the two field strengths;
$\U(1)_R$ is the R-symmetry under which $A_i$ and $B_j$ get multiplied by the same phase.

\section{The topologically twisted index}
\label{ch:1:twisted index:solving}

The topologically twisted index is the partition function for three- and four-dimensional gauge theories
with at least four supercharges on $\Sigma_\fg \times T^n$ $(n=1,2)$.
When it is refined with chemical potentials and background magnetic charges for the flavor symmetries,
it becomes an efficient tool for studying the nonperturbative properties of
supersymmetric gauge theories \cite{Closset:2013sxa,Benini:2015noa,Honda:2015yha,Closset:2015rna,Benini:2016hjo,Closset:2016arn}.
The large $N$ limit of the index contains interesting information about theories with a holographic dual.
In particular, the large $N$ limit of the index for the three-dimensional ABJM theory
was successfully used in \cite{Benini:2015eyy,Benini:2016rke} to provide
the first microscopic counting of the microstates of an AdS$_4$ black hole.
In this dissertation we extend the analyses of \cite{Benini:2015eyy,Benini:2016rke} to a wider class of field theories.

The index can be evaluated using supersymmetric localization and it reduces to a matrix model.
It can be written as the contour integral 
\begin{equation}
 \label{ch:1:index:generic:C}
 Z (\fn, y) = \frac1{|\mathfrak{W}|} \; \sum_{\fm \,\in\, \Gamma_\fh} \; \oint_\cC Z_{\text{int}} (\fm, x;  \fn , y) \, ,
\end{equation}
of a meromorphic differential form in variables $x$ parameterizing the Cartan subgroup
and subalgebra of the gauge group $G$, summed over the lattice of magnetic charges $\fm$ of the gauge group.
The index depends on complex fugacities $y_I = \e^{i \Delta_I}$ and magnetic charges $\fn_I$ for the flavor symmetries.
Here, $|\mathfrak{W}|$ is the order of the Weyl group of $G$.
Supersymmetric localization chooses a particular contour of integration $\cC$
and the final result can be recast in terms of the Jeffrey-Kirwan (JK) residue.
We refer the reader to \cite{Benini:2015noa,Benini:2016hjo} for a thorough analysis of the contour of integration.
As a difference with other well known matrix models arising from supersymmetric localization,
like the partition function on $S^3$ \cite{Kapustin:2009kz,Jafferis:2010un,Hama:2010av} or
the superconformal index \cite{Kim:2009wb}, in the large $N$ limit all the gauge magnetic fluxes
contribute to the integral making difficult its evaluation.
Here we use the strategy employed in \cite{Benini:2015eyy} to explicitly resum
the integrand and consider the contour integral of the sum%
\footnote{Here the sum is over a wedge $\Gamma_{\fh}^{\text{JK}}$ inside the magnetic lattice,
for which $Z_{\text{int}} (\fm, x; \fn , y)$ has poles inside the JK contour.}
\begin{equation}
 Z_{\text{resummed}} ( x;  \fn , y)
 = \frac1{|\mathfrak{W}|} \; \sum_{\fm \,\in\, \Gamma_{\fh}^{\text{JK}}}  Z_{\text{int}} (\fm, x; \fn , y) \, ,
\end{equation}
which is a complicated rational (in three dimensions) or elliptic (in four dimensions) function of $x$.
One can write a set of algebraic equations for the position of the poles, which we call \emph{Bethe ansatz equations} (BAEs)
(they actually are the BAEs of the dimensionally reduced theory on $\Sigma_\fg$ in the formalism of \cite{Nekrasov:2014xaa}),
and an \emph{effective twisted superpotential} $\wt\cW$ (or Yang-Yang functional \cite{Yang1969}) whose derivatives reproduce the BAEs.
The topologically twisted index is then given by the sum of the residues of $Z_{\text{resummed}}$ at the solutions to the BAEs. 

Let us note that in a three-dimensional $\cN = 2$ (four-dimensional $\cN = 1$) theory,
the R-symmetry can mix with the global symmetries and we can also write
\be
\fn_I = r_I + \fp_I \, ,
\ee
where $r_I$ is a reference R-symmetry and $\fp_I$ magnetic charges under the flavor symmetries of the theory.
Both $y_I$ and $\fn_I$ are thus parameterized by the global symmetries of the theory.
The invariance of each monomial term $W_a$ in the superpotential
under the symmetries of the theory imposes the following constraints
\be\label{supconstraints:intro}
 \prod_{I \in W_a} y_I = 1 \, , \qquad \qquad \sum_{I \in W_a} \fn_I = 2 (1 - \fg) \, ,
\ee
where the latter comes from supersymmetry, and, as a consequence,
\be\label{supconstraints2:intro}
 \sum_{I \in W_a} \Delta_I \in 2 \pi \mathbb{Z}  \, .
\ee
Here, the product and the sum are restricted to the fields entering in the monomial $W_a$.

Finally, the number of supersymmetric ground states $d_{\text{micro}}(\fn_I ,q_I)$ in the \emph{microcanonical} ensemble is given by
the Fourier transform of \eqref{ch:1:index:generic:C} with respect to the $\Delta_I$'s:
\be
 \label{ch:1:micro:density}
 d_{\text{micro}} (\fn_I ,q_I) = \int_{0}^{2\pi} \int_{0}^{2\pi} \dots \int_{0}^{2\pi}
 \bigg( \prod_{I}\frac{{\rm d} \Delta_I}{2 \pi} \bigg)
 \prod_a \delta\Big( \e^{i \sum_{I \in W_a} \Delta_I} - 1 \Big) \, Z (\fn_I , \Delta_I) \, \e^{- i \sum_{I} \Delta_I q_{I}} \, .
\ee

For simplicity, we restrict our discussion to the case of $\Sigma_\fg=S^2$ in the rest of the dissertation
since the generalization to an arbitrary Riemann surface is straightforward. Indeed, in the large $N$ limit,
the higher genus partition function receives a simple modification, as discussed in \cite{Benini:2016hjo}, as follows,
\begin{align}
 \label{ch:1:higher genus index}
 \log Z_{\fg \neq 1} (\fn_I) = (1 - \fg) \log Z_{\fg=0} (\fn_I / (1-\fg)) \, .
\end{align}

\subsection[\texorpdfstring{$\mathcal{N}=2$}{N=2} Chern-Simons-matter theories on \texorpdfstring{$S^2 \times S^1$}{S**2 x S**1}]{$\fakebold{\cN=2}$ Chern-Simons-matter theories on $\fakebold{S^2 \times S^1}$}
\label{index}

The topologically twisted index of a three-dimensional $\cN = 2$ supersymmetric Chern-Simons theory
with gauge group $G$ of rank $r$ and a set of chiral multiplets transforming in representations $\fR_I$ of $G$
is given by \cite{Benini:2015noa}:%
\footnote{For further developments see \cite{Benini:2016hjo,Cabo-Bizet:2016ars,Closset:2016arn,Closset:2017zgf}.}
\be
 \label{path}
 Z (\fn, y) = \frac1{|\mathfrak{W}|} \; \sum_{\fm \,\in\, \Gamma_\fh} \; \oint_\cC \;   \prod_{\text{Cartan}}
 \left (\frac{\rd x}{2\pi i x}  x^{k \fm} \right ) \prod_{\alpha \in G} (1-x^\alpha) \,  \prod_I \prod_{\rho_I \in \fR_I}
 \bigg( \frac{x^{\rho_I/2} \, y^{\rho^f_I / 2}}{1-x^{\rho_I} \, y^{\rho^f_I}} \bigg)^{\rho_I(\fm)- \rho^f_I (\fn)  +1}  \, ,
\ee
where the index $I$ runs over all matter fields in the theory.
Given a weight $\rho_I$ of the representation $\fR_I$, we use the notation $x^{\rho_I} = \e^{i \rho_I(u)}$.
$\alpha$ are the roots of $G$ and $\rho^f_I$ is the weight of the chiral multiplet under the flavor symmetry group.
In this formula,%
\footnote{$\beta$ is the radius of $S^1$.}
$x=\e^{i(A_t + i\beta \sigma)}$ parameterizes the gauge zero modes,
where $A_t$ is a Wilson line on $S^1$ and runs over the maximal torus of $G$ while $\sigma$ is the real scalar in  the vector
multiplet and runs over the corresponding Cartan subalgebra. $\fm$ are  gauge magnetic fluxes living in the co-root
lattice $\Gamma_\fh$ of $G$ (up to gauge transformations).
$k$ denotes the Chern-Simons coupling for the group $G$,
and there can be a different one for each Abelian and simple factor in $G$.

As already discussed, each Abelian gauge group in three dimensions is associated with a topological $\U(1)$ symmetry.
The contribution of a  topological symmetry with fugacity $\xi =\e^{i \Delta_m}$ and magnetic flux $\ft$ to the index is given by
\be
\label{topological}
Z^\text{top} = x^\ft \, \xi^\fm \, ,
\ee
where $x$ is the gauge variable of the corresponding $\U(1)$ gauge field.

Th index can be interpreted as a trace over a Hilbert space of states $\cH$ on $S^2$,
\begin{equation}
 \label{intro:3d:trace}
 Z(\fn, v) = \Tr\nolimits_{\cH} (-1)^{F} \e^{-\beta H} \e^{i \sum_I \Delta_I J_I} \, ,
\end{equation}
where $J_I$ are the generators of the flavor symmetries.
The Hamiltonian $H$ on $\Sigma_\fg$ explicitly depends on the flavor magnetic fluxes $\fn_I$ and the real masses $\sigma_I$.
Due to the supersymmetry algebra $Q^2 = H - \sigma_I J_I$ only states with $H=\sigma_I J_I$ contribute.
The index is a holomorphic function of $v_I$ with $v_I = \Delta_I + i \beta \sigma_I$.
We also identify $\Delta_I$ with flavor flat connections.

The partition function for theories in the $\Omega$-background (see section \ref{ch1:susy background}) is
\bea
 \label{conclusions:path:refined}
 Z (\fn, y , \zeta) = \frac1{|\mathfrak{W}|} \; & \sum_{\fm \,\in\, \Gamma_\fh} \; \oint_\cC \;
 \prod_{\text{Cartan}} \left (\frac{\rd x}{2\pi i x}  x^{k \fm} \right )
 \zeta^{-\sum_{\alpha>0} |\alpha(\fm)|} \prod_{\alpha \in G} \big(1-x^\alpha \zeta^{|\alpha(\fm)|}\big) \\
 & \times \prod_I \prod_{\rho_I \in \fR_I} \big( x^{\rho_I} y^{\rho^f_I} \big)^{B / 2} \;
 \frac{\big(x^{\rho_I} y^{\rho^f_I} \zeta^{1+B}; \zeta^2 \big)_\infty}{\big(x^{\rho_I} y^{\rho^f_I} \zeta^{1-B}; \zeta^2\big)_\infty} \, ,
 \qquad B = \rho_I(\fm) - \rho^f_I(\fn) + 1 \, ,
\eea
where $\zeta = \e^{i \varsigma / 2}$ is the fugacity for the angular momentum $L_\varphi$ of rotations along $\varphi$,
and the $q$-Pochhammer symbol is defined as
\be
 ( x; q )_{\infty} = \prod_{n = 0}^{\infty} ( 1 - x \, q^{n} ) \, ,  \qquad \text{ for } \; 0 \leq q < 1 \, .
\ee
The contribution of a $\U(1)$ topological symmetry is the same as before.

Reducing down to $S^1$, the $\Omega$-deformed partition function computes the quantum mechanical index
\be
 \log Z (\fn, v , \varsigma) = \Tr_\cH (-1)^F \e^{- \beta H} \e^{i \sum_{I} J_I \Delta_I} \e^{i \varsigma L_\varphi} \, .
\ee

In this dissertation we shall evaluate the \emph{unrefined} index at large $N$ for real $v_I$,
setting all real masses $\sigma_I$ to zero.
One can easily extend it to the complex plane employing holomorphy.

\subsection[\texorpdfstring{$\mathcal{N}=1$}{N=1} gauge theories on \texorpdfstring{$S^2 \times T^2$}{S**2 x T**2}]{$\fakebold{\cN=1}$ gauge theories on $\fakebold{S^2 \times T^2}$}
\label{The topologically twisted index}

The topologically twisted index of an $\cN = 1$ gauge theory with vector and chiral multiplets
and a non-anomalous $\U(1)_R$ symmetry in four dimensions is given by \cite{Benini:2015noa}%
\footnote{One can pull out a gauge independent factor --- the supersymmetric Casimir energy ---
from \eqref{path integral index} \cite{Closset:2017bse}.}
\be
\begin{aligned}
 \label{path integral index}
 Z (\fn, y , q) = \frac1{|\mathfrak{W}|} \; \sum_{\fm \,\in\, \Gamma_\fh} \; \oint_\cC \;  & \prod_{\text{Cartan}} \left (\frac{\rd x}{2\pi i x}  \eta(q)^{2} \right )
 (-1)^{\sum_{\alpha > 0}\alpha (\fm)} \prod_{\alpha \in G} \left[ \frac{\theta_1(x^\alpha ; q)}{i \eta(q)} \right] \\
 \times & \prod_I \prod_{\rho_I \in \fR_I} \bigg[ \frac{i \eta(q)}{\theta_1(x^{\rho_I} y^{\rho^f_I} ; q)} \bigg]^{\rho_I(\fm)- \rho^f_I (\fn) + 1} \, ,
\end{aligned}
\ee
where $q=\e^{2\pi i\tau}$ and $\tau$ is the complex modulus of the torus.
Here, the zero mode gauge variables $x = \e^{i u}$ parameterize the Wilson lines on the two directions of the torus 
\be
u = 2 \pi \oint_{\textmd{A-cycle}} A - 2 \pi \tau \oint_{\textmd{B-cycle}} A \, , 
\ee
and are defined modulo 
\be
 u_i \sim u_i + 2 \pi n + 2 \pi m \tau\, ,\qquad\qquad  n\, ,m \in \mathbb{Z} \, .
\ee
In this formula, $\theta_1(x; q)$ is a Jacobi theta function and $\eta(q)$ is the Dedekind eta function
(see appendix \ref{Elliptic functions}).
Let us remark that, there exist particular choices of background magnetic fluxes $\fn$ for which the $\sum_\fm$
truncates to a single set of gauge fluxes $\fm$ \cite{Gadde:2015wta}.
However, for generic background fluxes this does not happen and we need to sum an infinite number of contributions.

In order for the integrand in \eqref{path integral index} be a well-defined meromorphic function on the torus
the one-loop contributions must be invariant under the transformation $x^{\rho} \to q^{\rho(\gamma)}\, x^{\rho}$,
where $\gamma \,\in\, \Gamma_\fh$. 
Applying $x^{\rho} \to q^{\rho(\gamma)}\, x^{\rho}$ and using \eqref{theta:function:shift} we find
\bea
 \label{elliptic:transform:vec}
 Z_{1-\text{loop}}^{\text{gauge, off}} & \to Z_{1-\text{loop}}^{\text{gauge, off}}
 \prod_{\alpha \in G} (-1)^{-\alpha(\gamma)} \, \e^{- i \pi \tau \alpha(\gamma)^2} \, \e^{-i \alpha(u) \alpha(\gamma)} \, , \\
 Z_{1-\text{loop}}^{\text{chiral}} & \to Z_{1-\text{loop}}^{\text{chiral}}
 \prod_{\rho_I \in \fR_{I}} (-1)^{\rho_I(\gamma) B} \, \e^{ i \pi \tau \rho_I(\gamma)^2 B} \,
 \e^{i \rho_I(u) \rho_I(\gamma) B} \, \e^{ i \rho_I(\gamma) \rho^f_I(\Delta) B} \, .
\eea
Putting everything together, the total prefactor in the integrand vanishes if we demand the following anomaly cancellation conditions:
\bea
 &\sum_{\alpha\in G} \alpha(\gamma)^2 + \sum_{I} \sum_{\rho_I\in\fR_{I}} \left(\fn_I - 1\right) \rho_I(\gamma)^2 = 0 \, , && \mbox{$\U(1)_R$-gauge-gauge anomaly}\, ,\\
 &\sum_{\alpha\in G} \alpha(\gamma) \alpha(u) + \sum_{I} \sum_{\rho_I\in\fR_{I}} \left(\fn_I - 1\right) \rho_I(\gamma) \rho(u) = 0 \, , && \mbox{$\U(1)_R$-gauge-gauge anomaly}\, ,\\
 &\sum_I\sum_{\rho_I \in \fR_{I}} \rho_I(\gamma)^2 \, \rho_I(\fm) = 0 \, , && \mbox{gauge$^3$ anomaly}\, ,\\
 &\sum_I\sum_{\rho_I \in \fR_{I}} \rho_I(\gamma) \, \rho(u) \, \rho_I(\fm) = 0 \, , && \mbox{gauge$^3$ anomaly}\, ,\\
 &\sum_I\sum_{\rho_I \in \fR_{I}} \rho_I(\gamma) \, \rho_I(\fm) \, \rho^f_I(\Delta) = 0 \, , && \mbox{gauge-gauge-flavor anomaly}\, ,\\
 &\sum_I\sum_{\rho_I \in \fR_{I}} \left(\fn_I - 1\right) \rho_I(\gamma) \, \rho^f_I(\Delta) = 0 \, , && \mbox{$\U(1)_R$-gauge-flavor anomaly} \, .
\eea
The signs cancel out automatically for all D3-brane quivers since the number of arrows entering a node equals the number of arrows leaving it.

Notice that the index can be interpreted as a trace over a Hilbert space of states on $S^2 \times S^1$
\be
 \label{trace}
 Z (\fn, y, q) =  {\rm Tr}_{\cH}  (-1)^F q^{H_L} \prod_I y_I^{J_I} \, ,
\ee
where the Hamiltonian $H_L$ on $S^2 \times S^1$ explicitly depends on the magnetic fluxes $\fn_I$.

\section[The \texorpdfstring{$\cI$}{I}-extremization principle]{The $\fakebold{\cI}$-extremization principle}
\label{sec:intro:I-extremization}

The Bekenstein-Hawking entropy of a dyonic BPS black hole in AdS$_4$ with a charge vector $(\fn_I, q_I)$
can be obtained by extremizing
$\cI ( \Delta_I ) \equiv \log Z_{\text{SCFT}} \left( \Delta_I , \fn_I \right) - i \sum_{I} \Delta_I q_I$,
at large $N$,
\bea
 \label{SCFT:extremization:intro}
 \frac{\partial \cI \left( \Delta_I \right) }{\partial \Delta_{1,\ldots}}
 \bigg|_{\sum_{I \in W_a} \Delta_I \in 2 \pi \bZ} (\bar \Delta_I) = 0 \, ,
\eea
and evaluating it at its extremum $\bar \Delta_I$:
\bea
 \label{I-extremization:Legendre:intro}
 \cI \big|_{\text{crit}} ( \fn_I , q_I ) =
 S_{\rm BH} ( \fn_I , q_I) \, .
\eea
In the purely magnetic case $(q_I = 0)$ the extremization \eqref{I-extremization:Legendre:intro}
leads to real values for the critical points $\bar\Delta_I$ and the index $\cI (\bar \Delta_I)$.
However, in the dyonic case the saddle-point is complex and one has to
impose a constraint on the charges that the index $\cI (\bar \Delta_I)$ is a real positive quantity
[see the discussion around \eqref{main_result}].
This procedure, dubbed $\cI$-\emph{extremization} in \cite{Benini:2015eyy,Benini:2016rke}, comprised two conjectures:
\begin{enumerate}
 \item Extremizing the index unambiguously determines the exact R-symmetry in the unitary $\cN=2$ superconformal quantum mechanics in the infrared (IR).
 \item The value of the index at its extremum is the regularized number of ground states.
\end{enumerate}


\section{Attractor mechanism}
\label{ch:1:attractor:mechanism}

In this section we give a short introduction to an important feature of
static BPS black holes in four-dimensional $\cN=2$ gauged supergravity.
More details can be found in the main text.

The four-dimensional supergravity theory has $n_{\rm V}$ Abelian vector multiplets,
parameterizing a special K\"ahler manifold $\cM$ with metric $g_{i \bar{j}}$,
in addition to the gravity multiplet (thus a total of $n_{\rm V}+1$ gauge fields and $n_{\rm V}$ complex scalars).
The presence of hypermultiplets just add algebraic constraints \cite{Halmagyi:2013sla,Klemm:2016wng}.
So we only concern ourselves with vector multiplets.
The scalar manifold is defined by the prepotential $\cF \left( X^\Lambda \right)$,
which is a homogeneous holomorphic function of sections $X^\Lambda$.

Let us define the central charge of the black hole $\cZ$ and the superpotential $\cL$,
\be
\label{intro:central charge:superpotential}
\begin{aligned}
 \cZ = \e^{\cK / 2} \left( q_\Lambda X^\Lambda - \fn^\Lambda F_\Lambda \right) \, , \qquad
 \cL = \e^{\cK / 2} \left( g_\Lambda X^\Lambda - g^\Lambda F_\Lambda \right) \, .
\end{aligned} \ee
where $(g^\Lambda , g_\Lambda)$ are the magnetic and electric gaugings of the theory, $\cK$ is the K\"ahler potential and
\be
 F_\Lambda \equiv \frac{\partial \cF(X^\Lambda)}{\partial X^\Lambda} \, .
\ee
Here $\Lambda=0,\ldots,n_{\text{v}}$. Then, the Bekenstein-Hawking entropy of the black hole with magnetic and
electric charges $(\fn^\Lambda, q_\Lambda)$ can be obtained by extremizing the functional \cite{DallAgata:2010ejj}
\begin{equation}
 \label{attractor:intro}
 \cI_{\text{sugra}}(X^\Lambda) = - i \frac{\cZ}{\cL} \, ,
\end{equation}
with respect to $X^\Lambda$. This is the so-called \emph{attractor mechanism} \cite{Ferrara:1996dd}:
the area of the black hole horizon is given in terms of conserved charges
and is independent of the asymptotic moduli.

In the rest of the thesis we work in the gauge
\be
 g_\Lambda X^\Lambda= 1 \, .
\ee
In consistent models we can always apply an electric-magnetic duality transformation
so that the corresponding gauging becomes purely electric, \ie, $g^\Lambda = 0$.

\section{Main results}

In the course of our analysis, we find a number of interesting general results which we shall review in this section.

\subsection[\texorpdfstring{$\mathcal{N}=2$}{N=2} field theories on \texorpdfstring{$S^2 \times S^1$}{S**2 x S**1}]{$\fakebold{\cN=2}$ field theories on $\fakebold{S^2 \times S^1}$}
\label{ch:1:sec:N=2:3D}

In the large $N$ limit, we find a simple universal formula for computing
the index from the twisted superpotential, $\wt\cW(\Delta_I)$, as a function of the chemical potentials,
\begin{equation}\label{Z large N conjecture0:3d}
 \log Z  (\Delta_I,\fn_I) = - \frac{2}{\pi} \, \wt\cW(\Delta_I) \,
 - \sum_{I}\, \left[ \left(\fn_I - \frac{\Delta_I}{\pi}\right) \frac{\partial \wt\cW(\Delta_I)}{\partial \Delta_I}
  \right] \, .
\end{equation}
We call this the {\it index theorem}.
It allows to avoid the many technicalities involved in taking the residues and including exponentially small corrections to the index. 
By comparing the index theorem with the attractor formula for the entropy of asymptotically AdS$_4$ black holes,  we are also led to conjecture a relation between
the twisted superpotential and the prepotential of the dimensionally truncated  gauged supergravity describing the compactification on AdS$_4\times Y_7$, with $Y_7$  a Sasaki-Einstein manifold. 
This relation is discussed in section \ref{discussion}.
 
Furthermore, we find an explicit relation between the twisted superpotential and the $S^3$ free energy of the same $\cN \geq 2$ gauge theory.
Although the two matrix models are quite different at finite $N$,  the BAEs and the functional form of the twisted superpotential
in the large $N$ limit are {\it identical} to the matrix model equations of motion and  free energy functional for the path integral on $S^3$ found in \cite{Jafferis:2011zi}. 
This result implies that the index can be extracted from the free energy on $S^3$ and its derivatives in the large $N$ limit. It also implies a relation with the volume
functional of (Sasakian deformations of) the internal manifold $Y_7$.  
These relations deserve a better understanding.

\subsection[\texorpdfstring{$\mathcal{N}=1$}{N=1} field theories on \texorpdfstring{$S^2 \times T^2$}{S**2 x T**2}]{$\fakebold{\cN=1}$ field theories on $\fakebold{S^2 \times T^2}$}
\label{ch:1:4d:N=1:intro}

The explicit evaluation of the topologically twisted index in four dimensions is a strenuous task, even in the large $N$ limit.
However, the index greatly simplifies if we identify the modulus $\tau=i \beta/2\pi$ of the torus $T^2$
with a \emph{fictitious} inverse temperature $\beta$,%
\footnote{The torus partition function at a given $\tau$ would correspond to a thermal ensemble
while the elliptic genus is only counting extremal states.
Thus, the temperature represented by $\im \tau$ is fictitious.}
and take the \emph{high-temperature} limit $(\beta \to 0)$.
In this limit, we can use the modular properties of the elliptic functions under the $\SL(2,\bZ)$ action to simplify the result.

In the high-temperature limit, we find a number of intriguing results, valid to leading order in $1 / \beta$.

First, we obtain an explicit relation between the twisted superpotential
and the R-symmetry 't\,Hooft anomalies of the ultraviolet (UV) four-dimensional $\cN=1$ theory
\bea\label{tHoof:anomaly:0}
\wt\cW ( \Delta_I)= \frac{\pi^3}{6 \beta} \left[ \Tr R^3 ( \Delta_I) - \Tr R ( \Delta_I) \right]
= \frac{16 \pi^3}{27 \beta} \left[ 3c \left( \Delta_I \right) - 2 a ( \Delta_I) \right] \, ,
\eea
where $R$ is a choice of $\U(1)_R$ symmetry and the trace is over all fermions in the theory.
Here, we use the chemical potentials $\Delta_I/\pi$ to parameterize a trial R-symmetry of the $\cN=1$ theory.
Details about this identification are given in the main text.
In writing the second equality in \eqref{tHoof:anomaly:0} we used the relation between
conformal and R-symmetry 't\,Hooft anomalies in $\cN = 1$ SCFTs \cite{Anselmi:1997am},
\be\label{generalac}
a = \frac{9}{32} \Tr R^3 - \frac{3}{32} \Tr R \, ,
\qquad \qquad c = \frac{9}{32} \Tr R^3 - \frac{5}{32} \Tr R \, .
\ee

Secondly, the value of the index as a function of the chemical potentials $\Delta_I$ and
the set of magnetic fluxes $\fn_I$, parameterizing the twist,
can be expressed in terms of the trial left-moving central charge of the two-dimensional $\cN = (0,2)$ SCFT  as
\be
 \label{index theorem:2d central charge0}
 \log Z (\Delta_I,\fn_I) = \frac{\pi^2}{6 \beta} c_{l} \left( \Delta_I , \fn_I \right) \, .
\ee
This is related to the trial right-moving central charge $c_r$ by the gravitational anomaly
$k$ \cite{Benini:2012cz,Benini:2013cda},
\be
\label{cl:cr:k}
 c_r - c_l = k \, , \qquad \qquad k = - \Tr \gamma_3 \, .
\ee
Here, $\gamma_3$ is the chirality operator in two dimensions.%
\footnote{With our choice of chirality operator the gaugino zero modes have $\gamma_3 = 1$.}
We should emphasize that \eqref{index theorem:2d central charge0} does not receive logarithmic or polynomial corrections
in powers of $c_l / \beta$, it is \emph{exact} up to exponentially suppressed contributions.
The density of states $d_{\text{micro}}$ is then given by \eqref{ch:1:micro:density} and it reads
\be
 d_{\text{micro}} (\fn_I , q_0) \propto \cN(\fn_I)
 \left( S_{\text{Cardy}} \right)^{- (\cR+1)/2}
 I_{(\cR + 1)/2} (S_{\text{Cardy}})
 \, ,
\ee
where $I_{\nu}$ is the standard modified Bessel function of the first kind \eqref{integral:rep:Bessel}.
Here $\cR$ denotes the rank of the global symmetry group (including the R-symmetry),
$q_0$ is the electric charge conjugate to $\beta$%
\footnote{In a superconformal theory, the operator $H_L$
in \eqref{trace} equals the zero mode generator $L_0$ of the superconformal algebra.
The electric charge $q_0$ is the eigenvalue of $L_0$.} (see section \ref{ch:1:I-ext:vs:attractor}) 
and
\be
  S_{\text{Cardy}} = 2 \pi \sqrt{ \frac{c_l (\fn_I) q_0}{6}} \, .
\ee

Finally, there is a simple universal formula at leading order in $N$ for computing the
index from the twisted superpotential as a function of the chemical potentials $\Delta_I$,
\bea
\label{Z large N conjecture0:4d}
 \log Z(\Delta_I,\fn_I) = - \frac{3}{\pi} \, \wt\cW ( \Delta_I ) - \sum_{I} \left[ \left( \fn_I - \frac{\Delta_I}{\pi} \right) \frac{\partial \wt\cW ( \Delta_I )}{\partial \Delta_I} \right]
 = \frac{\pi^2}{6 \beta} c_r ( \Delta_I , \fn_I ) \, ,  
\eea
where the index $I$ runs over the bi-fundamental and adjoint fields in the quiver.
In the large $N$ limit the twisted superpotential can be written as
\bea\label{centralchargea0}
 \wt\cW ( \Delta_I )= \frac{16 \pi^3}{27 \beta} a ( \Delta_I) \, .
\eea
These formulae are valid for theories of D3-branes, where $\Tr R=\cO(1)$ and $c=a$ at large $N$ \cite{Henningson:1998gx}.
These topologically twisted theories have holographic duals in terms of black strings in AdS$_5 \times Y_5$,
where $Y_5$ are five-dimensional Sasaki-Einstein spaces \cite{Benini:2012cz,Benini:2013cda}.

There is a striking similarity with the results obtained for the large $N$ limit of
the topologically twisted index of three-dimensional $\cN=2$
field theories on $S^2 \times S^1$ (see subsection \ref{ch:1:sec:N=2:3D}), if we replace
\begin{equation*}
 \begin{array}{ccc}
  \text{central charge } a( \Delta_I) \; &\longleftrightarrow & \; \text{free energy on } S^3 \\
  \text{central charge } c_{r} \left( \Delta_I , \fn_I \right) \; &  \longleftrightarrow & \; \text{black hole entropy} \\
 c\text{-extremization} \; & \longleftrightarrow & \; \cI\text{-extremization} \, .
\end{array}
\end{equation*}
Indeed, in three dimensions, the very same formula \eqref{Z large N conjecture0:4d} holds
with the twisted superpotential given by the $S^3$ partition function $F_{S^3}$ of the gauge theory.
Notice that  $F_{S^3}$ is the natural replacement for $a$, both being monotonic along RG
flows \cite{Intriligator:2003jj,Jafferis:2011zi}. Furthermore, both of them can be computed, as a function of $\Delta_I$,
in terms of the volume of a family of Sasakian manifolds \cite{Gubser:1998vd,Butti:2005vn,Martelli:2005tp,Martelli:2006yb,Jafferis:2011zi}. 
In addition, in three dimensions, the dual AdS$_5$ black string is replaced by a dual AdS$_4$ black hole
and $\log Z$ computes the entropy of the black hole. As discussed in section \ref{sec:intro:I-extremization},
the Bekenstein-Hawking entropy is obtained by extremizing $\log Z$ with respect to $\Delta_I$ ($\cI$-extremization).
Similarly, as it was shown in \cite{Benini:2012cz,Benini:2013cda}, the exact central charge of the two-dimensional SCFT
is obtained by extremizing the trial right-moving central charge with respect to $\Delta_I$.
Given the relation \eqref{Z large N conjecture0:4d} we see that $c$-extremization corresponds to $\cI$-extremization.  

Notice also that our results \eqref{Z large N conjecture0:4d} and \eqref{centralchargea0} are compatible with  a very simple relation between the field theoretical quantities  $\Tr R^3 ( \Delta_I) $ and $c_{r} \left( \Delta_I , \fn_I \right)$ that is worthwhile to state separately,
\be
 \label{field theory}
 c_{r} \left( \Delta_I , \fn_I \right) = - 3 \Tr R^3 ( \Delta_I) - \pi \sum_{I} \left[ \left( \fn_I - \frac{\Delta_I}{\pi} \right) \frac{\partial \Tr R^3 ( \Delta_I)}{\partial \Delta_I} \right] \, .
\ee

\subsection[\texorpdfstring{$\cI$}{I}-extremization versus attractor mechanism]{$\fakebold{\cI}$-extremization versus attractor mechanism}
\label{ch:1:I-ext:vs:attractor}

We can ignore the linear relation among the chemical potentials and use a set of $\Delta_I$ such that
$\wt\cW (\Delta_I)$ is a homogeneous function of degree two (in three dimensions) or three (in four dimensions)
of the $\Delta_I$ alone. In this case, the index theorem simplifies to
\begin{equation}
 \label{Z large N conjecture2:intro}
 \log Z ( \Delta_I, \fn_I )= - \sum_{I} \fn_I \frac{\partial \wt\cW(\Delta_I)}{\partial \Delta_I} \, .
\end{equation}
This is due to the form of the differential operator in \eqref{Z large N conjecture0:3d}
[or \eqref{Z large N conjecture0:4d}] and $\sum_{I \in \cW_a} \fn_I  = 2$.

The $\cI$-extremization is then telling us that upon extremizing
\be
 \label{intro:Legendre:extr:vs:attr}
 \cI ( \Delta_I , \fn_I) =
 \sum_{I} \left( - \fn_I \frac{\partial \wt\cW(\Delta_I)}{\partial \Delta_I} - i \Delta_I q_I \right) \, ,
\ee
with respect to the chemical potentials $\Delta_I$ (under the constraint that $\sum_{I \in W_a} \Delta_I = 2 \pi$),
its value at the extremum $\bar\Delta_I$ precisely reproduces the black hole entropy.
Comparison of \eqref{intro:Legendre:extr:vs:attr} with the attractor equation \eqref{attractor:intro}
suggests the following relations
\bea
 \label{ch:1:identification:X:Delta}
 \Delta_I \propto X^\Lambda \, , \qquad
 \wt\cW ( \Delta_I ) \propto i \cF ( X^\Lambda ) \, ,
\eea
also valid before extremization.\footnote{A relation between the free energy $F_{S^3}$ and the prepotential
of the compactified theory was already suggested in \cite{Lee:2014rca}.}
Thus, in both three and four dimensions, the $\cI$-extremization principle corresponds to the attractor mechanism
\cite{DallAgata:2010ejj,Karndumri:2013iqa,Hristov:2014eza,Amariti:2016mnz,Klemm:2016kxw} on the supergravity side.

\paragraph*{ABJM on $\fakebold{S^2 \times S^1\,}$.} This relation certainly holds for the $\cN=6$ ABJM theory
which is holographically dual to M-theory on AdS$_4 \times S^7/\bZ_k$.
The twisted superpotential for this theory reads (see section \ref{ABJM:example:ch:2})
\be
  \wt\cW ( \Delta_a ) =
  \frac {2 \sqrt{2}}{3} k^{1/2} N^{3/2} \sqrt{\Delta_1 \Delta_2 \Delta_3 \Delta_4} \, .
\ee
which can be clearly mapped to the holomorphic prepotential
\begin{equation}
 \label{ch:1:F-sugra:ABJM}
 \cF (X^\Lambda) = -2 i \sqrt{ X^0 X^1 X^2 X^3} \, ,
\end{equation} 
of the so-called \emph{STU model} (consisting of three vector multiplets in addition to the gravity multiplet)
in four-dimensional $\cN=2$ gauged supergravity.

\paragraph*{D2$\fakebold{_k}$ on $\fakebold{S^2 \times S^1\,}$.}

In chapter \ref{ch:4}, we demonstrate another example of the relation \eqref{ch:1:identification:X:Delta}.
The instance of AdS$_4/$CFT$_3$ correspondence we shall study was finalized in \cite{Guarino:2015jca}
and states that the massive type IIA string theory on asymptotically AdS$_4 \times S^6$ backgrounds
admits a dual description in terms of an $\cN = 2$ Chern-Simons deformation (at level $k$) of the maximal
$\cN=8$ super Yang-Mills theory on the worldvolume of $N$ D2-branes \cite{Schwarz:2004yj,Guarino:2015jca}.
We will call this model the \emph{$\text{D2}_k$ theory}. Its twisted superpotential is given by (see section \ref{ssec:SYM-CS:index})
\bea
 \label{ch:1:twisted superpotential:SYM-CS}
 \wt\cW( \Delta_j ) =
 \frac{3^{13/6}}{5 \times 2^{8/3}}
 \left(1 - \frac{i}{\sqrt{3}} \right)
 k^{1/3} N^{5/3}
 \left( \Delta_1 \Delta_2 \Delta_3 \right)^{2/3} \, .
\eea
As we will show in chapter \ref{ch:4}, the effective prepotential describing the near-horizon geometries
constructed in the four-dimensional dyonic $\mathcal{N}=2$ gauged supergravity, that arises as a consistent truncation
of massive type IIA supergravity on $S^6$, reads
\bea
  \label{ch:1:IIA:prepotantial}
  \cF \left( X^I \right) = - i \frac{3^{3/2}}{4} \left(1-\frac{i}{\sqrt{3}}\right)
  c^{1/3} \left( X^1 X^2 X^3 \right)^{2/3} \ ,
\eea
where $c$ is the dyonic gauge parameter.
Quite remarkably, the above correspondence \eqref{ch:1:identification:X:Delta}
holds true including the imaginary part of the twisted superpotential
\eqref{ch:1:twisted superpotential:SYM-CS} and the prepotential \eqref{ch:1:IIA:prepotantial}.

Furthermore, we show that for a generic D2$_k$ theory with
\be
\sum_{a=1}^{|G|} k_a = |G| k \, ,
\ee
the logarithm of the topologically twisted index can be written as%
\footnote{A similar relation between the three-sphere free energy $- \log Z_{S^3} ( \Delta_I )$
and the anomaly coefficient $a ( \Delta_I )$ was found in \cite{Fluder:2015eoa}.}
\bea
 \label{intro:index:generic:c2d:a4d}
 \log Z \left( \fn_I, \Delta_I \right) =
 \frac{ 3^{7/6} \pi}{5 \times 2^{10/3}}
 \left( 1 - \frac{i}{\sqrt{3}} \right) 
 \left( n N \right)^{1/3}
 \frac{c_r \left( \fn_I , \Delta_I \right)}{a \left( \Delta_I \right)^{1/3}} \, ,
\eea
where $n \equiv |G| k$.
Here $a ( \Delta_I )$ is the trial conformal 't Hooft anomaly of a four-dimensional ``parent''
$\cN=1$ SCFT on $S^2 \times T^2$ and $c_r ( \fn_I , \Delta_I )$ is the trial right-moving central charge
of the $\cN=(0,2)$ theory obtained by a twisted compactification on $S^2$ \cite{Benini:2012cz,Benini:2013cda,Hosseini:2016cyf}.
Let us stress that, the r\^ole of the imaginary part of the partition function \eqref{intro:index:generic:c2d:a4d}
is to fix the electric charges (conjugate to $\Delta_I$) such that its value at the extremum is a \emph{real positive} quantity.
We will explain this better in section \ref{mIIA:Introduction}.

Given the interesting connection between the four-dimensional D3-brane theories and the three-dimensional D2$_k$ theories \eqref{intro:index:generic:c2d:a4d},
it would be intriguing to find the analogous relation on the supergravity side. In particular, one can expect a close connection
between the supergravity backgrounds discussed in chapter \ref{ch:4} and the black string solutions in five-dimensional
STU gauged supergravity found in \cite{Benini:2013cda}.

\paragraph*{$\fakebold{\cN=4}$ SYM on $\fakebold{S^2 \times T^2\,}$.} The AdS dual to topologically twisted
$\cN=4$ SYM is given by magnetically charged BPS black strings in type IIB string theory on AdS$_5 \times S^5$.
In the high-temperature limit, to leading order in $1/\beta$, we find that (see section \ref{twisted superpotential_SYM})
\be
 \label{ch:1:SYM:twisted superpotential}
 \wt\cW ( \Delta_a )
 = ( N^2 - 1 ) \frac{\Delta_1 \Delta_2 \Delta_3}{2 \beta} \, .
\ee
Once we take the high-temperature limit ($\beta \to 0$), we are shrinking a circle inside the torus and effectively dealing with
a three-dimensional field theory living on the twisted $S^2 \times S^1$ background.
Upon compactification geometric symmetries remain as global symmetries of the lower-dimensional theory.
Therefore, the theory in three dimensions has an extra global $\U(1)$ symmetry
whose chemical potential can be identified with $\beta$.

On the supergravity side the same story goes through.
The near-horizon geometry of the BPS black string is locally AdS$_3 \times S^2 \times S^5$.
The longitudinal direction along this black string lies within the AdS$_3$.
One could periodically identify the black string%
\footnote{The near-horizon geometry is then a BTZ black hole.}
and perform a dimensional reduction (as it was already done
in \cite{Hristov:2014eza}) to obtain a 4D black hole.
It interpolates between an AdS$_2 \times S^2$ near-horizon region and an asymptotic curved domain-wall.
The \emph{only} electric charge of the 4D black hole descends from the momentum on the circle.
The prepotential of this theory is given by
\bea
 \label{ch:1:stu:prepotential}
 \cF ( X^\Lambda ) = - \frac{X^1 X^2 X^3}{X^0} \, .
\eea 
Comparing \eqref{ch:1:SYM:twisted superpotential} with \eqref{ch:1:stu:prepotential}, we see that
\eqref{ch:1:identification:X:Delta} holds true if we identify $\beta$ with $X^0$.

\subsection[Black hole microstates in AdS\texorpdfstring{$_5$}{(5)}]{Black hole microstates in AdS$\fakebold{_5}$}

The derivation of the entropy of BPS electrically charged  rotating black holes in
AdS$_5\times S^5$ \cite{Gutowski:2004ez,Gutowski:2004yv,Chong:2005da,Chong:2005hr,Kunduri:2006ek}
in terms of states of the dual ${\cal N}=4$ $\SU(N)$ SYM theory is still an open problem.
Various attempts have been made in this direction \cite{Kinney:2005ej,Grant:2008sk,Chang:2013fba}
but none was really successful. One could consider the superconformal index \cite{Romelsberger:2005eg,Kinney:2005ej}
since it counts states preserving the same supersymmetries of the black holes and it depends on a number
of fugacities equal to the number of conserved charges of the black holes.%
\footnote{One can introduce five independent fugacities in the superconformal index
with a constraint $\sum_{I=1}^{3} \Delta_I + \sum_{i=1}^{2} \omega_i \in \bZ$.
On the gravity side, the BPS black holes have five conserved charges which satisfy a nonlinear constraint.}
However, due to a large cancellation between
bosonic and fermionic states, the index is a quantity of order one for generic values of the fugacities while the
entropy scales like $N^2$ \cite{Kinney:2005ej}. We also know that the supersymmetric partition function on
$S^3\times S^1$ is equal to the superconformal index only up to a multiplicative factor $\e^{-E_{\rm susy}}$,
where $E_{\rm susy}$ is the supersymmetric Casimir energy
\cite{Assel:2014paa,Lorenzen:2014pna,Assel:2015nca,Bobev:2015kza,Martelli:2015kuk,Genolini:2016sxe,Brunner:2016nyk}.
For $\cN=4$ SYM it reads (see for example \cite{Bobev:2015kza})
\be
 \label{ch:1:susy:Casimir}
 E_{\rm susy} = - i \pi  (N^2-1) \frac{\Delta_1 \Delta_2 \Delta_3}{\omega_1 \omega_2} \, ,
\ee
where $\Delta_{I}$ and $\omega_{i}$ are the chemical potentials for the Cartan generators of the R-symmetry and rotation,
respectively. They are subject to the constraint
\bea
 \label{ch:1:constraint:N=4:0}
 \sum_{I=1}^{3} \Delta_I + \sum_{i=1}^{2} \omega_i = 0 \, .
\eea

As can be seen from \eqref{ch:1:susy:Casimir}, the supersymmetric Casimir Energy is of order $N^2$ in the large $N$ limit;
however, it is not clear what the average energy of the vacuum should have to do with the entropy,
which is the degeneracy of ground states. In chapter \ref{ch:6}, we show that the entropy of rotating BPS
black holes in AdS$_5$ can be obtained by extremizing the Legendre transform of a quantity which \emph{formally}
equals $-E_{\rm susy}$ under the constraint
\bea
 \label{ch:1:constraint:N=4:1}
 \sum_{I=1}^{3} \Delta_I + \sum_{i=1}^{2} \omega_i = 1 \, .
\eea
We will give some preliminary discussion about the interpretation of this result in section \ref{sec:discussion}.

\chapter[Large \texorpdfstring{$N$}{N} matrix models for three-dimensional \texorpdfstring{$\mathcal{N}=2$}{N=2} theories]{Large \texorpdfstring{$\fakebold{N}$}{N} matrix models for three-dimensional \texorpdfstring{$\fakebold{\cN=2}$}{N=2} theories}
\label{ch:2}
\ifpdf
    \graphicspath{{Chapter2/Figs/Raster/}{Chapter2/Figs/PDF/}{Chapter2/Figs/}}
\else
    \graphicspath{{Chapter2/Figs/Vector/}{Chapter2/Figs/}}
\fi

\section{Introduction}

In this chapter we study the large $N$ behavior of the topologically twisted index introduced in chapter \ref{ch:1}
for three-dimensional  $\cN \geq 2$ gauge theories.
Here we extend the analysis of \cite{Benini:2015eyy,Benini:2016rke} to a larger class of $\cN \geq 2$ theories with an M-theory
or massive type IIA dual, containing bi-fundamental, adjoint and (anti-)fundamental chiral matter. Most of the theories
proposed in the literature are obtained by adding Chen-Simons terms \cite{Aharony:2008ug,Hanany:2008cd,Hanany:2008fj,Martelli:2008si,Franco:2008um,Davey:2009sr,Hanany:2009vx,Franco:2009sp}
or by flavoring \cite{Gaiotto:2009tk,Jafferis:2009th,Benini:2009qs}  four-dimensional quivers  describing D3-branes probing
Calabi-Yau three-fold (CY$_3$) singularities. We refer to these theories as having a four-dimensional parent.
They all have an M-theory phase where the index is expected to scale as $N^{3/2}$. The main motivation for studying the
large $N$ limit of the index for these theories comes indeed form the attempt to extend the result of \cite{Benini:2015eyy,Benini:2016rke}
to a larger class of black holes. However, the matrix model computing the
index reveals an interesting structure at large $N$ which deserves attention by itself. In particular, we will point out analogies
and relations with other matrix models appeared in the literature on three-dimensional $\cN \geq 2$ gauge theories.

The method for solving the BAEs is similar to that used in \cite{Herzog:2010hf,Jafferis:2011zi} for the large $N$ limit of
the partition function on $S^3$ in the M-theory limit and the one used for the partition function on $S^5$
of five-dimensional theories \cite{Jafferis:2012iv,Minahan:2013jwa,Minahan:2014hwa}. We take an ansatz for the
eigenvalues where the imaginary parts grow in the large $N$ limit as some power of $N$.
The solution to the BAEs in the large $N$ limit is then used to evaluate index using the residue theorem.  
In this last step we need to take into account (exponentially small) corrections to the large $N$ limit of
the BAEs which contribute to the index due to the singular logarithmic behavior of its integrand.

We focus on the limit where $N$ is much greater than the Chern-Simons couplings $k_a$.
For the class of quivers we are considering, this limit corresponds to an M-theory description when
$\sum_a k_a=0$ and a massive type IIA one when $\sum_a k_a\neq 0$.
We recover the known scalings $N^{3/2}$ and $N^{5/3}$ for the M-theory and massive type IIA phase, respectively.
Similarly to \cite{Jafferis:2011zi}, we find that, in order to have a consistent $N^{3/2}$ scaling of the index in
the M-theory phase, we have to impose some constraints on the quiver. In particular, quivers with a chiral four-dimensional
parent are not allowed, as in \cite{Jafferis:2011zi}. They are instead allowed in the massive type IIA phase.

In this chapter we give the general rules for constructing the twisted superpotential and the index for
a generic Yang-Mills-Chen-Simons theory with bi-fundamental, adjoint and fundamental fields and the example of ABJM theory.
Many other examples can be found in the next chapter, including models for well-known homogeneous
Sasaki-Einstein manifolds, suspended pinch point singularity (SPP), $N^{0,1,0}$, $Q^{1,1,1}$, $V^{5,2}$,
and various nontrivial checks of dualities.

The chapter is organized as follows. In section \ref{proof vanishing forces}
we give the general rules for constructing the twisted superpotential and the index for a generic Yang-Mills-Chen-Simons theory
with bi-fundamentals, adjoint and fundamental fields with $N^{3/2}$ scaling, and in section \ref{general rules N32} we derive
them. In section \ref{sec:freeenergy} we prove the identity of the twisted superpotential and the $S^3$ free energy at large $N$.
In section \ref{sec:index} we derive the index theorem that allows to express the index at large $N$ in terms of
the twisted superpotential and its derivatives. In section \ref{N53} we discuss the rules  for a $N^{5/3}$ scaling.
In section \ref{discussion} we give a discussion of some open issues.

\section[The large \texorpdfstring{$N$}{N} limit of the index]{The large $\fakebold{N}$ limit of the index}
\label{proof vanishing forces}

In this chapter we are interested in  the large $N$ limit of the topologically twisted index  for theories with unitary gauge groups and matter transforming in the fundamentals, bi-fundamentals and adjoint representation. As in \cite{Benini:2015eyy},  we evaluate the matrix model in two steps.
We first perform the summation over magnetic fluxes introducing a large cut-off $M$.\footnote{According to the rules in  \cite{Benini:2015noa}, the residues to take in \eqref{path} depend on the
sign of the Chern-Simons couplings. We can choose a set of co-vectors in the Jeffrey-Kirwan prescription such that the contribution comes from  residues with $\fm_a\leq 0$ for $k_a > 0$, residues with $\fm_a\geq 0$ for $k_a < 0$  and residues in the origin. We can then take a large positive integer $M$ and perform the summations in Eq.\,\eqref{path}, with $\fm_a \leq M-1 \, (k_a > 0)$ and $\fm_a \geq 1-M \, (k_a < 0)$.}
The result of this summation produces terms in the integrand of the form 
\begin{equation}
 \prod_{i=1}^{N} \frac{\left(\e^{i B_i^{(a)}}\right)^M}{\e^{i B_i^{(a)}} - 1} \, ,
\end{equation}
where we defined
\bea\label{BA expression}
 \e^{i \sign(k_a) B_i^{(a)}} & = \xi^{(a)} (x_i^{(a)})^{k_a}
 \prod_{\substack{\text{bi-fundamentals} \\ (a,b) \text{ and } (b,a) }} \prod_{j=1}^N \frac{\sqrt{\frac{x_i^{(a)}}{x_j^{(b)}} y_{(a,b)}}}{1 - \frac{x_i^{(a)}}{x_j^{(b)}} y_{(a,b)}}
 \frac{1 - \frac{x_j^{(b)}}{x_i^{(a)}} y_{(b,a)}}{\sqrt{\frac{x_j^{(b)}}{x_i^{(a)}} y_{(b,a)}}} \\
 & \hskip 1truecm \times \prod_{\substack{\text{fundamentals} \\ a}} \frac{\sqrt{x_i^{(a)} y_a}}{1 - x_i^{(a)} y_a} \,\, 
 \prod_{\substack{\text{anti-fundamentals} \\ a}} \frac{1 - \frac{1}{x_i^{(a)}} \tilde y_a}{\sqrt{\frac{1}{x_i^{(a)}} \tilde y_a}} \, ,
\eea
and adjoints are identified with bi-fundamentals connecting the same gauge group ($a=b$).
In this way the contributions from the residues at the origin have been moved to the solutions of the BAEs
\begin{align}\label{BAEs}
 \e^{i \sign(k_a) B_i^{(a)}}=1 \, .
\end{align}
It is convenient to use the variables $u_i^{(a)}$ and $\Delta_I$, defined modulo $2\pi$,\footnote{Notice that the index
is a holomorphic function of $y_I$ and $\xi$. There is no loss of generality in restricting to
the case of purely real chemical potentials $\Delta_I$ and $\Delta_m$ in \eqref{u Delta variable}.}
\be\label{u Delta variable}
x_i^{(a)}= \e^{i u_i^{(a)}} \;, \qquad\qquad y_I = \e^{i\Delta_I}\;, \qquad\qquad \xi^{(a)} = \e^{i\Delta_m^{(a)}} \, ,
\ee
and take the logarithm of the BAEs
\begin{align}\label{log BAEs0}
 0 & = \log \left[ \text{RHS of \eqref{BA expression}} \right] - 2 \pi i n_i^{(a)} 
 \, ,
\end{align}
where $n_i^{(a)}$ are integers that parameterize the angular ambiguities. The BAEs \eqref{log BAEs0} can be obtained as critical points of an effective twisted superpotential $\wt\cW(u_i^{(a)})$.

We then need to solve these auxiliary equations in the large $N$ limit. Once the distribution of poles in the integrand in the large $N$ limit has been found, we can finally evaluate the index by computing the residue of the
resummed integrand of \eqref{path} at the solutions of \eqref{log BAEs0}.  
In the final expression, the dependence on $M$ disappears.

We are interested in the properties of the topologically twisted index  in the large $N$ limit of theories with  an M-theory dual. 
We focus on quiver Chern-Simons-Yang-Mills gauge theories with gauge group
\be \label{quiver} \cG = \prod_{a=1}^{|G|} \U(N)_a \, ,\ee
and bi-fundamental, adjoint and fundamental chiral multiplets.  Most of the conjectured theories living on M2-branes probing $\text{CY}_4$ singularities
are of this form. Moreover, many of them are obtained by adding Chern-Simons terms and fundamental flavors to quivers appeared in the four-dimensional literature as 
describing D3-branes probing $\text{CY}_3$ singularities.  We refer to these theories as quivers {\it with a 4D parent}. In order to have a $\text{CY}_4$ moduli space, the  Chern-Simons couplings must satisfy
\be\label{CScontr}
\sum_{a=1}^{|G|} k_a = 0 \, .
\ee
The M-theory phase of these theories is obtained for $N\gg k_a$ and this is the limit we consider here. We expect the index to scale as $N^{3/2}$.

As in \cite{Benini:2015eyy}, we consider the following ansatz for  the large $N$ saddle-point eigenvalue distribution:
\be\label{ansatz alpha0}
u_i^{(a)} = i N^{1/2} t_i + v_i^{(a)} \, .
\ee
Notice that the imaginary parts of all the  $u_i^{(a)}$ are equal. In the large $N$ limit, we define the continuous functions $t_i = t(i/N)$ and $v_i^{(a)}= v^{(a)}(i/N)$ and  we introduce the density of eigenvalues 
\be \rho(t) = \frac1N\, \frac{di}{dt}\, , \ee
normalized as $ \int dt\, \rho(t) = 1$. 

The large $N$ limit of the twisted superpotential is performed in details in section \ref{twisted superpotential rules N32},
generalizing the analyses in \cite{Benini:2015eyy}. Here, we report the final result and some of the crucial subtleties.
We need to require the cancellations of long-range forces in the BAEs, as originally observed in a similar context in  \cite{Jafferis:2011zi}, and this imposes some
constraints on the quiver. Once these are satisfied,  the twisted superpotential $\wt\cW$ becomes a local functional of $\rho(t)$ and $v_i^{(a)}(t)$ and it scales as  $N^{3/2}$. 
The same constraints guarantee that the index itself scales as $N^{3/2}$.

\subsection{Cancellation of long-range forces}
\label{long-range}

As in \cite{Jafferis:2011zi}, when bi-fundamentals are present, we need to cancel  long-range forces in the BAEs.  These are detected by considering the force exerted by the eigenvalue $u_j^{(b)}$ on the eigenvalue $u_i^{(a)}$ in \eqref{log BAEs0}. They can grow with large powers of $N$ and need to be canceled by imposing constraint on the quiver and matter content if necessary. Since $u_j^{(b)}- u_i^{(a)} \sim \sqrt{N}$ for $i\neq j$,
when the long-range forces vanish, the BAEs and the twisted superpotential get only
contributions from $i\sim j$ and they become local functionals of $\rho(t)$ and $v_i^{(a)}(t)$.

Let us consider the effects of such long-range forces in the twisted superpotential $\wt\cW$. A single bi-fundamental field connecting gauge groups $a$ and $b$ contributes terms of the form
\be\label{N3}
\sum_{i<j} \frac{\left(u_i^{(a)} - u_j^{(b)}\right)^2}{4} - \sum_{i<j} \frac{\left(u_j^{(a)} - u_i^{(b)}\right)^2}{4}\, ,
\ee
to the twisted superpotential [see Eq.\,\eqref{forces chiral}].
In the large $N$ limit, they are of order $N^{5/2}$.
In order to cancel these terms, we are then forced, as  in \cite{Jafferis:2011zi}, to consider quivers where for each bi-fundamental connecting $a$ and $b$ there is also a bi-fundamental connecting $b$ and $a$. The contribution of the two bi-fundamentals then cancel out [see Eq.\,\eqref{reflection applied 34} and Eq.\,\eqref{reflection applied 12}].

From a pair of  bi-fundamentals, we  get another contribution to the twisted superpotential of the form  [see Eq.\,\eqref{bi-fundamentals forces2}]
\begin{equation}\label{N2ang}
- \frac{1}{2} \left[ \left(\Delta_{(a,b)} - \pi \right) + \left( \Delta_{(b,a)} - \pi \right) \right] \sum_{j\neq i}^N  \left(u_i^{(a)} -u_j^{(b)}\right) \text{sign}{(i-j)} \, .
\end{equation}
This term can be canceled by the contribution of the angular ambiguities in \eqref{log BAEs0} to the twisted superpotential $\wt\cW$
\begin{equation}
  2 \pi \sum_{i=1}^N   n_i^{(a)} u_i^{(a)}  \, ,
\end{equation}
provided that,\footnote{This is actually true only when $N$ is odd. For even $N$ we are left with a common factor $\pi \sum_{i=1}^N u_i^{(a)}$ which can be
reabsorbed in the definition of $\xi^{(a)}$.}
\begin{equation}\label{no long-range-forces Bethe}
\sum_{I \in a} (\pi  - \Delta_I) \in 2 \pi \bZ \, ,
\end{equation}
where the sum is taken over all bi-fundamental fields with one leg in the node $a$.\footnote{Adjoint fields are supposed to be counted twice.}
Since for any reasonable quiver the number of arrows entering a node is the same as the number of arrow leaving it, this equation is obviously equivalent to $\sum_{I \in a} \Delta_I \in 2 \pi \bZ$ and can be also written as 
\begin{equation}\label{no long-range-forces Bethe0}
\prod_{I \in a}  y_I =1 \, .
\end{equation}
This condition implies that the sum of the charges under all global symmetries of the bi-fundamental fields at each node must vanish. For quivers with a 4D parent, 
this is equivalent to the absence of anomalies for the global symmetries of the 4D theory. Taking the product over all the nodes in a quiver, we also get
\be\label{globalJ}
\Tr J =0 \, ,
\ee
for any global symmetry of the theory, where the trace is taken over all the bi-fundamental fermions.

There are also contributions to  the twisted superpotential  of $\cO(N^2)$.  The Chern-Simons terms give indeed
\be
\sum_a k_a  \sum_{i=1}^N  \frac{\left(u_i^{(a)}\right)^2}{2} \, .
\ee
However, the $\cO(N^2)$ term cancels out when the condition \eqref{CScontr} is satisfied. Finally, there is an $\cO(N^2)$ contributions of the fundamental fields  given by
\eqref{fundcontr}. This  vanishes if the {\it total} number of fundamental and anti-fundamental fields in the quiver is the same. 

We turn next to the large $N$ limit of the index. The vector multiplet contributes a term of $\cO(N^{5/2})$ [see Eq.\,\eqref{gauge entropy}]
\begin{equation}\label{indexscalingg}
 i \sum_{i<j}^N \left( u_i^{(a)} - u_j^{(a)} + \pi \right) \, .
\end{equation}
The contribution of $\cO(N^{5/2})$ of a chiral multiplet is  [see Eq.\,\eqref{forces chiral ba}]:
\begin{align}\label{indexscalingb}
 i \sum_{I \in a} \frac{\left(\fn_I - 1\right)}{2} \sum_{i<j}^N \left( u_i^{(a)} - u_j^{(a)} + \pi \right) \, .
\end{align}
To have a cancellation between terms of $\cO(N^{5/2})$ and $\cO(N^2)$ for each node $a$ we must have
\begin{equation}\label{no long-range forces}
 2 + \sum_{I \in a} \left(\fn_I - 1\right) = 0 \, .
\end{equation}
For a quiver with a 4D parent, this condition is equivalent to the absence of anomalies for the R-symmetry.
If we sum over all the nodes we also obtain the following constraint
\begin{equation}\label{globalR}
 {\quad \rule[-1.4em]{0pt}{3.4em} |G| + \sum_{I} \left(\fn_I - 1\right) = 0\, .\quad}
\end{equation}
The above equation is equivalent to $\Tr R = 0$ for any trial R-symmetry, where the trace is taken over all the bi-fundamental fermions and gauginos.

Summarizing, we can have  a $N^{3/2}$ scaling for the index if for each bi-fundamental connecting $a$ and $b$ there is also a bi-fundamental connecting $b$ and $a$,
the total number of fundamental and anti-fundamental fields in the quiver is equal,  Eq.\,\eqref{no long-range-forces Bethe0} and Eq.\,\eqref{no long-range forces}
are fulfilled. All these conditions are automatically satisfied for quivers with a toric vector-like 4D parent and also for other interesting models like \cite{Martelli:2009ga}.
However, they rule out many interesting chiral quivers appeared in the literature on M2-branes. We note a striking analogy with the conditions imposed in \cite{Jafferis:2011zi}. 

\subsection[Twisted superpotential at large \texorpdfstring{$N$}{N}]{Twisted superpotential at large $\fakebold{N}$}
\label{large N twisted superpotential rules}

In this section we give the general rules for constructing the  twisted superpotential for any $\cN \geq 2$ quiver gauge theory which respects the constraints \eqref{no long-range-forces Bethe0} and \eqref{no long-range forces}:

\begin{enumerate}
 \item Each group $a$ with CS level $k_a$ and chemical potential for the topological symmetry $\Delta_m^{(a)}$ contributes the term
 \begin{equation}\label{CS cV rule}
  - i k_a N^{3/2} \int \rd t\, \rho(t)\, t\, v_a(t) - i \Delta_m^{(a)} N^{3/2} \int \rd t\, \rho(t)\, t\, .
 \end{equation}
 \item A pair of bi-fundamental fields, one with chemical potential $\Delta_{(a,b)}$ and transforming in the $({\bf N},\overline{\bf N})$
 of $\U(N)_a \times \U(N)_b$ and the other with chemical potential $\Delta_{(b,a)}$ and transforming in the $(\overline{\bf N},{\bf N})$ of $\U(N)_a \times \U(N)_b$, contributes
 \begin{equation}\label{twisted superpotential bi-fundamental}
  i N^{3/2} \int \rd t\, \rho(t)^2 \left[g_+\left(\delta v(t) + \Delta_{(b,a)}\right) - g_-\left(\delta v(t) - \Delta_{(a,b)}\right)\right]\, ,
 \end{equation}
 where $\delta v(t) \equiv v_b (t) - v_a (t)$. Here, we introduced the polynomial functions
 \begin{equation}\label{gp gm:ch:2}
  g_\pm(u) = \frac{u^3}6 \mp \frac\pi2 u^2 + \frac{\pi^2}3 u \;,\qquad\qquad g_\pm'(u) = \frac{u^2}2 \mp \pi u + \frac{\pi^2}3 \, ,
 \end{equation}
 and we assumed them to be in the range
 \be
\label{inequalities for delta v0}
0 < \delta v + \Delta_{(b,a)} < 2\pi \;,\qquad\qquad\qquad -2\pi < \delta v - \Delta_{(a,b)} < 0 \, ,
\ee
which can be adjusted by choosing a specific determination for the $\Delta$ that  are  defined modulo $2\pi$. We will also assume, and 
this is certainly true if $\delta v$ assumes the value zero, that
\be
\label{general range of Delta_a0}
0 < \Delta_I < 2\pi \, .
\ee
 \item An adjoint field with chemical potential $\Delta_{(a,a)}$, contributes
 \begin{equation}\label{adjoint cV rule}
  i g_+(\Delta_{(a,a)}) N^{3/2} \int \rd t\, \rho(t)^2\, .
 \end{equation}
 \item A field $X_a$ with chemical potential $\Delta_a$ transforming in the fundamental of $\U(N)_a$, contributes
 \begin{equation}\label{fund cV rule}
  -\frac{i}{2} N^{3/2} \int \rd t\, \rho(t)\, |t| \Big[v_a(t) + \big( \Delta_a - \pi \big)\Big] \, ,
 \end{equation}
 while an anti-fundamental field with chemical potential $\tilde \Delta_a$ contributes\footnote{We also assume $0<v_a(t)+\Delta_a<2 \pi$ and $0<-v_a(t)+\tilde \Delta_a<2 \pi$.}
 \begin{equation}\label{anti-fund cV rule}
  \frac{i}{2} N^{3/2} \int \rd t\, \rho(t)\, |t| \Big[v_a(t) - \big( \tilde \Delta_a - \pi \big)\Big] \, .
 \end{equation}
\end{enumerate}

Adding all the previous contributions for all gauge groups and matter fields, we get a local functional $\wt\cW(\rho(t) ,v_a(t), \Delta_I)$ that we need to extremize
 with respect to the continuous functions
$\rho(t)$ and $v_a(t)$ with the constraint $\int \rd t \rho(t) =1$. Equivalently we can introduce a Lagrange multiplier $\mu$ and extremize
\be\label{bethefunctional}
\wt\cW\left(\rho(t) ,v_a(t), \Delta_I\right) - \mu \left (\int \rd t \rho(t) -1 \right)  \, .
\ee
This gives the large $N$ limit distribution of poles in the index matrix model.  

The solutions of the BAEs have a typical piece-wise structure.
 Eq.\,\eqref{bethefunctional} is the right functional to extremize when the conditions \eqref{inequalities for delta v0} are satisfied.
 This gives a central region where $\rho(t)$ and $v_a(t)$ vary with continuity as functions of $t$. 
When one of the $\delta v(t)$ associated with a pair of bi-fundamental hits the boundaries of the inequalities \eqref{inequalities for delta v0}, 
it remains frozen to a constant value  $\delta v = -\Delta_{(b,a)}$ (mod $2\pi$) or  $\delta v = \Delta_{(b,a)}$  (mod $2\pi$)
for larger (or smaller) values of $t$. This creates ``tail'' regions where one or more $\delta v$ are frozen and the functional  \eqref{bethefunctional}
is extremized with respect to the remaining variables. In the tails, the derivative of  \eqref{bethefunctional} with respect to the frozen variable is not zero
and it is compensated by subleading terms that we omitted.  To be precise, the equations of motion  [see Eq.\,\eqref{talis eom}] includes
subleading terms
 \begin{equation}\label{tails} \frac{\partial \wt\cW}{\partial (\delta v)} + i N \rho \left[ \Li_1 \left( \e^{i \left( \delta v + \Delta_{(b,a)} \right)} \right)
 - \Li_1 \left( \e^{i \left( \delta v - \Delta_{(a,b)} \right)} \right)\right] = 0 \, ,
 \end{equation}
which are negligible except on the tails, where $\delta v$ has exponentially small correction to the large $N$ constant value 
 \begin{equation}\label{tails2}
 \delta v = - \Delta_{(b,a)} + \e^{-N^{1/2} Y_{(b,a)}}  \, , \qquad 
 \delta v = \Delta_{(a,b)}   - \e^{-N^{1/2} Y_{(a,b)}} \, ,   \qquad  \text{mod } 2\pi \, .
 \end{equation}
The quantities $Y$ are determined by equation \eqref{tails} and contribute to the large $N$ limit of the index.

\subsubsection{The ABJM example}
\label{ABJM:example:ch:2}

As an example, we briefly review here the solution to the BAEs for the ABJM model found in \cite{Benini:2015eyy}.
The reader can find many other examples in chapter \ref{ch:3}.
ABJM is a Chern-Simons-matter theory with gauge group $\U(N)_k \times \U(N)_{-k}$,
with two pairs of  bi-fundamental fields $A_i$ and $B_j$ transforming in the 
representation $({\bf N},\overline{\bf N})$ and $(\overline{\bf N},{\bf N})$ of the gauge group, respectively, and superpotential
\be
W = \Tr \left( A_1B_1A_2B_2 - A_1B_2A_2B_1 \right) \, .
\ee
We assign chemical potentials $\Delta_{1,2}\in [0,2\pi]$ to $A_i$ and  $\Delta_{3,4}\in [0,2\pi]$ to $B_j$.
Invariance of the superpotential under the global symmetries
requires that $\sum_I \Delta_I \in 2\pi \mathbb{Z}$ (or equivalently $\prod_{I} y_I = 1$).
Conditions \eqref{no long-range-forces Bethe0} and \eqref{no long-range forces} are then automatically satisfied.
The twisted superpotential, for $k=1$,\footnote{There is a similar solution  for $k>1$ with $\wt\cW\rightarrow \wt\cW \sqrt{k}$.
However, we also need to take into account that, for $k>1$, there are further identifications among the $\Delta_I$
due to discrete $\mathbb{Z}_k$ symmetries of the quiver.} reads
\be
\wt\cW =  i N^{3/2} \int \rd t \left\{ t\, \rho(t)\, \delta v(t) + \rho(t)^2 \left[ \, \sum_{a=3,4} g_+ \left( \delta v(t) + \Delta_a \right) - \sum_{a=1,2} g_- \left( \delta v(t) - \Delta_a \right)\right] \right\} \, .
\ee
The solution for $\sum_I \Delta_I = 2 \pi$ and $\Delta_1\leq \Delta_2$, $\Delta_3\leq \Delta_4$  is as follows \cite{Benini:2015eyy}. We have a central region where 
\be
\begin{aligned}
\rho &= \frac{2\pi \mu + t(\Delta_3 \Delta_4 - \Delta_1 \Delta_2)}{(\Delta_1 + \Delta_3)(\Delta_2 + \Delta_3)(\Delta_1 + \Delta_4)(\Delta_2 + \Delta_4)} \, , \\[.5em]
\delta v &= \frac{\mu(\Delta_1 \Delta_2 - \Delta_3 \Delta_4) + t \sum_{a<b<c} \Delta_a \Delta_b \Delta_c }{ 2\pi \mu + t ( \Delta_3 \Delta_4 - \Delta_1 \Delta_2) } \, ,
\end{aligned}
\qquad\qquad -\frac{\mu}{\Delta_4}   < t < \frac{\mu}{\Delta_2} \, .
\ee
When $\delta v$ hits $-\Delta_3$ on the left the solution becomes
\be
\rho = \frac{\mu + t\Delta_3}{(\Delta_1 + \Delta_3)(\Delta_2 + \Delta_3)(\Delta_4 - \Delta_3)} \, , \,\,   \delta v = - \Delta_3 \;, \,\, \qquad  -\frac{\mu}{\Delta_3}  < t <-\frac{\mu}{\Delta_4}  \, ,
\ee
with the exponentially small correction  $Y_3 = (- t\Delta_4 -\mu)/(\Delta_4 - \Delta_3)$, while when  $\delta v$ hits $\Delta_1$ on the right the solution becomes
\be
\rho = \frac{\mu - t \Delta_1}{(\Delta_1 + \Delta_3)(\Delta_1 + \Delta_4)(\Delta_2 - \Delta_1)} \, ,\,\,
\delta v = \Delta_1 \;,\,\, \qquad \frac{\mu}{\Delta_2}  < t < \frac{\mu}{\Delta_1}  \, ,
\ee
with $Y_1 =(t\Delta_2 - \mu)/(\Delta_2 - \Delta_1)$.
Finally, the on-shell twisted superpotential is
\be
 \label{Vsolution sum 2pi -- end}
 \wt\cW \left(\rho(t), \delta v(t), \Delta_I \right) \big|_{\text{BAEs}}
 = \frac {2i}{3} \mu N^{3/2}  = \frac {2 i N^{3/2}}{3} \sqrt{ 2 \Delta_1 \Delta_2 \Delta_3 \Delta_4} \, .
\ee
There is also a solution for $\sum_I \Delta_I = 6 \pi$ which, however, is obtained by the previous one
by a discrete symmetry $\Delta_I \rightarrow 2 \pi -\Delta_I$ $\left(y_I\rightarrow y_I^{-1}\right)$. 

\subsection[The index at large \texorpdfstring{$N$}{N}]{The index at large $\fakebold{N}$}
\label{large N index rules}

We now turn to the large $N$ limit of the  index for an $\cN \geq 2$ quiver gauge theory without long-range forces. 
Here, we give the rules for constructing the index once we know the large $N$ solution $\rho(t),v_a(t)$ of the BAE,
which is obtained by extremizing \eqref{bethefunctional}. The final result scales as $N^{3/2}$. 
\begin{enumerate}
 \item For each group $a$, the contribution of the Vandermonde determinant is
 \begin{equation}\label{gaugecontribution}
 -\frac{\pi^2}{3} N^{3/2} \int \rd t\, \rho(t)^2\, .
 \end{equation}
 \item A $\U(1)_a$ topological symmetry with flux $\ft_a$ contributes
 \begin{equation}
 - \ft_a N^{3/2} \int \rd t\, \rho(t)\, t\, .
 \end{equation}
 \item A pair of bi-fundamental fields, one with magnetic flux $\fn_{(a,b)}$ and chemical potential $\Delta_{(a,b)}$ transforming
 in the $({\bf N},\overline{\bf N})$ of $\U(N)_a \times \U(N)_b$ and the other  with magnetic flux $\fn_{(b,a)}$ and chemical potential $\Delta_{(b,a)}$
 transforming in the $(\overline{\bf N},{\bf N})$ of $\U(N)_a \times \U(N)_b$, contributes
 \begin{equation}\label{bicontribution}
 - N^{3/2} \int \rd t\, \rho(t)^2 \left[(\fn_{(b,a)}-1)\, g'_+ \left(\delta v(t) + \Delta_{(b,a)}\right)
  + (\fn_{(a,b)}-1)\, g'_- \left(\delta v(t) - \Delta_{(a,b)}\right)\right]\, .
 \end{equation}
 \item An adjoint field with magnetic flux $\fn_{(a,a)}$ and chemical potential $\Delta_{(a,a)}$, contributes
 \begin{equation}
  - (\fn_{(a,a)}-1)\, g'_+ \left(\Delta_{(a,a)}\right) N^{3/2} \int \rd t\, \rho(t)^2\, .
 \end{equation}
 \item A field $X_a$ with magnetic flux $\fn_a$ transforming in the fundamental of $\U(N)_a$, contributes
 \begin{equation}
  \frac{1}{2} (\fn_a - 1) N^{3/2} \int \rd t\, \rho(t) |t| \, ,
 \end{equation}
 while an anti-fundamental field with magnetic flux $\tilde \fn_a$ contributes
 \begin{equation}
  \frac{1}{2} (\tilde\fn_a - 1) N^{3/2} \int \rd t\, \rho(t) |t| \, .
 \end{equation}
 \item The tails,  where $\delta v$ has a constant value, as in \eqref{tails}, contribute
 \be\label{tailcontribution}
 - \fn_{(b,a)} N^{3/2} \int_{\delta v \approx - \Delta_{(b,a)} (\text{mod } 2\pi)}  \rd t \, \rho(t) Y_{(b,a)} - \fn_{(a,b)} N^{3/2} \int_{\delta v \approx  \Delta_{(a,b)} (\text{mod } 2\pi)}  \rd t \, \rho(t) Y_{(a,b)} \, ,
 \ee
 where the integrals are taken on the tails regions.
    
\end{enumerate}

As an example, for ABJM, using the above solution of the BAEs, one obtains the simple expression \cite{Benini:2015eyy}
\be
 \label{index:ABJM:ch:2}
 \re\log Z (\fn_I , \Delta_I) = -\frac{N^{3/2} }{3} \sqrt{2 \Delta_1\Delta_2\Delta_3\Delta_4} \sum_{I=1}^{4} \frac{\fn_I}{\Delta_I} \, .
\ee

\section{Derivation of matrix model rules}
\label{general rules N32}

In this section we give a detail derivation of the rules presented in subsections
\ref{large N twisted superpotential rules} and \ref{large N index rules}
for finding the twisted superpotential and the index at large $N$.
This section is rather technical and can be skipped on a first reading.

We consider the following large $N$ saddle-point eigenvalue distribution ansatz
\be\label{ansatz alpha}
u_i^{(a)} = i N^{\alpha} t_i + v_i^{(a)} \, .
\ee
Note that the imaginary parts of the $u_i^{(a)}$ are equal. We also define
\be
\delta v_i = v_i^{(b)} - v_i^{(a)} \, .
\ee
At large $N$, we define the continuous functions $t_i = t(i/N)$ and $v_i^{(a)}= v^{(a)}(i/N)$ and  we introduce the density of eigenvalues 
\be \rho(t) = \frac1N\, \frac{di}{ \rd t}\, , \ee
normalized as $ \int  \rd t\, \rho(t) = 1$.
Furthermore, we impose the additional constraint
\be
\sum_{a=1}^{|G|} k_a = 0 \, ,
\ee
corresponding to quivers dual to M-theory on AdS$_4\times Y_7$ and $N^{3/2}$ scaling. 

\subsection[Twisted superpotential at large \texorpdfstring{$N$}{N}]{Twisted superpotential at large $\fakebold{N}$}
\label{twisted superpotential rules N32}

We may write the BAEs as
\begin{align}\label{log BAEs2}
 0 & = \log \left[ \text{RHS of \eqref{BA expression}} \right] - 2 \pi i n_i^{(a)} \, ,
\end{align}
where $n_i^{(a)}$ are integers that parameterize the angular ambiguities.
We define the ``twisted superpotential'' as the function whose critical points give the BAEs \eqref{log BAEs2}.
In  the large $N$ limit the twisted superpotential $\wt\cW$ will be the sum of various contributions,
\be
\wt\cW = \wt\cW^{\text{CS}} + \wt\cW^{\text{bi-fund}} + \wt\cW^{\text{adjoint}} + \wt\cW^{\text{(anti-)fund}} \, .
\ee
$\alpha$ will be determined to be $1/2$ by the competition between Chern-Simons terms and matter contribution.

\subsubsection{Chern-Simons contribution}
\label{Bethe CS N32}

Each group $a$ with CS level $k_a$ and topological chemical potential $\Delta_m^{(a)}$, contributes to the  finite $N$ twisted superpotential as
\begin{equation}
 \wt\cW^{\text{CS}} = \sum_{i=1}^N  \left[- \frac {k_a}{2} \left(u_i^{(a)}\right)^2 - \Delta_m^{(a)} u_i^{(a)} \right] \, .
\end{equation}
Given the large $N$ saddle-point eigenvalue distribution \eqref{ansatz alpha}, we find
\begin{equation}
 \wt\cW^{\text{CS}} = \frac{k_a}{2} N^{2\alpha} \sum_{i=1}^{N} t_i^2 - i N^{\alpha} \sum_{i=1}^{N} \left(k_a t_i v_i^{(a)} + \Delta_m^{(a)} t_i \right) \, .
\end{equation}
Summing over nodes the first term vanishes (since $\sum_{a=1}^{|G|} k_a = 0$).
Taking the continuum limit, we obtain
\begin{equation}
 \wt\cW^{\text{CS}} = - i k_a N^{1+\alpha} \int  \rd t\, \rho(t)\, t\, v_a(t) - i N^{1+\alpha} \Delta_m^{(a)} \int  \rd t\, \rho(t)\, t\, .
\end{equation}

\subsubsection{Bi-fundamental contribution}
\label{Bethe bi-fundamentals N32}

For a pair of bi-fundamental fields, one with chemical potential $\Delta_{(a,b)}$ transforming in the $({\bf N},\overline{\bf N})$
of $\U(N)_a \times \U(N)_b$ and one with chemical potential $\Delta_{(b,a)}$ transforming in the $(\overline{\bf N},{\bf N})$ of $\U(N)_a \times \U(N)_b$, the finite $N$  contribution to the twisted superpotential is given by
\be\begin{aligned}\label{cV bi-fundamentals}
 \wt\cW^{\text{bi-fund}} & = \sum_{\substack{\text{bi-fundamentals} \\ (b,a) \text{ and } (a,b)}} \sum_{i,j=1}^N \left[ \Li_2 \left( \e^{i \left(u_j^{(b)} - u_i^{(a)} + \Delta_{(b,a)}\right)} \right) - \Li_2 \left( \e^{i \left(u_j^{(b)} - u_i^{(a)} - \Delta_{(a,b)}\right)} \right) \right] \\
 & - \sum_{\substack{\text{bi-fundamentals} \\ (b,a) \text{ and } (a,b)}} \sum_{i,j=1}^N  \left[ \frac{\left( \Delta_{(b,a)} - \pi \right) + \left( \Delta_{(a,b)} - \pi \right) }{2} \left( u_j^{(b)} - u_i^{(a)} \right) \right] \, ,
\end{aligned}\ee
up to constants that do not depend on $u_j^{(b)}$, $u_i^{(a)}$.

We would like to remind the reader that all angular variables are defined modulo $2\pi$. Part of the ambiguity in $\Delta_I$ can be fixed by requiring that
\be
\label{inequalities for delta v}
0 < \delta v + \Delta_{(b,a)} < 2\pi \;,\qquad\qquad\qquad -2\pi < \delta v - \Delta_{(a,b)} < 0 \;.
\ee
The remaining ambiguity of simultaneous shifts $\delta v \to \delta v+2\pi$, $\Delta_{(a,b)} \to \Delta_{(a,b)} +2\pi$, $\Delta_{(b,a)} \to \Delta_{(b,a)}-2\pi$ can also be fixed by requiring that $\delta v(t)$ takes the value $0$ somewhere, if it vanishes at all, which we assume.
We then have
\be
\label{general range of Delta_a}
0 < \Delta_I < 2\pi \, .
\ee

To compute $\wt\cW^{\text{bi-fund}}$, we break
\be\begin{aligned}
\label{broken expression}
\sum_{i,j=1}^N \Li_2\left( \e^{i \left(u_j^{(b)} - u_i^{(a)}+ \Delta_{(b,a)}\right)} \right) & = \sum_{i>j} \Li_2\left( \e^{i \left(u_j^{(b)} - u_i^{(a)} + \Delta_{(b,a)}\right)} \right) + \sum_{i<j} \Li_2\left( \e^{i \left( u_j^{(b)} - u_i^{(a)} + \Delta_{(b,a)}\right)} \right) \\
& + \sum_{i=1}^N \Li_2\left( \e^{i \left(u_i^{(b)} - u_i^{(a)} + \Delta_{(b,a)}\right)} \right) \, .
\end{aligned}\ee
The crucial point here is that the last term is naively of $\cO(N)$ and thus subleading; however, we should keep it since its derivative is not subleading on part of the solution when $\delta v$ hits $\Delta_{(a,b)}$ or $-\Delta_{(b,a)}$.
Therefore, we keep
\be\label{talis eom}
N \int  \rd t\, \rho(t) \left[\Li_2 \left( \e^{i \left( \smallstrut \delta v(t) + \Delta_{(b,a)} \right)} \right) - \Li_2 \left( \e^{i \left( \smallstrut \delta v(t) - \Delta_{(a,b)} \right)} \right) \right] \; .
\ee
This will be important in  the {\em tail contribution} to the twisted superpotential.
The second term in \eqref{broken expression} is
\be
 \label{sum:i:less:j}
 \sum_{i<j} \Li_2\left( \e^{i \left(u_j^{(b)} - u_i^{(a)} + \Delta_{(b,a)} \right)} \right) = N^2 \int  \rd t\, \rho(t) \int_t  \rd t'\, \rho(t')\, \Li_2 \left( \e^{i \left( \smallstrut u_b(t') - u_a(t) + \Delta_{(b,a)} \right)} \right) \, .
\ee
We first write the dilogarithm function as a power series
\be
 \Li_2(\e^{iu}) = \sum_{k=1}^\infty \frac{\e^{iku}}{k^2} \, .
\ee
Then, we consider the integral
\bea
I_k & = \int_t  \rd t'\, \rho(t')\, \e^{ik \left( \smallstrut u_b(t') - u_a(t) + \Delta_{(b,a)} \right)} \\
&= \int_t  \rd t'\, \e^{-kN^\alpha (t'-t)} \sum_{j=0}^\infty \frac{(t'-t)^j}{j!} \partial_x^j \left[ \rho(x)\, \e^{ik \left( \smallstrut v_b(x) - v_a(t) + \Delta_{(b,a)} \right)} \right]_{x=t} \, ,
\eea
where in the second equality we have Taylor-expanded the integrand around the lower bound.
Doing the integration over $t'$ we see that the leading contribution is for $j=0$, thus
\be
I_k = \frac{\rho(t)\, \e^{ik \left( \smallstrut v_b(t) - v_a(t) + \Delta_{(b,a)} \right)} }{ k N^\alpha} + \cO(N^{-2\alpha}) \, .
\ee
Substituting we find
\be
\label{first summation}
\sum_{i<j} \Li_2 \left( \e^{i \left( u_j^{(b)} - u_i^{(a)} + \Delta_{(b,a)} \right)} \right) = N^{2-\alpha} \int  \rd t\, \Li_3 \left( \e^{i \left( \smallstrut \delta v(t) + \Delta_{(b,a)} \right)} \right)\, \rho(t)^2 + \cO(N^{2-2\alpha}) \;.
\ee
Next, we need to compute the first term in \eqref{broken expression}.
In order for the integral to be localized at the boundary, we need to invert the integrand.
Since $0<\re \left(u_j^{(b)} - u_i^{(a)} + \Delta_{(b,a)}\right)<2\pi$: 
\be\begin{aligned}
\label{reflection applied 34}
\Li_2 \left( \e^{i \left( u_j^{(b)} - u_i^{(a)} + \Delta_{(b,a)} \right)} \right) & = - \Li_2 \left( \e^{i \left( u_i^{(a)} - u_j^{(b)} - \Delta_{(b,a)} \right)} \right) + \frac{ \left(u_j^{(b)} - u_i^{(a)} + \Delta_{(b,a)}\right)^2}2 \\
& - \pi \left(u_j^{(b)} - u_i^{(a)} + \Delta_{(b,a)}\right) + \frac{\pi^2}3 \;.
\end{aligned}\ee
The summation $\sum_{i>j}$ of the first term in the latter expression is similar to \eqref{first summation} but with $-\Li_3\left( \e^{-i \left( \delta v(t) + \Delta_{(b,a)} \right)} \right)$ instead of $\Li_3$.
The two contributions may then be combined, using \eqref{reflection formulae},
\be\begin{aligned}
& N^{2-\alpha} \int  \rd t\,  \left[\Li_3 \left( \e^{i \left( \smallstrut \delta v(t) + \Delta_{(b,a)} \right)} \right) - \Li_3\left( \e^{-i \left( \delta v(t) + \Delta_{(b,a)} \right)} \right) \right]\, \rho(t)^2 \\
& = i N^{2-\alpha} \int  \rd t\,  g_+\left( \smallstrut \delta v(t) + \Delta_{(b,a)} \right) \, \rho(t)^2\, ,
\end{aligned}\ee
where we have introduced the polynomial function $g_+(u)$ defined in  Eq.\,\eqref{gp gm:ch:2}.

The second term in the first line of (\ref{cV bi-fundamentals}) can be treated similarly.
We now have $-2\pi< \re (u_j^{(b)} - u_i^{(a)} - \Delta_{(a,b)}) < 0$ and
\be\begin{aligned}
\label{reflection applied 12}
- \Li_2 \left( \e^{i \left(u_j^{(b)} - u_i^{(a)} - \Delta_{(a,b)}\right)} \right) & = \Li_2 \left( \e^{i \left(u_i^{(a)} - u_j^{(b)} + \Delta_{(a,b)} \right)} \right) - \frac{\left(u_j^{(b)} - u_i^{(a)} - \Delta_{(a,b)}\right)^2}2 \\
& - \pi \left(u_j^{(b)} - u_i^{(a)} - \Delta_{(a,b)} \right) - \frac{\pi^2}3 \, .
\end{aligned}\ee
As before, the result of the summation $\sum_{i>j}$ together with that of $\sum_{i<j}$ yields a cubic polynomial expression
\be\begin{aligned}
&  - i N^{2-\alpha} \int  \rd t\,  g_-\left( \smallstrut \delta v(t) - \Delta_{(a,b)} \right) \, \rho(t)^2\, ,
\end{aligned}\ee
where $g_-(u)$ is defined in  Eq.\,\eqref{gp gm:ch:2}.

The left over terms from \eqref{reflection applied 34} and \eqref{reflection applied 12}, throwing away the constants which do not affect the critical points, are
\be\label{bi-fundamentals forces}
 \left[ \left( \Delta_{(a,b)} - \pi \right) + \left( \Delta_{(b,a)} - \pi \right) \right] \sum_{i>j} \left( u_j^{(b)} - u_i^{(a)}\right) \, ,
\ee
which, combined with the second line in \eqref{cV bi-fundamentals}, gives
\be\label{bi-fundamentals forces2}
 \frac{1}{2} \left[ \left( \Delta_{(a,b)} - \pi \right) + \left( \Delta_{(b,a)} - \pi \right) \right] \sum_{i\neq j} \left( u_j^{(b)} - u_i^{(a)}\right) \text{sign} (i-j)\, .
\ee
This term can be precisely canceled by
\begin{equation}
 - 2 \pi \sum_{i=1}^N \left( n_i^{(b)} u_i^{(b)} - n_i^{(a)} u_i^{(a)} \right) \, ,
\end{equation}
provided that $\sum_{I \in a} \Delta_I \in 2 \pi \bZ $.\footnote{When $N \in 2 \bZ_{\geq 0} +1 $. For even $N$  one can include an extra $(-1)^{\fm}$ in the twisted partition function, which can be reabsorbed in the definition of the topological fugacity $\xi$, to compensate the overall factor of $\pi$.}

Notice that  a single bi-fundamental chiral multiplet, with chemical potential $\Delta_{(b,a)}$,
transforming in the representation $(\overline{\bf N},{\bf N})$ of $\U(N)_a \times \U(N)_b$ contributes to the twisted superpotential as
\begin{equation}\label{single bifund Bethe}
 \sum_{i,j=1}^N \left[ \Li_2 \left( \e^{i \left( u_j^{(b)} - u_i^{(a)} + \Delta_{(b,a)} \right)} \right) - \frac{\left( u_j^{(b)} - u_i^{(a)} + \Delta_{(b,a)} \right)^2}{4} \right] \, .
\end{equation}
Using Eq.\,\eqref{reflection applied 34} we find the following long-range terms
\be\label{forces chiral}
\sum_{i<j} \frac{\left(u_i^{(a)} - u_j^{(b)}\right)^2}{4} - \sum_{i<j} \frac{\left(u_j^{(a)} - u_i^{(b)}\right)^2}{4}\, .
\ee
In the large N limit, they are of order $N^{5/2}$ and cannot be canceled  for {\it chiral quivers}.

To find a nontrivial saddle-point the leading terms of order $N^{1+\alpha}$ and $N^{2-\alpha}$ have to be of the same order, so we need $\alpha = 1/2$.
Putting everything together we arrive at the final expression for the large $N$ contribution of the bi-fundamental fields to the twisted superpotential
\begin{equation}
 \wt\cW^{\text{bi-fund}} = i N^{3/2} \sum_{\substack{\text{bi-fundamentals} \\ (b,a) \text{ and } (a,b)}} \int  \rd t\, \rho(t)^2 \left[g_+\left(\delta v(t) + \Delta_{(b,a)}\right) - g_-\left(\delta v(t) - \Delta_{(a,b)}\right)\right]\, .
\end{equation}

In the sum over pairs of bi-fundamental fields $(b,a)$ and $(a,b)$, adjoint fields should be counted once and should come with an explicit factor of $1/2$.
Keeping this in mind and setting $v_b = v_a \, ,$ $\Delta_{(b,a)} =  \Delta_{(a,b)} =  \Delta_{(a,a)} \, ,$ we find the
contribution of fields transforming in the adjoint of the $a$th gauge group
with chemical potential $\Delta_{(a,a)}$ to the large $N$ twisted superpotential,
\begin{equation}
 \wt\cW^{\text{adjoint}} = i N^{3/2} \sum_{\substack{\text{adjoint} \\ (a,a) }} g_+\left(\Delta_{(a,a)}\right) \int  \rd t\, \rho(t)^2 \, .
\end{equation}

\subsubsection{Fundamental and anti-fundamental contribution}
\label{Bethe fundamentals N32}
 
The fundamental and anti-fundamental fields contribute to the large $N$  twisted superpotential as\footnote{Up to a factor $-\pi (\tilde n_a -n_a) u_i/2$ that cancels at this order  for 
total number of fundamentals equal to total number of anti-fundamentals, which we will need to assume for consistency.} 
\be\begin{aligned}
 \wt\cW^{\text{(anti-)fund}} & =
 \sum_{i=1}^{N} \left[ \sum_{\substack{\text{anti-fundamental} \\ a }} \Li_2 \left(\e^{i\left(-u_i ^{(a)} + \tilde \Delta_a \right)}\right) - \sum_{\substack{\text{fundamental} \\ a }} \Li_2 \left(\e^{i\left(-u_i ^{(a)} - \Delta_a\right)}\right) \right] \\
 & + \frac{1}{2} \sum_{i=1}^{N} \left[ \sum_{\substack{\text{anti-fundamental} \\ a }} \big( \tilde \Delta_a - \pi \big) u_i^{(a)} + \sum_{\substack{\text{fundamental} \\ a }} \big( \Delta_a - \pi \big) u_i^{(a)}\right] \\
 & - \frac{1}{4} \sum_{i=1}^{N} \left[\sum_{\substack{\text{anti-fundamental} \\ a }} \left(u_i^{(a)}\right)^2
 - \sum_{\substack{\text{fundamental} \\ a }} \left(u_i^{(a)}\right)^2 \right] \, .
\end{aligned}\ee
Let us denote the total number of (anti-)fundamental fields by $(\tilde n_a)\, n_a$.
Substituting in $\wt\cW^{\text{(anti-)fund}}$ the ansatz \eqref{ansatz alpha} and taking the continuum limit, the first line contributes
\be\begin{aligned}\label{fund Li2}
 & - \frac{\left( \tilde n_a - n_a \right)}{2} N^{2} \int_{t>0}  \rd t\, \rho(t)\, t^2 \\
 & + i N^{3/2} \int_{t>0}  \rd t\, \rho(t)\, t\, \left\{ \sum_{\substack{\text{anti-fundamental} \\ a }} \Big[ v_a(t) - \big( \tilde \Delta_{a} - \pi \big) \Big] - \sum_{\substack{\text{fundamental} \\ a }} \Big[ v_a(t) + \big( \Delta_{a} - \pi \big) \Big] \right\} \, ,
\end{aligned}\ee
while the second and the third lines give
\be\begin{aligned}\label{fund linear}
 & \frac{\left( \tilde n_a - n_a \right)}{4} N^{2} \int  \rd t\, \rho(t)\, t^2 \\
 & - \frac{i}{2} N^{3/2} \int  \rd t\, \rho(t)\, t\, \left\{ \sum_{\substack{\text{anti-fundamental} \\ a }} \Big[ v_a(t) - \big( \tilde \Delta_{a} - \pi \big) \Big] - \sum_{\substack{\text{fundamental} \\ a }} \Big[ v_a(t) + \big( \Delta_{a} - \pi \big) \Big] \right\} \, . %
\end{aligned}\ee
Combining Eq.\,\eqref{fund Li2} and Eq.\,\eqref{fund linear}, we obtain
\be\begin{aligned}\label{fundcontr}
 \wt\cW^{\text{(anti-)fund}} & = - \frac{\left( \tilde n_a - n_a \right)}{4} N^{2} \int  \rd t\, \rho(t)\, t\, |t| \\
 & + \frac{i}{2} N^{3/2} \sum_{\substack{\text{anti-fundamental} \\ a }} \int  \rd t\, \rho(t)\, |t|\, \Big[ v_a(t) - \big( \tilde \Delta_{a} - \pi \big) \Big] \\
 & - \frac{i}{2} N^{3/2} \sum_{\substack{\text{fundamental} \\ a }} \int  \rd t\, \rho(t)\, |t|\, \Big[ v_a(t) + \big( \Delta_{a} - \pi \big) \Big] \, .
\end{aligned}\ee
Summing over nodes the first term vanishes, demanding that
\be
\sum_{a=1}^{|G|} \left( \tilde n_a - n_a \right) = 0 \, .
\ee
We see that we need to consider quivers where the {\it total} number of fundamentals equal the {\it total} number of anti-fundamentals.
For each single node this number can be different. 

\subsection[The index at large \texorpdfstring{$N$}{N}]{The index at large $\fakebold{N}$}
\label{entropy rules N32}

We are interested in the large $N$ limit of the logarithm of the twisted partition function.

\subsubsection{Gauge contribution}
\label{entropy CS N32}

Given the expression for the matrix model in section \ref{index}, the Vandermonde determinant contributes to the logarithm of the index  as
\be\begin{aligned}\label{gauge entropy}
 \log \prod_{i \neq j} \left( 1- \frac{x_i^{(a)}}{x_j^{(a)}} \right) & =
 \log \prod_{i < j} \left( 1- \frac{x_j^{(a)}}{x_i^{(a)}} \right)^2 \left( - \frac{x_i^{(a)}}{x_j^{(a)}} \right) \\
 & = i \sum_{i<j}^N \left( u_i^{(a)} - u_j^{(a)} + \pi \right)
 - 2 \sum_{i<j}^N \Li_{1} \left(\e^{i \left( u_j^{(a)} - u_i^{(a)}\right)}\right) \, .
\end{aligned}\ee
The first term is of $\cO(N^2)$ and, therefore, a source of the long-range forces and will be  canceled by the contribution coming from the chiral multiplets.
The second term is treated as in section \ref{Bethe bi-fundamentals N32}, and gives
\be
 \re \log Z^{\text{gauge}} = - \frac{\pi^2}{3} N^{3/2}  \int  \rd t\, \rho(t)^2 + \cO(N)  \, .
\ee

\subsubsection{Topological symmetry contribution}
\label{entropy topological N32}

A $\U(1)_a$ topological symmetry with flux $\ft_a$ contributes as $i \sum_{i=1}^{N} u_i^{(a)} \ft_a \, .$
In the continuum limit, we get
\be
 \re \log Z^{\text{top}} = - \ft_a N^{3/2} \int  \rd t\, \rho(t)\, t + \cO(N)  \, .
\ee

\subsubsection{Bi-fundamental contribution}
\label{entropy bi-fundamentals N32}

We can rewrite the contribution  to the twisted index of a bi-fundamental chiral multiplet transforming in the $(\overline{\bf N}, {\bf N})$ of $\U(N)_a \times \U(N)_b$, with magnetic flux $\fn_{(b,a)}$ and chemical potential $\Delta_{(b,a)}$  as:\footnote{The phases can be neglected, as we will be interested in $\log |Z|$.}
\be\begin{aligned}
 & \prod_{i=1}^{N} \left( \frac{x_i^{(b)}}{x_i^{(a)}} \right)^{\frac{1}{2} \left( \fn_{(b,a)} - 1\right)}
 \left( 1 - y_{(b,a)} \frac{x_i^{(b)}}{x_i^{(a)}} \right)^{\fn_{(b,a)} - 1} \times \\
 & \prod_{i<j}^{N} (-1)^{\fn_{(a,b)} - 1} \left( \frac{x_i^{(a)} x_i^{(b)}}{x_j^{(a)} x_j^{(b)}} \right)^{\frac{1}{2} \left(\fn_{(b,a)} - 1\right)}
 \left( 1 - y_{(b,a)} \frac{x_j^{(b)}}{x_i^{(a)}} \right)^{\fn_{(b,a)}-1} \left( 1 - y_{(b,a)}^{-1} \frac{x_j^{(a)}}{x_i^{(b)}} \right)^{\fn_{(b,a)}-1} \, .
\end{aligned}\ee
The first term in $\prod_{i}$ is subleading and the second term only contributes in the tail where $\delta v \approx - \Delta_{(b,a)}$,
\be\begin{aligned}
 & N \left(\fn_{(b,a)} - 1\right) \int  \rd t\, \rho(t) \log \left( 1-\e^{i\left(\delta v + \Delta_{(b,a)}\right)} \right) \\
 & = - N^{3/2} \; \left(\fn_{(b,a)} - 1\right) \int_{\delta v \,\approx\, - \Delta_{(b,a)}} \hspace{-3em}  \rd t\, \rho(t) \, Y_{(b,a)}(t) + \cO(N)
\end{aligned}\ee
The first two terms in $\prod_{i<j}$ give a long-range force contribution to the index
\begin{equation}\label{forces chiral ba}
 \frac{i}{2} \left( \fn_{(b,a)} - 1 \right) \sum_{i<j} \left[ \left( u_i^{(a)} - u_j^{(a)} + \pi \right) + \left( u_i^{(b)} - u_j^{(b)} + \pi \right) \right] \, ,
\end{equation}
while the last two terms result in
\be\begin{aligned}
&  - N^{3/2} \left(\fn_{(b,a)}-1\right) \int  \rd t\, \rho(t)^2 \left[ \Li_2\left( \e^{i\left(\delta v+\Delta_{(b,a)}\right)} \right) + \Li_2 \left( \e^{-i \left(\delta v + \Delta_{(b,a)}\right)} \right) \right] + \cO(N) \\
 & = - N^{3/2} \left(\fn_{(b,a)}-1\right) \int  \rd t\, \rho(t)^2 g_+' \left( \delta v(t) + \Delta_{(b,a)} \right) + \cO(N) \, .
\end{aligned}\ee
A bi-fundamental field transforming in the $({\bf N}, \overline{\bf N})$ of $\U(N)_a \times \U(N)_b$, with magnetic flux $\fn_{(a,b)}$ and chemical potential $\Delta_{(a,b)}$ gives
the same contribution with the replacement $a\leftrightarrow b$ and $\delta v \rightarrow -\delta v$. 

The long-range force contribution of bi-fundamental fields at node $a$ cancels with the gauge contribution  in \eqref{gauge entropy},  provided that
\be
2 + \sum_{I \in a} (\fn_I - 1) = 0 \, ,
\ee
where the sum is taken over all chiral bi-fundamentals $I$ with an endpoint in $a$.

In picking the residues, we need to insert a Jacobian in the twisted index and evaluate everything else at the pole.
The matrix $\mathds{B}$ appearing in the Jacobian is $2 N \times 2 N$ with block form
\be
\label{Jacobian general}
\mathds{B} = \frac{\partial \big( \e^{i B_j^{(a)}}, \e^{i B_j^{(b)}} \big) }{ \partial( \log x_l^{(a)}, \log x_l^{(b)})} =
\mat{ x_l^{(a)} \dfrac{\partial \e^{iB_j^{(a)}}}{\partial x_l^{(a)}} & x_l^{(b)} \dfrac{ \partial \e^{i B_j^{(a)}}}{\partial x_l^{(b)}}  \\[1em]
x_l^{(a)} \dfrac{\partial \e^{i B_j^{(b)}}}{\partial x_l^{(a)}} & x_l^{(b)} \dfrac{ \partial \e^{i B_j^{(b)}}}{\partial x_l^{(b)}} }_{2 N \times 2 N} \, ,
\ee
and only contributes in the tails regions,\footnote{We refer the reader to \cite{Benini:2015eyy} for a detailed analysis of the Jacobian at large $N$.}%
\bea
-\log \det \mathds{B} = - N^{3/2} \sum_{\substack{\text{bi-fundamentals} \\ (b,a) \text{ and } (a,b)}}
 & \int_{\delta v \,\approx\, - \Delta_{(b,a)}} \hspace{-3em}  \rd t\, \rho(t) \, Y_{(b,a)}(t) \\
 & + \int_{\delta v \,\approx\,\Delta_{(a,b)}} \hspace{-2em}  \rd t\, \rho(t) \, Y_{(a,b)}(t) + \cO(N\log N) \, .
\eea

Summarizing, pairs of bi-fundamental fields contribute to the logarithm of the index  as
\bea\label{indexbi}
 \re \log Z^{\text{bi-fund}}_{\text{bulk}} = - N^{3/2} \sum_{\substack{\text{bi-fundamentals} \\ (b,a) \text{ and } (a,b)}}
 & \int  \rd t\, \rho(t)^2 \big[(\fn_{(b,a)}-1)\, g'_+ \left(\delta v(t) + \Delta_{(b,a)}\right) \\
 & + (\fn_{(a,b)}-1)\, g'_- \left(\delta v(t) - \Delta_{(a,b)}\right)\big]\, .
\eea
The tails contribution is also given by
\bea
 \re \log Z^{\text{bi-fund}}_{\text{talis}} = - N^{3/2} \sum_{\substack{\text{bi-fundamentals} \\ (b,a) \text{ and } (a,b)}}
 & \fn_{(b,a)}  \int_{\delta v \,\approx\, - \Delta_{(b,a)}} \hspace{-3em}  \rd t\, \rho(t) \, Y_{(b,a)}(t) \\
 & + \fn_{(a,b)} \int_{\delta v \,\approx\,\Delta_{(a,b)}} \hspace{-2em}  \rd t\, \rho(t) \, Y_{(a,b)}(t) \, .
\eea

A field transforming in the adjoint of the $a$th gauge group with magnetic flux $\fn_{(a,a)}$ and chemical potential $\Delta_{(a,a)}$ only contributes to the bulk index.
To find its contribution we need to include an explicit factor of $1/2$ in the  expression (\ref{indexbi}) and take
\begin{equation}
v_b = v_a \, , \qquad \qquad  \Delta_{(b,a)} =  \Delta_{(a,b)} =  \Delta_{(a,a)} \, , \qquad \qquad  \fn_{(b,a)} =  \fn_{(a,b)} =  \fn_{(a,a)} \, .
\end{equation}

\subsubsection{Fundamental and anti-fundamental contribution}
\label{entropy fundamentals N32}

The fundamental and anti-fundamental fields contribute to the logarithm of the index  as
\begin{align}
 & \log \prod_{i=1}^{N} \prod_{\substack{\text{anti-fundamental} \\ a }} \left(x_i^{(a)}\right)^{\frac12 \left(\tilde \fn_a - 1\right)}
 \left[ 1 - \tilde y_a \left( x_i^{(a)} \right)^{-1} \right]^{\tilde \fn_a - 1} \times \nn \\
 & \prod_{\substack{\text{fundamental} \\ a }} \left(x_i^{(a)}\right)^{\frac12 \left( \fn_a - 1\right)}
 \left[ 1 - y_a^{-1} \left( x_i^{(a)} \right)^{-1} \right]^{\fn_a - 1} \, .
\end{align}
Using the scaling ansatz \eqref{ansatz alpha}, in the continuum limit we get
\begin{align}
 & \log \prod_{i=1}^N \prod_{\substack{\text{anti-fundamental} \\ a }} \left(x_i^{(a)}\right)^{\frac12 \left(\tilde \fn_a - 1\right)}
 \prod_{\substack{\text{fundamental} \\ a }} \left(x_i^{(a)}\right)^{\frac12 \left( \fn_a - 1\right)} \nn \\
 & = - \frac{1}{2} N^{3/2} \left[\sum_{\substack{\text{anti-fundamental} \\ a }} \left(\tilde \fn_a - 1\right)
 + \sum_{\substack{\text{fundamental} \\ a }} \left( \fn_a - 1\right) \right] \int  \rd t\, \rho(t)\, t
 + \mathcal{O}(N) \, ,
\end{align}
and
\begin{align}
 &\log \prod_{i=1}^{N} \prod_{\substack{\text{anti-fundamental} \\ a }} \left[ 1 - \tilde y_a \left( x_i^{(a)} \right)^{-1} \right]^{\tilde \fn_a - 1}
 \prod_{\substack{\text{fundamental} \\ a }} \left[ 1 - y_a^{-1} \left( x_i^{(a)} \right)^{-1} \right]^{\fn_a - 1} \nn \\
 & = N^{3/2} \left[\sum_{\substack{\text{anti-fundamental} \\ a }} \left(\tilde \fn_a - 1\right)
 + \sum_{\substack{\text{fundamental} \\ a }} \left( \fn_a - 1\right) \right]
 \int_{t>0}  \rd t\, \rho(t)\, t + \mathcal{O}(N) \, .
\end{align}
Putting the above equations together we find:
\begin{equation}
 \re \log Z^{\text{(anti-)fund}} = \frac{1}{2} N^{3/2} \left[\sum_{\substack{\text{anti-fundamental} \\ a }} \left(\tilde \fn_a - 1\right)
 + \sum_{\substack{\text{fundamental} \\ a }} \left( \fn_a - 1\right) \right] \int  \rd t\, \rho(t)\, |t| \, .
\end{equation}

\section[Twisted superpotential versus free energy on \texorpdfstring{$S^3$}{S**3}]{Twisted superpotential versus free energy on $\fakebold{S^3}$}
\label{sec:freeenergy}

We would like to emphasize a remarkable connection of the large $N$ limit of the twisted superpotential,
which for us is an auxiliary quantity, with the large $N$ limit of the free energy $F_{S^3}$ on $S^3$
of the same ${\cal N}\geq 2$ theory.

Recall that the three-sphere free energy $F_{S^3}$ of an ${\cal N}=2$ theory is 
a function of trial R-charges $\Delta_I$  for the chiral fields \cite{Jafferis:2010un,Hama:2010av}.
They parameterize the curvature coupling of the supersymmetric Lagrangian on $S^3$. 
The $S^3$ free energy can be computed using localization and reduced to a matrix model \cite{Kapustin:2009kz}.
The large $N$ limit of the free energy, for $N \gg k_a$, has been computed in \cite{Herzog:2010hf,Jafferis:2011zi,Gulotta:2011vp,Gulotta:2011aa,Gulotta:2011si}
and scales as $N^{3/2}$. For example, the free energy for
ABJM with $k=1$ reads \cite{Jafferis:2011zi}
 \be F_{S^3} = \frac{4 \pi N^{3/2}} {3} \sqrt{ 2 \Delta_1 \Delta_2 \Delta_3 \Delta_4} \, .
 \ee
We notice a striking similarity with \eqref{Vsolution sum 2pi -- end}. This is not a coincidence and generalizes to other theories.
Indeed, remarkably, although the finite $N$ matrix models are quite different, for any ${\cal N}=2$ theory,
the large $N$ limit of the twisted superpotential becomes exactly equal to the large $N$ limit of the free energy on $S^3$.
We can indeed compare the rules for constructing the twisted superpotential with the rules for constructing the
large $N$ limit of $F_{S^3}$, which have been derived in \cite{Jafferis:2011zi}. By comparing the rules in
section \ref{large N twisted superpotential rules} with the rules given in section 2.2 of \cite{Jafferis:2011zi},
we observe that they are indeed the same up to a normalization.
For reader's convenience the map is explicitly given in Table.\,\ref{Bethe map}.
\begin{table}[h!!!!]
\centering
\begin{tabular}{@{}l l@{}} \toprule \toprule
 twisted superpotential & $S^3$ free energy\\
 \toprule
 $k_a$ & $- k_a$ \\
 $\mu$ & $\frac{\mu}{2}$ \\
 $v_a(t)$ & $\frac{v_a(t)}{2}$ \\
 $\rho(t)$ & $4 \rho(t)$ \\
 $\Delta_I$ & $\pi \Delta_I$ \\
 $\Delta_m$ & $- \pi \Delta_m$ \\
 $\wt\cW$ & $4 \pi i F_{S^3}$ \\
 $\wt\cW \big|_{\text{BAEs}}$ & $\frac{i \pi}{2} F_{S^3} \big|_{\text{On-shell}}$ \\[0.3cm]
 \toprule
 \toprule
\end{tabular}
\caption{The large $N$ twisted superpotential versus the $S^3$ free energy of \cite{Jafferis:2011zi}.}
\label{Bethe map}
\end{table}
The conditions for cancellation of long-range forces (and therefore the allowed models) are also remarkably similar.

It might be surprising that our chemical potentials for global symmetries are mapped to R-charges in the free energy.
However, remember that our $\Delta_I$ are angular variables. The invariance of the superpotential under the global
symmetries implies that 
\be
\prod_{I \in \text{matter fields}} y_I = 1 \, ,
\ee
in each term of the superpotential, which is equivalent to 
\be \label{2pi}
\sum_{I \in \text{matter fields}} \Delta_I = 2 \pi \ell \,  \qquad \ell\in \mathbb{Z} \, ,
\ee 
where now $\Delta_I$ are the index chemical potentials. Under the assumption $0<\Delta_I<2\pi$, few values of $\ell$
are actually allowed. In the ABJM model reviewed above, only $\ell=1$ and $\ell=3$ give sensible results,
with $\ell=3$ related to $\ell=1$ by a discrete symmetry of the model.
We found a similar result in all the examples we have checked, and we do believe indeed that
a solution to the BAEs only exists when  
\be\label{superpotential0} 
\sum_{I \in \text{matter fields}} \Delta_I = 2 \pi\, ,
\ee
for each term of the superpotential, up to solutions related by discrete symmetries.
$\Delta_I / \pi$ then behaves at all effects like  a trial R-symmetry of the theory and we can
compare the index chemical potentials in $\wt\cW$ with the R-charges in $F_{S^3}$.
 
\section{An index theorem for the twisted matrix model}\label{sec:index}

Under mild assumptions, the index at large $N$ can be actually extracted from the twisted superpotential with a simple formula.

\begin{theorem}\label{entropy theorem}
The index of any $\cN \geq 2$ quiver gauge theory which respects the constraints \eqref{no long-range-forces Bethe0} and \eqref{no long-range forces}, and satisfies in addition \eqref{superpotential0}, can be written as
\begin{equation}\label{Z large N conjecture}
 \re\log Z = - \frac{2}{\pi} \, \wt\cW(\Delta_I) \,
 - \sum_{I}\, \left[ \left(\fn_I - \frac{\Delta_I}{\pi}\right) \frac{\partial \wt\cW(\Delta_I)}{\partial \Delta_I}
  \right] \, ,
\end{equation}
where $\wt\cW (\Delta_I)$ is the extremal value of the functional \eqref{bethefunctional}
\begin{equation}\label{virial theorem}
 {\quad \rule[-1.4em]{0pt}{3.4em}
 \wt\cW(\Delta_I) \equiv - i \wt\cW \left(\rho(t), v_a(t), \Delta_I \right) \big|_{\text{BAEs}}
 = \frac{2}{3} \mu N^{3/2} \, ,}
\end{equation}
and $\mu$ is the  Lagrange multiplier appearing in \eqref{bethefunctional}.%
\footnote{The second identity in \eqref{virial theorem} is a consequence of a virial theorem for matrix models
(see appendix B of \cite{Gulotta:2011si}).}

\end{theorem}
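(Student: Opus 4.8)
The plan is to establish \eqref{Z large N conjecture} by a direct term-by-term comparison of the two local large-$N$ functionals already constructed in section~\ref{general rules N32}: the rules of section~\ref{large N twisted superpotential rules} for $\wt\cW(\rho(t),v_a(t),\Delta_I)$ and those of section~\ref{large N index rules} for $\re\log Z$. Write $\hat{\cO}f\equiv-\tfrac2\pi f-\sum_I\bigl(\fn_I-\tfrac{\Delta_I}{\pi}\bigr)\tfrac{\partial f}{\partial\Delta_I}$ for the operation appearing on the right of \eqref{Z large N conjecture}; since $\wt\cW(\Delta_I)=-i\wt\cW(\rho,v_a,\Delta_I)\big|_{\text{BAEs}}$ and the functional is purely imaginary, the goal is to show that $\hat{\cO}\wt\cW(\Delta_I)$ reproduces the index functional of section~\ref{large N index rules} evaluated on the Bethe solution. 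The first step is an envelope-theorem argument: because $\wt\cW(\Delta_I)$ is the value of $\wt\cW-\mu\bigl(\int\rd t\,\rho-1\bigr)$ at a critical point in $\rho(t)$ and $v_a(t)$ and the normalization constraint carries no $\Delta_I$, the relations $\delta\wt\cW/\delta v_a(t)=0$, $\delta\wt\cW/\delta\rho(t)=\mu$ together with $\int\rd t\,\partial_{\Delta_I}\rho=0$ show that $\partial_{\Delta_I}\wt\cW(\Delta_I)$ equals the \emph{explicit} partial derivative of the functional; hence $\rho(t)$, $v_a(t)$ and $\delta v(t)$ may be frozen throughout the differentiation.

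\textbf{Matter blocks.} The engine of the matching is that $\partial_{\Delta_I}g_\pm\bigl(\delta v\pm\Delta_I\bigr)=\pm g'_\pm\bigl(\delta v\pm\Delta_I\bigr)$ with $g_\pm$ the cubic polynomials of \eqref{gp gm:ch:2}, whose derivatives $g'_\pm$ are exactly the building blocks of the index rules, while $\wt\cW$ is itself independent of the flavor fluxes $\fn_I$, which enter $\log Z$ only through the combination $\fn_I-1$. Acting with $\hat{\cO}$ on (minus $i$ times) the bi-fundamental contribution \eqref{twisted superpotential bi-fundamental} converts each $g_\pm$ into $g'_\pm$ weighted by $-\fn_I$ coming from the $-\fn_I\partial_{\Delta_I}$ term; the extra $+1$ that promotes this to $-(\fn_I-1)$, and the cancellation of the residual cubic-in-$\Delta$ pieces generated by the $-\tfrac2\pi\wt\cW$ and $\tfrac{\Delta_I}{\pi}\partial_{\Delta_I}\wt\cW$ terms, emerge after summing over the pair $(a,b),(b,a)$ and imposing the superpotential constraint \eqref{superpotential0} to trade explicit factors of $2\pi$ for sums of $\Delta_I$. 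The adjoint and (anti-)fundamental blocks, being a degenerate case and a linear case respectively, are treated identically, so that \eqref{bicontribution} and the fundamental terms of section~\ref{large N index rules} are reproduced.

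\textbf{Gauge, topological and tail blocks; the factor $\tfrac23$.} The Vandermonde term $-\tfrac{\pi^2}3N^{3/2}\!\int\rd t\,\rho^2$ and the topological term $-\ft_aN^{3/2}\!\int\rd t\,\rho\,t$ of $\log Z$ have no direct ancestor in $\wt\cW$; they must be produced by the leftover $\rho^2$- and $\rho\,t\,v_a$-type terms of $\hat{\cO}\wt\cW$ upon substituting the Bethe equations of motion — the $v_a(t)$ equation eliminates the $k_a\!\int\rd t\,\rho\,t\,v_a$ contributions against $v_a$-weighted $g'_\pm$ contributions (using $\sum_ak_a=0$ and the R-constraint \eqref{no long-range forces} node by node), and the $\rho(t)$ equation $\delta\wt\cW/\delta\rho(t)=\mu$, multiplied by $\rho(t)$ and integrated, supplies the surviving $\int\rd t\,\rho^2$ combination. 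The tails, where $\delta v$ is frozen at $\mp\Delta_{(b,a)}$ up to $e^{-N^{1/2}Y_{(b,a)}}$, are matched using the exact relation \eqref{tails}: there $g_\pm$ and $g'_\pm$ collapse to $\tfrac{\pi^2}3u$ and $\tfrac{\pi^2}3$, and $\hat{\cO}$ maps the boundary pieces of $\wt\cW$ onto \eqref{tailcontribution}, the coefficient $\fn_{(b,a)}$ (rather than $\fn_{(b,a)}-1$) arising from the interplay with the large-$N$ Jacobian $\det\mathds{B}$. Finally $\wt\cW(\Delta_I)=\tfrac23\mu N^{3/2}$ is the virial theorem for \eqref{bethefunctional}: under $t\mapsto e^{c}t$, $\rho\mapsto e^{-c}\rho$ at fixed $v_a$ (which preserves $\int\rd t\,\rho=1$), the part of the functional built from the $\rho^2$-terms rescales with weight $-1$ and the part built from the $\rho\,t$-terms with weight $+1$; stationarity along this family equates the two parts, while multiplying $\delta\wt\cW/\delta\rho=\mu$ by $\rho$ and integrating fixes their sum, yielding the ratio $\tfrac23$ (cf.\ appendix~B of \cite{Gulotta:2011si}).

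\textbf{Main obstacle.} I expect the delicate part to be the bookkeeping in the middle steps: one must cleanly separate the $\cO(N^{5/2})$ and $\cO(N^{2})$ contributions that cancel as long-range forces from the genuine $\cO(N^{3/2})$ remainder, and then verify that the leftover $\int\rd t\,\rho^2$- and $v_a$-weighted pieces conspire — through the two equations of motion and the three algebraic constraints \eqref{superpotential0}, \eqref{no long-range-forces Bethe} and \eqref{no long-range forces} — to rebuild \emph{exactly} the gauge, topological and fundamental blocks of $\log Z$, including the correct $\fn$ (not $\fn-1$) weighting of the tails and the even-$N$ caveat of \eqref{no long-range-forces Bethe}.
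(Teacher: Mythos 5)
Your skeleton is the right one: differentiate the on-shell twisted superpotential against the index rules, use the Lagrange-multiplier structure to control the implicit dependence on $\rho(t)$, and obtain $\wt\cW(\Delta_I)=\tfrac{2}{3}\mu N^{3/2}$ from the rescaling $t\to e^{c}t$, $\rho\to e^{-c}\rho$ — that last part is exactly the virial argument the paper invokes. But two of your key mechanisms have genuine gaps.

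First, the envelope step overreaches. You conclude that $\rho$, $v_a$ \emph{and} $\delta v$ may all be frozen during the $\Delta_I$-differentiation because the BAEs set the variations to zero. That is true for the full functional including the subleading terms \eqref{talis eom}, but $\wt\cW(\Delta_I)$ is the value of the \emph{leading} $N^{3/2}$ functional, and on the tails its $\delta v$-variation does not vanish: by \eqref{tails} it equals $\mp i Y\rho$, compensated only by exponentially small corrections. Since $\delta v=\mp\Delta_{(b,a)}+\dots$ there, one has $\partial(\delta v)/\partial\Delta_{(b,a)}=\mp 1$, and the chain-rule term $\frac{\partial\wt\cW}{\partial(\delta v)}\frac{\partial(\delta v)}{\partial\Delta_I}$ is precisely what produces the tail contribution \eqref{tailcontribution} with coefficient $\fn_I$. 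Freezing $\delta v$ loses these terms outright; and attributing the $\fn$-versus-$\fn-1$ coefficient to $\det\mathds{B}$ confuses the derivation of the index \emph{rules} (where the Jacobian does contribute) with the proof of the theorem, where that coefficient must emerge from $-\sum_I\fn_I\,\partial_{\Delta_I}\wt\cW$ itself.

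Second, the mechanism you propose for the gauge/Vandermonde and $(\fn_I-1)$ blocks — elimination via the two equations of motion — is deferred rather than executed, and it is not what makes the identity close. The paper's device is to promote $\pi$ to a formal variable $\fakebold{\pi}$ of degree one, so that $\wt\cW$ is homogeneous of degree two in $(\Delta_I,\fakebold{\pi})$ and the differential operator in \eqref{Z large N conjecture} collapses to $-\partial_{\fakebold{\pi}}-\sum_I\fn_I\partial_{\Delta_I}$. Acting with $-\partial_{\fakebold{\pi}}$ on the polynomials $g_\pm$ then simultaneously supplies the $+1$ that converts the weight $\fn_I$ into $\fn_I-1$ in \eqref{bicontribution} and leaves a residue $\tfrac{2\fakebold{\pi}^2}{3}-\tfrac{\fakebold{\pi}}{3}(\Delta_{(b,a)}+\Delta_{(a,b)})$ which, summed over pairs, equals $\tfrac{\fakebold{\pi}^2}{3}|G|$ — the Vandermonde term \eqref{gaugecontribution} — exactly when the $\Tr R=0$ condition \eqref{pi constraint} holds. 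Without this homogeneity identity (or an equivalent algebraic substitute), the ``conspiracy'' you acknowledge in your final paragraph is the entire content of the theorem, not a bookkeeping detail.
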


\begin{proof}
We first replace the explicit factors of $\pi$, appearing in Eqs.\,\eqref{twisted superpotential bi-fundamental}-\eqref{anti-fund cV rule}, with a formal variable $\fakebold{\pi}$.
Note that the ``on-shell'' twisted superpotential $\wt\cW (\Delta_I)$ is a homogeneous function of $\Delta_I$ and $\fakebold{\pi}$ and therefore it satisfies
\begin{equation}
 \wt\cW(\lambda \Delta_I, \lambda \fakebold{\pi}) = \lambda^2 \, \wt\cW(\Delta_I, \fakebold{\pi}) \quad \Rightarrow \quad
 \frac{\partial \wt\cW(\Delta_I, \fakebold{\pi})}{\partial \fakebold{\pi}} =
 \frac{1}{\fakebold{\pi}} \left[ 2 \, \wt\cW(\Delta_I) -\sum_I  \Delta_I \frac{\partial \wt\cW(\Delta_I)}{\partial \Delta_I} \right]\, .
\end{equation}
Now, we consider a pair of bi-fundamental fields which contribute to the twisted superpotential according to \eqref{twisted superpotential bi-fundamental}.
The derivative of $\wt\cW(\Delta_I, \fakebold{\pi})$ with respect to  $\Delta_{(b,a)}$ and $\Delta_{(a,b)}$ is given by
 \begin{align}\label{proof bf}
  \sum_{I=(b,a),(a,b)} \fn_{I} \frac{\partial\wt\cW(\Delta_I, \fakebold{\pi})}{\partial\Delta_I} & =
  i N^{3/2} \int \rd t\, \rho(t)^2\, \left[ \fn_{(b,a)} g'_+ \left(\delta v(t) + \Delta_{(b,a)}\right) + \fn_{(a,b)} g'_- \left(\delta v(t) - \Delta_{(a,b)}\right)\right] \nn \\
  & +   \sum_{I=(b,a),(a,b)}  \fn_{I}  \underbrace{\frac{\partial \wt\cW}{\partial \rho} \frac{\partial \rho}{\partial \Delta_I}}_\text{vanishing on-shell}
  +   \sum_{I=(b,a),(a,b)} \fn_{I}  \underbrace{\frac{\partial \wt\cW}{\partial (\delta v)} \frac{\partial (\delta v)}{\partial \Delta_I}}_\text{tails contribution} \, .
 \end{align}
 The expression in the first line is precisely part of the contribution of a pair of bi-fundamentals \eqref{bicontribution} to the index.  
In the tails, using \eqref{tails}, we find
 \begin{equation}
 \frac{\partial (\delta v)}{\partial \Delta_{(b,a)}} = -1 \, , \qquad
 \frac{\partial (\delta v)}{\partial \Delta_{(a,b)}} = 1 \, , \qquad
 \frac{\partial \wt\cW}{\partial (\delta v)} = - i Y_{(b,a)} \rho \, , \qquad
 \frac{\partial \wt\cW}{\partial (\delta v)} = i Y_{(a,b)} \rho \, . 
 \end{equation}
Therefore, the last term in Eq.\,\eqref{proof bf} can be simplified to
 \begin{equation}
 i N^{3/2}  \fn_{(b,a)} \int_{\delta v \approx - \Delta_{(b,a)}} \rd t\, \rho(t)\, Y_{(b,a)}
 + i N^{3/2}  \fn_{(a,b)}  \int_{\delta v \approx \Delta_{(a,b)}} \rd t\, \rho(t)\, Y_{(a,b)} \, .
\end{equation}
This precisely gives the tail contribution \eqref{tailcontribution} to the index. 
Next, we take the derivative of the twisted superpotential with respect to  $\fakebold{\pi}$. It can be written as
 \begin{align}\label{pi derivative}
 \frac{\partial \wt\cW}{\partial \fakebold{\pi}}  & = - i N^{3/2} \int \rd t\, \rho(t)^2 \left[ g'_+ \left(\delta v(t) + \Delta_{(b,a)}\right) + g'_- \left(\delta v(t) - \Delta_{(a,b)}\right) \right] \nn \\
 & + i N^{3/2} \int \rd t\, \rho(t)^2\, \left[ \frac{2 \fakebold{\pi}^2}{3} - \frac{\fakebold{\pi}}{3} \left( \Delta_{(b,a)} + \Delta_{(a,b)} \right) \right] \, .
 \end{align}
 The expression in the first line completes the contribution of a pair of bi-fundamentals \eqref{bicontribution} to the index.  
The expression  in the second line, after summing  over all the bi-fundamental pairs,  can be written as
 \begin{equation}
\sum_{\text{pairs}} \left[ \frac{2 \fakebold{\pi}^2}{3} - \frac{\fakebold{\pi}}{3} \left( \Delta_{(b,a)} + \Delta_{(a,b)} \right)\right]
 = \frac{\fakebold{\pi}}{3} \sum_{I} \left( \fakebold{\pi} - \Delta_I \right)
 = \frac{\fakebold{\pi}^2}{3} |G| \, , 
 \end{equation}
which is precisely the contribution of the gauge fields  \eqref{gaugecontribution} to the index.  
Here, we used the condition
 \begin{equation}\label{pi constraint}
 \fakebold{\pi} |G| + \sum_{I} \left( \Delta_I - \fakebold{\pi} \right) = 0 \, .
 \end{equation}
This condition follows from the fact that, assuming  \eqref{superpotential0} for each superpotential term, $\Delta_I/\pi$ behaves as a trial R-symmetry, so that  \eqref{no long-range forces} yields
\begin{equation}\label{no long-range forces2}
 2 + \sum_{I \in a} \left(\frac{\Delta_I}{\pi} - 1\right) = 0 \, ,
\end{equation}
which, summed over all the nodes, since each bi-fundamental field belongs precisely to two nodes, gives \eqref{pi constraint}.
Condition \eqref{pi constraint} is indeed equivalent to ${\rm Tr} R=0$, where the trace is taken over the bi-fundamental  fermions and gauginos in the quiver and $R$ is an R-symmetry. 
Combining everything as in the right hand side of Eq.\,\eqref{Z large N conjecture} we obtain the contribution of gauge and bi-fundamental fields to the index.
The proof for all the other matter fields and the topological symmetry is straightforward.
\end{proof}

If we ignore the linear relation among the chemical potentials, we can always use a  set of  $\Delta_I$ such that $\wt\cW(\Delta_I)$ is a homogeneous function of degree two of the
$\Delta_I$ alone.\footnote{This is what happens in \eqref{Vsolution sum 2pi -- end} for ABJM. Recall that $\sum_i \Delta_i = 2 \pi$ so that the four $\Delta_i$ are not linearly independent.}
In this case, the index theorem simplifies to
\begin{equation}\label{Z large N conjecture2}
 {\quad \rule[-1.4em]{0pt}{3.4em}
 \re\log Z = 
 - \sum_{I}\, \fn_I \frac{\partial \wt\cW(\Delta_I)}{\partial \Delta_I}
  \, .
 \quad}
\end{equation}

\section[Theories with \texorpdfstring{$N^{5/3}$}{N**(5/3)} scaling of the index]{Theories with $\fakebold{N^{5/3}}$ scaling of the index}
\label{N53}

Chern-Simons quivers of the form \eqref{quiver} have a rich parameter space. If  condition \eqref{CScontr} is satisfied and $N\gg k_a$,  they have an M-theory weakly coupled dual.  In the t'Hooft limit, $N,k_a\gg 1$ with $N/k_a =\lambda_a$ fixed and large, they have a type IIA weakly coupled dual. When instead
\be\label{CScontr2}
\sum_{a=1}^{|G|} k_a \ne 0 \, ,
\ee
they probe massive type IIA \cite{Gaiotto:2009mv}. There is an interesting limit, given \eqref{CScontr2}, where again   $N\gg k_a$. The limit is no more an M-theory phase \cite{Aharony:2010af},
but rather an extreme phase of massive type IIA. Supergravity duals of this type of phases have been found in \cite{Petrini:2009ur,Lust:2009mb,Aharony:2010af,Tomasiello:2010zz,Guarino:2015jca,Fluder:2015eoa,Pang:2015vna,Pang:2015rwd}. The free energy scales as $N^{5/3}$ \cite{Aharony:2010af}. We now show that also the topologically twisted index scales in the 
same way. As it happens for the $S^3$ matrix model \cite{Jafferis:2011zi,Fluder:2015eoa}, we find a consistent large $N$ limit whenever the constraints \eqref{globalJ} and \eqref{globalR} are satisfied. 

The ansatz for the eigenvalues distribution is now, as in \cite{Jafferis:2011zi,Fluder:2015eoa},
 \begin{equation}\label{ansatz N53}
 u^{(a)} (t) = N^{\alpha} (i t + v(t))\, ,
\end{equation}
for some $0<\alpha<1$. The scaling is again dictated by the competition between the Chern-Simons terms,
now with \eqref{CScontr2}, and the gauge and bi-fundamental contributions.

\subsection{Long-range forces}
Since the eigenvalue distribution is the same for all gauge groups, the long-range forces \eqref{N3} cancel automatically. We see that, differently from before, we can have a consistent large $N$ limit also in the case of  chiral quivers. We also need  to cancel the long-range forces  \eqref{N2ang}.  They compensate each other  if condition  \eqref{no long-range-forces Bethe0}  is satisfied.
Since the eigenvalues are the same for all groups, it is actually  enough to sum over nodes and we obtain the milder constraint  \eqref{globalJ} on the flavor charges:
\begin{equation} \Tr J =0\, ,
\end{equation}
where the trace is taken over bi-fundamental fermions in the quiver. 

We obtain similar conditions by looking at the scaling of the twisted index.
As  in section \ref{proof vanishing forces}, vector multiplets and chiral bi-fundamental multiplets contribute
terms \eqref{indexscalingg} and \eqref{indexscalingb} which are of order $\cO(N^{7/3})$.
They compensate each other if condition \eqref{no long-range forces} is satisfied.
Since the eigenvalues are the same for all groups, it is again   enough to sum over nodes and we obtain the  constraint \eqref{globalR} on the flavor magnetic fluxes:
\begin{equation}
\Tr R =|G| + \sum_{I} \left(\fn_I - 1\right) = 0\, ,
\end{equation}
where the trace is taken over bi-fundamental fermions and gauginos in the quiver. 

Conditions $\Tr R=\Tr J=0$ are certainly satisfied for all quivers with a four-dimensional parent, even the chiral ones. 

\subsection[Twisted superpotential at large \texorpdfstring{$N$}{N}]{Twisted superpotential at large $\fakebold{N}$}
\label{general:rules:N53:Bethe}

Here, we discuss the $N^{5/3}$ contributions to the twisted superpotential.
Given the large $N$ behavior of the eigenvalues \eqref{ansatz N53},
the classical contribution to the large $N$ twisted superpotential is
\begin{equation}
 \frac{k_a}{2} N^{2 \alpha + 1} \int  \rd t\, \rho(t)\, \left[ t^2 - v(t)^2 - 2 i t\, v(t) \right]\, .
\end{equation}

A bi-fundamental field between $\U(N)_a$ and $\U(N)_b$, with chemical potential $\Delta_{(b,a)}$,
contributes to the twisted superpotential via \eqref{single bifund Bethe}. In order to find the large $N$ behavior
we only need to evaluate \eqref{sum:i:less:j} using the ansatz \eqref{ansatz N53}.
Let us focus on the following integral:
\bea
 I_k & = \int_t  \rd t' \rho(t') \e^{i k \left(u_b(t') - u_a(t) +\Delta_{(b,a)}\right)} \\ &
 = \int_t  \rd t'\, \e^{-k N^\alpha (t' - t)} \sum_{j=0}^{\infty} \frac{(t'-t)^j}{j!}
 \partial_x^j \left[\rho(x) \e^{i k \left[N^\alpha \left(v(x) - v(t)\right) + \Delta_{(b,a)}\right]}\right]_{x=t} \\
 & = \sum_{j=0}^{\infty} \left(k N^\alpha\right)^{-j-1}
 \partial_x^j \left[\rho(x) \e^{i k \left[N^\alpha \left(v(x) - v(t)\right) + \Delta_{(b,a)}\right]}\right]_{x=t} \, .
\eea
Extracting the leading large $N$ contribution of
\bea
 \partial_x^j \left[\rho(x) \e^{i k \left[N^\alpha \left(v(x) - v(t)\right) + \Delta_{(b,a)}\right]}\right]_{x=t} & \sim
 \left(i k N^\alpha\right)^j \left[v'(x)^j \rho(x) \e^{i k \left[N^\alpha \left(v(x) - v(t)\right) + \Delta_{(b,a)}\right]}\right]_{x=t} \\&
 =\left(i k N^\alpha\right)^j v'(t)^j \rho(t) \e^{i k \Delta_{(b,a)}}\, ,
\eea
we obtain
\begin{align}
 I_k = \frac{\e^{i k \Delta_{I}}}{k} \rho(t) N^{-\alpha} \sum_{j=0}^{\infty} \left[i v'(t)\right]^j
 = \frac{\e^{i k \Delta_{I}}}{k} N^{-\alpha} \frac{\rho(t)}{1- i v'(t)} \, .
\end{align}
Plugging this expression back into \eqref{sum:i:less:j}, we get
\begin{align}
 \label{sum:i:less:j:N53}
 \sum_{i<j}^N \Li_2 \left(\e^{i\left(u_j^{(b)} - u_i^{(a)} +\Delta_{(b,a)}\right)}\right) =
 N^{2-\alpha} \int \rd t \Li_3 \left(\e^{i \Delta_{(b,a)}}\right) \frac{\rho(t)^2}{1 - i v'(t)} \, .
\end{align}
Having \eqref{sum:i:less:j:N53} in our disposal the rest of the computation
is exactly the same as in section \ref{Bethe bi-fundamentals N32}.
Following the same steps, we find
\begin{equation}
 \wt\cW^{\text{bi-fund}} = i g_+\left(\Delta_{(b,a)}\right) N^{2 - \alpha} \int  \rd t\, \frac{\rho(t)^2}{1-i v'(t)} \, .
\end{equation}

The contribution of an adjoint field, with chemical potential $\Delta_{(a,a)}$, is derived in exactly the same fashion:
\begin{equation}
 \wt\cW^{\text{adjoint}} = i g_+\left(\Delta_{(a,a)}\right) N^{2 - \alpha} \int  \rd t\, \frac{\rho(t)^2}{1-i v'(t)} \, .
\end{equation}
To have a nontrivial saddle-point, we need $\alpha=1/3$ which ensures that the Chern-Simons terms and the matter
contributions scale with the same power of $N$.

The contribution of (anti-)fundamental fields to the twisted superpotential is given by (see section \ref{Bethe fundamentals N32}),
\be
 \wt\cW^{\text{(anti-)fund}} = \frac{\left( \tilde n_a - n_a \right)}{4} N^{5/3}\int  \rd t\, \rho(t)\, \sign(t) \left[ i t + v(t) \right]^2\, .
\ee
Notice that, when the {\it total} number of fundamental and anti-fundamental fields in the quiver are equal,
this contribution vanishes.

\subsection[The index at large \texorpdfstring{$N$}{N}]{The index at large $\fakebold{N}$}
\label{entropy rules N53}

In this section we discuss the $N^{5/3}$ contributions to the logarithm of the partition function.
Using the same methods given in subsections \ref{entropy rules N32} and \ref{general:rules:N53:Bethe},
we arrive at the following large $N$ expressions for the gauge and matter contributions:
\bea
 \label{ch:2:N53:index:rules}
 \log Z^{\text{gauge}} & = -\frac{\pi^2}{3} N^{5/3} \int  \rd t\, \frac{\rho(t)^2}{1-i v'(t)} \, , \\
 \log Z^{\text{bi-fund}}_{\text{bulk}} & = - (\fn_{(b,a)}-1)\, g'_+ \left(\Delta_{(b,a)}\right) N^{5/3}
 \int  \rd t\, \frac{\rho(t)^2}{1-i v'(t)} \, , \\
 \log Z^{\text{adjoint}}_{\text{bulk}} & = - (\fn_{(a,a)}-1)\, g'_+ \left(\Delta_{(a,a)}\right) N^{5/3}
 \int  \rd t\, \frac{\rho(t)^2}{1-i v'(t)} \, .
\eea
The contribution of (anti-)fundamental fields to the index, at large $N$,
is subleading and they just contribute through the twisted superpotential.

Notice that the relation with the $S^3$ free energy discussed in section \ref{sec:freeenergy}
and the {\em index theorem} of section \ref{sec:index} also hold for this class of
quiver gauge theories.\footnote{The coefficient $2/3$ in front of $\mu$ in Eq.\,\eqref{virial theorem}
must be replaced by $3/5$.}

\section{Discussion and conclusions}\label{discussion}

In this chapter we have studied the large $N$ behavior of the topologically twisted index for ${\cal N}=2$ gauge theories in three dimensions. We have focused on theories  with a conjectured M-theory or massive type IIA dual and examined the corresponding field theory phases, where holography predicts a $N^{3/2}$ or $N^{5/3}$ scaling for the path integral, respectively. We correctly reproduced this scaling  for a class of ${\cal N}=2$ theories and we also uncovered some surprising relations with apparently
different physical quantities.

The first surprise comes from the identification of the {\it effective twisted superpotential} $\wt\cW$ with the {\it $S^3$ free energy}
$F_{S^3}$ of the same  ${\cal N}=2$ gauge theory. Recall that, in our approach, the BAEs and the twisted superpotential
are auxiliary quantities determining the position of the poles in the matrix model in the large $N$ limit.
$\wt\cW$ depends on the chemical potentials for the flavor symmetries, satisfying \eqref{2pi},
while $F_{S^3}$ depends on trial R-charges, which parameterize the curvature couplings of the theory on $S^3$.
Both quantities, $\wt\cW$ and $F_{S^3}$ are determined in terms of a matrix model (auxiliary in the case of $\wt\cW$).
The two matrix models, and the corresponding equations of motion are different for finite $N$
but quite remarkably become indistinguishable in the large $N$ limit. Also the conditions to
be imposed on the quiver for the existence of a $N^{3/2}$ or $N^{5/3}$ scaling are the same.
Although the structure of the long-range forces and the mechanism for their cancellation are different, they rule out
quivers with chiral bi-fundamentals in the M-theory phase and impose the same conditions on flavor symmetries. 

This identification leads to a relation of the twisted superpotential $\wt\cW(\Delta_I)$ with the volume functional of Sasaki-Einstein manifolds.
The exact R-symmetry of a superconformal  ${\cal N}=2$ gauge theory can be found by extremizing $F_{S^3}(\Delta_I)$
with respect to the trial R-charges $\Delta_I$ \cite{Jafferis:2011zi} but $F_{S^3}(\Delta_I)$ makes sense for arbitrary
$\Delta_I$. The functional $F_{S^3}(\Delta_I)$ has a well-defined geometrical meaning for theories with an AdS$_4\times Y_7$ dual,
where $Y_7$ is a Sasaki-Einstein manifold. The  value of $F_{S^3}$  upon extremization is related to the (square root of the)
volume of $Y_7$. More generally, at least for a class of quivers corresponding to ${\cal N}=3$ and toric cones $C(Y_7)$,
the value of $F_{S^3}(\Delta_I)$, as a function of the trial R-symmetry parameterized by $\Delta_I$,  has been matched with the
(square root of the) volume of a family of Sasakian deformation of $Y_7$, as a function of the Reeb vector. For toric theories,
the volume can be parameterized in terms of a set of charges $\Delta_I$, that encode how the R-symmetry varies with
the Reeb vector, and it has been conjectured in \cite{Amariti:2011uw,Amariti:2012tj,Lee:2014rca}
to be a homogeneous quartic function of the $\Delta_I$, in agreement with the homogeneity properties of $\wt\cW$ and $F_{S^3}$. 
  
A second intriguing relation comes from the index theorem \eqref{Z large N conjecture}.
The original reason for studying the large $N$ limit of the topologically twisted index comes from
the counting of AdS$_4$ black holes microstates. The entropy of dyonic black holes asymptotic to AdS$_4\times S^7$
was successfully compared with the large $N$ limit of the index in \cite{Benini:2015eyy,Benini:2016rke},
when extremized with respect to the chemical potential $\Delta_I$. We expect that a similar relation holds
for dyonic BPS black holes asymptotic to AdS$_4\times Y_7$, for a generic Sasaki-Einstein manifold.
Given the very small number of black holes known, this statement is difficult to check.

The main motivation of our analysis comes certainly from the attempt to extend the result of
\cite{Benini:2015eyy,Benini:2016rke} to a larger class of black holes. The difficulty of doing so
is mainly the exiguous number of existing black holes solutions with an M-theory lift.
Few numerical examples are known in Sasaki-Einstein compactifications \cite{Halmagyi:2013sla}, 
mostly having Betti multiplets as massless vectors. Some interesting examples involves chiral quivers
and are therefore outside the range of our technical abilities at the moment. It is curious that
apparently well-defined chiral quivers, which passed quite nontrivial checks \cite{Benini:2011cma}, have an ill-defined 
large $N$ limit both for the $S^3$ free energy and the topologically twisted index in the M-theory phase.
It would be quite interesting to know whether this is just a technical problem and another saddle-point
with $N^{3/2}$ scaling exists, or the models are really ruled out.

\chapter{Necklace quivers, dualities, and Sasaki-Einstein spaces}
\label{ch:3}

\ifpdf
    \graphicspath{{Chapter3/Figs/Raster/}{Chapter3/Figs/PDF/}{Chapter3/Figs/}}
\else
    \graphicspath{{Chapter3/Figs/Vector/}{Chapter3/Figs/}}
\fi

\section{Introduction}

The large $N$ limit of the topologically twisted index $Z_{S^2 \times S^1}$
for three-dimensional $\cN \geq 2$ Chern-Simons-matter theories has been considered in the previous chapter.
Here, we compute the \emph{topological free energy}
\be
 \mathfrak{F} = \log |Z_{S^2 \times S^1}| \, ,
\ee
for various examples in order to demonstrate the use of our formulae.

We begin by studying quiver gauge theories that can be realized on M2-branes probing two asymptotically locally Euclidean
(ALE) singularities \cite{Porrati:1996xi}. These include the ADHM \cite{Atiyah:1978ri} and the Kronheimer-Nakajima
\cite{kronheimer1990yang} quivers, as well as some of the necklace quiver theories considered in \cite{Imamura:2008nn}.
We show that the topological free energy of such theories can be written as that of the ABJM theory times a numerical factor,
depending on the orders of the ALE singularities and the Chern-Simons level of ABJM.
In particular, we match the topological free energy between theories being related to each other by dualities,
including mirror symmetry \cite{Intriligator:1996ex} and $\SL(2,\BZ)$ duality \cite{Aharony:1997ju, Gaiotto:2008ak, Assel:2014awa}.
We then move to discuss theories which are holographically dual to the M-theory backgrounds ${\rm AdS}_4 \times Y_7$,
where $Y_7$ is a homogeneous Sasaki-Einstein manifold.
In particular, we calculate the topological free energy for $N^{0,1,0}$ \cite{Gaiotto:2009tk, Imamura:2011uj, Cheon:2011th} with $\cN = 3$ and suspended pinch point (SPP) singularity \cite{Hanany:2008fj},
$V^{5,2}$ \cite{Martelli:2009ga, Jafferis:2009th}, and $Q^{1,1,1}$ \cite{Benini:2009qs,Cremonesi:2010ae} with $\cN = 2$ supersymmetry.

\section[Quivers with \texorpdfstring{$\CN=4$}{N=4} supersymmetry]{Quivers with $\fakebold{\cN=4}$ supersymmetry}
\label{sec:N4susy}

In this section, we consider two quiver gauge theories with $\CN=4$ supersymmetry. As pointed out in \cite{Porrati:1996xi}, each of these theories can be realized in the worldvolume of M2-branes probing $\BC^2/\BZ_{n_1} \times \BC^2/\BZ_{n_2}$, for some positive integers $n_1$ and $n_2$.  We show below that the topological free energy of such theories can be written as $\sqrt{n_1 n_2/k}$ times that of the ABJM theory with Chern-Simons levels $(+k,-k)$.
We also match the index of a pair of theories which are mirror dual \cite{Intriligator:1996ex} to each other.
This serves as a check of the validity of our results.

\subsection{The ADHM quiver}
We consider $\U(N)$ gauge theory with one adjoint and $r$ fundamental hypermultiplets, whose $\CN=4$ quiver is given by
\be
\begin{aligned} \label{ADHMN4quiv}
\begin{tikzpicture}[font=\footnotesize, scale=0.9]
\begin{scope}[auto,%
  every node/.style={draw, minimum size=0.5cm}, node distance=2cm];
\node[circle]  (UN)  at (0.3,1.7) {$N$};
\node[rectangle, right=of UN] (Ur) {$r$};
\end{scope}
\draw(-0,2) arc (30:338:0.75cm);
\draw[black,solid,line width=0.1mm]  (UN) to (Ur) ; 
\end{tikzpicture}
\end{aligned} \ee
where the circular node denotes the $\U(N)$ gauge group; the square node denotes the $\SU(r)$ flavor symmetry; the loop around the circular node denotes the adjoint hypermultiplet; and the line between $N$ and $r$ denotes the fundamental hypermultiplet.  The vacuum equations of the Higgs branch of the theory were used in the construction of the instanton solutions by Atiyah, Drinfeld, Hitchin and Manin \cite{Atiyah:1978ri}.  This quiver gauge theory hence acquires the name ``ADHM quiver''.

In $\CN=2$ notation, this theory contains three adjoint chiral fields: $\phi_{1}, \, \phi_2, \, \phi_3$, where $\phi_{1,2}$ come from the $\CN=4$ adjoint hypermultiplet and $\phi_3$ comes from the $\CN=4$ vector multiplet, and fundamental chiral fields $Q^i_a$, $\tQ^a_i$ with $a = 1, \ldots, N$ and $i=1, \ldots, r$.  The superpotential is
\be
\begin{aligned}
W= \tQ^i_a (\phi_3)^a_{~b} Q^b_i + (\phi_3)^a_{~b}[ \phi_1, \phi_2]^b_{~a} \, .
\end{aligned} \ee
The $\CN=2$ quiver diagram is depicted below.
\be
\begin{aligned}
\begin{tikzpicture}[font=\footnotesize, scale=0.9]
\begin{scope}[auto,%
  every node/.style={draw, minimum size=0.5cm}, node distance=2cm];
\node[circle]  (UN)  at (0.3,1.7) {$N$};
\node[rectangle, right=of UN] (Ur) {$r$};
\end{scope}
\draw[decoration={markings, mark=at position 0.45 with {\arrow[scale=1.5]{>}}, mark=at position 0.5 with {\arrow[scale=1.5]{>}}, mark=at position 0.55 with {\arrow[scale=1.5]{>}}}, postaction={decorate}, shorten >=0.7pt] (-0,2) arc (30:340:0.75cm);
\draw[draw=black,solid,line width=0.2mm,->]  (UN) to[bend right=30] node[midway,below] {$Q$}node[midway,above] {}  (Ur) ; 
\draw[draw=black,solid,line width=0.2mm,<-]  (UN) to[bend left=30] node[midway,above] {$\tQ$} node[midway,above] {} (Ur) ;    
\node at (-2.2,1.7) {$\phi_{1,2,3}$};
\end{tikzpicture}
\end{aligned} \ee
The Higgs branch of this gauge theory describes the moduli space of $N$ $\SU(r)$ instantons on $\BC^2$ \cite{Atiyah:1978ri} and the Coulomb branch is isomorphic to the space $\Sym^N (\BC^2/\BZ_r)$ \cite{deBoer:1996mp}.  This theory can be realized on the worldvolume of $N$ M2-branes probing $\BC^2 \times \BC^2/\BZ_r$ singularity \cite{Porrati:1996xi}.

\subsubsection{A solution to the system of BAEs} \label{sec:solnADHM2pi}
Let us denote, respectively, by $\Delta$, $\tilde{\Delta}$, $\Delta_{\phi_{1,2,3}}$ the chemical potentials associated to the flavor symmetries of $Q$, $\tQ$, $\phi_{1,2,3}$, and by $\fn$, $\tilde{\fn}$, ${\fn}_{\phi_{1,2,3}}$ the corresponding fluxes associated with their flavor symmetries.  We denote also by  $\Delta_m$ the chemical potential associated with the topological charge of the gauge group $\U(N)$.

Given the rules of section \ref{large N twisted superpotential rules}, the twisted superpotential $\wt\cW$ for this model can be written as
\bea
 \label{ADHM Bethe}
 \frac{\wt\cW}{i N^{3/2}} & = \left( \sum_{i=1}^3 g_+(\Delta_{\phi_i})  \right) \int \rd t \, \rho(t)^2
 -\frac{r}{2} \left[(\Delta - \pi) + (\tilde{\Delta} - \pi) \right] \int \rd t \, |t| \, \rho(t) \\
 & + \Delta_m \int \rd t \, t\, \rho(t) -\mu \left(\int \rd t \, \rho(t)-1 \right) \, .
\eea
Taking the variational derivative of $\wt\cW$ with respect to $\rho(t)$, we obtain the BAE
\be
\begin{aligned} \label{BAEADHM1}
0 = 2 \rho(t) \sum_{i=1}^3 g_+(\Delta_{\phi_i}) - \frac{r}{2} |t| \left[ (\Delta - \pi) + (\tilde{\Delta} - \pi) \right] + \Delta_m t - \mu \, .
\end{aligned} \ee

We first look for a solution satisfying
\be \label{marginality}
\sum_{I \in W} \Delta_I = 2 \pi \, ,
\ee
where the sum is taken over all the fields in each monomial term $W$ in the superpotential.
We call this the \emph{marginality condition} on the superpotential.
This yields
\be
\begin{aligned}
{\Delta}+\tilde{\Delta} +{\Delta}_{\phi_3} = 2\pi\, , \qquad {\Delta}_{\phi_1}+{\Delta}_{\phi_2} +{\Delta}_{\phi_3} = 2\pi \, ,
\end{aligned} \ee
while the magnetic fluxes should satisfy
\be
{\fn}+\tilde{\fn} +{\fn}_{\phi_3} = 2 \, , \qquad {\fn}_{\phi_1}+{\fn}_{\phi_2} +{\fn}_{\phi_3} = 2 \, . 
\ee

For later convenience, let us normalize the chemical potential associated with the topological charge as follows:
\be
\begin{aligned}
\chi = \frac{2}{r} \Delta_m \, . \label{defchiADHM}
\end{aligned} \ee
Solving \eref{BAEADHM1}, we get
\be
\begin{aligned}
\rho(t) = \frac{2 \mu - r \Delta_{\phi_3}  \left| t\right| - r \chi t  }{2 \prod_{i=1}^3 \Delta_{\phi_i}} \, .
\end{aligned} \ee
The solution is supported on the interval $[t_-,t_+]$ with $t_- < 0 < t_+$, where $t_\pm$ can be determined from $\rho(t_\pm)=0$:
\be
\begin{aligned}
 t_{\pm}= \pm \frac{2\mu }{(\Delta_{\phi_3} \pm \chi) r} \, .
\end{aligned} \ee
The normalization $\int_{t_-}^{t_+} \rd t \, \rho(t) =1$ fixes
\be \label{ADHMmu}
\mu = \sqrt{\frac{r}{2} \Delta_{\phi_1} \Delta_{\phi_2}(\Delta_{\phi_3}+\chi)(\Delta_{\phi_3}-\chi)} \, .
\ee

\paragraph*{The solution in the other ranges.}
Let us consider
\be
\begin{aligned}
{\Delta}+\tilde{\Delta} +{\Delta}_{\phi_3} = 2  \pi \ell \, , \qquad {\Delta}_{\phi_1}+{\Delta}_{\phi_2} +{\Delta}_{\phi_3} = 2 \pi \ell \, , \qquad \text{where $\ell \in \BZ_{\geq 0}$} \, .
\end{aligned} \ee
For $\ell = 0$ and $\ell=3$, we have $\Delta=\tilde{\Delta}={\Delta}_{\phi_{1,2,3}}=0$ or $\Delta=\tilde{\Delta}={\Delta}_{\phi_{1,2,3}}=2\pi$, respectively. These are singular solutions.
For $\ell =2$, the solution can be mapped to the previous one (\ie{} $\ell = 1$) by a discrete symmetry
\be
\begin{aligned}
\Delta_I \rightarrow 2\pi - \Delta_I \,, ~\quad
\mu \rightarrow  - \mu\, , ~\quad \Delta_m \rightarrow - \Delta_m \, ,
\end{aligned} \ee
where the index $I$ labels matter fields in the theory.
From now on, we shall consider only the solution satisfying the marginality condition \eqref{marginality}.

\subsubsection[The index at large \texorpdfstring{$N$}{N}]{The index at large $\fakebold{N}$}

The topological free energy of the ADHM quiver can be derived from section \eref{large N index rules} as
\be
\begin{aligned} \label{index ADHM}
\frac{\mathfrak{F}_{\text{ADHM}}}{N^{3/2}} &= - \left[ \frac{\pi^2}{3} + \sum_{i=1}^3 (\fn_{\phi_i} -1) g'_+(\Delta_{\phi_i})  \right] \int \rd t \, \rho(t)^2 - \frac{r}{2} \ft \int \rd t \, t \, \rho(t) \\
& + \frac{r}{2} \left[ \left(\fn -1 \right) + \left(\tilde{\fn} -1 \right) \right]  \int \rd t \, |t|\, \rho(t) \, ,
\end{aligned} \ee
where $\ft$ is the magnetic flux conjugate to the variable $\chi$ defined in \eref{defchiADHM}.
Plugging the above solution back into \eqref{index ADHM}, we find that
\be
\begin{aligned}  \label{freeADHM}
\mathfrak{F}_{\text{ADHM}} = \sqrt{\frac{r}{k}}\, \mathfrak{F}_{\text{ABJM}_k} \, ,
\end{aligned} \ee
where, \cf\,\eqref{index:ABJM:ch:2},
\be
 \mathfrak{F}_{\text{ABJM}_k} = - \frac{k^{1/2}N^{3/2}}{3}
 \sqrt{2 \Delta_{A_1} \Delta_{A_2} \Delta_{B_1} \Delta_{B_2}}
 \left( \frac{\fn_{A_1}}{\Delta_{A_1}} + \frac{\fn_{A_2}}{\Delta_{A_2}}
 + \frac{\fn_{B_1}}{\Delta_{B_1}} + \frac{\fn_{B_2}}{\Delta_{B_2}} \right) \, .
\ee
The map of the parameters is as follows:
\be
\begin{aligned} \label{paramADHM}
\Delta_{A_1} & = \frac{1}{2} (\Delta_{\phi_3}-\chi) \, , \qquad \Delta_{A_2} =  \frac{1}{2}(\Delta_{\phi_3}+\chi) \, , \qquad \Delta_{B_1}= \Delta_{\phi_1} \, , \qquad \Delta_{B_2} =\Delta_{\phi_2} \, , \nn \\
\fn_{A_1} & = \frac{1}{2} (\fn_{\phi_3}-\ft)\, , \qquad \fn_{A_2} =  \frac{1}{2}(\fn_{\phi_3}+\ft) \, , \qquad \fn_{B_1}= \fn_{\phi_1} \, , \qquad \fn_{B_2} =\fn_{\phi_2} \, .
\end{aligned} \ee

The factor $\sqrt{r/k}$ in \eref{freeADHM} is the ratio between the orbifold order of $\Sym^N(\BC^2 \times \BC^2/\BZ_r)$ and that of $\Sym^N(\BC^2/\BZ_k)$; the former is the geometric branch of the ADHM theory and the latter is that of the ABJM theory with Chern-Simons levels $(+k, -k)$.

\subsection[The \texorpdfstring{$A_{n-1}$}{A[n-1]} Kronheimer-Nakajima quiver]{The $\fakebold{A_{n-1}}$ Kronheimer-Nakajima quiver}
We consider a necklace quiver with $\U(N)^n$ gauge group with a bi-fundamental hypermultiplet between the adjacent gauge groups and with $r$ flavors of fundamental hypermultiplets under the $n$-th gauge group.  The $\CN=4$ quiver is depicted below.
\be
\begin{aligned} \label{KNN4quiv}
\begin{tikzpicture}[baseline]
\def \n {6}
\def \radius {1.5cm}
\def \margin {16} 
\foreach \s in {1,...,5}
{
  \node[draw, circle] at ({360/\n * (\s - 2)}:\radius) {{\footnotesize $N$}};
  \draw[-, >=latex] ({360/\n * (\s - 3)+\margin}:\radius) 
    arc ({360/\n * (\s - 3)+\margin}:{360/\n * (\s-2)-\margin}:\radius);
}
\node[draw, circle] at ({360/3 * (3 - 1)}:\radius) {{\footnotesize $N$}};
\draw[dashed, >=latex] ({360/6 * (5 -2)+\margin}:\radius) 
    arc ({360/6 * (5 -2)+\margin}:{360/6 * (5-1)-\margin}:\radius);
\node[draw, rectangle] at (3,0) {{\footnotesize $r$}};
\draw[-, >=latex] (1+0.9,0) to (3-0.2,0);
\node[draw=none] at (0,-2.3) {{\footnotesize ($n$ circular nodes)}};
\end{tikzpicture}
\end{aligned} \ee
As proposed by Kronheimer and Nakajima \cite{kronheimer1990yang}, the vacuum equations for the Higgs branch of this theory describes the hyperK\"ahler quotient of the moduli space of $\SU(r)$ instantons on $\BC^2/\BZ_n$ with $\SU(r)$ left unbroken by the monodromy at infinity. We shall henceforth refer to this quiver as the ``Kronheimer-Nakajima quiver''.

The corresponding $\CN=2$ quiver diagram is
\be
\begin{aligned} \label{KNN2quiv}
\begin{tikzpicture}[baseline, font=\scriptsize]
\def \n {6}
\def \radius {1.5cm}
\def \margin {14} 
\foreach \s in {1,...,6}
{
  \node[draw, circle] at ({360/\n * (\s)}:\radius)  (\s) {$N$};
}
\foreach \s in {1,...,5}
{    
    \draw[black,-> ] (\s) edge [out={30+\s*60},in={-30+\s*60},loop,looseness=8] (\s);
} 
\draw[black,-> ] (6) edge [out={30+6*60-60},in={-30+6*60-40},loop,looseness=10] (6);
\foreach \s in {1,...,5}
{  
  \draw[decoration={markings, mark=at position 0.5 with {\arrow[scale=1.5]{>}}}, postaction={decorate}, shorten >=0.7pt]  ({360/\n * (\s - 3)+\margin}:\radius) to[bend left=60] node[midway,above] {}node[midway,below] {}  ({360/\n * (\s - 2)-\margin}:\radius) ; 
    \draw[decoration={markings, mark=at position 0.5 with {\arrow[scale=1.5]{<}}}, postaction={decorate}, shorten >=0.7pt]  ({360/\n * (\s - 3)+\margin}:\radius) to[bend right=60] node[midway,above] {}node[midway,below] {}  ({360/\n * (\s - 2)-\margin}:\radius) ;
}   
\draw[dashed, >=latex] ({360/6 * (5 -2)+\margin}:\radius) 
    arc ({360/6 * (5 -2)+\margin}:{360/6 * (5-1)-\margin}:\radius);
\node[draw, rectangle] at (3.5,0) {{\footnotesize $r$}};
\draw[decoration={markings, mark=at position 0.5 with {\arrow[scale=1.5]{>}}}, postaction={decorate}, shorten >=0.7pt] (1.5+0.35,0)  to[bend left=30] node[midway,above] {}node[midway,below] {}   (3.5-0.2,0);
\draw[decoration={markings, mark=at position 0.5 with {\arrow[scale=1.5]{<}}}, postaction={decorate}, shorten >=0.7pt] (1.5+0.35,0)  to[bend right=30] node[midway,above] {}node[midway,below] {}   (3.5-0.2,0);
\node[draw=none] at (0,-2.7) {{\footnotesize ($n$ circular nodes)}};
\end{tikzpicture}
\end{aligned} \ee
Let $Q_\alpha$ (with $\alpha=1,\ldots, n$) be the bi-fundamental field that goes from node $\alpha$ to node $\alpha+1$; $\tilde{Q}_\alpha$ be the bi-fundamental field that goes from node $\alpha+1$ to node $\alpha$; and $\phi_\alpha$ be the adjoint field under node $\alpha$.   Let us also denote by $q^i_{a}$ and $\tilde{q}^a_{i}$ the fundamental and anti-fundamental chiral multiplets under the $n$-th gauge group (with $a=1,\ldots, N$ and $i=1, \ldots, r$).  The superpotential is
\be
\begin{aligned}
W= \sum_{\alpha=1}^n  \Tr\left(Q_\alpha \phi_{\alpha+1} \tilde{Q}_\alpha- \tilde{Q}_{\alpha} \phi_{\alpha} Q_{\alpha}\right)+ \tilde{q}^a_i \, (\phi_n)^b_{~a}  \, {q}^i_b ~,
\end{aligned} \ee
where we identify $\phi_{n+1}=\phi_1$. From now on, the
index $\alpha$ labeling the nodes is taken modulo $n$
for any necklace quiver with $n$ nodes.

The Higgs branch of this gauge theory describes the moduli space of $N$ $\SU(r)$ instantons
on $\BC^2/\BZ_n$ such that the monodromy at infinity preserves $\SU(r)$
symmetry \cite{kronheimer1990yang}, and the Coulomb branch describes the
moduli space of $N$ $\SU(n)$ instantons on $\BC^2/\BZ_r$ such that the
monodromy at infinity preserves $\SU(n)$ symmetry \cite{deBoer:1996mp, Porrati:1996xi, Witten:2009xu, Mekareeya:2015bla}.  It can be indeed realized on the worldvolume of $N$ M2-branes probing $\BC^2/\BZ_n \times \BC^2/\BZ_r$ singularity \cite{Porrati:1996xi}.  Note also that 3D mirror symmetry exchanges the Kronheimer-Nakajima quiver \eref{KNN4quiv} with $r=1$ and $n=2$ and the ADHM quiver \eref{ADHMN4quiv} with $r=2$. 

\subsubsection{A solution to the system of BAEs} \label{sec:solnKN}
Let us denote respectively by $\Delta_{Q_\alpha}$, $\Delta_{\tQ_\alpha}$, $\Delta_{\phi_{\alpha}}$, $\Delta_{q}$, $\Delta_{\tilde{q}}$ the chemical potentials associated to the flavor symmetries of $Q_{\alpha}$, $\tQ_{\alpha}$, $\phi_{\alpha}$, $q$ and $\tilde{q}$, and by $\fn_{Q_\alpha}$, ${\fn}_{\tQ_\alpha}$, ${\fn}_{\phi_{\alpha}}$, $\fn_q$, $\fn_{\tilde{q}}$ the corresponding fluxes associated with their flavor symmetries.  We also denote by $\Delta_m^{(\alpha)}$ the chemical potential associated with the topological charge for gauge group $\alpha$ and by $\ft^{(\alpha)}$ the associated magnetic flux.

Given the rules of section \ref{large N twisted superpotential rules}, the twisted superpotential $\wt\cW$ for this model is given by
\be
\begin{aligned}
\frac{\wt\cW}{i N^{3/2}} & = \int \rd t \, \rho(t)^2 \sum_{\alpha=1}^n \left[ g_+ (\delta v^\alpha(t)+ {\Delta}_{\tQ_\alpha}) - g_- (\delta v^\alpha(t) - \Delta_{Q_\alpha})  + g_+(\Delta_{\phi_\alpha}) \right] \\
& - \frac{r}{2} \left[ \left(\Delta_q - \pi \right) + \left({\Delta}_{\tilde{q}} - \pi \right) \right] \int \rd t \, |t| \, \rho(t)
+ \left( \sum_{\alpha =1}^n \Delta_m^{(\alpha)}  \right) \int \rd t \, t \, \rho(t) \\
& -\mu \left(\int \rd t \, \rho(t)-1 \right) \, .
\end{aligned} \ee
where $\delta v^{\alpha} = v^{\alpha+1} - v^\alpha$ and we identify $\delta v^{n+1} = \delta v^{1}$. Taking the variational derivatives of $\wt\cW$ with respect to $\rho(t)$ and $\delta v^\alpha(t)$, we obtain the BAEs
\be
\begin{aligned} \label{BAEKN}
0 &= 2 \rho(t) \sum_{\alpha=1}^n \left[ g_+ (\delta v^\alpha(t)+ {\Delta}_{\tQ_\alpha}) - g_- (\delta v^\alpha(t) - \Delta_{Q_\alpha}) + g_+(\Delta_{\phi_\alpha})  \right]  \\
& - \frac{r}{2} |t| \left[ (\Delta_q - \pi) + ({\Delta}_{\tilde{q}} - \pi) \right]
+ \left( \sum_{\alpha =1}^n \Delta_m^{(\alpha)}  \right) t - \mu \, ,\\
0&= \rho(t) \Big[ g'_+( \delta v^\alpha(t)+ \Delta_{\tQ \alpha})
- g'_- ( \delta v^\alpha(t) - \Delta_{Q \alpha}) \\
& + g'_- ( \delta v^{\alpha-1}(t) - \Delta_{Q_{\alpha-1}})
- g'_+(\delta v^{\alpha-1}(t) + \Delta_{\tQ_{\alpha-1}}) \Big] \, , \qquad \alpha =1, \ldots, n \, .
\end{aligned} \ee

The superpotential imposes the following constraints on the chemical potentials of the various fields:
\be
\begin{aligned}
{\Delta}_q+{\Delta}_{\tilde{q}} +{\Delta}_{\phi_n} = 2\pi \, , \quad {\Delta}_{Q_\alpha}+\Delta_{\phi_{\alpha+1}}+{\Delta}_{\tQ_\alpha} = 2\pi \, , \quad {\Delta}_{\tQ_\alpha}+\Delta_{\phi_{\alpha}}+{\Delta}_{Q_\alpha} = 2\pi \, . 
\end{aligned} \ee

For notational convenience, we define
\be
\begin{aligned}\label{paraKN}
F_1= \sum_\alpha {\Delta}_{\tQ_\alpha}\, , \qquad F_3 = \Delta_{\phi_{n}} \, , \qquad {\Delta_m} = \frac{2}{r} \sum_{\alpha} \Delta_m^{(\alpha)}\, ,
\end{aligned} \ee
and
\be
\begin{aligned}
F_2 = 2\pi -F_1-F_3 \, .
\end{aligned} \ee
Solving the system of BAEs \eref{BAEKN}, we find that 
\be
\begin{aligned}
\rho(t) = \frac{2 \mu - r F_3  \left| t\right| -  r \Delta_m t  }{2 \prod_{i=1}^3 F_i} ~, \qquad \delta v^\alpha =  \frac{1}{n} F_1-\Delta_{\tQ_\alpha}~.
\end{aligned} \ee
The support $[t_-, t_+]$ of $\rho(t)$ is determined by $\rho(t_\pm)=0$. We get
\be
\begin{aligned}
 t_\pm = \pm \frac{2\mu }{(F_3 \pm \Delta_m)r}\, .
\end{aligned} \ee
The normalization $\int_{t_-}^{t_+} \rd t \, \rho(t) =1$ fixes
\be
\mu = \sqrt{\frac{3 r}{2} F_1 F_2 (F_3+\Delta_m) (F_3-\Delta_m)} \, . \label{muKN}
\ee

\subsubsection[The index at large \texorpdfstring{$N$}{N}]{The index at large $\fakebold{N}$}

From the rules given in section \ref{large N index rules}, the topological free energy of this quiver is given by
\be
\begin{aligned} \label{SKN}
\frac{\mathfrak{F}_{\text{KN}}}{N^{3/2}} & = - \frac{n \pi^2}{3} \int \rd t \, \rho(t)^2 - \left(\sum_{\alpha=1}^n \ft^{(\alpha)} \right)  \int \rd t \, t \, \rho(t)
+ \frac{r}{2} \left[ (\fn_q -1) + (\fn_{\tilde{q}} -1) \right]  \int \rd t \, |t|\, \rho(t) \\
& -  \int \rd t \, \rho(t)^2\, \sum_{\alpha=1}^n \bigg[(\fn_{\tQ_\alpha} -1) g'_+(\delta v^\alpha(t)+ \Delta_{\tQ_\alpha})
+ (\fn_{Q_\alpha} -1) g'_-(\delta v^\alpha(t) - \Delta_{Q_\alpha}) \bigg] \\
& - \sum_{\alpha=1}^n (\fn_{\phi_\alpha} -1) g'_+(\Delta_{\phi_\alpha}) \int \rd t \, \rho(t)^2 \, .
\end{aligned} \ee
Plugging the above solution back into \eref{SKN}, we find that the topological free energy depends only on the parameters $F_1$, $F_2$, $F_3$ given by \eref{paraKN} and their corresponding conjugate charges
\be
\begin{aligned}
\fn_1= \sum_\alpha {\fn}_{\tQ_\alpha} \, , \qquad \fn_3 =\fn_{\phi_{n}} \, , \qquad {\ft} = \frac{2}{r} \sum_{\alpha} \ft^{(\alpha)}  \, .
\end{aligned} \ee
Explicitly, we obtain
\be
\begin{aligned} \label{freeKN}
\mathfrak{F}_{\text{KN}} = \sqrt{\frac{nr}{k}}\, \mathfrak{F}_{\text{ABJM}_k}\, ,
\end{aligned} \ee
with the following map of the parameters
\be
\begin{aligned}
\Delta_{A_1} &= \frac{1}{2} (F_3-\Delta_m) \, , \qquad \Delta_{A_2} =  \frac{1}{2}(F_3+\Delta_m) \, , \qquad \Delta_{B_1}= F_1 \, , \qquad \Delta_{B_2} =F_2 \, , \\
\fn_{A_1} & = \frac{1}{2} (\fn_{3}-\ft) \, , \qquad \fn_{A_2} =  \frac{1}{2}(\fn_3+\ft) \, , \qquad \fn_{B_1}= \fn_{1} \, , \qquad \fn_{B_2} =\fn_{2} \, .
\end{aligned} \ee
Notice that, this is completely analogous to that of the ADHM quiver presented in \eref{paramADHM}.

The factor $\sqrt{nr/k}$ in \eref{freeADHM} is the ratio between the product of the orbifold orders in $\Sym^N(\BC^2/\BZ_n \times \BC^2/\BZ_r)$ and that of $\Sym^N(\BC^2/\BZ_k)$, where the former is the geometric branch of the Kronheimer-Nakajima theory and the latter is that of the ABJM theory with Chern-Simons levels $(+k, -k)$.

\paragraph*{Mirror symmetry.} The Kronheimer-Nakajima quiver \eref{KNN4quiv} with $r=1$ and $n=2$
is mirror dual \cite{Intriligator:1996ex} to the ADHM quiver \eref{ADHMN4quiv} with $r=2$.
From \eref{freeADHM} and \eref{freeKN}, the topological free energy of the two theories are indeed equal:
\be
\begin{aligned}
\mathfrak{F}_{\text{KN}} \Big |_{r=1, \, n=2} = \mathfrak{F}_{\text{ADHM}} \Big |_{r=2} \, .
\end{aligned} \ee

\section[Quivers with \texorpdfstring{$\CN=3$}{N=3} supersymmetry]{Quivers with $\fakebold{\cN=3}$ supersymmetry}
\label{sec:N3susy}

A crucial difference between the theories considered in this section and those with $\CN=4$ supersymmetry is that the solution to the BAEs of the former are divided into several regions and the final result of the topological free energy comes from the sum of the contributions of each region.
Such a feature of the solution was already present in the ABJM theory and was discussed extensively in \cite{Benini:2015eyy}. In subsection \ref{altCSN3}, we deal with the necklace quiver with alternating Chern-Simons levels and present the twisted superpotential, the BAEs and the procedure to solve them in detail.
The solutions for the other models in the following subsections can be derived in a similar fashion.

In subsections \ref{sec:necklacealtCS} and \ref{sec:necklacetwoCS}, we focus on theories whose geometric branch is a symmetric power of a product of two ALE singularities \cite{Imamura:2008nn, Cremonesi:2016nbo}.
Similarly to the preceding section, the topological free energy of such theories can be written as a numerical factor times the topological free energy of the ABJM theory, where the numerical factor equals to the square root of the ratio between the product of the orders of such singularities and the level of the ABJM theory.
Moreover, in a certain special case where the quiver is $\SL(2, \BZ)$ dual to a quiver with $\CN=4$ supersymmetry \cite{Gaiotto:2008ak, Assel:2012cj, Assel:2014awa, Cremonesi:2016nbo}, we match the topological free energy of two theories.

\subsection[The affine \texorpdfstring{$A_{2m-1}$}{A[2m-1]} quiver with alternating CS levels]{The affine $\fakebold{A_{2m-1}}$ quiver with alternating CS levels}
\label{sec:necklacealtCS}

We are interested in the necklace quiver with $n=2m$ nodes, each with $\U(N)$ gauge group, and alternating Chern-Simons levels:
\be
\begin{aligned} \label{quivaltCS}
k_\alpha= \begin{cases} +k & \text{if $\alpha$ is odd} \\
-k & \text{if $\alpha$ is even} \end{cases}
\end{aligned} \ee
The $\CN=2$ quiver diagram is depicted below.
\be
\begin{aligned}
\begin{tikzpicture}[baseline,font=\tiny, every loop/.style={min distance=10mm,looseness=10}, every node/.style={minimum size=0.9cm}]
\def \n {6}
\def \radius {1.5cm}
\def \margin {18} 
\foreach \s in {1}
{
    \node[draw, circle] at ({2*360/\n * (\s)}:\radius)  (\s) {$N_{+k}$};
    \draw[black,-> ] (\s) edge [out={30+2*\s*60},in={-30+2*\s*60},loop,looseness=5] (\s);
} 
\foreach \s in {1}
{    
    \node[draw, circle]  at ({2*360/\n*\s-360/\n}:\radius) (\s)  {$N_{-k}$};
    \draw[black,-> ] (\s) edge [out={-30+2*\s*60},in={-90+2*\s*60},loop,looseness=5] (\s);
} 
\foreach \s in {2,3}
{
    \node[draw, circle] at ({2*360/\n * (\s)}:\radius)  (\s) {$N_{+k}$};
    \draw[black,-> ] (\s) edge [out={30+2*\s*60},in={-30+2*\s*60},loop,looseness=5] (\s);
} 
\foreach \s in {2,3}
{    
    \node[draw, circle]  at ({2*360/\n*\s-360/\n}:\radius) (\s)  {$N_{-k}$};
    \draw[black,-> ] (\s) edge [out={-30+2*\s*60},in={-90+2*\s*60},loop,looseness=5] (\s);
} 
\foreach \s in {1,...,5}
{  
  \draw[decoration={markings, mark=at position 0.5 with {\arrow[scale=1.5]{>}}}, postaction={decorate}, shorten >=0.7pt]  ({360/\n * (\s - 3)+\margin}:\radius) to[bend left=60] node[midway,above] {}node[midway,below] {}  ({360/\n * (\s - 2)-\margin}:\radius) ; 
    \draw[decoration={markings, mark=at position 0.5 with {\arrow[scale=1.5]{<}}}, postaction={decorate}, shorten >=0.7pt]  ({360/\n * (\s - 3)+\margin}:\radius) to[bend right=60] node[midway,above] {}node[midway,below] {}  ({360/\n * (\s - 2)-\margin}:\radius) ;
}   
\draw[dashed, >=latex] ({360/6 * (5 -2)+\margin}:\radius) 
    arc ({360/6 * (5 -2)+\margin}:{360/6 * (5-1)-\margin}:\radius);
\end{tikzpicture}
\end{aligned} \ee
Let $Q_\alpha$ be the bi-fundamental field that goes from node $\alpha$ to node $\alpha+1$; $\tilde{Q}_\alpha$ be the bi-fundamental field that goes from node $\alpha+1$ to node $\alpha$; and $\phi_\alpha$ be the adjoint field under node $\alpha$.  The superpotential can be written as
\be
\begin{aligned}
W=\sum_{\alpha=1}^n  \Tr\left(Q_\alpha \phi_{\alpha+1} \tilde{Q}_\alpha- \tilde{Q}_{\alpha} \phi_{\alpha} Q_{\alpha}\right) +\frac{k}{2} \sum_{\alpha=1}^m \Tr \left(\phi_{2\alpha-1}^2- \phi_{2\alpha}^2\right) \, .
\end{aligned} \ee
After integrating out the massive adjoint fields, we have the superpotential
\be
\begin{aligned}
W= \frac{1}{k} \sum_{\alpha=1}^n (-1)^\alpha \Tr \left( Q_\alpha Q_{\alpha+1} \tQ_{\alpha+1} \tQ_\alpha \right) \, , \label{superpot}
\end{aligned} \ee
where we identify
\be
\begin{aligned}
Q_{n+1} = Q_{n} \, , \qquad \tQ_{n+1} = \tQ_{n} \, .
\end{aligned} \ee

\subsubsection{A solution to the system of BAEs} \label{altCSN3}
Let us denote respectively by $\Delta_{\alpha}$, $\tilde{\Delta}_{\alpha}$ the chemical potentials associated to the flavor symmetries of $Q_\alpha$ and $\tQ_\alpha$, and by $n_{\alpha}$, $\tilde{n}_{\alpha}$ the fluxes associated with the flavor symmetries of $Q_\alpha$ and $\tQ_\alpha$.

From the rules given in section \ref{large N twisted superpotential rules}, the twisted superpotential $\wt\cW$ can be written as
\be
\begin{aligned} \label{potaltCS}
\frac{\wt\cW}{i N^{3/2}} &= k \int \rd t \, t\, \rho(t) \sum_{\alpha=1}^m \delta v^{2\alpha-1}(t)
+ \int \rd t \, \rho(t)^2  \sum_{\alpha=1}^n \left[ g_+ \big(\delta v^\alpha(t)+ \tilde{\Delta}_{\alpha} \big) - g_- \big(\delta v^\alpha(t) - \Delta_{\alpha} \big) \right]  \\
& - \frac{i}{N^{1/2}} \int \rd t \, \rho(t) \sum_{\alpha=1}^m \bigg[ \Li_2 \left(\e^{i \big(\delta v^{2\alpha-1}(t)
+ \tilde{\Delta}_{2\alpha-1}\big)}\right) - \Li_2 \left(\e^{i \big(\delta v^{2\alpha-1}(t)- {\Delta}_{2\alpha-1}\big)}\right) \\
& + \Li_2 \left(\e^{i \big(\delta v^{2\alpha}(t)+ \tilde{\Delta}_{2\alpha}\big)}\right)
- \Li_2 \left(\e^{i \big(\delta v^{2\alpha}(t)- {\Delta}_{2\alpha}\big)}\right) \bigg] -\mu \left(\int \rd t \, \rho(t)-1 \right) \, ,
\end{aligned} \ee
where $\delta v^{\alpha}(t) = v^{\alpha+1}(t) - v^{\alpha}(t)$ and hence,
\be
\begin{aligned}
\sum_{\alpha=1}^n \delta v^{\alpha}(t) =0\, .
\end{aligned} \ee
Without loss of generality, we set the chemical potentials associated with topological symmetries to zero. The subleading terms in \eref{potaltCS} can be obtained by considering the node $2 \alpha-1$ (with $\alpha=1, \ldots, m$), where the fields with chemical potentials $\tilde{\Delta}_{2\alpha-1}$, $\Delta_{2\alpha-2}$ are incoming to that node and those with chemical potentials ${\Delta}_{2\alpha-1}$, $\tilde{\Delta}_{2\alpha-2}$ are outgoing of that node.
This explains the signs of such terms in \eref{potaltCS}.
These terms can be neglected when we compute the value of the twisted superpotential, since $\Li_2$ does not have divergences;
however, they play an important role when we deal with the derivatives of $\wt\cW$ because $\Li_1(e^{iu})$ diverges as $u \to 0$.

Taking the variational derivatives of $\wt\cW$ with respect to $\rho(t)$ and setting it to zero, we obtain
\be
\begin{aligned} \label{eq0}
0 &=k t \sum_{\alpha=1}^m \delta v^{2\alpha-1}(t) + 2 \rho(t) \sum_{\alpha=1}^n \left[ g_+ \big(\delta v^\alpha(t)+ \tilde{\Delta}_{\alpha}\big) - g_- \big(\delta v^\alpha(t) - \Delta_{\alpha}\big)  \right] - \mu \, .
\end{aligned} \ee
When $\delta v^\alpha \not\approx - \tilde{\Delta}_\alpha$ and $\delta v^\alpha \not\approx \Delta_\alpha$ for all $\alpha$, setting the variational derivatives of $\wt\cW$ with respect to $\delta v^\alpha(t)$ to zero yields
\be
\begin{aligned} \label{eqmid1}
0&= (-1)^{\alpha+1} k t + \rho(t) \Big[ g'_+\big( \delta v^\alpha(t) + \tilde{\Delta}_{\alpha}\big) - g'_-\big(\delta v^\alpha(t) - \Delta_{\alpha}\big) \\
& + g'_-\big(\delta v^{\alpha-1}(t) - \Delta_{{\alpha-1}}\big) - g'_+\big(\delta v^{\alpha-1}(t) + \tilde{\Delta}_{{\alpha-1}}\big)  \Big]\, , \qquad \alpha =1, \ldots, n \, .
\end{aligned} \ee
However, in the following, we also need to consider the cases in which  $\delta v^{2\alpha-1}(t) \approx -  \tilde{\Delta}_{2\alpha-1}$ and that in which $\delta v^{2\alpha-1}(t) \approx \Delta_{2\alpha-1}$, for all $\alpha =1, \ldots, m$.  
\bi
\item In the former case, taking $\delta v^{2\alpha-1}(t) = -  \tilde{\Delta}_{2\alpha-1}+\exp(- N^{1/2} \tilde{Y}_{2\alpha-1})$ and setting to zero the variational derivatives of $\wt\cW$ with respect to $\delta v^{2\alpha-1}(t)$ and $\delta v^{2\alpha}(t)$ yields
\be
\begin{aligned} \label{eqleft1}
0&=\tilde{Y}_{2\alpha-1}(t) + k t + \rho(t) \Big[ g'_+(0) - g'_-\big( -\tilde{\Delta}_{2\alpha-1} - \Delta_{2\alpha-1}\big) \\
& + g'_-\big(\delta v^{2\alpha-2}(t) - \Delta_{{2\alpha-2}}\big) - g'_+\big(\delta v^{2\alpha-2}(t) + \tilde{\Delta}_{2\alpha-2}\big)  \Big]\, , \\
0&=-\tilde{Y}_{2\alpha-1} (t)-k t + \rho(t) \Big[ g'_+\big(\delta v^{2\alpha}(t) + \tilde{\Delta}_{2\alpha}\big) - g'_-\big(\delta v^{2\alpha}(t) - \Delta_{2\alpha}\big) \\
& + g'_-\big(- \tilde{\Delta}_{2\alpha-1} - \Delta_{{2\alpha-1}}\big) - g'_+(0)  \Big]\, .
\end{aligned} \ee
\item In the latter case, taking $\delta v^{2\alpha-1}(t) ={\Delta}_{2\alpha-1}-\exp(- N^{1/2} {Y}_{2\alpha-1})$ and setting to zero the variational derivatives of $\wt\cW$ with respect to $\delta v^{2\alpha-1}(t)$ and $\delta v^{2\alpha}(t)$  yields
\be
\begin{aligned} \label{eqright1}
0&=-{Y}_{2\alpha-1}(t) + k t + \rho(t) \Big[ g'_+\big({\Delta}_{2\alpha-1} + \tilde{\Delta}_{{2\alpha-1}}\big) - g'_-(0) \\
& + g'_-\big(\delta v^{2\alpha-2}(t) - \Delta_{{2\alpha-2}}\big) - g'_+\big(\delta v^{2\alpha-2}(t) + \tilde{\Delta}_{2\alpha-2}\big)  \Big]\, , \\
0&={Y}_{2\alpha-1}(t) - k t + \rho(t) \Big[ g'_+\big(\delta v^{2\alpha}(t) + \tilde{\Delta}_{2\alpha}\big) - g'_-\big( \delta v^{2\alpha}(t) - \Delta_{2\alpha}\big) \\
& + g'_-(0) - g'_+\big({\Delta}_{2\alpha-1} + \tilde{\Delta}_{{2\alpha-1}}\big)  \Big]\, .
\end{aligned} \ee
\ei
We also impose the condition that the sum of the chemical potential for each term in the superpotential \eqref{superpot} is $2 \pi$,
\be
\begin{aligned}
\Delta_\alpha+\Delta_{\alpha+1}+\tilde{\Delta}_\alpha+\tilde{\Delta}_{\alpha+1} = 2\pi \, .
\end{aligned} \ee

For later convenience, we define the following notations
\be
\begin{aligned} \label{paraA2m}
F_1 = m \sum_{\alpha=1}^m \Delta_{2\alpha} \, , \qquad
F_2 = m  \sum_{\alpha=1}^m \Delta_{2\alpha-1} \, , \qquad
F_3= \Delta_1 + \tilde{\Delta}_1 \, .
\end{aligned} \ee

Let us now proceed to solve the BAEs. First, we solve \eref{eq0} and \eref{eqmid1} and obtain
\be
\begin{aligned}
\rho & =  \frac{m k t\left[F_1 F_3-F_2 (2\pi -F_3)\right] +2 \pi  \mu }{m F_3 (2 \pi -F_3 ) \left(2 \pi-F_1-F_2 \right) \left(F_1+F_2\right) } \, , \\[.5em]
\delta v^{2\alpha-1} &= \Delta_{2\alpha-1} -\frac{\left(F_1+F_2\right) F_3 \left[ \mu -m k t(2 \pi -F_3-F_1) \right]}{m k t \left[F_1 F_3-F_2 (2\pi -F_3)\right]+2 \pi  \mu } \, , \\[.5em]
\delta v^{2\alpha} &= \Delta_{2\alpha} -\frac{\left(F_1+F_2\right) (2\pi -F_3) \left[ \mu +m k t(F_3-F_2) \right]}{m k t \left[F_1 F_3-F_2 (2\pi -F_3)\right]+2 \pi  \mu } \, ,
\end{aligned}
\qquad\qquad t_< < t < t_> \, .
\ee
This solution is valid in the interval $[t_{<},t_>]$ where the end-points are determined from 
\be
\begin{aligned}
\delta v^{2\alpha-1}(t_<) = - \tilde{\Delta}_{2\alpha-1} \, , \qquad \delta v^{2\alpha-1}(t_>) = \Delta_{2\alpha-1}  \quad \forall \alpha=1, \ldots, m \, .
\end{aligned} \ee 
Explicitly, they are
\be
\begin{aligned}
t_< = - \frac{\mu}{k m F_1}\, ,\qquad 
t_> = \frac\mu{k m (2 \pi - F_1 -F_3)}\, .
\end{aligned} \ee
Next, we focus on the regions $[t_{\ll}, t_<] $ and $[t_>, t_{\gg}]$, where $\delta v^{2\alpha-1}(t) = -  \tilde{\Delta}_{2\alpha-1}$ for $ t \in [t_{\ll}, t_<]$ and $\delta v^{2\alpha-1}(t) = {\Delta}_{2\alpha-1}$ for $ t \in [t_{>}, t_\gg]$.

For the interval $[t_{\ll}, t_<] $, we solve \eref{eq0} and \eref{eqleft1} and obtain
\be
\begin{aligned}
\rho & = \frac{\mu +m \left(F_3 -F_3  \right) k t}{m F_3 \left(F_1+F_2-F_3\right) \left(2  \pi -F_1-F_2\right)} \, , \\[.5em]
\delta v^{2\alpha-1} &= -  \tilde{\Delta}_{2\alpha-1} \, ,\qquad  \delta v^{2\alpha} = F_3-F_1-F_2 + \Delta_{2\alpha} \, , \\[.5em]
\widetilde{Y}_{2\alpha-1}&=  -\frac{\mu +m kt F_1 }{m(F_3-F_1-F_2)} \, ,
\end{aligned}
\qquad\qquad t_\ll < t < t_< \, ,
\ee
where we determine the end-point $t_\ll$ by the condition $\rho(t_\ll)=0$:
\be
\begin{aligned}
t_\ll = - \frac{\mu}{km( F_3-F_2)} \, .
\end{aligned} \ee

For the interval $[t_>, t_\gg]$, we solve \eref{eq0} and \eref{eqright1} and obtain
\be
\begin{aligned}
\rho & =  \frac{\mu-m k t F_2}{m F_3 \left(F_1+F_2-F_3\right) \left(2\pi- F_1-F_2\right)} \, , \\[.5em]
\delta v^{2\alpha-1} &= \Delta_{2\alpha-1}\, , \qquad \delta v^{2\alpha} =  -F_1-F_2 + \Delta_{2\alpha} \, , \\[.5em]
{Y}_{2\alpha-1} &=  \frac{\mu-m kt(2\pi- F_1-F_3)}{m (2\pi - F_1 - F_2-F_3)} \, ,
\end{aligned}
\qquad\qquad t_> < t < t_\gg \, ,
\ee
where we determine the end-point $t_\gg$ by the condition $\rho(t_\gg)=0$:
\be
\begin{aligned}
t_\gg = \frac\mu{k m F_2}\, .
\end{aligned} \ee

To summarize, the above solution is divided into three regions, namely the {\it left tail} $[t_\ll, t_<]$, the {\it inner interval} $[t_<, t_>]$ and the {\it right tail} $[t_>, t_\gg]$.  These are depicted in the following diagram:
\begin{center}
\begin{tikzpicture}[scale=2]
\draw (-2.5,0) -- (2.5,0);
\draw (-2.5,-.05) -- (-2.5, .05); \draw (-1,-.05) -- (-1, .05); \draw (1,-.05) -- (1, .05); \draw (2.5,-.05) -- (2.5, .05);
\node [below] at (-2.5,0) {$t_\ll$}; \node [below] at (-2.5,-.3) {$\rho=0$};
\node [below] at (-1.,0) {$t_<$}; \node [below] at (-1.,-.3) { $\delta v^{2\alpha-1} = - \tilde{\Delta}_{2\alpha-1} \, \forall \alpha$}; 
\node [below] at (1,0) {$t_>$}; \node [below] at (1,-.3) { $\delta v^{2\alpha-1}=\Delta_{2\alpha-1}  \, \forall \alpha $}; 
\node [below] at (2.5,0) {$t_\gg$}; \node [below] at (2.5,-.3) {$\rho=0$};
\end{tikzpicture}
\end{center}
Finally, the normalization $\int_{t_\ll}^{t_\gg} \rd t \, \rho(t) =1$ fixes
\be
\label{solution sum 2pi -- end -- A2m}
\mu =m \sqrt{2k F_1  F_2 \left(F_3-F_2\right) \left(2 \pi -F_3- F_1 \right)} \, .
\ee

\subsubsection[The index at large \texorpdfstring{$N$}{N}]{The index at large $\fakebold{N}$}

Give the rules in section \ref{large N index rules}, the topological free energy of this theory is given by
\be
\begin{aligned}
\frac{\mathfrak{F}}{N^{3/2}} &= -\int \rd t \, \rho(t)^2 \Bigg\{ \frac{n \pi^2}{3} + \sum_{\alpha=1}^n \left[ (\tilde{n}_{\alpha} -1) g'_+(\delta v^\alpha+ \tilde{\Delta}_{\alpha})
+ (n_{\alpha} -1) g'_- (\delta v^\alpha - \Delta_{\alpha})\right]\Bigg\} \\
& \qquad - \sum_{\alpha=1}^m \tilde{n}_\alpha \int_{\delta v^{2\alpha-1} \approx -\tilde{\Delta}_{2\alpha-1}} \rd t \, \rho(t) \, \tilde{Y}_{2\alpha-1}(t)   - \sum_{\alpha=1}^m n_\alpha \int_{\delta v^{2\alpha-1} \approx {\Delta}_{2\alpha-1}}  \rd t \, \rho(t) \, {Y}_{2\alpha-1}(t) \, .
\end{aligned} \ee
The result depends only on the parameters $F_1$, $F_2$, $F_3$ and their corresponding flavor magnetic fluxes
\be
\begin{aligned}
{\frak n}_1 = m \sum_{\alpha=1}^m n_{2\alpha} \, , \qquad {\frak n}_2 = m \sum_{\alpha=1}^m n_{2\alpha-1} \, , \qquad {\frak n}_3 = n_1 + \tilde{n}_1 \, ,
\end{aligned} \ee
and can be written as
\be
\begin{aligned} \label{freealtCS}
\mathfrak{F} = m\, \mathfrak{F}_{\text{ABJM}_k} \, .
\end{aligned} \ee
The map of the parameters is as follows:
\be
\begin{aligned}
\Delta_{A_1} & = F_1 \, , \qquad \Delta_{A_2} =  F_2 \, , \qquad \Delta_{B_1}=F_3-F_2\, , \qquad \Delta_{B_2} = 2\pi -F_1- F_3 \, , \\
\fn_{A_1} & =  {\frak n}_1 \, , \qquad \fn_{A_2} = {\frak n}_2 \, , \qquad \fn_{B_1}= {\frak n}_3-{\frak n}_2 \, , \quad \fn_{B_2} = 2 - {\frak n}_1- {\frak n}_3 \, .
\end{aligned} \ee

Recall that the geometric branch of the moduli space of this theory is $\Sym^N (\BC^2/\BZ_m \times \BC^2/\BZ_{m})/\BZ_k$, whereas that of the ABJM theory is $\Sym^N (\BC^4/\BZ_k)$.
The square root of the relative orbifold orders of these two spaces explains the prefactor $m$ in \eref{freealtCS}.

\subsection[The affine \texorpdfstring{$A_{n-1}$}{A[n-1]} quiver with two adjacent CS levels of opposite signs]{The affine $\fakebold{A_{n-1}}$ quiver with two adjacent CS levels of opposite signs}
\label{sec:necklacetwoCS}

We are interested in the necklace quiver with $n$ nodes, each with $\U(N)$ gauge group, and the Chern-Simons levels:
\be
\begin{aligned}
k_\alpha= \begin{cases} +k & \text{if $\alpha=1$} \\
-k & \text{if $\alpha=2$}  \\
0 & \text{otherwise}
\end{cases}
\end{aligned} \ee 
The $\CN=2$ quiver diagram of this theory is
\be
\begin{aligned} \label{fig:CStwo}
\begin{tikzpicture}[baseline,font=\tiny, every loop/.style={min distance=10mm,looseness=10}, every node/.style={minimum size=0.9cm}]
\def \n {6}
\def \radius {1.5cm}
\def \margin {18} 
\foreach \s in {1}
{
    \node[draw, circle] at ({2*360/\n * (\s)}:\radius)  (\s) {$N_{+k}$};
    \draw[black,-> ] (\s) edge [out={30+2*\s*60},in={-30+2*\s*60},loop,looseness=5] (\s);
} 
\foreach \s in {1}
{    
    \node[draw, circle]  at ({2*360/\n*\s-360/\n}:\radius) (\s)  {$N_{-k}$};
    \draw[black,-> ] (\s) edge [out={-30+2*\s*60},in={-90+2*\s*60},loop,looseness=5] (\s);
} 
\foreach \s in {2,3}
{
    \node[draw, circle] at ({2*360/\n * (\s)}:\radius)  (\s) {$N$};
    \draw[black,-> ] (\s) edge [out={30+2*\s*60},in={-30+2*\s*60},loop,looseness=5] (\s);
} 
\foreach \s in {2,3}
{    
    \node[draw, circle]  at ({2*360/\n*\s-360/\n}:\radius) (\s)  {$N$};
    \draw[black,-> ] (\s) edge [out={-30+2*\s*60},in={-90+2*\s*60},loop,looseness=5] (\s);
} 
\foreach \s in {1,...,5}
{  
  \draw[decoration={markings, mark=at position 0.5 with {\arrow[scale=1.5]{>}}}, postaction={decorate}, shorten >=0.7pt]  ({360/\n * (\s - 3)+\margin}:\radius) to[bend left=60] node[midway,above] {}node[midway,below] {}  ({360/\n * (\s - 2)-\margin}:\radius) ; 
    \draw[decoration={markings, mark=at position 0.5 with {\arrow[scale=1.5]{<}}}, postaction={decorate}, shorten >=0.7pt]  ({360/\n * (\s - 3)+\margin}:\radius) to[bend right=60] node[midway,above] {}node[midway,below] {}  ({360/\n * (\s - 2)-\margin}:\radius) ;
}   
\draw[dashed, >=latex] ({360/6 * (5 -2)+\margin}:\radius) 
    arc ({360/6 * (5 -2)+\margin}:{360/6 * (5-1)-\margin}:\radius);
\end{tikzpicture}
\end{aligned} \ee
In the notation of the preceding subsection, the superpotential can be written as
\be
\begin{aligned}
W = \sum_{\alpha=1}^n  \Tr \left(Q_\alpha \phi_{\alpha+1} \tilde{Q}_\alpha- \tilde{Q}_{\alpha} \phi_{\alpha} Q_{\alpha}\right) + \frac{k}{2} \Tr \left(\phi_1^2 - \phi_2^2\right) \, .
\end{aligned} \ee
After integrating out the massive adjoint fields $\phi_1$ and $\phi_2$, we have the superpotential
\be
\begin{aligned}
W &=-\frac{1}{k} \Tr \left(Q_1 Q_2 \tQ_2 \tQ_1  - Q_1 \tilde{Q}_1 \tilde{Q}_n Q_n \right)+ \frac{1}{2k} \Tr \left[\left(Q_2 \tilde{Q}_2\right)^2 - \left(Q_n \tilde{Q}_n \right)^2 \right] \\
&+ \sum_{\alpha=2}^{n-1} \Tr \left(Q_\alpha \phi_{\alpha+1} \tilde{Q}_\alpha - \tilde{Q}_{\alpha+1} \phi_{\alpha+1} Q_{\alpha+1} \right), \label{superpot1}
\end{aligned} \ee

\subsubsection{A solution to the system of BAEs}
Let us denote respectively by $\Delta_{\alpha}$, $\tilde{\Delta}_{\alpha}$, $\Delta_{\phi_\alpha}$ the chemical potentials associated to the flavor symmetries of $Q_\alpha$, $\tQ_\alpha$, $\phi_\alpha$, and by $n_{\alpha}$, $\tilde{n}_{\alpha}$, ${n}_{\phi_\alpha}$ the corresponding fluxes associated with their flavor symmetries.  We also denote by $\Delta_m^{(\alpha)}$ the chemical potential associated with the topological charge corresponding to node $\alpha$ and $\ft^{(\alpha)}$ the corresponding magnetic flux.

The superpotential \eqref{superpot1} implies the following constraints
\be
\begin{aligned}
&\tilde{\Delta}_\alpha = \pi - \Delta_\alpha \quad \forall \alpha=1,\ldots,n \, , \\
&\Delta_{\phi_3} = \ldots =  \Delta_{\phi_n} = \pi \, . 
\end{aligned} \ee

The twisted superpotential for this particular model can be derived from the rules in section \ref{large N twisted superpotential rules}.
The procedure of solving the BAEs  is similar to that presented in section \ref{altCSN3}.
The solution can be separated into three regions, namely the left tail $[t_{\ll}, t_<]$, the inner interval $[t_{<},t_>]$ and the right tail $[t_>, t_{\gg}]$, where
\be
\begin{aligned}
t_< \text{ s.t. } \delta v^{1}(t_<) = -  \tilde{\Delta}_{1} \, , \qquad
 t_> \text{ s.t. } \delta v^{1}(t_>) = \Delta_{1} \, .
\end{aligned} \ee
The end-points $t_\ll$ and $t_\gg$ are the values where $\rho=0$ on the left and the right tails, respectively. Schematically:
\begin{center}
\begin{tikzpicture}[scale=2]
\draw (-2,0) -- (2,0);
\draw (-2.,-.05) -- (-2., .05); \draw (-0.7,-.05) -- (-0.7, .05); \draw (0.7,-.05) -- (0.7, .05); \draw (2,-.05) -- (2, .05);
\node [below] at (-2.,0) {$t_\ll$}; \node [below] at (-2.,-.3) {$\rho=0$};
\node [below] at (-0.7,0) {$t_<$}; \node [below] at (-0.7,-.3) {$\delta v^{1} = - \tilde{\Delta}_{1}$};
\node [below] at (0.7,0) {$t_>$}; \node [below] at (0.7,-.3) { $\delta v^{1}=\Delta_{1}$}; 
\node [below] at (2,0) {$t_\gg$}; \node [below] at (2,-.3) {$\rho=0$};
\end{tikzpicture}
\end{center}

It turns out that the solution depends on the following parameters:
\be
\begin{aligned} \label{paraF}
F_1 = \Delta_1 + \frac{1}{k} \sum_{\alpha=1}^n \Delta_m^{(\alpha)} \, , \qquad
F_2 = \frac{1}{n-1} \left[  \left( \sum_{\alpha=2}^n \Delta_{\alpha}  \right) - \frac{1}{k} \sum_{\alpha=1}^n \Delta_m^{(\alpha)} \right] \, .
\end{aligned} \ee 

The solution is as follows.
In the left tail $[t_\ll, t_<]$, we have
\be
\begin{aligned}
\rho & =  \frac{(n-1) \left[ \mu +(\pi -F_1 )k t\right]}{\pi  \left[n\pi -F_1-(n-1)F_2 \right] \left[\pi -F_1-(n-1)F_2\right]} \, , \\[.5em]
\delta v^{1} &= -  \tilde{\Delta}_{1} \, , \\[.5em]
\delta v^{\alpha} &=\Delta_\alpha + \left[ \pi-F_1-(n-1)F_2 \right] \, , \qquad \forall \, 2 \leq \alpha \leq n \, , \\[.5em]
\widetilde{Y}_{1} &=  \frac{(n-1)F_2k t+\mu }{\pi-F_1-(n-1)F_2 } \, .
\end{aligned}
\ee
In the inner interval $[t_<, t_>]$, we have
\be
\begin{aligned}
\rho & =  \frac{\left[(n-1) (F_1-F_2)\right] k t-n \mu }{\pi  \left[F_1+(n-1)F_2\right] \left[F_1+(n-1)F_2-n \pi \right]} \, , \\[.5em]
\delta v^{1} &= \frac{ \mu  n \Xi-(n-1) \Big[ F_1 [ \mu + (\pi+\Xi  )k t ]-F_2 [\mu- \{ (n-1)\pi-\Xi \} k t ] \Big]+(n-1) [F_1^2 +(n-1)F_2^2] k t }{(n-1)(F_1-F_2) k t- n \mu } \, , \\[.5em]
\delta v^{\alpha} &= \frac{\left[F_1+(n-1)F_2\right] \left[\mu +\left(\pi -F_1 \right) k t\right]}{\left[(n-1) F_1-(n-1)F_2\right] k t-n \mu } +\Delta_\alpha~, \quad \forall \, 2\leq \alpha \leq n \, ,
\end{aligned}
\ee
where $\Xi = \frac{1}{k} \sum_\alpha \Delta_m^{(\alpha)}$.
In the right tail $[t_>, t_\gg]$ we have
\be
\begin{aligned}
\rho &=  \frac{(n-1) \left(F_1 k t-\mu \right)}{\pi  \left[F_1+(n-1)F_2\right] \left[F_1+(n-1)F_2-(n-1) \pi \right]} \, , \\[.5em]
\delta v^{1} &=  {\Delta}_{1}  \, , \\[.5em]
\delta v^{\alpha} &= \Delta_\alpha +\frac{1}{n-1} \left[  \pi -F_1-(n-1)F_2 \right] \, , \\[.5em]
{Y}_{1} &=  \frac{\mu - (n-1)(\pi-F_2) k t}{F_1+(n-1)F_2-(n-1) \pi } \, .
\end{aligned}
\ee
The transition points are at
\be
\begin{aligned} \label{solution sum 2pi -- init -- An}
t_\ll = - \frac{\mu }{k (\pi - F_1)} \, ,\qquad t_< = - \frac{\mu }{k F_2} \, ,\qquad
t_> = \frac{\mu }{k(n-1) (\pi-F_2)} \, ,\qquad t_\gg = \frac{\mu }{k  F_1} \, .
\end{aligned} \ee
Finally, the normalization $\int_{t_\ll}^{t_\gg} \rd t \, \rho(t) =1$ fixes
\be
\mu = \sqrt{2(n-1)k  F_1 F_2 (\pi-F_1) ( \pi - F_2) } \, .
\ee

\subsubsection[The index at large \texorpdfstring{$N$}{N}]{The index at large $\fakebold{N}$}

The topological free energy of this theory can be derived from the rules give in section \eref{large N index rules}.
We find that the topological free energy of this quiver theory depends only on the parameters $F_1$, $F_2$
given by \eref{paraF} and their corresponding conjugate magnetic charges
\be
\begin{aligned}
{\frak n}_1 = n_1 + \frac{1}{k} \sum_{\alpha=1}^n \ft_\alpha\, , \qquad  {\frak n}_2 =\frac{1}{n-1} \left[ \left( \sum_{\alpha=2}^n n_{\alpha} \right)-  \frac{1}{k} \sum_{\alpha=1}^n \ft_\alpha \right]\, .
\end{aligned} \ee
The topological free energy can be written as,
\be
\begin{aligned}
\mathfrak{F}= \sqrt{n-1}\, \mathfrak{F}_{\text{ABJM}}\, . \label{freetwoaltCS}
\end{aligned} \ee
The map of the parameters is as follows,
\be
\begin{aligned}
\Delta_{A_1} & = F_1\, , \qquad \Delta_{A_2} = F_2\, , \qquad \Delta_{B_1}= \pi - F_1\, , \qquad \Delta_{B_2} = \pi - F_2\, , \\
\fn_{A_1} & =  \fn_1\, , \qquad \fn_{A_2} =\fn_2\, , \qquad \fn_{B_1}= 1 - \fn_1\, , \qquad \fn_{B_2} =1 - \fn_2\, .
\end{aligned} \ee
Indeed, for $n=2$, this theory becomes the ABJM theory and \eref{freetwoaltCS} reduces to $\mathfrak{F}_{\text{ABJM}}$, as expected.
Recall that the geometric branch of the moduli space of this theory is $\Sym^N (\BC^2 \times \BC^2/\BZ_{n-1})/\BZ_k$, whereas that of the ABJM theory is $\Sym^N (\BC^4/\BZ_k)$.
The square root of the relative orbifold orders of these two spaces explains the prefactor $\sqrt{n-1}$ in \eref{freetwoaltCS}.

Let us also comment on the number of the parameters which appears in the topological free energy of this model.
It can be seen from \eref{freetwoaltCS} that the topological free energy depends only on two parameters, $F_1$ and $F_2$ (or $\fn_1$ and $\fn_2$), instead of three, despite the fact that the geometric branch is associated with Calabi-Yau four-fold $\BC^2 \times \BC^2/\BZ_{n-1}$.
Indeed, in the $\CN=3$ description of the quiver, only $\U(1)^2$ (one mesonic and one topological symmetry) is manifest (see appendix C of \cite{Cremonesi:2016nbo}).
An extra mesonic symmetry that exchanges the holomorphic variables on $\BC^2$ and those on $\BC^2/\BZ_2$ is not present in the quiver description of this theory.

\paragraph*{$\fakebold{\SL(2,\BZ)}$ duality.} The affine $A_{n-1}$ quiver \eref{fig:CStwo} with $n$ gauge nodes and $k=1$ is $\SL(2,\BZ)$ dual to the $A_{n-2}$ Kronheimer-Nakajima quiver \eref{KNN4quiv} with $n-1$ gauge nodes and $r=1$.
This duality can be seen from the type IIB brane configuration as follows \cite{Hanany:1996ie, Aharony:1997ju, Gaiotto:2008ak, Assel:2014awa}.
The configuration of the Kronheimer-Nakajima quiver involves $N$ D3-branes wrapping $\BR^{1,2}_{0,1,2} \times S^{1}_6$ (where the subscripts indicate the direction in $\BR^{1,9}$);
$n-1$ NS5-branes wrapping $\BR^{1,2}_{0,1,2} \times \BR^3_{7,8,9}$ located at different positions along the circular $x^6$ direction;
and $r=1$ D5-branes wrapping $\BR^{1,2}_{0,1,2} \times \BR^3_{3,4,5}$ located along the circular $x^6$ direction within one of the NS5-brane intervals.
Applying an $\SL(2,\BZ)$ action on such a configuration, we can obtain a similar configuration except that the D5-brane becomes a $(1,1)\,5$-brane.
This is in fact the configuration for quiver \eref{fig:CStwo} with $n$ gauge nodes and $k=1$.
Indeed, in this case we can match the topological free energies \eref{freetwoaltCS} and \eref{freeKN}, as expected from the duality.

\subsection[The \texorpdfstring{$N^{0,1,0}/\BZ_k$}{N**(0,1,0)/Z[k]} theory]{The $\fakebold{N^{0,1,0}/\BZ_k}$ theory}
\label{sec:N010}

In this section we focus on the holographic dual of M-theory on ${\rm AdS}_4 \times N^{0,1,0}/\BZ_k$ \cite{Fabbri:1999hw, Billo:2000zr, Yee:2006ba}.
$N^{0,1,0}$ is a homogeneous Sasakian of dimension seven and defined as the coset $\SU(3)/\U(1)$. The manifold has the isometry $\SU(3) \times \SU(2)$.
The latter $\SU(2)$ is identified with the R-symmetry.
The description of the dual field theory was discussed in \cite{Gaiotto:2009tk, Imamura:2011uj, Cheon:2011th}.
This theory has $\CN=3$ supersymmetry and contains $\CG = \U(N)_{+k} \times \U(N)_{-k}$ gauge group with two bi-fundamental hypermultiplets and $r$ flavors of fundamental hypermultiplets under one of the gauge groups.
The $\CN=3$ quiver is depicted as follows:
\be
\begin{aligned}
\begin{tikzpicture}[font=\footnotesize, scale=0.9]
\begin{scope}[auto,%
  every node/.style={draw, minimum size=0.5cm}, node distance=2cm];
\node[circle]  (UN1)  at (0.3,1.7) {$N_{+k}$};
\node[circle, right=of UN1] (UN2) {$N_{-k}$};
\node[rectangle, left=of UN1] (Ur) {$r$};
\end{scope}
\draw[draw=blue,solid,line width=0.1mm]  (UN1) to[bend right=30]  (UN2) ; 
\draw[draw=red,solid,line width=0.1mm]  (UN1) to[bend left=30]  (UN2) ;    
\draw[draw=black,solid,line width=0.1mm]  (UN1) to (Ur) ;    
\end{tikzpicture}
\end{aligned} \ee
Note that for $k=0$, this theory becomes the Kronheimer-Nakajima quiver \eref{KNN4quiv} with $n=2$.

In $\CN=2$ notation, the quiver diagram for this theory is
\be
\begin{aligned}
\begin{tikzpicture}[font=\footnotesize, scale=0.8]
\begin{scope}[auto,%
  every node/.style={draw, minimum size=0.5cm}, node distance=2cm];
\node[circle] (USp2k) at (0., 0) {$N_{+k}$};
\node[circle, right=of USp2k] (BN)  {$N_{-k}$};
\node[rectangle, below=of USp2k] (Ur)  {$r$};
\end{scope}
\draw[draw=blue,solid,line width=0.2mm,<-]  (USp2k) to[bend right=15] node[midway,above] {$B_2 $}node[midway,above] {}  (BN) ;
\draw[draw=blue,solid,line width=0.2mm,->]  (USp2k) to[bend right=50] node[midway,above] {$A_1$}node[midway,above] {}  (BN) ; 
\draw[draw=red,solid,line width=0.2mm,<-]  (USp2k) to[bend left=15] node[midway,above] {$B_1$} node[midway,above] {} (BN) ;  
\draw[draw=red,solid,line width=0.2mm,->]  (USp2k) to[bend left=50] node[midway,above] {$A_2$} node[midway,above] {} (BN) ;    
\draw[black,-> ] (USp2k) edge [out={-150},in={150},loop,looseness=10] (USp2k) node at (-2.,1) {$\phi_1$} ;
\draw[black,-> ] (BN) edge [out={-30},in={30},loop,looseness=10] (BN) node at (5.8,1) {$\phi_2$};
\draw[draw=black,solid,line width=0.2mm,<-]  (USp2k) to[bend left=20] node[midway,right] {$\tilde{q}$} node[midway,above] {} (Ur) ;  
\draw[draw=black,solid,line width=0.2mm,->]  (USp2k) to[bend right=20] node[midway,left] {$q$} node[midway,above] {} (Ur) ;    
\end{tikzpicture}
\end{aligned} \ee
where the bi-fundamental chiral fields $(A_1, B_2)$ come from one of the $\CN=3$ hypermultiplet indicated in blue, and the bi-fundamental chiral fields $(A_2, B_1)$ come from the other $\CN=3$ hypermultiplet indicated in red. The superpotential is given by
\be
\begin{aligned} \label{WN010mass}
W = \Tr \left(A_1 \phi_2 B_2 - B_2 \phi_1 A_1 - A_2 \phi_2 B_1 + B_1 \phi_1 A_2 + \frac{k}{2} \phi_1^2 - \frac{k}{2} \phi_2^2 + \tilde{q} \phi_1 q \right) \, .
\end{aligned} \ee
Note that the bi-fundamental fields $A_1, A_2, B_1, B_2$ can be mapped to those in the Kronheimer-Nakajima quiver \eref{KNN2quiv} with $n=2$ as follows
\be
\begin{aligned}
A_1 \, \leftrightarrow \, Q_1\, , \qquad A_2 \, \leftrightarrow \, \tQ_2 \, , \qquad B_1 \, \leftrightarrow \, Q_2\, , \qquad B_2 \, \leftrightarrow \, \tQ_1 \, .
\end{aligned} \ee

Integrating out the massive adjoint fields $\phi_{1,2}$ in \eref{WN010mass}, we obtain the superpotential
\begin{equation}\label{superpotentialN010}
 W = \Tr\left[\left(\epsilon^{ij} B_i A_j - q \tilde q\right)^2-\left(\epsilon^{ij} A_i B_j\right)^2\right]\, .
\end{equation}

\subsubsection{A solution to the system of BAEs}
The twisted superpotential for this model can be derived from the rules in section \ref{large N twisted superpotential rules}.  The procedure of solving the BAEs  is similar to that presented in sections \ref{sec:solnKN} and \ref{altCSN3}.  In the following we present an explicit solution to the corresponding BAEs.

For brevity, let us write
\be
\begin{aligned} \label{shorthandAB}
\Delta_1 & = \Delta_{A_1}\, , \qquad \Delta_2 = \Delta_{A_2}\, , \qquad \Delta_3 = \Delta_{B_1}\, , \qquad \Delta_4 = \Delta_{B_2} \, , \\
\fn_1 & = \fn_{A_1}\, , \qquad \fn_2 = \fn_{A_2}\, ,\qquad \fn_3 = \fn_{B_1}\, , \qquad \fn_4 = \fn_{B_2}\, . 
\end{aligned} \ee
We look for a solution to the BAEs such that
\begin{align}\label{Delta constraintN010}
\Delta_{q}+ \Delta_{\tilde{q}}=\pi \, , \qquad \Delta_{1} + \Delta_{4} = \pi \, , \qquad \Delta_{2} + \Delta_{3} = \pi \, ,
\end{align}
and
\begin{align}\label{flux constraintN010}
 &\fn_{q}+\fn_{\tilde{q}}=1\, ,\qquad\qquad
 \fn_{1} + \fn_{4} = 1\, ,\qquad\qquad
 \fn_{2} + \fn_{3} = 1\, .
\end{align}

The solution can be separated into three regions, namely the left tail $[t_{\ll}, t_<]$, the inner interval $[t_{<},t_>]$ and the right tail $[t_>, t_{\gg}]$, where
\be
t_< \text{ s.t. } \delta v(t_<) = - \Delta_3 \,,\qquad\qquad t_> \text{ s.t. } \delta v(t_>) = \Delta_1 \,.
\ee
Then we define $t_\ll$ and $t_\gg$ as the values where $\rho=0$ and those bound the left and right tails. Schematically:
\begin{center}
\begin{tikzpicture}[scale=2]
\draw (-1.5,0) -- (1.5,0);
\draw (-1.5,-.05) -- (-1.5, .05); \draw (-0.5,-.05) -- (-0.5, .05); \draw (0.5,-.05) -- (0.5, .05); \draw (1.5,-.05) -- (1.5, .05);
\node [below] at (-1.5,0) {$t_\ll$}; \node [below] at (-1.5,-.3) {$\rho=0$};
\node [below] at (-0.5,0) {$t_<$}; \node [below] at (-0.5,-.3) {$\delta v = -\Delta_3$};
\node [below] at (-0.5,-.6) {$Y_3 = 0$};
\node [below] at (0.5,0) {$t_>$}; \node [below] at (0.5,-.3) {$\delta v=\Delta_1$};
\node [below] at (0.5,-.6) {$Y_1 = 0$};
\node [below] at (1.5,0) {$t_\gg$}; \node [below] at (1.5,-.3) {$\rho=0$};
\end{tikzpicture}
\end{center}

The solution is as follows.
In the left tail we have
\be
\begin{aligned}
\rho &= \frac{\mu + k t\Delta_3 - \frac{\pi}{2}r  |t|}{\pi (\Delta_1 + \Delta_3) (\Delta_4 - \Delta_3)} \, , \\[.5em]
\delta v &= - \Delta_3 \,,\qquad\qquad Y_3 = \frac{- kt\Delta_4 -\mu + \frac{\pi}{2}r |t|}{\Delta_4 - \Delta_3} \, ,
\end{aligned}
\qquad\qquad t_\ll < t < t_< \, .
\ee
In the inner interval we have
\be
\begin{aligned}
\rho &= \frac{2 \mu + k t(\Delta_3 - \Delta_1) - \pi r |t|}{\pi (\Delta_1 + \Delta_3)(\Delta_2 + \Delta_4)} \, , \\[.5em]
\delta v &= \frac{\left( \mu - \frac{\pi}{2} r |t| \right) (\Delta_1 - \Delta_3)
+ k t \left( \Delta_1 \Delta_4 + \Delta_2 \Delta_3 \right)}{2 \mu + k t (\Delta_3 - \Delta_1) - \pi r |t|} \, ,
\end{aligned}
\qquad\qquad t_< < t < t_> \, ,
\ee
and $\delta v'>0$. In the right tail we have
\be
\begin{aligned}
\rho &= \frac{\mu -k t \Delta_1 - \frac{\pi}{2}r |t|}{\pi (\Delta_1 + \Delta_3)(\Delta_2 - \Delta_1)} \, , \\[.5em]
\delta v &= \Delta_1 \,,\qquad\qquad Y_1 = \frac{kt\Delta_2 - \mu + \frac{\pi}{2}r  |t|}{\Delta_2 - \Delta_1} \, ,
\end{aligned}
\qquad\qquad t_> < t < t_\gg \, .
\ee
The transition points are at
\be
\begin{aligned}
\label{solution sum 2pi -- init -- N010}
& t_\ll = - \frac{2\mu}{\pi r + 2 k  \Delta_3} \, ,\qquad t_< = - \frac{2\mu}{\pi r + 2 k \Delta_4} \, ,\qquad
& t_> = \frac{2\mu}{\pi r + 2 k  \Delta_2} \, ,\qquad t_\gg = \frac{2\mu}{\pi r + 2 k  \Delta_1} \, .
\end{aligned} \ee
Finally, the normalization fixes
\be
 \label{solution sum 2pi -- end -- N010}
 \mu = \frac{1}{\sqrt{2}} \sqrt{\frac{\delta_1 \delta_2 \delta_3 \delta_4 (\Delta_1+\Delta_3) (\Delta_2+\Delta_4)}
 {k (\delta_1 + \delta_3) (\delta_2 + \delta_4) + r (\delta_1 \delta_4 + \delta_2 \delta_3)}} \, ,
\ee
where $\delta_a = \pi r + 2 k \Delta_a \, ,$ $\forall a = 1,2,3,4$.
For $k=0$, this expression indeed reduces to \eref{muKN} with 
\be
\begin{aligned}
F_1 & = 2 \pi c\, , \qquad F_2 = - c (\Delta_2 + \Delta_4) \, , \qquad F_3 = 2 \pi - c (\Delta_1 + \Delta_3) \, , \\
\Delta_m & = 2 \pi + c (\Delta_1 + \Delta_3) \, ,
\end{aligned} \ee
and $c =1/(2 \times 12^{1/3})$. Note that $F_1+F_2+F_3 = 2 \pi$, as required.

\subsubsection[The index at large \texorpdfstring{$N$}{N}]{The index at large $\fakebold{N}$}

The topological free energy of this theory can be computed from the rules in section \ref{large N index rules}.
The expression for the topological free energy is fairly long, so we will just give the formul\ae{} for $k=1$, $r=1$ and
\begin{equation}
 \Delta_3 = \Delta_4 = \Delta \, , \qquad  \fn_3 = \fn_4 = \fn \, .
\end{equation}
In this case, the topological free energy reads
\begin{align}
\mathfrak{F} & = - \frac{2 N^{3/2}}{3} \frac{\pi (\pi - 2 \Delta ) \left[ 4 (\pi -\Delta ) \Delta + 19 \pi^2 \right] \fn
 + \left( 8 \Delta^4 - 20 \pi \Delta^3 - 6 \pi^2 \Delta^2 + 37 \pi^3 \Delta + 33 \pi^4 \right)}{\left[ 4 (\pi - \Delta ) \Delta + 11 \pi^2 \right]^{3/2}} \, .
\end{align}

\section[Quivers with \texorpdfstring{$\CN=2$}{N=2} supersymmetry]{Quivers with $\fakebold{\cN=2}$ supersymmetry}
\label{sec:N2susy}

Let us now consider quiver gauge theories with $\CN=2$ supersymmetry.
We first discuss the SPP model. Then we move to study non-toric theories associated
with the Sasaki-Einstein seven manifold $V^{5,2}$.
There are two known models in this cases, one proposed by \cite{Martelli:2009ga} and the other by \cite{Jafferis:2009th}.
We show that the topological free energy of these models can
be matched with each other. We then move on to discuss flavored
toric theories \cite{Benini:2009qs}. The procedure in solving the BAEs
for these theories is similar to that for $\CN=3$ theories discussed in the preceding section.  

\subsection{The SPP theory}
\label{SPP}

We now consider the quiver gauge theory which describes the dynamics of
$N$ M2-branes at the SPP singularity. The quiver diagram is shown below.
\be
\begin{aligned}
\label{SPP_quiver}
\begin{tikzpicture}[font=\footnotesize, scale=0.9]
\begin{scope}[auto,
  every node/.style={draw, minimum size=0.5cm}, node distance=2cm];
\node[circle]  (UN)  at (0.35,1.7) {$N_{k_1}$};
\node[circle]  (UN2)  at (-1.2,-.7) {$N_{k_3}$};
\node[circle]  (UN3)  at (1.8,-.7) {$N_{k_2}$};
\end{scope}\draw[decoration={markings, mark=at position 0.45 with {\arrow[scale=1.5]{>}}}, postaction={decorate}, shorten >=0.7pt] (0.7,2.15) arc (-45:240:0.6cm);
\draw[draw=black,solid,line width=0.2mm,<-]  (UN) to[bend right=18] (UN2) ;
\draw[draw=black,solid,line width=0.2mm,->]  (UN) to[bend left=18]  (UN2) ;
\draw[draw=black,solid,line width=0.2mm,<-]  (UN) to[bend right=18] (UN3) ; 
\draw[draw=black,solid,line width=0.2mm,->]  (UN) to[bend left=18]  (UN3) ;  
\draw[draw=black,solid,line width=0.2mm,<-]  (UN2) to[bend right=18](UN3) ;
\draw[draw=black,solid,line width=0.2mm,->]  (UN2) to[bend left=18] (UN3) ;
\node at (1.5,1.0) {$A_1$};
\node at (1.0,0.6) {$A_2$};
\node at (0.3,-1.3) {$B_1$};
\node at (0.3,-0.6) {$B_2$};
\node at (-0.3,0.6) {$C_1$};
\node at (-0.9,1.0) {$C_2$};
\node at (0.4,3.5) {$X$};
\end{tikzpicture}
\end{aligned} \ee
The  Chern-Simons levels are $(k_1, k_2, k_3) = (2k,-k,-k)$ and the superpotential coupling is given by
\begin{equation}\label{superpotential}
 W = \Tr\left[X \left( A_1 A_2 - C_1 C_2 \right) - A_2 A_1 B_1 B_2 + C_2 C_1 B_2 B_1 \right]\, .
\end{equation}
The marginality condition on the superpotential \eqref{marginality} impose constraints on the
chemical potential of the various fields
\begin{align}
 &\Delta_{A}+\Delta_{B}= \pi \, ,\qquad\qquad
 \Delta_{B}+\Delta_{C}= \pi \, ,\qquad\qquad
 2 \Delta_{A}+\Delta_{X}= 2 \pi \, ,
\end{align}
where we have used the symmetry of the quiver to set $\Delta_{A_1} = \Delta_{A_2} = \Delta_{A}$, and so on.
Hence,
\begin{equation}
 \Delta_{B} = \Delta \, , \qquad \qquad \Delta_X = 2 \Delta \, , \qquad \qquad
 \Delta_A = \Delta_C = \pi - \Delta \, ,
\end{equation}
and
\begin{align}\label{flux constraint0}
 \fn_{B} = \fn \, , \qquad \qquad \fn_X = 2 \fn \, , \qquad \qquad
 \fn_A = \fn_C = 1 - \fn \, ,
\end{align}
where $\fn_I$ denotes the flavor magnetic flux of the field $I$. We assume $0\leq \Delta \leq 2\pi$ and we enforced condition \eqref{superpotential0}. One can check 
that all other solutions are related to the one we are presenting by a discrete symmetry of the quiver.\footnote{There is a solution for
\begin{align}
 &\Delta_{A}+\Delta_{B}= 3 \pi \, ,\qquad\qquad
 \Delta_{B}+\Delta_{C}= 3 \pi \, ,\qquad\qquad
 2 \Delta_{A}+\Delta_{X}= 4 \pi \, .
\end{align}
which is obtained, using the invariance of $Z$ under $y_I \to 1/y_I$, from \eqref{solution sum 2pi -- init}-\eqref{solution sum 2pi -- end} by performing the substitutions
\be
\mu \to - \mu \, ,\qquad k \to - k \, ,\qquad \Delta \to \pi - \Delta \, ,\qquad Y^{\pm} \to - Y^{\pm} \, .
\ee}

\subsubsection{A solution to the system of BAEs}

The theory under consideration is invariant under
\begin{equation}
 A \leftrightarrow C \, , \qquad \qquad \U(N)^{(2)} \leftrightarrow \U(N)^{(3)} \, .
\end{equation}
Let us assume that the saddle-point solution is also invariant under this $\bZ_2$ symmetry. Thus, we can choose
\begin{equation}
 v_i^{(1)} = v_i \, , \qquad \qquad v_i^{(2)} = v_i^{(3)} = w_i \, .
\end{equation}
Given the rules of section \ref{large N twisted superpotential rules}, the twisted superpotential reads
\be
\begin{aligned}\label{large N twisted superpotential}
 \frac{\wt\cW}{i N^{3/2}}&
 = 2 k \int \rd t \, t\, \rho(t)\, \delta v(t)
 + \int \rd t \, \rho(t)^2\, \Delta \left[ (\pi -\Delta ) (2 \pi -\Delta ) - 2 \delta v^2 \right] \\
 & - \mu \left(\int \rd t \, \rho(t)-1\right)
 - \frac{2 i}{N^{1/2}}\int \rd t \, \rho(t)\, \left[\pm \Li_2 \left(\e^{i\left[\delta v(t)\pm \left( \pi - \Delta\right) \right]}\right)\right] \, ,
\end{aligned}
\ee
where we defined
\begin{equation}
 \delta v(t) = w(t) - v(t) \, ,
\end{equation}
and we included the subleading terms giving rise to the equation of motion \eqref{tails}.
The eigenvalue density distribution $\rho(t)$, which maximizes the twisted superpotential, is a piece-wise function supported on $[t_{\ll}, t_{\gg}]$.
We define the inner interval as
\be
t_< \text{ s.t. } \delta v(t_<) = - \left( \pi - \Delta \right) \;,\qquad\qquad t_> \text{ s.t. } \delta v(t_>) = \pi - \Delta \, .
\ee
Schematically, we have:
\begin{center}
\begin{tikzpicture}[scale=2]
\draw (-2.,0) -- (2.,0);
\draw (-2,-.05) -- (-2, .05); \draw (-0.7,-.05) -- (-0.7, .05); \draw (0.7,-.05) -- (0.7, .05); \draw (2,-.05) -- (2, .05);
\node [below] at (-2,0) {$t_\ll$}; \node [below] at (-2,-.3) {$\rho=0$};
\node [below] at (-0.7,0) {$t_<$}; \node [below] at (-0.7,-.3) {$\delta v = - \left( \pi - \Delta \right)$}; \node [below] at (-0.7,-.6) {$Y^{-} = 0$};
\node [below] at (0.7,0) {$t_>$}; \node [below] at (0.7,-.3) {$\delta v = \pi - \Delta$}; \node [below] at (0.7,-.6) {$Y^{+} = 0$};
\node [below] at (2,0) {$t_\gg$}; \node [below] at (2,-.3) {$\rho=0$};
\end{tikzpicture}
\end{center}
The transition points are at
\be
\label{solution sum 2pi -- init}
t_\ll = -\frac{\mu }{2 k (\pi -\Delta )} \, ,\qquad t_< = - \frac{\mu }{k (2 \pi -\Delta )} \, ,\qquad t_> = \frac{\mu }{k (2 \pi -\Delta )} \, ,\qquad t_\gg = \frac{\mu }{2 k (\pi -\Delta )} \, .
\ee
In the left tail we have
\be
\begin{aligned}
\rho & = \frac{1}{2 \Delta^2} \left(\frac{\mu }{\pi -\Delta }+2 k t \right) \, , \qquad\quad
\delta v= - \left( \pi - \Delta \right) \, , \\[.5em]
Y^{-} & = -\frac{ \mu + k (2 \pi - \Delta ) t }{\Delta } \, ,
\end{aligned}
\qquad\qquad t_\ll < t < t_< \, .
\ee
In the inner interval we have
\be
\begin{aligned}
\rho & = \frac{\mu }{2 (\pi -\Delta ) (2 \pi -\Delta ) \Delta } \, ,\qquad\quad
\delta v = \frac{k (\pi -\Delta ) (2 \pi -\Delta ) t}{\mu } \, ,
\end{aligned}
\qquad\qquad t_< < t < t_> \, ,
\ee
and $\delta v'>0$. In the right tail we have
\be
\begin{aligned}
\rho & = \frac{1}{2 \Delta^2} \left( \frac{\mu }{\pi -\Delta }-2 k t \right) \, ,\qquad\quad
\delta v = \pi - \Delta \, , \\[.5em]
Y^{+} & = -\frac{\mu - k (2 \pi - \Delta ) t}{\Delta } \, ,
\end{aligned}
\qquad\qquad t_> < t < t_\gg \, .
\ee
Finally, the normalization fixes
\be
\label{solution sum 2pi -- end}
\mu = 2  k^{1/2} (\pi -\Delta ) (2 \pi -\Delta ) \sqrt{\frac{\Delta}{4 \pi -3 \Delta }} \, .
\ee
$\mu >0$ implies the following inequality
\begin{equation}
 0 < \Delta < \pi \, .
\end{equation}
For $k>1$  there can be discrete $\mathbb{Z}_k$ identifications among the chemical potential which can affect the final result.
We have not been too careful about them here.

\subsubsection[The index at large \texorpdfstring{$N$}{N}]{The index at large $\fakebold{N}$}

The rules of the large $N$ twisted index imply that the free energy functional is
\begin{equation}\label{Z large N functional}
  \begin{aligned}
  \mathfrak{F} &= - N^{3/2} \int \rd t \, \rho(t)^2 \left[ \Delta  (4 \pi -3 \Delta )+\fn \left(3 \Delta^2-6 \pi  \Delta + 2 \pi^2 - 2 \delta v^2 \right) \right] \quad \\
  \rule[-2em]{0pt}{1em} &\quad - N^{3/2} 2 ( 1- \fn) \int_{\delta v \,\approx\, - (\pi - \Delta)} \hspace{-2em} \rd t \, \rho(t) \, Y^{-}(t)
  - N^{3/2} 2 ( 1- \fn) \int_{\delta v \,\approx\, (\pi - \Delta)} \hspace{-2em} \rd t \, \rho(t) \, Y^{+}(t) \, .
 \end{aligned}
\end{equation}
We should take the solution to the BAEs, plug it back into the functional \eqref{Z large N functional} and compute the integral.
Doing so, we obtain the following expression for the topological free energy:
\be
\begin{aligned}
 \mathfrak{F} & = -\frac43 \frac{k^{1/2} N^{3/2} \left[\Delta \left(7 \Delta^2-18 \pi  \Delta +12 \pi^2\right) +
 \fn \left(-6 \Delta^3+19 \pi \Delta^2-18 \pi^2 \Delta +4 \pi^3\right) \right]}{(4 \pi -3 \Delta )^{3/2} \sqrt{\Delta }}
 \, .
\end{aligned}
\ee

\subsection[The \texorpdfstring{$V^{5,2}/\BZ_k$}{V**(5,2)/Z[k]} theory]{The $\fakebold{V^{5,2}/\BZ_k}$ theory}
\label{sec:V52}
In this subsection, we focus on field theories dual to ${\rm AdS}_4 \times V^{5,2}/\BZ_k$, where $V^{5,2}$ is a homogeneous Sasaki-Einstein seven-manifold known as a Stiefel manifold.   The latter can be described as the coset $V^{5,2} =\SO(5)/\SO(3)$, whose supergravity solution \cite{Fabbri:1999hw} possesses an $\SO(5) \times \U(1)_R$ isometry.  There are two known descriptions of such field theories; one proposed by Martelli and Sparks \cite{Martelli:2009ga} and the other proposed by Jafferis \cite{Jafferis:2009th}.  In the following, we refer to these theories as Model I and Model II, respectively.  Below we analyse the solutions to the BAEs in detail and show the equality between the topological free energy of two theories.

\subsubsection{Model I}
The description for Model I was first presented in \cite{Martelli:2009ga}.  The quiver diagram is depicted below.
\be
\begin{aligned}
\begin{tikzpicture}[baseline, font=\footnotesize, scale=0.8]
\begin{scope}[auto,%
  every node/.style={draw, minimum size=0.5cm}, node distance=2cm];
\node[circle] (USp2k) at (-0.1, 0) {$N_{+k}$};
\node[circle, right=of USp2k] (BN)  {$N_{-k}$};
\end{scope}
\draw[draw=blue,solid,line width=0.2mm,<-]  (USp2k) to[bend right=15] node[midway,above] {$B_2 $}node[midway,above] {}  (BN) ;
\draw[draw=blue,solid,line width=0.2mm,->]  (USp2k) to[bend right=50] node[midway,above] {$A_1$}node[midway,above] {}  (BN) ; 
\draw[draw=red,solid,line width=0.2mm,<-]  (USp2k) to[bend left=15] node[midway,above] {$B_1$} node[midway,above] {} (BN) ;  
\draw[draw=red,solid,line width=0.2mm,->]  (USp2k) to[bend left=50] node[midway,above] {$A_2$} node[midway,above] {} (BN) ;    
\draw[black,-> ] (USp2k) edge [out={-150},in={150},loop,looseness=10] (USp2k) node at (-2,1) {$\phi_1$} ;
\draw[black,-> ] (BN) edge [out={-30},in={30},loop,looseness=10] (BN) node at (5.8,1) {$\phi_2$};
\end{tikzpicture}
\end{aligned} \ee
with the superpotential
\begin{equation}
 W = \Tr\left[ \phi_1^3 + \phi_2^3 +\phi_1(A_1 B_2 + A_2 B_1) + \phi_2 (B_2 A_1+ B_1 A_2) \right] \, .
\end{equation}

\paragraph*{A solution to the BAEs.} Let us use the shorthand notation as in \eref{shorthandAB}. We look for a solution to BAEs, such that
\begin{align}\label{Delta constraint}
 \Delta_{\phi_i} + \Delta_{1} + \Delta_{4} =  2 \pi \, ,\qquad \qquad
 \Delta_{\phi_i} + \Delta_{2} + \Delta_{3} = 2 \pi \, , \qquad \qquad
 \Delta_{\phi_i} = \frac{2 \pi}{3} \, ,
\end{align}
and
\begin{align}\label{fluxconstraintV52}
 \fn_{\phi_i} + \fn_{1} + \fn_4 = 2\, ,\qquad \qquad
 \fn_{\phi_i} + \fn_{2} + \fn_3 = 2\, , \qquad \qquad
 \fn_{\phi_i} = \frac23 \, .
\end{align}

Observe that $\fn_{\phi_i}$ does not satisfy the quantization conditions $\fn_{\phi_i}\in \BZ$.
However, this problem can be cured easily by considering the twisted partition function
on a Riemann surface $\Sigma_\fg$ of genus $\fg$ times $S^1$ \cite{Benini:2016hjo}.
In this case, the flux constraints become
\begin{align}
 \fn_{\phi_i} + \fn_{1} + \fn_4 = 2(1-\fg)\, ,\qquad
 \fn_{\phi_i} + \fn_{2} + \fn_3 = 2(1-\fg)\, , \qquad
 \fn_{\phi_i} = \frac23 (1-\fg) \, .
\end{align}
By choosing $(1-\fg)$ to be an integer multiple of $3$, there always exists an integer solution to the above constraints.
As was pointed out in \cite{Benini:2016hjo}, the BAEs for the partition function
on $\Sigma_\fg \times S^1$ (with $\fg>1$) is the same as that for $\fg=0$.
We can therefore solve the BAEs in the usual way.
 
The inner interval $[t_<, t_>]$ is given by
\be
t_< \text{ s.t. } \delta v(t_<) = - \Delta_3 \, ,\qquad\qquad t_> \text{ s.t. } \delta v(t_>) = \Delta_1 \, .
\ee
Outside the inner interval, we find that $\delta v(t) =  \tilde v(t) - v(t)$ is frozen to the constant boundary value $- \Delta_3$ ($ \Delta_1$) and it defines the left (right) tail.
Schematically:
\begin{center}
\begin{tikzpicture}[scale=2]
\draw (-1.5,0) -- (1.5,0);
\draw (-1.5,-.05) -- (-1.5, .05); \draw (-0.5,-.05) -- (-0.5, .05); \draw (0.5,-.05) -- (0.5, .05); \draw (1.5,-.05) -- (1.5, .05);
\node [below] at (-1.5,0) {$t_\ll$}; \node [below] at (-1.5,-.3) {$\rho=0$};
\node [below] at (-0.5,0) {$t_<$}; \node [below] at (-0.5,-.3) {$\delta v = -\Delta_3$};
\node [below] at (-0.5,-.6) {$Y_3 = 0$};
\node [below] at (0.5,0) {$t_>$}; \node [below] at (0.5,-.3) {$\delta v=\Delta_1$};
\node [below] at (0.5,-.6) {$Y_1 = 0$};
\node [below] at (1.5,0) {$t_\gg$}; \node [below] at (1.5,-.3) {$\rho=0$};
\end{tikzpicture}
\end{center}
The solution is as follows. The transition points are at
\be
\label{solution sum 2pi -- init -- V52}
t_\ll = - \frac{\mu}{k \Delta_3} \,,\qquad\quad t_< = - \frac{\mu}{k \Delta_4} \,,\qquad\quad t_> = \frac{\mu }{k \left( \frac{4 \pi}{3} - \Delta_3 \right)} \,,\qquad\quad t_\gg = \frac{\mu }{k \left( \frac{4 \pi}{3} - \Delta_4 \right)} \,.
\ee
In the left tail we have
\be
\begin{aligned}
\rho &= \frac{\mu +k \Delta_3 t}{\frac{2 \pi}{3} \left(\Delta_3-\Delta_4+\frac{4 \pi }{3}\right) \left(\Delta_4-\Delta_3\right)} \, , \\[.5em]
\delta v &= - \Delta_3 \,,\qquad\qquad Y_3 = \frac{- k t\Delta_4 -\mu }{\Delta_4 - \Delta_3} \, ,
\end{aligned}
\qquad\qquad\qquad t_\ll < t < t_< \, .
\ee
In the inner interval we have
\be
\begin{aligned}
\rho &= \frac{2 \mu + k \left(\Delta_3+\Delta_4-\frac{4 \pi}{3}\right) t}{\frac{2 \pi }{3} \left[\left(\frac{4 \pi }{3}\right)^2-\left(\Delta_4 - \Delta_3\right)^2 \right]} \, , \\[.5em]
\delta v &= -\frac{\left(\Delta_3+\Delta_4-\frac{4 \pi }{3}\right) \mu -  \frac{4 \pi}{3}k  \left(\Delta_3+\Delta_4\right) t+\left(\Delta_3^2+\Delta_4^2\right) t}{2 \mu +k \left(\Delta_3+\Delta_4-\frac{4 \pi }{3}\right) t} \, ,
\end{aligned}
\qquad\qquad t_< < t < t_>
\ee
and $\delta v'>0$.
In the right tail we have
\be
\begin{aligned}
\rho &= \frac{\mu - k \Delta_1 t}{\frac{2 \pi}{3} \left(\Delta_3-\Delta_4+\frac{4 \pi }{3}\right) \left(\Delta_4-\Delta_3\right)} \, , \\[.5em]
\delta v &= \Delta_1 \,,\qquad\qquad Y_1 = \frac{- k t \left( \Delta_3 - \frac{4 \pi}{3}\right) - \mu}{\Delta_4 - \Delta_3} \, ,
\end{aligned}
\qquad\qquad\qquad t_> < t < t_\gg \, .
\ee
Finally, the normalization fixes
\be
\label{solution sum 2pi -- end -- V52}
\mu = \sqrt{k \left(\frac{4 \pi}{3}-\Delta_3\right) \Delta_3 \left(\frac{4 \pi }{3}-\Delta_4\right) \Delta_4} \, ,
\ee
with
\begin{equation}\label{Delta inequality -- V52}
 0 < \Delta_{3,4} < \frac{4 \pi}{3} \, .
\end{equation}
The solution satisfies
\begin{equation}
 \int \rd t \, \rho(t) \, \delta v(t) = 0 \, .
\end{equation}

We should take the solution to the BAEs and plug it back into the index. The higher genus index in the large $N$ limit
receives a simple modification, as discussed in \cite{Benini:2016hjo}, as follows,
\begin{align}
 \label{genentropyhigherg}
 \mathfrak{F}_{\fg \neq 1} (\fn_I) = (1 - \fg) \mathfrak{F}_{\fg=0} (\fn_I / (1-\fg)) \, .
\end{align}

We thus obtain the following expression for the topological free energy
\be
\begin{aligned} \label{entropy bi-fundamental}
\mathfrak{F}_{\fg \neq 1} & = -\frac{2}{3}(1-\fg) \frac{k^{1/2} N^{3/2}}{\sqrt{\left(\frac{4 \pi}{3} - \Delta_3\right) \Delta_3 \left(\frac{4 \pi}{3} - \Delta_4\right) \Delta_4}}
 \Bigg\{\left(\frac{4 \pi}{3}-\Delta_3\right) \Delta_3 \left(\frac{2 \pi}{3} - \Delta_4\right) \frac{\fn_4}{1-\fg} \\
 & + \Delta_4 \left[\left(\frac{2 \pi}{3}-\Delta_3\right) \left(\frac{4 \pi }{3} - \Delta_4\right) \frac{\fn_3}{1-\fg} - \frac{2 \Delta_3}{3} \left(\Delta_3+\Delta_4-\frac{8 \pi}{3}\right)\right]\Bigg\} \, .
\end{aligned} \ee
We check that the topological free energy indeed satisfies the index theorem for this model on $\Sigma_\fg \times S^1$:
\begin{equation} \label{indexhigherg}
 \mathfrak{F}_{\fg \neq 1} = (1-\fg) \left\{ - \frac{2}{\pi} \, \wt\cW(\Delta_I) \,
 - \sum_{I}\, \left[ \left(\frac{\fn_I}{1-\fg} - \frac{\Delta_I}{\pi}\right) \frac{\partial \wt\cW(\Delta_I)}{\partial \Delta_I}
  \right] \right\} \,,
\end{equation}
with
\begin{equation}
 \wt\cW(\Delta_I) = \frac{2}{3} \mu N^{3/2} \, .
\end{equation}

\subsubsection{Model II}
The description for Model II was first presented in \cite{Jafferis:2009th}. The quiver diagram is depicted below.
\be
\begin{aligned}
\begin{tikzpicture}[font=\footnotesize, scale=0.9]
\begin{scope}[auto,%
  every node/.style={draw, minimum size=0.5cm}, node distance=2cm];
\node[circle]  (UN)  at (0.3,1.7) {$N$};
\node[rectangle, right=of UN] (Ur) {$k$};
\end{scope}
\draw[decoration={markings, mark=at position 0.45 with {\arrow[scale=1.5]{>}}, mark=at position 0.5 with {\arrow[scale=1.5]{>}}, mark=at position 0.55 with {\arrow[scale=1.5]{>}}}, postaction={decorate}, shorten >=0.7pt] (-0,2) arc (30:343:0.75cm);
\draw[draw=black,solid,line width=0.2mm,->]  (UN) to[bend right=30] node[midway,below] {$Q$}node[midway,above] {}  (Ur) ; 
\draw[draw=black,solid,line width=0.2mm,<-]  (UN) to[bend left=30] node[midway,above] {$\tQ$} node[midway,above] {} (Ur) ;    
\node at (-2.2,1.7) {$\varphi_{1,2,3}$};
\end{tikzpicture}
\end{aligned} \ee
We start from the superpotential
\begin{equation}\label{superpotential fundamental}
 W = \Tr \left\{ \varphi_3 \left[ \varphi_1, \varphi_2 \right] + \sum_{j=1}^{k} q_j \left( \varphi_1^2 + \varphi_2^2 + \varphi_3^2 \right) \tilde q^j \right\} \, .
\end{equation}
The $\SO(5)$ symmetry of $V^{5,2}$ can be made manifest by using the following variables \cite{Cremonesi:2016nbo}:%
\footnote{Explicitly, the generators of the chiral ring in the vector representation of $\SO(5)$ are $X_{1,2,3}$ and $V_{\pm}$,
where $V_{\pm}$ are the monopole operators of magnetic charge $\pm 1$.}
\be
\begin{aligned}
X_1 =  \frac{1}{\sqrt{2}}(\varphi_1 + i \varphi_2)\, , \qquad X_2 = \frac{1}{\sqrt{2}}(\varphi_1 - i \varphi_2)\, , \qquad X_3 = i \varphi_3 \, .
\end{aligned} \ee
In terms of these new variables, the superpotential can be rewritten as
\be
\begin{aligned}
W = \Tr \left\{ X_3  [X_1 , X_2] +   \sum_{j=1}^k q_j ( X_1 X_2+ X_2 X_1 - X_3^2) \tilde q^j \right\}\, .
\end{aligned} \ee

\paragraph*{A solution to the system of BAEs.} The superpotential enforces
\begin{equation}
 \label{Delta constraint complex}
 \Delta_{X_1} + \Delta_{X_2} = \frac{4 \pi}{3}\, ,\qquad \qquad
 \Delta_{q_j} + \tilde \Delta_{q_j} = \frac{2 \pi}{3}\, ,\qquad \qquad
 \Delta_{X_3} = \frac{2 \pi}{3} \, ,
\end{equation}
and
\begin{equation}\label{flux constraint complex0}
 \fn_{X_1} + \fn_{X_2} = \frac{4}{3}\, ,\qquad \qquad
 \fn_{q_j} + \tilde \fn_{q_j} = \frac{2}{3}\, ,\qquad \qquad
 \fn_{X_3} = \frac{2}{3} \, .
\end{equation}

As in the previous subsection, the quantization conditions $\fn_I \in \BZ$ can be satisfied by considering the
twisted partition function on $\Sigma_\fg \times S^1$.  The flux constraints are modified to be
\begin{equation}\label{flux constraint complex}
 \fn_{X_1} + \fn_{X_2} = \frac{4}{3}(1-\fg) \, ,\qquad
 \fn_{q_j} + \tilde \fn_{q_j} = \frac{2}{3}(1-\fg) \, ,\qquad 
 \fn_{X_3} = \frac{2}{3}(1-\fg) \, .
\end{equation}
Here we choose $(1-\fg)$ to be an integer multiple of $3$. The solution to the BAEs are given below.

Setting to zero the variations with respect to $\rho(t)$, we find that the density is given by
\begin{equation}\label{density tm tp complex}
 \rho(t) = \frac{ \mu -\frac{2 \pi  k}{3} \left| t\right| + t \Delta_m}{\frac{2 \pi}{3} \left(\frac{4 \pi}{3} - \Delta_{X_1}\right) \Delta_{X_1}} \, .
\end{equation}
The support $[t_- , t_+]$ of $\rho(t)$ is determined by $\rho(t_\pm)=0$.  We obtain
\be
\begin{aligned}
 t_{\pm} =  \pm \frac{\mu }{\frac{2 \pi k}{3} \pm \Delta_m}\, .
\end{aligned} \ee
Requiring that $\int_{t_{-}}^{t_{+}} \, \rd t \, \rho(t) =1$, we have
\be
\begin{aligned}
 \mu = \sqrt{\frac{1}{k}\left(\frac{4 \pi}{3} - \Delta_{X_1}\right) \Delta_{X_1}
 \left[\left(\frac{2 \pi k}{3}\right)^2 - \Delta _m^2\right]} \, .
\end{aligned} \ee

The topological free energy may then be found using \eref{genentropyhigherg}.  We obtain
\be
\begin{aligned} \label{entropy complex} 
\mathfrak{F}_{\fg \neq 1} & =  \frac{2}{3} (1-\fg) \frac{N^{3/2}}{\sqrt{k \left(\frac{4 \pi}{3}
-\Delta_{X_1}\right) \Delta_{X_1} \left[\left(\frac{2 \pi k}{3}\right)^2-\Delta_m^2\right]}} \times \\
& \qquad \Bigg\{\Delta_{X_1} \left[- \frac{\ft}{1-\fg} \Delta_m \left(\frac{4 \pi}{3}-\Delta_{X_1}\right)
+ \left( \frac{2 \pi k}{3} \right)^{2} \left(\frac{\Delta_{X_1}}{\pi} - 2\right)+\frac{2 \Delta_m^2}{3}\right] \\
& \qquad - \left(\frac{2 \pi}{3}-\Delta_{X_1} \right) \frac{{\fn}_{X_1}}{1-\fg} \left[\left(\frac{2 \pi k}{3}\right)^2-\Delta_m^2\right] \Bigg\} \, .
\end{aligned} \ee
It can also be checked that this topological free energy satisfies \eref{indexhigherg}.

\paragraph*{Matching with Model I.} By taking
\begin{equation} \label{matchDeltaIandII}
\Delta_{X_1} = \Delta_3 \, ,  \qquad \Delta_m = k \left(\frac{2 \pi }{3} -\Delta _4 \right) \, ,  \qquad
\fn_{X_1} = \fn_3 \, , \qquad  \ft = k \left[\frac{2}{3}(1-\fg) -\fn _4 \right] \, ,
\end{equation}
we see that Eq.\,\eqref{entropy complex} reduces to Eq.\,\eqref{entropy bi-fundamental}.

\subsection{The flavored ABJM theory}
Let us consider the flavored ABJM models studied in \cite{Benini:2009qs,Cremonesi:2010ae}
\be
\begin{aligned}
\begin{tikzpicture}[baseline, font=\scriptsize, scale=0.8]
\begin{scope}[auto,%
  every node/.style={draw, minimum size=0.5cm}, node distance=4cm];
\node[circle] (UN1) at (0, 0) {$N_{+k}$};
\node[circle, right=of UN1] (UN2)  {$N_{-k}$};
\node[rectangle] at (3.2,2.2) (UNa1)  {$n_{a1}$};
\node[rectangle] at (3.2,3.5) (UNa2)  {$n_{a2}$};
\node[rectangle] at (3.2,-2.2) (UNb1)  {$n_{b1}$};
\node[rectangle] at (3.2,-3.5) (UNb2)  {$n_{b2}$};
\end{scope}
\draw[draw=red,solid,line width=0.2mm,<-]  (UN1) to[bend right=30] node[midway,above] {$B_2 $}node[midway,above] {}  (UN2) ;
\draw[draw=blue,solid,line width=0.2mm,->]  (UN1) to[bend right=-10] node[midway,above] {$A_1$}node[midway,above] {}  (UN2) ; 
\draw[draw=purple,solid,line width=0.2mm,<-]  (UN1) to[bend left=-10] node[midway,above] {$B_1$} node[midway,above] {} (UN2) ;  
\draw[draw=black!60!green,solid,line width=0.2mm,->]  (UN1) to[bend left=30] node[midway,above] {$A_2$} node[midway,above] {} (UN2) ;   
\draw[draw=purple,solid,line width=0.2mm,->]  (UN1)  to[bend right=30] node[midway,right] {}   (UNb1);
\draw[draw=purple,solid,line width=0.2mm,->]  (UNb1) to[bend right=30] node[midway,left] {} (UN2) ; 
\draw[draw=red,solid,line width=0.2mm,->]  (UN1)  to[bend right=30]  node[midway,right] {} (UNb2);
\draw[draw=red,solid,line width=0.2mm,->]  (UNb2) to[bend right=30] node[midway,left] {} (UN2); 
\draw[draw=blue,solid,line width=0.2mm,->]  (UN2)  to[bend right=30] node[pos=0.9,right] {}   (UNa1);
\draw[draw=blue,solid,line width=0.2mm,->]  (UNa1) to[bend right=30] node[pos=0.1,left] {} (UN1) ; 
\draw[draw=black!60!green,solid,line width=0.2mm,->]  (UN2)  to[bend right=30]  node[pos=0.9,right] {} (UNa2);
\draw[draw=black!60!green,solid,line width=0.2mm,->]  (UNa2) to[bend right=30] node[pos=0.1,left] {}  (UN1); 
\node at (4.5,-2.4) {$\tQ^{(1)}$};
\node at (2.,-2.4) {$Q^{(1)}$};
\node at (5.5,-2.8) {$\tQ^{(2)}$};
\node at (0.9,-2.9) {$Q^{(2)}$};
\node at (4.5,2.4) {$q^{(1)}$};
\node at (2.,2.4) {$\tilde{q}^{(1)}$};
\node at (5.5,2.8) {$q^{(2)}$};
\node at (1.,2.8) {$\tilde{q}^{(2)}$};
\end{tikzpicture}
\end{aligned} \ee
with the superpotential
\be
\begin{aligned}\label{supflvABJM}
W &= \Tr \left( A_1 B_1 A_2 B_2 - A_1 B_2 A_2 B_1  \right) + \\
& \quad \Tr\left[\sum_{j=1}^{n_{a1}} q_{j}^{(1)} A_1 \tilde q_{j}^{(1)}
 +\sum_{j=1}^{n_{a2}} q_{j}^{(2)} A_2 \tilde q_{j}^{(2)}
 +\sum_{j=1}^{n_{b1}} Q_{j}^{(1)} B_1 \tilde Q_{j}^{(1)}
 +\sum_{j=1}^{n_{b2}} Q_{j}^{(2)} B_2 \tilde Q_{j}^{(2)}\right]\, .
\end{aligned} \ee

We adopt the notation as in \eref{shorthandAB} and denote by
\be
\begin{aligned}
\Delta_{ai} = \Delta_{q^{(i)}}~, \qquad \tilde{\Delta}_{ai} = \Delta_{\tilde{q}^{(i)}}~, \qquad {\Delta}_{bi} = \Delta_{{Q}^{(i)}}~, \qquad \tilde{\Delta}_{bi} = \Delta_{\tilde{Q}^{(i)}}~,
\end{aligned} \ee
and similarly for $\fn_{ai}$ and $\fn_{bi}$.
The marginality of the superpotential implies that
\begin{align}
 &\Delta_1 + \Delta_{a1} + \tilde \Delta_{a1}= 2\pi \, ,\qquad\qquad
 \Delta_2 + \Delta_{a2} + \tilde \Delta_{a2}= 2\pi \, ,\nn\\&
 \Delta_3 + \Delta_{b1} + \tilde \Delta_{b1}= 2\pi \, ,\qquad\qquad
 \Delta_4 + \Delta_{b2} + \tilde \Delta_{b2}= 2\pi \, ,
\end{align}
and
\begin{align}\label{flux constraint}
 &\fn_1 + \fn_{a1} + \tilde \fn_{a1}=2\, ,\qquad\qquad
 \fn_2 + \fn_{a2} + \tilde \fn_{a2}=2\, ,\nn\\&
 \fn_2 + \fn_{b1} + \tilde \fn_{b1}=2\, ,\qquad\qquad
 \fn_4 + \fn_{b2} + \tilde \fn_{b2}=2\, .
\end{align}

\subsubsection{A solution to the system of BAEs}

The large $N$ expression for the twisted superpotential, using the rules given in section \ref{large N twisted superpotential rules}, can be written as
\begin{align}\label{large N twisted superpotential -- Q111}
\frac{\wt\cW}{i N^{3/2}}&
=\int \rd t \, \rho(t)^2\, \sum\nolimits^*_a\left[\pm g_{\pm} \left(\delta v(t) \pm \Delta_a\right)\right]
+\int \rd t \, t\, \rho(t)\, \left(\Delta_m^{(2)} - \Delta_m^{(1)} \right)\nn\\&
-\frac{1}{2}\int \rd t \, |t|\, \rho(t)\,\left[\sum\nolimits^*_f (\pm n_f)\delta v(t)- \sum_{i=1}^{2}\left(n_{ai}\Delta_{i} + n_{bi}\Delta_{i+2}\right)\right] \nn \\
& -\frac{i}{N^{1/2}}\int \rd t \, \rho(t)\, \sum\nolimits^*_a\left[\pm \Li_2 \left(\e^{i\left(\delta v(t)\pm \Delta_a\right)}\right)\right]
-\mu \left(\int \rd t \, \rho(t)-1\right) \, ,
\end{align}
where we introduced the notations
\begin{equation}
 \sum\nolimits^*_f=\sum_{\substack{f=a1,a2:+\\ f=b1,b2:-}}\, ,\qquad \qquad \sum\nolimits^*_a=\sum_{\substack{a=3,4:+\\ a=1,2:-}}\, .
\end{equation}

\paragraph*{The solution for $\fakebold{k=0}$ and $\fakebold{n_{a1}=n_{a2}=n\, ,\; n_{b1}=n_{b2}=0}$.}
As pointed out in \cite{Benini:2009qs}, this theory is dual to $\mathrm{AdS}_4 \times Q^{1,1,1}/\mathbb{Z}_n$.
The manifold $Q^{1,1,1}$ is defined by the coset
\be
\frac{\SU(2) \times \SU(2) \times \SU(2)}{\U(1) \times \U(1)} \, ,
\ee
and has the isometry
\be
\SU(2) \times \SU(2) \times \SU(2) \times \U(1) \, .
\ee
Using the symmetries of the quiver, we set for simplicity
\begin{align}
 \Delta_1=\Delta_2=\pi-\Delta_3=\pi-\Delta_4=\Delta\, .
\end{align}

Let $\Delta_m$ be the following linear combination of the topological chemical potentials of the two gauge groups:
\be
\begin{aligned} 
\Delta_m = \Delta_m^{(1)} - \Delta_m^{(2)}\, .
\end{aligned} \ee
Solving the BAEs, we obtain the following general solution
\be
\begin{aligned}
 \rho(t) & = -\frac{n \pi \left| t\right| + 2 \Delta_m\, t - 2 \mu}{\pi^3} \, , \\
 \delta v(t) & = \Delta + \frac{\pi \left(\mu - \Delta_m\, t \right)}{n \pi \left| t\right| + 2 \Delta_m\, t - 2 \mu}\, ,
\end{aligned} \ee
on the support $[t_- , t_+]$.  We determine $t_\pm$ from $\delta v (t_\pm)=-(\pi-\Delta)$:
\begin{equation}
 t_- = -\frac{\mu }{n \pi - \Delta_m}\, , \qquad \qquad t_+ = \frac{\mu }{n \pi + \Delta_m} \, .
\end{equation}
The normalization $\int_{t_-}^{t_+} \rd t \, \rho(t)=1$ fixes
\begin{equation}
 \mu = \frac{\pi}{\sqrt{n}} \frac{\left| n^2 \pi^2 - \Delta_m^2\right|}{\sqrt{3 n^2 \pi^2-\Delta_m^2}} \, .
\end{equation}
The solution satisfies,
\begin{equation}
 \int \rd t \, \rho(t)\, \delta v(t) = \Delta - \frac{2 n^2 \pi^3}{3 n^2 \pi^2 - \Delta_m^2} \, .
\end{equation}

\subsubsection[The index at large \texorpdfstring{$N$}{N}]{The index at large $\fakebold{N}$}

Given the rules in section \ref{large N index rules}, the topological free energy functional for this model reads
\bea
 \frac{\mathfrak{F}}{N^{3/2}} & = - \int \rd t \, \rho(t)^2 \bigg[ \frac{2\pi^2}3 + \sum\nolimits^*_{a} (\fn_a-1) g_\pm'\big( \delta v(t) \pm \Delta_a \big) \bigg] \\
  & - \frac{1}{2} \sum_{i=1}^2 \left(n_{ai} \fn_{i} + n_{bi} \fn_{i+2}\right) \int \rd t \, |t|\, \rho(t) - \left( \ft + \tilde \ft \right) \int \rd t \, t\, \rho(t) \\
  & - \sum_{a=1}^4 \fn_a \int_{\delta v \,\approx\, \varepsilon_a \Delta_a} \hspace{-2em} \rd t \, \rho(t) \, Y_a(t) \, ,
\eea 
where we have used the behavior
\be
\delta v(t) = \varepsilon_a \left( \Delta_a - \e^{- N^{1/2} Y_a(t)} \right)\, , \qquad\qquad \varepsilon_a = (1,1,-1,-1) \, ,
\ee
in the tails.
For the theory dual to AdS$_4 \times Q^{1,1,1}/\mathbb{Z}_n$ we find
\be
\begin{aligned}
\mathfrak{F} = -\frac23 \frac{N^{3/2}}{\sqrt{n} \left(3 \pi ^2 n^2 - \Delta_m^2\right)^{3/2}} \left[\pi \left(\ft + \tilde\ft \right) \left(\Delta_m^3-5 \pi^2 \Delta_m n^2\right) + \Delta_m^4 - 3 \pi ^2 n^2 \left(\Delta_m^2-2 \pi ^2 n^2\right)\right]\, .
\end{aligned} \ee

\subsection[\texorpdfstring{$\U(N)$}{U(N)} gauge theory with adjoints and fundamentals]{$\fakebold{\U(N)}$ gauge theory with adjoints and fundamentals}

In this section, we consider the following flavored toric quiver gauge theory \cite{Benini:2009qs}
\be
\begin{aligned}
\begin{tikzpicture}[baseline, font=\footnotesize, scale=0.9]
\begin{scope}[auto,%
  every node/.style={draw, minimum size=0.5cm}, node distance=2cm];
\node[circle]  (UN)  at (0.3,1.7) {$N$};
\node[rectangle, right=of UN] (Ur1) {$n_1$};
\node[rectangle, below=of UN] (Ur2) {$n_2$};
\node[rectangle, above=of UN] (Ur3) {$n_3$};
\end{scope}
\draw[decoration={markings, mark=at position 0.45 with {\arrow[scale=1.5]{>}}, mark=at position 0.5 with {\arrow[scale=1.5]{>}}, mark=at position 0.55 with {\arrow[scale=1.5]{>}}}, postaction={decorate}, shorten >=0.7pt] (-0,2) arc (30:340:0.75cm);
\draw[draw=black,solid,line width=0.2mm,->]  (UN) to[bend right=30] node[midway,below] {$q^{(1)}$}node[midway,above] {}  (Ur1) ; 
\draw[draw=black,solid,line width=0.2mm,<-]  (UN) to[bend left=30] node[midway,above] {$\tilde{q}^{(1)}$} node[midway,above] {} (Ur1) ;    
\draw[draw=black,solid,line width=0.2mm,->]  (UN) to[bend right=30] node[midway,left] {$q^{(2)}$}node[midway,above] {}  (Ur2) ; 
\draw[draw=black,solid,line width=0.2mm,<-]  (UN) to[bend left=30] node[midway,right] {$\tilde{q}^{(2)}$} node[midway,above] {} (Ur2) ;  
\draw[draw=black,solid,line width=0.2mm,->]  (UN) to[bend left=30] node[midway,left] {$q^{(3)}$}node[midway,above] {}  (Ur3) ; 
\draw[draw=black,solid,line width=0.2mm,<-]  (UN) to[bend right=30] node[midway,right] {$\tilde{q}^{(3)}$} node[midway,above] {} (Ur3) ;    
\node at (-2.2,1.7) {$\phi_{1,2,3}$};
\end{tikzpicture}
\end{aligned} \ee
with the superpotential
\begin{equation}\label{superpotentialU(n)}
 W = \Tr \left\{ \phi_1\left[\phi_2,\phi_3\right]
 + \sum_{j=1}^{n_1} q_j^{(1)} \phi_1 \tilde q_{j}^{(1)}
 + \sum_{j=1}^{n_2} q_j^{(2)} \phi_2 \tilde q_{j}^{(2)}
 + \sum_{j=1}^{n_3} q_j^{(3)} \phi_3 \tilde q_{j}^{(3)}\right\}\, .
\end{equation}

The marginality condition on the superpotential \eqref{superpotentialU(n)} implies that
\be
 \label{Delta constraintU(n)}
 \sum_{i=1}^{3} \Delta_{\phi_i} = 2 \pi\, ,  \qquad \Delta_{q_j^{(i)}} +\tilde \Delta_{q_j^{(i)}} + \Delta_{\phi_i} = 2 \pi \, ,
\ee
and
\be
 \label{flux constraintU(n)}
 \sum_{i=1}^3 \fn_{\phi_i} = 2\, ,\qquad\qquad  \fn_{q_j^{(i)}} + \tilde \fn_{q_j^{(i)}} + \fn_{\phi_i} = 2 \, .
\ee
Let $\Delta_m$ and $\ft$ be the chemical potential and the background flux for the topological symmetry associated with the $\U(N)$ gauge group.

\paragraph*{The index at large $\fakebold{N}$.} On the support of $\rho(t)$, the solution is
\begin{equation}\label{density tm tp}
 \rho(t) = \frac{2 \left(\mu +t\, \Delta_m \right)-\left| t\right|  \bar{\Delta}}{2\hat\Delta } \, ,
\end{equation}
where we defined
\be
\begin{aligned}
 \hat\Delta = \prod_{f=1}^{3} \Delta_{\phi_f}\, , \qquad \bar{\Delta} = \sum_{f=1}^{3} n_f \Delta_{\phi_f}\, .
\end{aligned} \ee
Let us denote by $[t_- , t_+]$ the support of $\rho(t)$.  We determine $t_\pm$ from the condition $\rho(t_\pm)=0$ and obtain
\be
\begin{aligned}
t_\pm = \pm \frac{2\mu}{\bar{\Delta} \mp 2 \Delta_m}\, .
\end{aligned} \ee
The normalization $\int_{t_-}^{t_+} \rd t \, \rho(t) =1$ fixes the Lagrange multiplier $\mu$,
\begin{align}
 \mu & = \sqrt{ \frac{\hat\Delta}{2 \bar{\Delta}} \left( \bar{\Delta} - 2 \Delta_m\right) \left( \bar{\Delta} + 2 \Delta_m\right)} \, .
\end{align}
Using the same methods presented earlier, we obtain the following expression for the topological free energy,
\begin{align}\label{freeU(N)}
 \mathfrak{F} & = -\frac{N^{3/2}}{3}  \sqrt{ \frac{\hat\Delta}{2 \bar{\Delta}} \left( \bar{\Delta} - 2 \Delta_m\right) \left( \bar{\Delta} + 2 \Delta_m\right)} \Bigg[ \hat{\fn} +\frac{\bar{\fn} \left(\bar{\Delta}^2 + 4 \Delta_m^2\right)}{\bar{\Delta} \left(\bar{\Delta}^2-4 \Delta _m^2\right)} -\frac{8 \Delta_m}{\bar{\Delta}^2-4 \Delta _m^2}  \Bigg] \, ,
\end{align}
where
\be
\begin{aligned}
\hat{\fn} = \sum_{i=1}^3 \frac{\fn_{\phi_i}}{\Delta_{\phi_i}}\, , \qquad \bar{\fn} = \sum_{i=1}^3 n_i \fn_{\phi_i}\, .
\end{aligned} \ee
When $n_1 = n_2 = 0$, and $n_3 = r$, the moduli space reduces to $\BC^2 \times \BC^2 / \BZ_r$ and
Eq.\,\eqref{freeU(N)} becomes the topological free energy of the ADHM quiver [see Eq.\,\eqref{freeADHM}].
This is consistent with the fact that this theory is dual to AdS$_4 \times S^7/\BZ_r$.

%

\chapter[Counting microstates of AdS\texorpdfstring{$_{4}$}{(4)} black holes in massive type IIA supergravity]{Counting microstates of AdS$\bm{_4}$ black holes in massive type IIA supergravity}
\label{ch:4}

\ifpdf
    \graphicspath{{Chapter4/Figs/Raster/}{Chapter4/Figs/PDF/}{Chapter4/Figs/}}
\else
    \graphicspath{{Chapter4/Figs/Vector/}{Chapter4/Figs/}}
\fi

\section{Introduction}
\label{mIIA:Introduction}

Extending the results of \cite{Benini:2015eyy,Benini:2016rke}, the large $N$ limit of general three-dimensional
Chern-Simons-matter-gauge theories with an M-theory or a massive type IIA dual was studied in chapters \ref{ch:2} and \ref{ch:3}.
For the special class of $\cN=2$ quiver gauge theories where the Chern-Simons levels
do not sum to zero the index has been shown to scale as $N^{5/3}$ in the large $N$ limit,
in agreement with a dual massive type IIA supergravity construction \cite{Aharony:2010af,Petrini:2009ur,Lust:2009mb,Tomasiello:2010zz,
Guarino:2015jca,Fluder:2015eoa,Pang:2015vna,Pang:2015rwd,Guarino:2016ynd,Guarino:2017eag,Guarino:2017pkw}
(see also \cite{Araujo:2016jlx,Araujo:2017hvi}).

Motivated by the above results, we look at four-dimensional $\cN = 8$ supergravity with a dyonically gauged $\ISO(7) = \SO(7) \ltimes \bR^{7}$ gauge group
that arises as a consistent truncation of massive type IIA supergravity \cite{Romans:1985tz} on a six-sphere \cite{Guarino:2015vca,Cassani:2016ncu}
and its further truncation to an $\cN = 2$ theory with an Abelian gauge group $\mathbb{R} \times \U(1)^3$.
The electric and magnetic gauge couplings $(g,m)$ are identified with the $S^6$ inverse radius and the ten-dimensional Romans mass $\hat{F}_{(0)}$, respectively.
In particular, we analyze the supersymmetry conditions for black holes in AdS$_4 \times S^6$,
with deformed metrics on the $S^6$, in the presence of nontrivial scalar fields.
We mainly focus on the near-horizon geometries which were also recently analyzed in \cite{Guarino:2017pkw}.
For our holographic purposes here we rederive these solutions in a different way and express the scalars and geometric data
in terms of the conserved electromagnetic charges. For the sake of clarity we focus primarily on the case of three magnetic
charges $\fn_j$ $(j=1,2,3)$ (with one constraint relating them) and equal electric charges $q_j=q\, ,\forall j=1,2,3$
with the possibility for different horizon geometries of the form AdS$_2 \times \Sigma_\fg$.

The particular model we analyze corresponds to the $\cN=2$ truncation of the $\cN=8$ theory \cite{Guarino:2015vca}
coupled to three vector multiplets ($n_{\rm V}=3$) and the universal hypermultiplet ($n_{\rm H}=1$) \cite{Guarino:2017pkw}.
We will call this model the \emph{dyonic STU model}. The route that we take to constructing the near-horizon geometries
is based on a supersymmetry preserving version of the Higgs mechanism worked out in \cite{Hristov:2010eu} for the case of
$\cN=2$ gauged supergravity. This allows us to truncate away in a BPS preserving way a full massive vector multiplet
(made from the merging of the massless hypermultiplet and one of the three massless vector multiplets)
that forms after the spontaneous breaking of one of the gauge symmetries
(corresponding to the $\mathbb{R}$ in $\mathbb{R} \times \U(1)^3$).
The remaining massless $\cN=2$ gauged supergravity contains only two vector multiplets and is described by the prepotential
 \bea
  \label{intro:prepotantial}
  \cF \left( X^I \right) = - i \frac{3^{3/2}}{4} \left(1-\frac{i}{\sqrt{3}}\right)
  c^{1/3} \left( X^1 X^2 X^3 \right)^{2/3} \ ,
 \eea
where the dyonic gauge parameter is the ratio $c \equiv m/g$.

The goal of the current work is to verify \eqref{d:micro} by a direct counting in the dual boundary
description in terms of a topologically twisted Chern-Simons-matter gauge theory
with level $k$ given by the quantized Romans mass, $m=\hat F_{(0)}=k / (2 \pi \ell_s)$.%
\footnote{$\ell_s$ is the string length.}

The SCFT dual to the background AdS$_4 \times S^6$ arises as an $\cN = 2$ Chern-Simons
deformation (at level $k$) of the maximal $\cN=8$ SYM
theory on the worldvolume of $N$ D$2$-branes \cite{Schwarz:2004yj,Guarino:2015jca}.
We will call this model the \emph{D2$_k$ theory}.
It has an adjoint vector multiplet (containing a real scalar and a complex fermion) with gauge group $\U(N)$ or $\SU(N)$
and three chiral multiplets $\phi_{j}$ $(j = 1, 2, 3)$ (containing a complex scalar and fermion).
To verify \eqref{d:micro} we evaluate the topologically twisted index for ${\rm D2}_k$.

Let us state the main result of this chapter. Upon extremizing $\cI (\Delta_j)$, at large $N$,
with respect to the chemical potentials $\Delta_j$ we show that its value at the extremum $\bar\Delta_j$
precisely reproduces the black hole entropy:
\be
 \label{main_result}
 \cI ( \bar\Delta_j ) \equiv \log Z ( \bar\Delta_j ) - i \sum_{j=1}^{3} \bar\Delta_j q_j
 = S_{\rm BH} ( \fn_j , q_j ) \, .
\ee
In the above equation appears three chemical potentials $\Delta_j$ and three electric charges $q_j$:
two for the global symmetries and one for the R-symmetry. The extremization equations are invariant
under a common shift of $q_j$'s, that corresponds to an electric charge for the R-symmetry, while $\cI$ is not.
We can fix the values of $q_j$'s by requiring that $\cI$ is \emph{real positive} \cite{Benini:2016rke}.
On the gravity side, there exists a BPS constraint that fixes one of the electric charges
in order to have a smooth black hole.
This argument thus gives an unambiguous prediction for the Bekenstein-Hawking entropy of the black hole.

Moreover, we demonstrate another example of the conjecture originally posed in \cite{Hosseini:2016tor}:
\bea
 \label{intro:extr:attractor}
 - \log Z_{S^3} \left( \Delta_j \right) & \propto \cF \left( X^j \right) \, , \\
 \cI\text{-extremization} & = \text{attractor mechanism} \, ,
\eea
where $Z_{S^3} \left( \Delta_j \right)$ denotes the $S^3$ partition function for ${\rm D2}_k$,
depending on trial R-charges $\Delta_j$ \cite{Fluder:2015eoa}:
\bea
 \label{intro:S3 free energy}
 \log Z_{S^3} = - \frac{3^{13/6} \pi }{5 \times 2^{5/3}} \left( 1 - \frac{i}{\sqrt{3}} \right) k^{1/3} N^{5/3}
 \left( \Delta_1 \Delta_2 \Delta_3 \right)^{2/3} \, .
\eea

The remainder of this chapter is arranged as follows.
In section \ref{ssec:largeN:limit:index}, we focus on the large $N$ limit of a class of three-dimensional supersymmetric Chern-Simons-matter
gauge theories arising from D$2$-branes probing generic Calabi-Yau three-fold (CY$_3$) singularities in the presence
of non-zero quantized Romans mass.
After deriving the formula \eqref{intro:index:generic:c2d:a4d}, we move to evaluate the twisted index
for the $\cN=2$ ${\rm D2}_k$ theory.
In section \ref{app:dyonic STU:detailed} we switch gears and review the four-dimensional $\cN=2$ dyonic STU model, as constructed in \cite{Guarino:2017pkw}.
In section \ref{sec:dyonic sugra} we discuss our supergravity solutions dual to a topologically twisted deformation of the ${\rm D}2_k$ theory.
This section contains the supersymmetric conditions for the existence of black hole solutions. 
We then proceed to analyze in more detail the exact UV and IR limits of the general equations, recovering
the asymptotic AdS$_4$ and the near-horizon AdS$_2 \times \Sigma_\fg$ geometries.
We finish this section by commenting on the general existence of full BPS flows between the UV and IR solutions that we have. 
In section \ref{sec:index vs entropy} we compare the field theory and the supergravity results,
and we show that the $\cI$-extremization correctly reproduces the black hole entropy.

Let us note that, the counting of microstates for black holes with constant scalar fields
--- equal fluxes along the exact R-symmetry of three-dimensional SCFTs ---
and horizon topology AdS$_2 \times \Sigma_\fg$, $(\fg>1)$ has been recently considered in \cite{Azzurli:2017kxo}.
While we were completing this work, we became aware of \cite{Benini:2017oxt}
which we understand has overlap with the results presented here.

\section[The large \texorpdfstring{$N$}{N} limit of the index for a generic theory]{The large $\fakebold{N}$ limit of the index for a generic theory}
\label{ssec:largeN:limit:index}

We focus on Chern-Simons quiver gauge theories with bi-fundamental and adjoint chiral multiplets
transforming in representations $\fR_I$ of $G$ and a number $|G|$ of $\U(N)$ gauge groups with equal Chern-Simons couplings $k_a = k$ $(a = 1, \ldots, |G|)$. 
We are interested in the large $N$ limit, $N \gg k_a$ with $\sum_{a=1}^{|G|}k_a \neq 0$, of the index for Chern-Simons-matter gauge theories with massive type IIA supergravity
duals $\mathrm{AdS}_4 \times \cS Y_5$ \cite{Aharony:2010af,Petrini:2009ur,Lust:2009mb,Tomasiello:2010zz,Guarino:2015jca,Fluder:2015eoa,Pang:2015vna,Pang:2015rwd,Guarino:2016ynd,Guarino:2017eag,Guarino:2017pkw}.
Here $\cS Y_5$ denotes the \emph{suspension} of a generic Sasaki-Einstein five-manifold $Y_5$:
$\rd s^2_{\cS Y_5} = \rd \alpha^2 + \sin^2 \alpha \, \rd s^2_{Y_5}$
and $\alpha \in [0 , \pi]$ with $\alpha = 0, \pi$ being isolated conical singularities.%
\footnote{The line element $\rd s^2_{\cS Y_5}$ is called the \emph{sine cone} over $Y_5$,
and is an Einstein metric admitting a Killing spinor.}
These theories describe the dynamics of $N$ D$2$-branes probing a generic Calabi-Yau three-fold (CY$_3$) singularity in the presence of a non-vanishing quantized Romans mass $m$ \cite{Gaiotto:2009mv}.

The twisted superpotential $\wt\cW$ for this class of theories reads (see section \ref{general:rules:N53:Bethe})
\bea
 \label{Bethe:potential:N53}
 \begin{split}
 \frac{\wt\cW \left( \rho(t) , v(t) , \Delta_I \right)}{N^{5/3}} & = n \int {\rm d}t\ \rho(t)\ \left\{ - i t\, v(t) + \frac{1}{2} \left[t^2 - v(t)^2\right] \right\}  \\
 & \phantom{=} + i \sum_{I} g_+ (\Delta_I) \int {\rm d}t \, \frac{\rho(t)^2}{1- i v'(t)} - i \mu \left( \int {\rm d}t\, \rho(t) - 1 \right)\, ,
 \end{split}
\eea
where
\bea
 \label{Romans:mass:k}
 n \equiv \sum_{a = 1}^{|G|} k_a = |G| k \, .
\eea
We need to extremize the local functional $\wt\cW\left(\rho(t), v(t) , \Delta_I \right)$
with respect to the continuous functions $\rho(t)$ and $v(t)$. 
The solution for $\sum_{I \in a} \Delta_I = 2 \pi$, for each term $W_a$ in the superpotential,
is as follows:\footnote{The support $[t_-, t_+]$ of $\rho(t)$
can be determined from the relations $\rho(t_\pm)=0$.} 
\bea
 \label{BAEs:sol:generic}
 v(t) & = -\frac{1}{\sqrt{3}} t \, , \\
 \rho(t) & = \frac{3^{1/6}}{2} \left[ \frac{n}{\sum_I g_+(\Delta_I)} \right]^{1/3}
 - \frac{2}{3^{3/2}} \left[ \frac{n}{\sum_I g_+(\Delta_I)} \right] t^2 \, , \\
 t_\pm & = \pm \frac{3^{5/6}}{2} \left[ \frac{\sum_I g_+ (\Delta_I)}{n}\right]^{1/3} \, , \\
 \mu & = \frac{\sqrt{3}}{4} \left( 1 - \frac{i}{\sqrt{3}} \right) n^{1/3}
 \left[ 3 \sum_{I} g_+ (\Delta_I) \right]^{2/3} \, .
\eea
One can explicitly check that
\bea
 \label{BethePot:on-shell}
 \wt\cW (\Delta_I) \equiv - i \wt\cW (\rho(t), v_a(t), \Delta_I) \big|_\text{BAEs} = \frac{3}{5} \mu N^{5/3}\, .
\eea
This is indeed equal to $- \log Z_{S^3}$, \cf\;Eq.\,(3.26) in \cite{Fluder:2015eoa}, up to a normalization.
Here $Z_{S^3}$ is the partition function of the same $\cN=2$ theory on the three-sphere \cite{Kapustin:2009kz,Jafferis:2010un,Hama:2010av}.

For this class of Chern-Simons-matter quiver gauge theories the topologically twisted index, at large $N$, is given by \eqref{ch:2:N53:index:rules}:
\bea
 \label{index:N53}
 \log Z  = - \left[ |G| \frac{\pi^2}{3} + \sum_{I} (\fn_I - 1) g'_+ (\Delta_I) \right] N^{5/3}
 \int {\rm d}t \, \frac{\rho(t)^2}{1-i v'(t)} \, .
\eea
Plugging the solution \eqref{BAEs:sol:generic} into the index \eqref{index:N53},
we obtain the following simple expression for the logarithm of the index
\bea
 \label{index:generic}
 \log Z \left( \fn_I, \Delta_I \right) = - \frac{3^{7/6}}{10} \left( 1 - \frac{i}{\sqrt{3}} \right) n^{1/3} N^{5/3}
 \frac{\sum_{I} \left[ \frac{3}{\pi} g_{+}(\Delta_I) +
 \left(\fn_I - \frac{\Delta_I}{\pi } \right) g_{+}'(\Delta_I) \right]}{\left[ \sum_{I} g_+(\Delta_I) \right]^{1/3}}
 \, .
\eea
Remarkably, it can be rewritten as
\bea
 \label{index:generic:c2d:a4d}
 \log Z \left( \fn_I, \Delta_I \right) =
 \frac{ 3^{7/6} \pi}{5 \times 2^{10/3}}
 \left(1 - \frac{i}{\sqrt{3}} \right)
 \left( n N \right)^{1/3}
 \frac{c_r \left( \fn_I , \Delta_I \right)}{a \left( \Delta_I \right)^{1/3}} \, .
\eea
Here $a \left( \Delta_I \right)$ is the trial $a$ central charge of the ``parent'' four-dimensional
$\cN = 1$ SCFT on $S^2 \times T^2$, with a partial topological A-twist on $S^2$,
and $c_r  \left( \fn_I , \Delta_I \right)$ is the trial right-moving central charge of the two-dimensional $\cN = (0,2)$
theory on $T^2$ obtained from the compactification on $S^2$ (see subsection \ref{ch:1:4d:N=1:intro}).%
\footnote{We refer the reader to \cite{Benini:2012cz,Benini:2013cda,Hosseini:2016cyf,Karndumri:2013dca,
Klemm:2016kxw,Amariti:2016mnz,Amariti:2017cyd,Amariti:2017iuz} for a detailed analysis of
superconformal theories obtained by twisted compactifications of four-dimensional $\cN = 1$
theories and their holographic realization.}
Notice that \eqref{index:generic:c2d:a4d} is consistent with \eqref{Z large N conjecture0:3d}.

\section[The index of D2\texorpdfstring{$_k$}{[k]} at large \texorpdfstring{$N$}{N}]{The index of D2$_{\fakebold{k}}$ at large $\fakebold{N}$}
\label{ssec:SYM-CS:index}

So far the discussion was completely general. Let us now focus on the $\cN = 2$ Chern-Simons
deformation of the maximal SYM theory in three dimensions \cite{Schwarz:2004yj,Guarino:2015jca}.
In $\cN = 2$ notation, the three-dimensional maximal SYM has an adjoint vector multiplet (containing a real scalar and a complex fermion)
with gauge group $\U(N)$ or $\SU(N)$ as well as three chiral multiplets $\phi_j$ $(j = 1, 2, 3)$ (containing a complex scalar and fermion).
This theory has $\U(1)_R \times \SU(3)$ symmetry, with $\SU(3)$ rotating the three complex scalar fields in the chiral multiplets.
The quiver diagram for this theory is depicted below.
\bea
 \label{SYM:k:quiver}
 \begin{tikzpicture}[font=\footnotesize, scale=0.9]
  \begin{scope}[auto,%
   every node/.style={draw, minimum size=0.5cm}, node distance=2cm];
  \node[circle]  (UN)  at (0.3,1.7) {$N$};
  \end{scope}
  \draw[decoration={markings, mark=at position 0.45 with {\arrow[scale=2.5]{>}}, mark=at position 0.5 with {\arrow[scale=2.5]{>}}, mark=at position 0.55 with {\arrow[scale=2.5]{>}}}, postaction={decorate}, shorten >=0.7pt] (-0,2) arc (30:341:0.75cm);
  \node at (-2.2,1.7) {$\phi_{1,2,3}$};
  \node at (0.7,1.2) {$k$};
 \end{tikzpicture}
 \nonumber
\eea
It has a cubic superpotential,
\bea
 W = \Tr \left( \phi_3 \left[ \phi_1 , \phi_2 \right] \right) \, .
\eea
We assign chemical potentials $\Delta_j \in [0 , 2 \pi]$ to the fields $\phi_{j}$.
The invariance of each monomial term in the superpotential under the global symmetries of the theory imposes the following constraints
on the chemical potentials $\Delta_j$ and the flavor magnetic fluxes $\fn_j$ associated with the fields $\phi_j$,
\bea
 \sum_{j = 1}^{3} \Delta_j \in 2 \pi \bZ \, , \qquad \qquad \qquad
 \sum_{j = 1}^{3} \fn_j = 2 \, ,
\eea
where the latter comes from supersymmetry.
Since $0 \leq \Delta_j \leq 2\pi$ we can only have $\sum_{j=1}^{3} \Delta_j=2 \pi s\, ,\forall s=0,1,2,3$.
The cases $s=0,3$ are singular while those for $s=2$ and $s=1$ are related by a discrete symmetry $\Delta_j = 2\pi - \Delta_j$.
Thus, without loss of generality, we will assume $\sum_{j = 1}^{3} \Delta_j = 2 \pi$.
We find that
\bea
 \sum _{j = 1}^3 g_+ \left( \Delta_j \right) & = \frac{1}{2} \Delta_1 \Delta_2 \Delta_3 \, , \\
 \sum _{j = 1}^3 g_+' \left( \Delta_j \right) & =
 \frac{1}{4} \left[ \left( \Delta_1^2 + \Delta_2^2 + \Delta_3^2 \right)
 - 2 \left(\Delta_1 \Delta_2 + \Delta_2 \Delta_3 + \Delta_1 \Delta_3 \right) \right] \, .
\eea
Finally, the ``on-shell'' value of the twisted superpotential \eqref{BethePot:on-shell} and the index \eqref{index:generic}, at large $N$, can be written as
\be
 \label{index:SYM-CS}
 \begin{aligned}
 \wt\cW( \Delta_j ) & =
 \frac{3^{13/6}}{5 \times 2^{8/3}}
 \left(1 - \frac{i}{\sqrt{3}} \right)
 k^{1/3} N^{5/3}
 \left( \Delta_1 \Delta_2 \Delta_3 \right)^{2/3} \, , \\
 \log Z ( \fn_j , \Delta_j ) & = - \frac{3^{7/6}}{5 \times 2^{5/3}}
 \left(1 - \frac{i}{\sqrt{3}} \right)
 k^{1/3} N^{5/3}
 ( \Delta_1 \Delta_2 \Delta_3 )^{2/3}
 \sum_{j=1}^3 \frac{\fn_j}{\Delta_j}
 \, ,
 \end{aligned}
\ee
which is valid for $\sum_{j=1}^{3} \Delta_j = 2 \pi$ and $0 \leq \Delta_j \leq 2 \pi$.
Note also that
\bea
 \label{index:SYM:k:attractor}
 \log Z ( \fn_j , \Delta_j ) = - \sum_{j = 1}^{3} \fn_j \frac{\partial \wt\cW(\Delta_j)}{\partial \Delta_j} 
 \, ,
\eea
as expected from the index theorem \eqref{Z large N conjecture2:intro}.

\section{Dyonic STU model}
\label{app:dyonic STU:detailed}


We look at the $\mathcal{N} = 2$ truncation of the dyonic $\ISO(7)$ gauged supergravity constructed recently in \cite{Guarino:2017pkw},
as the analogue of the STU model. We call this the ``dyonic STU model'' (see Fig.\;\ref{trunc_fig}).
It corresponds to picking the maximal Abelian subgroup of the original ISO$(7)$ gauge group
and arranging the resulting four Abelian vectors in an ${\cal N}=2$ gravity multiplet and three vector multiplets.
Due to the characteristics of the supergravity theory under consideration, there is the requirement that the four vectors couple to a
hypermultiplet, so that they effectively gauge some of the isometries of the scalar manifold.

\begin{figure}[ht]
\centering	
\begin{tikzpicture}[scale=1, every node/.style={scale=.9}]
	
	\node at (-2.3,0){10D};
	
	\draw[fill = white,thick, rounded corners] (-1.5,-.75) rectangle (1.5,.75);
	\node at (0,0){\small massive IIA};
	
	\draw[->,>=stealth] (0,-.85) -- (0,-3);
		
	\node at (-2.3,-3.9){4D};
		
	\draw[fill = white,thick, rounded corners] (-1.5,-4.65) rectangle (1.5,-3.15);
	\node at (0,-3.9){\small $\mathcal{N}=8$ ISO(7)};
		
	\draw[->,>=stealth, thick, dashed] (1.6,-3.9) -- (3.6,-3.9);	
				
	\draw[fill = white,thick, rounded corners] (3.7,-4.65) rectangle (7.1,-3.15);
	\node at (5.4,-3.9){\small $\mathcal{N}=2$ dyonic STU };

	\draw[->,>=stealth, ]  (4.7, -.4) -- (5.5, -.4);
	\node at (7.3,-.4){\small $S^6$ truncation \cite{Guarino:2015jca,Guarino:2015vca}};
	
	\draw[->,>=stealth, ,dashed, thick] (4.7,-1.2) -- (5.5,-1.2);
	\node at (7.45,-1.2){\small Cartan truncation \cite{Guarino:2017pkw}};
		
	\end{tikzpicture}
	\caption{Sequence of consistent truncations from massive type IIA supergravity in ten dimensions, to the dyonic STU model in four dimensions.}
	\label{trunc_fig}
\end{figure}
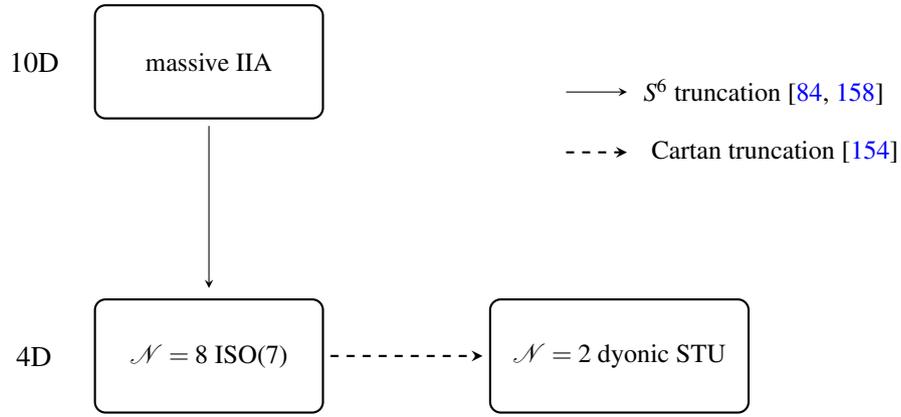

We start with the bosonic part of the Lagrangian for the dyonic STU model,
following the notation and conventions of the standard reference \cite{Andrianopoli:1996cm},
\begin{align}\label{dyonic Lagrangian}
\begin{split}
\frac{1}{\sqrt{-g}} {\cal L}_{\text{dyonic STU}} &= \frac{R}{2} - V_{g, m} - g_{i \bar{j}} \partial_{\mu} z^i \partial^{\mu} \bar{z}^{\bar{j}} - h_{u v} \nabla_{\mu} q^u \nabla^{\mu} q^v + \frac{1}{4} {\rm I}_{\Lambda \Sigma} H^{\Lambda \mu \nu} H^{\Sigma}{}_{\mu \nu} \\
 &+ \frac14 {\rm R}_{\Lambda \Sigma} H^{\Lambda \mu \nu} * H^{\Sigma}{}_{\mu \nu} - m \frac{\varepsilon^{\mu \nu \rho \sigma}}{4 \sqrt{-g}} B^0{}_{\mu \nu} \partial_{\rho} A_{0 \sigma} - g m \frac{\varepsilon^{\mu \nu \rho \sigma}}{32 \sqrt{-g}} B^0{}_{\mu \nu} B^0{}_{\rho \sigma}\ .
\end{split}
\end{align}
This is supplemented by a fermionic counterpart which we do not present here.
It is however instructive to look at the covariant derivative of the gravitino,
\begin{align}\label{gravitino derivative}
	\nabla_{\mu} \psi_{\nu A} = (\partial_{\mu} - \frac14 \omega_{\mu}^{ab} \gamma_{ab} + \frac{i}{2} A_{\mu}) \psi_{\nu A} + (\partial_{\mu} q^u \omega_u{}_A{}^B - \frac{i g}{2} \langle {\cal P}^x, {\cal A}_{\mu} \rangle \sigma^x{}_A{}^B ) \psi_{\nu B}\ .
\end{align}
Many of the above quantities require explanation, and in what follows we will discuss
independently several of the sectors of the theory. 

\subsection{Gravity multiplet}

The gravity multiplet consists of the graviton $g_{\mu \nu}$, a doublet of gravitini $\psi_{\mu A}$,
which transform into each other under the R-symmetry group $\U(1)_{\rm R} \times \SU(2)_{\rm R}$,
and a gauge field called the graviphoton with field strength $T_{\mu \nu}$.
Due to the presence of three additional vector multiplets in the theory the total number of gauge fields is four,
denoted by $A^{\Lambda}_{\mu} \, ,$ $\Lambda \in \{0,1,2,3\}$.
The graviphoton field strength is a scalar dependent linear combination of the four field strengths $F^{\Lambda}_{\mu \nu}$.
The theory we consider is gauged, meaning that some of the original global symmetries of the theory have been made local.

\subsection{Universal hypermultiplet}

An ${\cal N}=2$ hypermultiplet consists of four real scalars $q^u$ and two chiral fermions $\zeta_{\alpha}$ called hyperini.
The scalar moduli space is a quaternionic K\"{a}hler manifold with metric $h_{u v} (q)$
and three almost complex structures which further define three quaternionic two-forms that are covariantly constant
with respect to an $\SU(2)$ connection $\omega^x$, $x \in \{1,2,3\}$.
The particular model that comes from the truncation of ${\cal N}=8$ $\ISO(7)$ gauged supergravity has a single hypermutiplet,
which universally appears in various string compactifications, hence called the \emph{universal hypermultiplet}.
The moduli space is the coset space $\SU(2,1) / \U(2)$.
The metric, written in terms of real coordinates $\{\phi,\, \sigma,\, \zeta,\, \tilde{\zeta} \}$, is  
\begingroup
\renewcommand*{\arraystretch}{1.2}
\begin{equation}
h = 
\begin{pmatrix}
1 & 0 & 0 & 0 \\
0 & \frac{1}{4} \e^{4\phi} & - \frac{1}{8} \e^{4\phi} \tilde{\zeta} & \frac{1}{8} \e^{4\phi} \zeta \\
0 & - \frac{1}{8} \e^{4\phi} \tilde{\zeta} & \frac{1}{4} \e^{2\phi}(1 + \frac{1}{4} \e^{2\phi} \tilde{\zeta}^2) & -\frac{1}{16} \e^{4\phi} \zeta \tilde{\zeta} \\
0 &  \frac{1}{8} \e^{4\phi} \zeta & -\frac{1}{16} \e^{4\phi} \zeta \tilde{\zeta} &  \frac{1}{4} e^{2\phi}(1 + \frac{1}{4} \e^{2\phi} \zeta^2)
\end{pmatrix}.
\end{equation}
\endgroup
The isometry group $\SU(2,1)$ has eight generators; two of these are used for gauging in the model under consideration,
generating the group $\mathbb{R} \times \U(1)$. The corresponding Killing vectors are
\begin{align}
	k^{\mathbb{R}} = \partial_{\sigma}\ , \qquad k^{\U(1)} = -\tilde{\zeta} \partial_{\zeta} + \zeta \partial_{\tilde{\zeta}}\ .
\end{align}
One defines Killing vectors with index $\Lambda$ corresponding to each of the four gauge fields,
such that the hypermultiplet scalar covariant derivative that appears in \eqref{dyonic Lagrangian} reads
\begin{align}
	\nabla_{\mu} q^u \equiv \partial_{\mu} q^u - g \langle {\cal K}^u, {\cal A}_{\mu} \rangle =  \partial_{\mu} q^u - g k^u_{\Lambda} A^{\Lambda}_{\mu} + g k^{u, \Sigma} A_{\Sigma, \mu}\ ,
\end{align}
where $g$ is the gauge coupling constant and the operation $\langle. , .\rangle$ is the symplectic inner product
which will be discussed further when we move to the vector multiplet sector.
What is important to notice here is that we allow for the hypermultiplet isometries to be gauged not only by the ``ordinary''
electric fields $A^{\Lambda}_{\mu}$ but also by their dual magnetic fields $A_{\Lambda, \mu}$.
In the particular model here, the non-vanishing Killing vectors are
\begin{align}\label{app:magnetic killing}
	k_0 = k^{\mathbb{R}}\ , \quad k^0 = c  k^{\mathbb{R}}\ , \quad k_{1,2,3} = k^{\U(1)}\ , 
	\qquad c \equiv \frac{m}{g} \ ,
\end{align}
which means that the magnetic gauge field $A_{0, \mu}$ explicitly appears in the covariant derivative of the scalar $\sigma$
with an effective coupling constant $m$ related to the Romans mass of the massive type IIA supergravity. 

Note that although all four Abelian vectors participate in the gauging of the above isometries,
only two different isometries are actually being gauged: one corresponding to the non-compact group $\mathbb{R}$,
gauged by a linear combination of the electric and magnetic gauge fields $A^0$ and $A_0$,
and a U$(1)$ isometry gauged by the linear combination $A^1 + A^2 + A^3$.
These gaugings, via supersymmetry, generate a nontrivial scalar potential,
which has a critical point corresponding to an AdS$_4$ vacuum. 

One can also define moment maps (or momentum maps) associated with each isometry on the quaternionic K\"{a}hler manifold.
Using the metric and $\SU(2)$ connection on the universal hypermultiplet scalar manifold
(see \eg\;appendix D of \cite{Hristov:2012bk}) we find
\begin{align}
P_0 = \begin{pmatrix} 0, & 0, & - \frac{1}{2} \e^{2\phi} \end{pmatrix} ,& \qquad
P^0 = \begin{pmatrix} 0, & 0, &  -\frac{1}{2} c \e^{2\phi} \end{pmatrix} , \qquad \nn \\
P_{1,2,3} = \begin{pmatrix} \tilde{\zeta} \e^{\phi}, & - \zeta \e^{\phi}, &  1 - \frac{1}{4} (\zeta^2 + \tilde{\zeta}^2) \e^{2\phi}\end{pmatrix} ,& \qquad
P^{1,2,3} = \begin{pmatrix} 0, & 0, & 0 \end{pmatrix} .
\end{align} 
These are the moment maps that appear in the gravitino covariant derivative \eqref{gravitino derivative}
as a symplectic vector ${\cal P}^x = (P^{x, \Lambda}, P^x_{\Lambda})$.
Even in the absence of hypermultiplets the moment maps can be non-zero,
signifying that the R-symmetry rotating the gravitini is gauged. 

\subsection{STU vector multiplets}

Each ${\cal N} = 2$ vector multiplet consists of one gauge field, a doublet of chiral fermions $\lambda^A$ called gaugini,
and a complex scalar field $z$. We already mentioned that the STU model has three vector multiplets
and hence three complex scalars $z^i$, labeled by $s$, $t$, and $u$: $z^1 \equiv s$, $z^2 \equiv t$, $z^3 \equiv u$.
The complex scalars in the vector multiplets parameterize the special K\"{a}hler scalar (SK) manifold
${\cal M}_{{\rm SK}} = [{\rm SU}(1,1)/{\rm U}(1)]^3$
whose metric can be derived from a prepotential ${\cal F}$, which for the STU model is,
\begin{equation}\label{app:prepotential}
	{\cal F} = - 2 \sqrt{X^0 X^1 X^2 X^3} \ .
\end{equation}
$X^{\Lambda} = X^{\Lambda} (z^i)$ define the holomorphic sections ${\cal X} \equiv (X^{\Lambda}, F_{\Lambda})$ where
\begin{align}
	F_{\Lambda} \equiv \frac{\partial {\cal F}}{\partial X^{\Lambda}}\ .
\end{align}
${\cal X}$ transforms as a vector under electromagnetic duality or symplectic rotations
which leave the solutions of the theory invariant.
Other symplectic vectors are the Killing vectors ${\cal K}^u = (k^{u, \Lambda}, k^u_{\Lambda})$,
the moment maps ${\cal P}^x = (P^{x, \Lambda}, P^x_{\Lambda})$,
the gauge fields ${\cal A}_{\mu} = (A^{\Lambda}_{\mu},A_{\Lambda, \mu})$,
and finally the vector of magnetic $p^{\Lambda}$ and electric $e_{\Lambda}$ charges,
${\cal Q} = (p^{\Lambda}, e_{\Lambda})$, giving the name to the duality.

Returning to the holomorphic sections, we pick the standard parameterization
\begin{align}\label{app:scalar parameterization}
(X^0,\, X^1,\, X^2,\, X^3,\, F_0,\, F_1,\, F_2,\, F_3) = (- s t u,\, -s,\, -t,\, -u,\, 1,\, t u,\, s u,\, s t)\ . 
\end{align}
The metric on the moduli space follows from the K\"ahler potential,
\begin{align}
	\label{4d:Kahler potential}
	{\cal K} = - \log (i \langle {\cal X}, \bar{{\cal X}} \rangle) = - \log (i \bar{X}^{\Lambda} F_{\Lambda} - i X^{\Lambda} \bar{F}_{\Lambda}) = - \log (i (s-\bar{s}) (t-\bar{t}) (u-\bar{u}))\ , 
\end{align}
as $g_{i \bar{j}} \equiv \partial_i \partial_{\bar{j}} {\cal K}$ with $\partial_i = \partial/\partial z^i$. We therefore find that $g_{i \bar{j}}$ is diagonal
\begin{align}
	g_{s \bar{s}} = \frac{1}{4 (\im (s))^2}\ , \quad g_{t \bar{t}} = \frac{1}{4 (\im (t))^2}\ , \quad g_{u \bar{u}} = \frac{1}{4 (\im (u))^2}\ .  
\end{align}

Using the K\"ahler potential we introduce the rescaled sections 
\be
{\cal V} = \e^{{\cal K}/2} {\cal X} = (\e^{{\cal K}/2} X^{\Lambda}, \e^{{\cal K}/2} F_{\Lambda}) \equiv (L^{\Lambda}, M_{\Lambda})
\ee
and covariant derivatives 
\be
D_i {\cal V} = (f_i^{\Lambda}, h_{i, \Lambda}) \equiv \e^{{\cal K}/2} \left((\partial_i X^{\Lambda}+X^{\Lambda} \partial_i {\cal K}), (\partial_i F_{\Lambda}+F_{\Lambda} \partial_i {\cal K})\right).
\ee

Moving on to the kinetic terms for the vector fields, the magnetic gauging of the $\mathbb{R}$ isometry in \eqref{app:magnetic killing},
leads to the appearance of the magnetic field $A_{0, \mu}$ in the covariant derivative of the scalar field $\sigma$.
Consistency with supersymmetry then requires the introduction of an auxiliary tensor field $B^0_{\mu \nu}$ as derived
in \cite{deWit:2005ub,Samtleben:2008pe}. The Lagrangian \eqref{dyonic Lagrangian} therefore contains the modified field strengths
\begin{align}
	H^0_{\mu \nu} \equiv F^0_{\mu \nu} + \frac{1}{2} m B^0_{\mu\nu}\ , \qquad H^{i=1,2,3}_{\mu \nu} \equiv F^i_{\mu \nu}\ ,
\end{align}
where $F^{\Lambda}_{\mu \nu}$ are the field strengths of the electric potentials $A^{\Lambda}_{\mu}$.
The kinetic and theta term for the field strengths $H$ involve the scalar-dependent matrices,
${\rm I}_{\Lambda \Sigma} \equiv \im ({\cal N})_{\Lambda \Sigma}$ and
${\rm R}_{\Lambda \Sigma} \equiv \re ({\cal N})_{\Lambda \Sigma}$.
The matrix $\mathcal{N}$ can be computed from the prepotential via
\bea
\label{IIA:period:matrix}
 \cN_{\Lambda \Sigma} = \bar{F}_{\Lambda \Sigma}
 + 2 i \frac{\big( N_{\Lambda \Gamma} X^\Gamma \big)
 \big( N_{\Sigma \Delta}  X^\Delta \big)}
 {X^\Omega N_{\Omega \Psi}  X^\Psi} \, ,
\eea
where $F_{\Lambda \Sigma} \equiv \partial_\Lambda \partial_\Sigma F$ and
$N_{\Lambda \Sigma} \equiv \im ( F )_{\Lambda \Sigma}$.

\subsection{Scalar potential}

The last part of the Lagrangian \eqref{dyonic Lagrangian} left to discuss is the scalar potential $V_{g, m}$
which depends on the electric and magnetic gauge coupling constants $g$ and $m$ and is given by the general formula
\begin{align}
	V_{g, m} = g^2\left(4 h_{u v} \langle {\cal K}^u, {\cal V}\rangle  \langle {\cal K}^u, \bar{{\cal V}}\rangle  + g^{i \bar{j}} \langle {\cal P}^x, D_i {\cal V}\rangle  \langle {\cal P}^x, \bar{D}_{\bar{j}} \bar{{\cal V}}\rangle  - 3 \langle {\cal P}^x, {\cal V}\rangle \langle {\cal P}^x, \bar{{\cal V}}\rangle \right) \ .
\end{align}
$V_{g, m}$ can be further evaluated explicitly for the dyonic STU model but we will not need its expression. 

The theory is now fully specified by the data of the hypermultiplet moduli space,
the vector multiplet moduli space, derived from the prepotential ${\cal F}$ in \eqref{app:prepotential},
and the Killing vectors \eqref{app:magnetic killing} specifying the gauging.

\subsection{Tensor fields}

Due to the presence of the auxiliary tensor field $B^0$ (the other auxiliary fields can be immediately decoupled from the theory),
there is an additional constraint arising as an equation of motion for $B^0$,
\begin{align}
	G_{\Lambda, \mu\nu} = F_{\Lambda, \mu \nu} + \frac{1}{2} m B^0_{\mu\nu}\ ,
\end{align}
where $G_{\Lambda, \mu\nu}$ is the dual field strength defined by $G_{\Lambda} = (2/\sqrt{-g}) * \delta {\cal L}/\delta F^{\Lambda}$. This leads to
\begin{align}
	G_{\Lambda, \mu\nu} = \frac12 {\rm I}_{\Lambda \Sigma} H^{\Sigma}_{\mu \nu} + \frac{1}{4 \sqrt{-g}} \epsilon_{\mu\nu\rho\sigma} {\rm R}_{\Lambda \Sigma} H^{\Sigma, \rho\sigma}\ .
\end{align}
The appearance of the magnetic gauge field $A_0$ in the Lagrangian leads to the following equation of motion constraining the auxiliary tensor field
\begin{align}\label{app:B0fixing}
	\frac14 \epsilon^{\mu \nu \rho \sigma} \partial_{\mu} B^0_{\nu\rho} = -2 \sqrt{-g} h_{u v} k^{u, 0}\nabla^{\sigma} q^v\ ,
\end{align}
while the rest of the equations of motion are the standard Einstein--Maxwell equations (with sources)
and the scalar equations, stemming from \eqref{dyonic Lagrangian}.
We discuss these in great details in appendix \ref{IIA:appendix}.
Note that the BPS conditions together with the Maxwell equations imply the rest of the equations of motion.

\section[AdS\texorpdfstring{$_{4}$}{(4)} black holes in \texorpdfstring{$\mathcal{N}=2$}{N=2} dyonic STU gauged supergravity]{AdS$_{\fakebold{4}}$ black holes in $\fakebold{\mathcal{N}=2}$ dyonic STU  gauged supergravity}
\label{sec:dyonic sugra}



We now turn to the gravity duals of the field theories we have discussed so far.
Our aim is to find supersymmetric AdS$_4$ black hole solutions in the $\cN=2$ dyonic STU gauged supergravity.
We will do so in several steps, leaving all detailed calculations to appendix \ref{IIA:appendix};
for each step we find useful to dedicate a subsection.
First, we describe the black hole ansatz and supersymmetry equations derived by \cite{Halmagyi:2013sla,Klemm:2016wng}.
We then concentrate separately on the conditions for the asymptotic AdS$_4$ vacuum and the near-horizon AdS$_2 \times \Sg$ geometry. We manage to rewrite the near-horizon data in a particularly simple form in order to facilitate the match with the field theory. We finish the supergravity analysis by presenting arguments for the existence of a full BPS flow between the UV and IR geometry. Ultimately, the existence of the complete geometries is best justified by the successful entropy match with field theory. 

\subsection{Black hole ansatz and BPS conditions}
Static BPS AdS$_4$ black holes in general models with dyonic hypermultiplet gauging,
were considered in  \cite{Klemm:2016wng} generalizing earlier work of
\cite{Cacciatori:2009iz,DallAgata:2010ejj,Hristov:2010ri,Halmagyi:2013sla,Katmadas:2014faa,Halmagyi:2014qza}.
The reader can find all the details about the bosonic ansatz and BPS equations in appendix \ref{IIA:appendix}.
Here, for the sake of clarity, we repeat the form of the metric,
\begin{equation}
{\rm d} s^2 = - \e^{2U(r)} {\rm d} t^2 + \e^{-2U(r)} {\rm d} r^2 + \e^{2(\psi(r) - U(r))}{\rm d}\Omega_{\kappa}^2\,,
\end{equation}
where the radial functions $U(r)$, $\psi(r)$ and the choice of scalar curvature $\kappa$ for the horizon manifold,
uniquely specify the spacetime. Electric and magnetic charges, $e_{\Lambda} (r)$ and $p^{\Lambda} (r)$,
are present for each gauge field, and can have a radial dependence due to the fact that some of the hypermultiplet scalars
source the Maxwell equations. The spacetime symmetries also impose a purely radial dependence for the SK complex
scalars $s (r)$, $t (r)$, $u (r)$ and the QK real scalars $\phi (r)$, $\sigma (r)$, $\zeta (r)$, $\tilde{\zeta} (r)$,
as well as the phase $\alpha (r)$ of the Killing spinors that parameterize the fermionic symmetries of the black hole. 

We systematically write down the conditions for supersymmetry and equations of motion in appendix \ref{IIA:appendix},
while here we only discuss the most important points about the solution. In particular we find that we can already fix three of the four hypermultiplet scalars
\begin{equation}
	\zeta = \tilde{\zeta} = 0\ , \qquad \sigma = \text{const.} \, ,
\end{equation}
where the particular value of $\sigma$ is not physical as it is a gauge dependent quantity that drops out of all BPS equations.
The remaining hypermultiplet scalar however has in general a nontrivial radial profile governed by the equation
\begin{equation}\label{phi}
	\phi' = - g \kappa \lambda \e^{{\cal K}/2 - U} \im \left(\e^{-i \alpha} (X^0 - c F_0)\right)\ , 	
\end{equation}
where the K\"ahler potential $\e^{\cal K}$ and $\lambda = \pm 1$ are discussed in the appendix \ref{IIA:appendix}.
The scalar $\phi$ precisely sources the Maxwell equations, which read 
\begin{equation}\label{charge}
	p'^0 = c e'_0 = -c \e^{2 \psi - 3 U} \e^{4 \phi} \re  \left(\e^{-i \alpha} (X^0 - c F_0)\right)\ ,
\end{equation}
while all other charges $p^{1,2,3}$ and $e_{1,2,3}$ are truly conserved quantities.
These two equations highlight an important physical feature of the black holes in massive IIA supergravity:
due to the presence of charged hypermultiplet scalars there are massive vector fields that do not have conserved charges.
The charges of the massive vectors are not felt by the field theory, which explains why there were only
three different charges considered in the previous section. These are the magnetic charges $p^{1,2,3}$
as here we will further simplify our ansatz and put $e_{1,2,3}= e$ to be fixed by the magnetic charges.
However, one still needs to solve consistently the BPS equations for the massive vector fields,
which presents a particularly hard obstacle computationally, and has prevented people from writing down
exact analytic solutions for black holes with massive vector fields before \cite{Halmagyi:2013sla}.

\subsection{Constant scalars, analytic UV and IR geometries}
Let us now concentrate on the two important end-points of the full black hole flow: the asymptotic UV space AdS$_4$ and the IR fixed point, AdS$_2 \times \Sg$. Due to the symmetries of these spaces the scalars and charges are constant there, which means \eqref{phi}-\eqref{charge} can be further constrained by setting their left-hand sides to zero. This immediately implies 
\begin{equation}
	X^0 - c F_0 = 0 \Rightarrow X^0 = (- c)^{2/3} (X^1 X^2 X^3)^{1/3}\ , \qquad stu = -c\ , 
\end{equation}
which presents a strong constraint of the vector multiplet moduli space. In fact the remaining scalars (\eg\;freezing $s$ in favor of $t$ and $u$) are consistent with the simplified prepotential\footnote{Note that directly substituting $X^0$ in the original prepotential \eqref{app:prepotential} leads to a different normalization. Such a different prefactor does not lead to a change in physical quantities, but we prefer to comply with the correct normalization of the kinetic terms as imposed by the choice of parameterization in \eqref{app:scalar parameterization}.}
\begin{equation}\label{new_prepot}
	{\cal F}^\star = - \frac{3}{2} (-c)^{1/3} (X^1 X^2 X^3)^{2/3}\ .
\end{equation}

Notice that in this constant scalar case, the BPS equations automatically lead us to an effective truncation of the theory to a subsector, by ``freezing'' some of the fields. In particular, we see that the massive vector field has ``eaten up'' the Goldstone boson $\sigma$, and together with the massive scalars $\zeta$, $\tilde{\zeta}$, $\phi$ and the complex combination of $s t u$ can be integrated out of the model. This corresponds to a supersymmetry preserving version of the Higgs mechanism discussed in \cite{Hristov:2010eu} and a truncation\footnote{Note that strictly speaking we have not proven that this is a consistent truncation as the proof in \cite{Hristov:2010eu} only considered electrically gauged hypermultiplets. For the analogous proof in the general dyonic case one needs to use the full superconformal formalism of \cite{deWit:2011gk} where the general theory is properly defined. However, here we never need to go to such lengths since we use the Higgs mechanism to clarify the physical picture, not as a guiding principle in deriving the BPS equations.} to an ${\cal N}=2$ theory with two massless vector multiplets and no hypermultiplets. The remainders of the gauged hypermultiplet are constant parameters gauging the R-symmetry, known as Fayet-Iliopoulos terms, $\xi_I = P^{x=3}_I$, $I \in \{1,2,3 \}$. Therefore the effective, or truncated, prepotential ${\cal F}^\star$ is indeed the prepotential defining the Higgsed theory. This mechanism is in fact the reason why we are able to write down exact analytic solutions in the UV and IR limits where the constant scalar assumptions holds. Note that one could have in principle performed this truncation of the full theory looking for full black hole solutions there. However, this turns out to be a too strong constraint; in particular we will see that in the UV we have 
\begin{equation}
	\langle \e^{2 \phi} \rangle_{\rm UV} = 2 c^{-2/3}\ ,
\end{equation}
while in the IR in general
\begin{equation}
	\langle \e^{2 \phi}\rangle_{\rm IR} = \frac{2 c^{-2/3}}{3 (H^1 H^2 H^3)^{1/3}}\ ,
\end{equation}
with $H^I$ particular functions of the charges. Imposing the constraint that $\phi$ is constant throughout
the flow $\phi_{\rm UV} = \phi_{\rm IR}$ leads to a black hole solution with only a subset of all possible charges.
This is the so called \emph{universal twist} solution (defined only for hyperbolic Riemann surfaces)
dating back to \cite{Romans:1991nq,Caldarelli:1998hg}.
This class of black holes studied for massive IIA supergravity on $S^6$ in \cite{Guarino:2017eag}
and recently described holographically in \cite{Azzurli:2017kxo} (see also \cite{Bobev:2017uzs}).

\subsubsection[Asymptotic AdS\texorpdfstring{$_4$}{(4)} vacuum]{Asymptotic AdS$_4$ vacuum}

The black hole is asymptotically locally AdS$_4$ (it is often called {\it magnetic} AdS$_4$ in the literature \cite{Hristov:2011ye}).
The dual boundary theory is a relevant deformation of the ${\rm D}2_k$ theory,
partially twisted by the presence of the magnetic charges.
In this section we analyze the exact AdS$_4$ vacuum, which constrains
the scalar fields to obey the maximally supersymmetric conditions derived in \cite{Hristov:2010eu}.
These conditions, as shown in more details in appendix \ref{IIA:appendix},
not only constrain the scalars to be constant with $\zeta = \tilde{\zeta} = 0$ and $s t u= -c$ but further impose the particular vacuum expectation values
\begin{align}\label{vacuum}
\begin{split}
\langle s\rangle _{{\rm AdS}_4} = \langle t\rangle _{{\rm AdS}_4} = \langle u\rangle _{{\rm AdS}_4} = (-c)^{1/3},\\
\langle \e^{2 \phi}\rangle _{{\rm AdS}_4} = 2 c^{-2/3}\ ,
\end{split}
\end{align} 
which can be checked to explicitly solve all the equations \eqref{app:BHequations} at $r \rightarrow \infty$. The metric functions in this limit become
\begin{equation}
\label{AdS4:radius}
	\lim_{r \rightarrow \infty} (r \e^{-\psi}) = \lim_{r \rightarrow \infty} \e^{-U} = \frac{L_{{\rm AdS}_4}}{r}\ , \quad L_{{\rm AdS}_4} = \frac{c^{1/6}}{3^{1/4} g}\ , 
\end{equation}
as already found in \cite{Guarino:2017eag}.

\subsubsection{Near-horizon geometry and attractor mechanism}
The attractor mechanism for static supersymmetric asymptotically AdS$_4$ black holes was studied in detail in \cite{Klemm:2016wng}, generalizing the results of \cite{DallAgata:2010ejj} to cases with general hypermultiplet gaugings.
The near-horizon geometry is of the direct product type AdS$_2 \times \Sg$ and preserves four real supercharges, double the amount preserved by the full black hole geometry. We solve carefully all equations in appendix \ref{IIA:appendix},
while here we present an alternative derivation which, although incomplete as we explain in due course, is more suitable for the comparison with field theory.

The near-horizon metric functions are given by
\begin{equation}
U = \log(r/L_{{\rm AdS}_2}) \, , \qquad \psi = \log(L_{\Sigma_\fg} \cdot r / L_{{\rm AdS}_2}) \, ,
\end{equation}
where $L_{{\rm AdS}_2}$ is the radius of AdS$_2$ and $L_{\Sigma_\fg}$ that of the surface $\Sigma_\fg$.

We start with the BPS condition coming from the topological twist for the magnetic charges (valid not only on the horizon but everywhere in spacetime)
\begin{equation}
 g \sum_{I=1}^3 p^I = - \kappa\ ,
\end{equation}
with $\kappa$ the unit curvature of the internal manifold on the horizon
($\kappa = +1$ for $S^2$ and $\kappa = -1$ for $\Sigma_{\mathfrak{g}>1}$).
The general attractor equations imply in particular that the horizon radius is given by
\begin{align}\label{entropy extremization}
	L^2_{\Sg} =  i \kappa \frac{\cal Z}{\cal L} = - i \frac{ \sum_I ( e_I X^I - p^I F_I) }{g (X^1+X^2+X^3)}\ . 
\end{align}
where in the last equality we already used the model specific information that $X^0 - c F_0 = 0$ which implies $ X^0 = (- c)^{2/3} (X^1 X^2 X^3)^{1/3}$. Notice that the same equation is found by directly using the truncated prepotential, ${\cal F}^\star$, since by construction  
\begin{align}
\begin{split}
F_I ( X^0 &= (- c)^{2/3} (X^1 X^2 X^3)^{1/3}) = F^\star_I\ , \quad \forall I \in \{1,2,3\}\ , \\
\Rightarrow L^2_{\Sg} &=  i \kappa \frac{\cal Z^\star}{\cal L^\star}\ .
\end{split}
\end{align}
This shows that we can equally well use the truncated prepotential for this attractor equation. To solve it, we define the weighted sections $\hat{X}^I \equiv X^I/\sum_J X^J$ such that $\sum_I \hat{X}^I = 1$, and find
\begin{align}
\label{sugra:index}
	{\sum_{I=1}^{3}} \left( p^I \hat{F}^\star_I  -  e_I \hat{X}^I \right) = g L^2_{\Sg}\ , 
\end{align}
where we used the shorthand notation $F_I^\star (\hat{X}^I) \equiv \hat{F}^\star_I$. This expression is extremized at the horizon
\begin{align}\label{sugra:extremization}
	\left. \partial_{\hat{X}^J}\left[ \tsumI (p^I \hat{F}^\star_I - e_I \hat{X}^I) \right] \right|_{\hat{X}_{\rm horizon}} = 0\ , 
\end{align}
fixing the weighted sections, $\hat{X}^I_{\rm horizon} \equiv H^I$ in terms of the electric and magnetic charges. 

Let us now concentrate on what we call ``purely magnetic'' solution, \ie\;let us work under the assumption that we only have independent magnetic charges and all electric charges are equal $e_I = e$. The equations simplify to
\begin{align}\label{extremization}
	\left. \partial_{\hat{X}^J} \left( \tsumI p^I \hat{F}^\star_I \right)\right|_{H^I}  = 0\ , 
\end{align}
given $\sum_I \hat{X}^I = 1$. We find the following solutions:
\begin{equation}\label{Xsolution}
 3 H^I = 1 \pm \sum_{J, K} \frac{\left| \epsilon_{I J K} \big( p^J - p^K \big) \right|}{2 \sqrt{\big( \sqrt{\Theta} \pm p^I \big)^2 - p^J p^K}}\, ,
\end{equation}
where the $\pm$ signs are not correlated so we have four solutions.
Here $\epsilon_{I J K}$ is the Levi--Civita symbol and  
\begin{equation}\label{theta}
\Theta (p) \equiv \left(p^1\right)^2+\left(p^2\right)^2+\left(p^3\right)^2 - \left( p^1 p^2 + p^1 p^3 + p^2 p^3 \right) .
\end{equation}
The sign ambiguities are to be resolved in the full geometry as proper normalization of the scalar kinetic terms require that $\im(s, t, u) > 0$ everywhere in spacetime, including the horizon values. It is now straightforward to derive the physical scalars from the weighted sections $H^I$,
\begin{equation}
	s = \frac{\e^{i \pi/3} c^{1/3} H^1}{(H^1 H^2 H^3)^{1/3}}\ , \quad t = \frac{\e^{i \pi/3} c^{1/3} H^2}{(H^1 H^2 H^3)^{1/3}}\ , \quad u = \frac{\e^{i \pi/3} c^{1/3} H^3}{(H^1 H^2 H^3)^{1/3}}\ .
\end{equation}
At first it might seem that there is an ambiguity in the attractor equation, since at the moment we have allowed for an arbitrary parameter $e$ which sets the value of the three equal electric charges. This is however misleading, because we have in fact not yet solved the original equation \eqref{entropy extremization}. The electric charges there play the crucial r\^ole of making sure the radius of the horizon is indeed a positive real quantity,
\begin{align}
\begin{split}
\frac{\cal Z}{\cal L} &= -\frac{\kappa}{g} \left( (-1)^{4/3} c^{1/3} (H^1 H^2 H^3)^{2/3} \tsumI(p^I/H^I) - \tsumI e_I H^I \right) \\
&= -\frac{\kappa}{g} \left(\e^{-2 i \pi/3} c^{1/3} (H^1 H^2 H^3)^{2/3} \tsumI  (p^I/H^I )  - e \right)  \\
&= - i \kappa L^2_{\Sg}\ .
\end{split}
\end{align}
The imaginary part of the last equation fixes the radius of the Riemann surface,
\begin{equation}\label{area}
	L^2_{\Sg} = -\frac{\sqrt{3}}{2 g} c^{1/3} (H^1 H^2 H^3)^{2/3} \sum_{I=1}^3 \frac{p^I}{H^I} \, 
\end{equation}
while the real part fixes the value of the electric charges,
\begin{equation}
	e = \frac{1}{2} c^{1/3} (H^1 H^2 H^3)^{2/3} \sum_{I=1}^3 \frac{p^I}{H^I}  = - \frac{g}{\sqrt{3}} L^2_{\Sg} \ .
\end{equation}
However, \eqref{entropy extremization} can only get us this far, and one needs to solve the other near-horizon equations in order to write down the full solutions,
as we have done in appendix \ref{IIA:appendix}. This way one can fix the massive vector charges $p^0$, $e_0$, as well as the hypermultiplet scalar $\phi$:
\begin{equation}
	p^0 = c e_0 = \frac{g c^{1/3}}{3 \sqrt{3} (H^1 H^2 H^3)^{1/3}} L^2_{\Sg}\ ,  \qquad \e^{2 \phi} = \frac{2 c^{-2/3}}{3 (H^1 H^2 H^3)^{1/3}}\ .
\end{equation}
The AdS$_2$ radius is also fixed from the remaining near-horizon BPS equations analyzed in appendix \ref{IIA:appendix},
and it can also be expressed in terms of the functions $H^I$ as
\begin{equation}
	L_{{\rm AdS}_2} = \frac{3^{3/4} c^{1/6} (H^1 H^2 H^3)^{1/3}}{2 g}\ .
\end{equation}
Finally, for completeness, we write the Bekenstein-Hawking entropy for black holes with spherical horiozn $(\kappa=+1)$:\footnote{A precise
counting of microstates for $\fg=0$ case implies matching of the index and the entropy for all values of $\fg$ (see section 6 of \cite{Benini:2016hjo}).}
\be
 \label{BH:entropy:final}
 S_{\rm BH} = \frac{\text{Area}}{4 G_{\rm N}} = \frac{\pi L_{S^2}^2}{G_{\rm N}}
 = - \frac{\pi \sqrt{3}}{2 g G_{\rm N}} c^{1/3} (H^1 H^2 H^3)^{2/3} \sum_{I=1}^{3} \frac{p^I}{H^I} \, .
\ee

\subsection{Existence of full black hole flows}
The main challenge in constructing the full black hole spacetime interpolating between the UV and IR geometries we presented above, is the nontrivial massive vector field we need to consider. We have seen that in the constant scalar case we can effectively decouple the massive vector multiplet but this is not the case for the full flow, if we wish to have the most general spacetime. For the BPS equations, it is useful to define the function
\begin{equation}
	\gamma (r) \equiv c F_0 - X^0 = c + s(r) t(r) u(r)\ ,
\end{equation} 
which vanishes both in the UV and the IR. The function $\gamma (r)$ is in principle fixed by the BPS equations determining the scalars $s$, $t$, and $u$, and in turn governs the flow of the hypermultiplet scalar field $\phi$ as well as the massive vector charge $p^0$ via \eqref{phi} and \eqref{charge}, respectively. The remaining first order BPS equations involve also the metric functions $U$ and $\psi$, as well as the Killing spinor phase $\alpha$ while the conserved charges $e_{1,2,3}$ and $p^{1,2,3}$ remain constant and have been fixed already at the horizon. Therefore we have a total of eight coupled differential equations for eight independent variables\footnote{Note that in the ``purely magnetic'' ansatz the phase of the complex scalars has been fixed, therefore we count $s$, $t$, $u$ as each is carrying a single degree of freedom.} $\{ s$, $t$, $u$, $\phi$, $p^0$, $U$, $\psi$, $\alpha \}$.
All these fields have been uniquely fixed in the UV and IR as shown above and more carefully in appendix \ref{IIA:appendix}.
A similar set of equations with running hypermultiplet scalars has been considered in \cite{Halmagyi:2013sla} with the result that one can always connect the UV and IR solutions with a full numerical flow, whenever the number of free parameters matches the number of first order differential equations, as is also the case here. It is of course interesting to find such solutions explicitly but we leave this for a future investigation as the main scope here is the field theory match of our results, to which we turn now.

\section{Comparison of index and entropy}
\label{sec:index vs entropy}

Now we are in a position to confront the topologically twisted index of ${\rm D2}_k$,
to leading order in $N$, \eqref{index:SYM-CS} with the Bekenstein-Hawking entropy \eqref{BH:entropy:final}.
Let us first note that the relations between SCFT parameters $(N,k)$ and their supergravity duals in massive type IIA,
to leading order in the large $N$ limit, read%
\footnote{See for example \cite{Fluder:2015eoa,Guarino:2015jca}.}
\bea
 \label{free energy}
 \frac{m^{1/3} g^{-7/3}}{4 G_{\rm N}} = \frac{3^{2/3}}{2^{2/3} 5} k^{1/3} N^{5/3} \, , \qquad \qquad
 \frac{
 m
 }
 {g}
 = \left( \frac{3}{16 \pi^{3}} \right)^{1/5} k N^{1/5} \, .
\eea

From here on we set $q_j = q\, ,\forall j=1,2,3$.
The topologically twisted index of ${\rm D2}_k$ \eqref{index:SYM-CS} as a function of $\Delta_{2,3}$ is extremized for
\bea
 \label{hatDelta}
 \frac{3 \bar\Delta_{2}}{2 \pi} = 1 \mp \frac{\left|\fn_3 - \fn_1\right|}{\sqrt{\big( \sqrt{\Theta} \pm \fn_2 \big)^2 - \fn_1 \fn_3}} \, , \qquad \qquad
 \frac{3 \bar\Delta_{3}}{2 \pi} = 1 \mp \frac{\left|\fn_1 - \fn_2\right|}{\sqrt{\big( \sqrt{\Theta} \pm \fn_3 \big)^2 - \fn_1 \fn_2}} \, ,
\eea
where we defined the quantity
\bea
 \Theta \equiv \fn_1^2 + \fn_2^2 + \fn_3^2 - \left( \fn_1 \fn_2 + \fn_1 \fn_3 + \fn_2 \fn_3 \right) \, ,
\eea
which is symmetric under permutations of $\fn_j$. Upon identifying
\bea
 \label{Delta:Sections}
 \frac{\bar\Delta_j}{2 \pi} & = H^j \, , \\
 \fn_j & = 2 g p^j \, , \qquad q_j = - \frac{e_j}{2 g G_{\rm N}} \, , \quad \text{ for } \quad j=1,2,3 \, ,
\eea
\eqref{hatDelta} are precisely the values of the weighted holomorphic sections $\hat{X}^j$ at the horizon \eqref{Xsolution}.
The constraint $\sum_{j} \Delta_j \in 2 \pi \bZ$ is consistent with $\sum_j \hat{X}^j = 1$ valid in the bulk.
Plugging the values for the critical points \eqref{hatDelta} back into the Legendre transform of the partition function \eqref{index:SYM-CS},
and employing \eqref{free energy} 
we finally arrive at the conclusion that \eqref{main_result} holds true.
We thus found a precise statistical mechanical interpretation of the black hole entropy \eqref{BH:entropy:final}.
Obviously, the above analyses goes through for the most general case with three unequal electric charges
and different horizon topologies \cite{Benini:2016rke}.

It is worth stressing that the imaginary part of the partition function \eqref{index:SYM-CS} uniquely fixes
the value of the electric charges $q_j = q\, , \forall j=1,2,3$ such that its value at the critical point is a real positive quantity
in agreement with the supergravity attractor mechanism and the general expectations in \cite{Benini:2016rke}.
This precise holographic match therefore presents a new and successful check on the $\cI$-extremization principle
(see section \ref{sec:intro:I-extremization})
in the presence of a nontrivial phase which is new with respect to previous examples such as the index of ABJM.

%

\chapter[The Cardy limit of the topologically twisted index and black strings in AdS\texorpdfstring{$_{5}$}{(5)}]{The Cardy limit of the topologically twisted index and black strings in AdS$\fakebold{_5}$}
\label{ch:5}

\ifpdf
    \graphicspath{{Chapter5/Figs/Raster/}{Chapter5/Figs/PDF/}{Chapter5/Figs/}}
\else
    \graphicspath{{Chapter5/Figs/Vector/}{Chapter5/Figs/}}
\fi

\section{Introduction}

The large $N$ limit of general three-dimensional quivers with an AdS dual was studied in the previous chapters.
In this chapter we study the asymptotic behavior of the index, at finite $N$, for four-dimensional $\cN =1$ gauge theories.
With an eye on holography we also evaluate the index in the large $N$ limit.
We focus, in particular, on the class of $\cN =1$ theories  arising from D3-branes probing Calabi-Yau singularities,
which have a well-known holographic dual in terms of compactifications on Sasaki-Einstein manifolds (see section \ref{ch:1:intro:magnetic:BH}).

The explicit evaluation of the topologically twisted index is a hard task, even in the large $N$ limit.
However, the index greatly simplifies if we identify the modulus
$\tau=i \beta/2\pi$ of the torus $T^2$ with a \emph{fictitious} inverse temperature $\beta$,
and take the limit $\beta \to 0$. We will call this the \emph{high-temperature limit}.
Our finding implies a Cardy-like behavior of the topologically twisted index,
which is related to the modular properties of the elliptic genus \cite{Kawai:1993jk,Benjamin:2015hsa}.
Analogous behaviors for other partition functions have been found in
\cite{DiPietro:2014bca,Ardehali:2015hya,Ardehali:2015bla,Lorenzen:2014pna,Assel:2015nca,Bobev:2015kza,Genolini:2016sxe,DiPietro:2016ond,Brunner:2016nyk,Shaghoulian:2015kta,Shaghoulian:2015lcn,Closset:2017bse}.%

The rest of the chapter is organized as follows.
In section \ref{SYM} we analyze the high-temperature limit of the index for $\cN=4$ super Yang-Mills
while in section \ref{Klebanov-Witten} we discuss the example of the conifold.
Then in section \ref{high-temp limit of the index} we derive the formulae \eqref{tHoof:anomaly:0},
\eqref{index theorem:2d central charge0}, \eqref{Z large N conjecture0:4d} and \eqref{centralchargea0}.

\section[\texorpdfstring{$\cN = 4$}{N=4} super Yang-Mills]{$\fakebold{\cN = 4}$ super Yang-Mills}
\label{SYM}

We first consider the twisted compactification of four-dimensional $\cN = 4$ super Yang-Mills (SYM) with gauge group $\SU(N)$ on $S^2$.
At low energies, it results in a family of two-dimensional theories with $\cN = (0,2)$ supersymmetry depending on the twisting parameters $\fn$  \cite{Benini:2012cz,Benini:2013cda}.
The theory describes the dynamics of $N$ D3-branes wrapped on $S^2$ and can be pictured as the quiver gauge theory given in \eqref{SYM:quiver}.
\be
\begin{aligned}
\label{SYM:quiver}
\begin{tikzpicture}[font=\footnotesize, scale=0.9]
\begin{scope}[auto,%
  every node/.style={draw, minimum size=0.5cm}, node distance=2cm];
\node[circle]  (UN)  at (0.3,1.7) {$N$};
\end{scope}
\draw[decoration={markings, mark=at position 0.45 with {\arrow[scale=1.5]{>}}, mark=at position 0.5 with {\arrow[scale=1.5]{>}}, mark=at position 0.55 with {\arrow[scale=1.5]{>}}}, postaction={decorate}, shorten >=0.7pt] (-0,2) arc (30:341:0.75cm);
\node at (-2.2,1.7) {$\phi_{1,2,3}$};
\end{tikzpicture}
\end{aligned}
\ee
The superpotential 
\be
 \label{SYM:superpotential}
 W = \Tr \left( \phi_3 \left[ \phi_1, \phi_2 \right] \right) 
\ee
imposes the following constraints on the chemical potentials $\Delta_a$ and the flavor magnetic fluxes $\fn_a$ associated with the fields $\phi_a$,
\be
 \label{SYM:constraints}
 \sum_{a = 1}^{3} \Delta_a \in 2 \pi \mathbb{Z} \, , \qquad \qquad \sum_{a = 1}^{3} \fn_a = 2 \, .
 \ee 
The topologically twisted index for the $\SU(N)$ SYM theory is given by%
\footnote{We do not isolate the vacuum contribution --- the so called supersymmetric Casimir energy --- from the index (see section 3.3 of \cite{Closset:2017bse}).}
\begin{equation}
 \label{SYM path integral_constraint}
 Z = \frac{\cA}{N!} \;
 \sum_{\substack{\fm \,\in\, \mathbb{Z}^N \, , \\ \sum_i \fm_i = 0}} \; \int_\cC \;
 \prod_{i=1}^{N - 1} \frac{\rd x_i}{2 \pi i x_i} \prod_{j \neq i}^{N} \frac{\theta_1\left( \frac{x_i}{x_j} ; q\right)}{i \eta(q)}
 \prod_{a=1}^{3} \left[ \frac{i \eta(q)}{\theta_1\left( \frac{x_i}{x_j} y_a ; q\right)} \right]^{\fm_i - \fm_j - \fn_a + 1} \, ,
\end{equation}
where we defined the quantity
\be
\label{SYM:A}
\cA = \eta(q)^{2 (N-1)} \prod_{a=1}^{3} \left[ \frac{i \eta(q)}{\theta_1\left(y_a ; q\right)} \right]^{(N-1) (1 - \fn_a)} \, .
\ee
Here, we already imposed the $\SU(N)$ constraint $\prod_{i = 1}^{N} x_i = 1$.
Instead of performing a constrained sum over gauge magnetic fluxes we introduce 
the Lagrange multiplier $w$ and consider an unconstrained sum.
Thus, the index reads
\begin{equation}
 \label{SYM path integral}
 Z = \frac{\cA}{N!} \;
 \sum_{\fm \,\in\, \mathbb{Z}^N} \; \int_\cB \frac{\rd w}{2 \pi i w} w^{\sum_{i = 1}^{N} \fm_i} \; \int_\cC \;
 \prod_{i=1}^{N - 1} \frac{\rd x_i}{2 \pi i x_i} \prod_{j \neq i}^{N} \frac{\theta_1\left( \frac{x_i}{x_j} ; q\right)}{i \eta(q)}
 \prod_{a=1}^{3} \left[ \frac{i \eta(q)}{\theta_1\left( \frac{x_i}{x_j} y_a ; q\right)} \right]^{\fm_i - \fm_j - \fn_a + 1} \, .
\end{equation}
In order to evaluate \eqref{SYM path integral}, we employ the strategy introduced in section \ref{ch:1:twisted index:solving}.
The Jeffrey-Kirwan residue picks a middle-dimensional contour in $\left(\mathbb{C}^{*}\right)^N$.
We can then take a large positive integer $M$ and resum the contributions $\fm \leq M-1$.
Performing the summations we get
\be
\begin{aligned}
Z = \frac{\cA}{N!} \; \int_\cB \frac{\rd w}{2 \pi i w} \;
\int_\cC \; \prod_{i=1}^{N - 1} \frac{\rd x_i}{2 \pi i x_i} \; \prod_{i=1}^{N} \frac{\left( \e^{i B_i} \right)^M}{\e^{i B_i} - 1}
\prod_{j \neq i}^{N} \frac{\theta_1\left( \frac{x_i}{x_j} ; q\right)}{i \eta(q)}
\prod_{a=1}^{3} \left[ \frac{i \eta(q)}{\theta_1\left( \frac{x_i}{x_j} y_a ; q\right)} \right]^{ 1- \fn_a} \, ,
\label{SYM:index:2}
\end{aligned}
\ee
where we defined
\be
\label{iB}
\e^{i B_i} = w \prod_{j = 1}^{N} \prod_{a=1}^{3} \frac{ \theta_1 \left( \frac{x_j}{x_i} y_a ; q \right)}
{\theta_1 \left( \frac{x_i}{x_j} y_a ; q \right)} \, .
\ee
In picking the residues, we need to insert a Jacobian in the partition function and evaluate everything else at the poles,
which are located at the solutions to the BAEs,
\be
\label{BAE}
 \e^{i B_i} = 1 \, ,
\ee
such that the off-diagonal vector multiplet contribution does not vanish. We consider \eqref{BAE} as a system of 
$N$ independent equations with respect to $N$ independent variables $\{x_1,\dots,x_{N-1},w\}$. 
In the final expression, the dependence on the cut-off $M$ disappears and we find
\be
Z = \cA \sum_{I \in\mathrm{BAEs}}\frac{1}{\mathrm{det}\mathds{B}} \prod_{j \neq i}^{N}\frac{\theta_1\left(\frac{x_i}{x_j};q \right)}{i\eta(q)}
\prod_{a = 1}^3\left[ \frac{i\eta(q)}{\theta_1\left(\frac{x_i}{x_j}y_a;q \right)} \right]^{1-\fn_a} \, ,
\label{index:SYM:bethe}
\ee
where the summation is over all solutions $I$ to the BAEs \eqref{BAE}.
The matrix $\mathds{B}$ appearing in the Jacobian has the following form
\be
\mathds{B} = \frac{\partial \left( \e^{i B_1}, \ldots, \e^{i B_N} \right)}
{\partial \left( \log x_1, \ldots, \log x_{N-1}, \log w \right)} \, .
\ee

\subsection{Twisted superpotential at high temperature}
\label{twisted superpotential_SYM}

In this section we study the \emph{high-temperature} limit $(q \to 1)$ of the twisted superpotential.
Let us start by considering the BAEs \eqref{BAE} at high temperature.
Taking the logarithm of the BAEs \eqref{BAE}, we obtain
\be
\begin{aligned}
\label{SYM:BAE:logarithm}
0  = - 2 \pi i n_i + \log w - \sum_{j = 1}^{N} \sum_{a=1}^{3} \left\{ \log\left[ \theta_1\left(\frac{x_i}{x_j}y_a;q \right)\right]
- \log\left[ \theta_1\left(\frac{x_j}{x_i}y_a;q \right)\right] \right\} \, ,
\end{aligned}
\ee
where $n_i$ is an integer that parameterizes the angular ambiguity.
It is convenient to use the variables $u_i$, $\Delta_a$, $v$, defined modulo $2 \pi$:
\be
x_i = \e^{i u_i}\, ,\qquad \qquad y_a = \e^{i \Delta_a} \, ,\qquad \qquad w = \e^{i v} \, .
\ee
Then, using the asymptotic formul\ae\;\eqref{dedekind:hight:S} and \eqref{theta:hight:S}
we obtain the high-temperature limit of the BAEs \eqref{SYM:BAE:logarithm}, up to exponentially suppressed corrections,
\be
\begin{aligned}
0= - 2 \pi i n_i + i v + \frac{1}{\beta}\sum_{j = 1}^{N} \sum_{a=1}^3 \left[ F' \left(u_i - u_j + \Delta_a\right) - F' \left(u_j - u_i + \Delta_a\right) \right] \, ,
\label{BAE:SYM:hight}
\end{aligned}
\ee
where $i / (2 \pi \tau) = 1/\beta$ is the formal ``temperature'' variable.
Here, we have introduced the polynomial functions
\be
\begin{aligned}
F(u) = \frac{u^3}{6} - \frac{1}{2}\pi u^2 \mathrm{sign}[\re(u)] + \frac{\pi^2}{3} u \, , \quad \quad 
F'(u) = \frac{u^2}{2} - \pi u \sign[\re(u)] + \frac{\pi^2}{3} \, .
\label{F:function}
\end{aligned}
\ee

The high-temperature limit of the twisted superpotential can be found directly
by integrating the BAEs \eqref{BAE:SYM:hight} with respect to $u_i$ and summing over $i$.
It reads
\bea
 \label{Bethe:potential:SYM:hight}
 \wt\cW (\{u_i\}) & = \sum_{i = 1}^{N} \left( 2 \pi n_i - v \right) u_i
 + \frac{i (N - 1)}{\beta} \sum_{a = 1}^{3} F \left( \Delta_a \right)
 \\ & + \frac{i}{2 \beta}\sum_{i \neq j}^{N} \sum_{a = 1}^{3}
 \left[ F \left( u_i - u_j + \Delta_a \right) + F \left( u_j - u_i + \Delta_a \right) \right]
 \, .
\eea
It is easy to check that the BAEs \eqref{BAE:SYM:hight} can be obtained as critical points of the above twisted superpotential. We introduced a $\Delta_a$-dependent 
integration constant in order to have precisely one contribution  $F \left( u_i - u_j + \Delta_a \right)$ for each component of the adjoint multiplet.

It is natural to restrict the $\Delta_a$ to the fundamental domain. In the high-temperature limit,  we can assume that $\Delta_a$ are real and $0 < \Delta_a < 2 \pi$.
Moreover,  since \eqref{SYM:constraints} must hold, $\sum_{a=1}^3 \Delta_a$ can only be $0,2\pi,4\pi$ or  $6\pi$.
We have checked that $\sum_{a=1}^3 \Delta_a = 0, 6\pi$ lead to a singular index, and
those for $2\pi$ and $4\pi$ are related by a discrete symmetry of the index  \ie\;$y_a \to 1 / y_a \left( \Delta_a \to 2 \pi - \Delta_a \right)$.
Thus, without loss of generality, we will assume $\sum_{a=1}^{3} \Delta_a = 2 \pi$ in the following. 
 
\paragraph*{The solution for $\fakebold{\sum_{a} \Delta_a = 2 \pi}$.} 
We seek for solutions to the BAEs \eqref{BAE:SYM:hight} assuming that
\be
 0 < \re \left( u_j - u_i \right) + \Delta_{a} < 2 \pi \, , \qquad \forall \quad i, j, a \, .
\ee
Thus, the high-temperature limit of the BAEs \eqref{BAE:SYM:hight} takes the simple form
\be
\begin{aligned}
\frac{2}{\beta}\sum_{a = 1}^{3} \left( \Delta_a-\pi \right)\sum_{k = 1}^{N} \left( u_j - u_k \right) = i \left( 2\pi n_j-v \right) \, ,
\quad \mbox{for} \quad j=1,2, \ldots, N \, .
\end{aligned}
\ee
Imposing the constraints $\sum_{a = 1}^{3} \Delta_a = 2 \pi$ for the chemical potentials as well as $\SU(N)$ constraint $\sum_{i=1}^{N} u_i=0$ 
we obtain the following set of equations
\be
\begin{aligned}
 \label{BAE:SYM:hight:simp}
 \frac{i N}{\beta} u_j & = n_j-\frac{v}{2\pi}\, ,\quad \mbox{for}\quad  j=1,\dots,N-1 \, , \\ 
 -\frac{i N}{\beta}\sum_{j=1}^{N-1} u_j & = n_N-\frac{v}{2\pi} \, .
 \end{aligned}
\ee
Summing up all equations we obtain the solution for $v$, which is given by 
\be
v=\frac{2\pi}{N}\sum_{i=1}^N n_i\, .
\label{SYM:solution:v}
\ee
The solution for eigenvalues $u_i$ reads 
\be
u_i=-\frac{i \beta}{N}\left( n_i-\frac{1}{N}\sum_{i=1}^N n_i \right)\,.
\label{SYM:solution:u}
\ee
Notice that, the tracelessness condition is automatically satisfied in this case.

To proceed further, we need to provide an estimate on the value of the constants $n_i$.
Whenever any two integers are equal $n_i = n_j$, we find that the off-diagonal vector multiplet contribution to the index,
which is an elliptic generalization of the Vandermonde determinant, vanishes.
Moreover, the high-temperature expansion \eqref{theta:hight:S} breaks down as subleading terms start blowing up.
Hence, we should make another ansatz for the phases $n_i$ such that
\be
n_i-n_j\neq 0 ~ ~ \mbox{mod} ~ ~ N \, .
\label{condition:n}
\ee
To understand how much freedom we have, let us first note that 
eigenvalues $u_i$ are variables defined on the torus $T^2$ and thus they should be periodic in $\beta$.
Due to \eqref{SYM:solution:u}, this means that integers $n_i$ are defined modulo $N$ and hence, without loss 
of generality, we can consider only integers lying in the domain $\left[ 1,N \right]$ with the condition 
\eqref{condition:n} modified to $n_i\neq n_j \, ,\forall~ i,j$.
This leaves us with the only choice $n_i=i$ and its permutations.

Substituting \eqref{SYM:solution:u} and \eqref{SYM:solution:v}
into the twisted superpotential \eqref{Bethe:potential:SYM:hight}, we obtain
\be
 \label{SYM:onshel:twisted superpotential}
 \wt\cW( \Delta_a ) \equiv - i \wt\cW (\{u_i\}, \Delta_a) \big|_{\text{BAEs}}
 = \frac{\left( N^2 - 1 \right)}{\beta} \sum_{a = 1}^{3} F \left( \Delta_a \right)
 = \frac{\left( N^2 - 1 \right)}{2 \beta} \Delta_1 \Delta_2 \Delta_3 \, ,
\ee
up to terms $\cO(\beta)$.

There is an interesting relation between the ``on-shell'' twisted superpotential \eqref{SYM:onshel:twisted superpotential} and the central charge of
the UV four-dimensional theory. Note that, given the constraint $\sum_{a=1}^{3} \Delta_{a} = 2 \pi$, the quantities $\Delta_a$ can be used to parameterize the most general R-symmetry of the theory
\be R(\Delta_a) = \sum_{a=1}^3  \Delta_a  \frac{R_a}{2\pi} \, ,\ee
where $R_a$ gives charge 2 to $\phi_a$ and zero to $\phi_b$ with $b\neq a$. Observe also that the cubic R-symmetry 't Hooft anomaly is given by
\be
\begin{aligned}
 \Tr R^3 \left( \Delta_a \right) & = \left(N^2 - 1\right) \left[ 1 + \sum _{a = 1}^{3} \left( \frac{\Delta_a}{\pi} - 1 \right)^3 \right]
 = \frac{3 \left( N^2 - 1 \right)}{\pi^3} \Delta_1 \Delta_2 \Delta_3 \, ,
\end{aligned}
\ee
where the trace is taken over the fermions of the theory.
Therefore, the ``on-shell'' value of the twisted superpotential \eqref{SYM:onshel:twisted superpotential} can be rewritten as
\be
\label{on:shell:Bethe:pot:hight:SYM}
\wt\cW(\Delta_a ) = \frac{\pi^3}{6 \beta} \Tr R^3 \left(\Delta_a \right) = \frac{16 \pi^3}{27 \beta} a \left(\Delta_a \right) \, ,
\ee
where in the second equality we used the relation \eqref{generalac}. Notice that the linear R-symmetry 't Hooft anomaly is zero for $\cN = 4$ SYM. 

\subsection{The topologically twisted index at high temperature}
\label{The index at high temperature_SYM}

We are interested in the high-temperature limit of the logarithm of the partition function \eqref{index:SYM:bethe}.
We shall use the asymptotic expansions \eqref{dedekind:hight:S} and \eqref{theta:hight:S} in order to calculate the vector and hypermultiplet contributions to the twisted index in the $\beta \to 0$ limit.

The contribution of the off-diagonal vector multiplets can be computed as
\begin{align}
\log \prod_{i \neq j}^N \left[ \frac{\theta_1\left( \frac{x_i}{x_j} ; q\right)}{i \eta(q)}\right] =
- \frac{1}{\beta} \sum_{i \neq j}^{N} F' \left( u_i - u_j \right) - \frac{ i N (N - 1) \pi}{2} \, ,
\end{align}
in the asymptotic limit $q \to 1\;(\beta \to 0)$. The contribution of the matter fields is instead
\be
\begin{aligned}
 \log \prod_{i \neq j}^N \prod_{a=1}^3 \left[ \frac{i\eta(q)}{\theta_1\left(\frac{x_i}{x_j}y_a;q \right)} \right]^{1-\fn_a} & =
 - \frac{1}{\beta} \sum_{i \neq j}^{N} \sum_{a = 1}^{3} \left[ \left(\fn_a - 1\right) F' \left( u_i - u_j + \Delta_a \right) \right] \\
 & + \frac{ i N (N-1) \pi}{2} \sum_{a = 1}^{3} \left( 1 -  \fn_a \right) \, , \qquad \textmd{as } \; \beta \to 0 \, .
\end{aligned}
\ee
The prefactor $\cA$ in the partition function \eqref{SYM:A} at high temperature contributes
\be
\begin{aligned}
 \log \left\{ \eta(q)^{2 (N-1)} \prod_{a=1}^{3} \left[ \frac{i \eta(q)}{\theta_1\left(y_a ; q\right)} \right]^{(N-1) (1 - \fn_a)} \right\} & =
 - \frac{N-1}{\beta} \left[ \frac{\pi^2}{3} + \sum_{a = 1}^{3} \left( \fn_a - 1 \right) F' (\Delta_a) \right] \\
 & - (N-1) \left[ \log \left( \frac{\beta}{2 \pi}\right) - \frac{i \pi}{2} \sum_{a = 1}^{3} \left( 1 -  \fn_a \right) \right] \, .
\end{aligned}
\ee
The last term to work out is  $ - \log \det \mathds{B}$.
The matrix $\mathds{B}$, imposing $\e^{i B_i} = 1$, reads 
\be
\begin{aligned}
\mathds{B} = \frac{\partial \left( B_1, \ldots, B_N \right)}{\partial \left( u_1, \ldots, u_{N-1}, v \right)}
\, , \qquad \textmd{as } \; \beta \to 0 \, ,
\label{Jacobian:SYM}
\end{aligned}
\ee
and has the following entries 
\be
\begin{aligned}
\frac{\partial B_k}{\partial u_j} & = \frac{2\pi i}{\beta}N\delta_{kj} \, ,\quad \mbox{for} \quad k,j=1,2,\dots,N-1\, , \\
\frac{\partial B_N}{\partial u_k} & = -\frac{2\pi i}{\beta}N \, ,\quad \frac{\partial B_k}{\partial v}=1 \, ,
\quad \mbox{for} \quad k = 1, 2, \dots,N-1\, ,\\
\frac{\partial B_N}{\partial v} &= 1 \, .
\end{aligned}
\ee
Here, we have already imposed the constraint $\sum_{a = 1}^{3} \Delta_a = 2 \pi$.
Therefore, we obtain
\be
- \log \det \mathds{B} = (N-1) \left[ \log \left( \frac{\beta}{2 \pi} \right) - \frac{i \pi}{2}\right] - N\log N  \, .
\ee
Putting everything together we can write the high-temperature limit of the twisted index, at finite $N$,
\be
\begin{aligned}
\label{SYM:index:final:hight}
 \log Z & = - \frac{1}{\beta} \sum_{i \neq j}^{N} \left[ F' \left( u_i - u_j \right) + \sum_{a=1}^3 \left( \fn_a - 1\right) F' \left(u_i - u_j + \Delta_a \right) \right] \\
 & - \frac{N - 1}{\beta} \left[ \frac{\pi^2}{3} + \sum_{a=1}^3 \left( \fn_a - 1 \right) F' \left( \Delta_a \right) \right] - N \log N \, ,
\end{aligned}
\ee
up to exponentially suppressed corrections. We may then evaluate the index by
substituting the pole configurations \eqref{SYM:solution:u} back into the functional \eqref{SYM:index:final:hight} to get,
\be
\begin{aligned}
 \label{SYM:logZ:bethe}
  \log Z & = - \frac{N^2 - 1}{\beta} \left[ \frac{\pi^2}{3} + \sum_{a=1}^3 \left( \fn_a - 1 \right) F' \left( \Delta_a \right) \right] - N \log N  \\
& = - \frac{N^2 - 1}{2 \beta} \sum_{\substack{ a < b \\ (\neq c) }} \Delta_a \Delta_b \fn_c - N \log N \, ,
\end{aligned}
\ee
which, to leading order in $1 / \beta$, can be rewritten in a more intriguing form:
\be
\label{SYM:index:final:hight:bethe}
 \log Z = - \sum_{a=1}^{3} \fn_a \frac{\partial \wt\cW(\Delta_a)}{\partial \Delta_a} \, .
\ee

We can relate the index to the trial left-moving central charge of the two-dimensional $\cN = (0, 2)$ theory on $T^2$. The latter reads \cite{Benini:2012cz,Benini:2013cda}
\be
c_l = c_r - k \, ,
\ee
where $k$ is the gravitational anomaly
\be
k = - \Tr \gamma_3 = - \left( N^2 - 1 \right) \left[ 1 + \sum_{a = 1}^{3} \left( \fn_a - 1 \right) \right] = 0 \, ,
\ee
and $c_r$ is the trial right-moving central charge
\be
\begin{aligned}
 \label{c2d:anomaly}
 c_{r} \left( \Delta_a , \fn_a \right) = - 3 \Tr  \gamma_3 R^2 \left( \Delta_a \right)
 &= - 3 \left( N^2 - 1 \right) \left[ 1 + \sum_{a=1}^{3} \left( \fn_a - 1 \right) \left( \frac{\Delta_a}{\pi} - 1 \right)^2 \right] \,  \\
 &= - \frac{3 \left( N^2 - 1 \right)}{\pi^2}  \sum_{\substack{ a < b \\ (\neq c) }} \Delta_a \Delta_b \fn_c \, .
\end{aligned}
\ee
Here, the trace is taken over the fermions and $\gamma_3$ is the chirality operator in two dimensions.
In the twisted compactification, each of the chiral fields $\phi_a$ give rise to two-dimensional fermions.
The difference between the number of fermions of opposite chiralities is $\fn_a-1$, thus explaining the above formulae.
We used $\Delta_a/\pi$ to parameterize the trial R-symmetry.   
We find that the index is given by
\be
\label{SYM:2d:4d:relation}
 \log Z = \frac{\pi^2}{6 \beta} c_{r} \left(\Delta_a , \fn_a \right)
 = - \frac{16 \pi^3}{27\beta}  \sum_{a=1}^{3} \fn_a \frac{\partial a(\Delta_a) }{\partial \Delta_a} \, .
\ee
As shown in \cite{Benini:2012cz,Benini:2013cda}, the \emph{exact} central charge of the two-dimensional CFT is obtained by extremizing
$c_{r} ( \Delta_a , \fn_a )$ with respect to the $\Delta_a$.
Given the above relation \eqref{SYM:2d:4d:relation}, we see that this is equivalent to extremizing the $\log Z$ at high temperature.
As a function of $\Delta_{1,2}$ the trial central charge $c_r ( \Delta_a , \fn_a )$ is extremized for
\be
 \frac{\Delta_{a}}{2 \pi} = \frac{2 \fn_a (\fn_a - 1)}{\Theta} \, , \qquad a=1,2 \, ,
\ee
where we defined the quantity
\be
 \Theta = \fn_1^2 + \fn_2^2 + \fn_3^2 - 2 (\fn_1 \fn_2 + \fn_1 \fn_3 + \fn_2 \fn_3) \, .
\ee
At the critical point the function takes the value
\be
 c_r(\fn_a) = 12 (N^2 - 1) \frac{\fn_1 \fn_2 \fn_3 }{\Theta} \, .
\ee

\subsection{Towards quantum black hole entropy}

As we discussed in chapter \ref{ch:1}, a four-dimensional black hole asymptotic to a curved domain wall
can be regarded as a Kaluza-Klein compactification of a black string in AdS$_5$.
The AdS$_3$ near-horizon region is dual to the Ramond sector of the $(0,2)$ SCFT
which lives on the dimensionally reduced D3 worldvolume.
The equivariant elliptic genus of the 4D black hole $Z_{\text{BH}} = Z_{\text{CFT}}$ as a Ramond sector trace reads%
\footnote{We give the fermions periodic boundary conditions.}
\be
 \label{ch:1:elliptic genus}
 Z_{\text{BH}} (y_a , q) = \Tr_{\text{R}} (-1)^F q^{L_0} \prod_a y_a^{J_a} \, .
\ee
The microcanonical density of states (up to exponentially suppressed contributions) is then given by \eqref{ch:1:micro:density}:%
\footnote{The precise choice of $\beta$ contour is not important if we only concern ourselves with the asymptotic expansion of the
$\beta$ integral for $ c_r(\fn_a) |q_0| \to + \infty$.}
\bea
 \label{N=4:SYM:micro:density}
 d_{\text{micro}} (\fn_a , q_0) & = \frac{i}{N^{N}} \int \frac{{\rm d} \beta}{2 \pi}
 \int \frac{{\rm d}^3 \Delta_a}{(2 \pi)^3} \, \delta\Big(2 \pi - \sum_{a} \Delta_a\Big) \,
 \e^{\frac{\pi^2}{6 \beta} c_r(\Delta_a , \fn_a) + \beta q_0} \, .
\eea
Without loss of generality we assume that $\fn_1 < 0$, $\fn_2  > 0$, $\fn_3 < 0$ (thus $c_r (\fn_a) > 0$) and $q_0 > 0$.
We first perform the integral over $\Delta_3$.
The integral over $\Delta_1$ diverges for real values.
This can be avoided by rotating the integration contour to $0 + i \bR$.
The integral \eqref{N=4:SYM:micro:density} is now Gaussian on $\Delta_{1,2}$, leading to
\bea
 d_{\text{micro}} (\fn_a, q_0) & =
 \frac{i}{2 \pi^2 N^N (N^2 - 1) \sqrt{\Theta}}
 \int {\rm d} \beta \, \e^{\frac{\pi^2}{6 \beta} c_r(\fn_a) + \beta q_0} \beta \, .
\eea
The integral over $\beta$ is of Bessel type and it gives
\be
 \label{ch:1:micro:Bessel}
 d_{\text{micro}} (\fn_a, q_0)
 =  \frac{\pi}{N^N (N^2 - 1) \sqrt{\Theta}}
 \left( \frac{\pi}{3} \frac{c_r (\fn_a)}{S_{\text{Cardy}}} \right)^{2} \,
 I_{2} \left( S_{\text{Cardy}} \right) \, ,
\ee
where
\be
 S_{\text{Cardy}} = 2 \pi \sqrt{\frac{c_r (\fn_a) q_0}{6}} \, .
\ee
Eq.\,\eqref{ch:1:micro:Bessel} captures all power-law and logarithmic corrections to the leading Bekenstein-Hawking
entropy \emph{exactly} to all orders.
$I_\nu(z)$ is the standard modified Bessel function of the first kind
and has the following integral representation:
\be
 \label{integral:rep:Bessel}
 I_\nu (z) = \frac{1}{2 \pi i} \left( \frac{z}{2} \right)^\nu
 \int_{\epsilon - i \infty}^{\epsilon + i \infty}
 \rd t \, \e^{t + \frac{z^2}{4 t}} t^{- \nu - 1} \, , \qquad \text{for } \quad \re(\nu) > 0 \, , \epsilon >0 \, . 
\ee
Furthermore, the asymptotics of $I_\nu (z)$ for large $\re(\nu)$ is given by
\be
 \label{asymp:Bessel}
 I_\nu (z) \sim \frac{\e^{z}}{\sqrt{2 \pi z}} \left[ 1 - \frac{(\mu - 1)}{8 z} + \frac{(\mu - 1)(\mu - 3^2)}{2! (8z)^2}
 - \frac{(\mu - 1)(\mu - 3^2)(\mu - 5^2)}{3! (8z)^3} + \ldots \right] ,
\ee
where $\mu = 4 \nu^2$.
The exponential term gives the Cardy formula and $S_{\text{Cardy}}$ can be
identified with the Bekenstein-Hawking entropy of the 4D black hole.

Let us note that we find strong similarities between the result of this section and those for asymptotically flat black holes
in \emph{ungauged} supergravity. In particular, the quantum entropy \cite{Sen:2008yk,Sen:2008vm} of BPS black holes
in $\cN=2$ supergravity coupled to vector multiplets and hypermultiplets is schematically of the form \cite{Dabholkar:2011ec}
\be
 \label{flat:BH:entropy}
 \cE(\fn) (S_{\text{Cardy}})^{- \nu} I_{\nu} \left( S_\text{Cardy} \right)
 \, .
\ee
The prefactor $\cE(\fn)$ only depends on the magnetic charges $\fn$ and not on the electric charges $q$.
This is indeed the first term in the Rademacher expansion, which is an exact formula for the Fourier coefficients $d_{\text{micro}}$ of modular forms.
The higher terms in the expansion are exponentially suppressed with respect to the terms in \eqref{flat:BH:entropy} and thus are nonperturbative.
It would be interesting to find the Rademacher expansion of the twisted index \eqref{index:SYM:bethe} and compare it with \eqref{ch:1:micro:Bessel}.

\section{The Klebanov-Witten theory}
\label{Klebanov-Witten}

In this section we study the Klebanov-Witten theory \cite{Klebanov:1998hh} describing the low energy dynamics of $N$ D3-branes at the conifold singularity.
This is the Calabi-Yau cone over the homogeneous Sasaki-Einstein five-manifold $T^{1,1}$ which can be described as the coset $\SU(2) \times \SU(2) / \U(1)$.
This theory has $\cN = 1$ supersymmetry and has $\SU(N) \times \SU(N)$ gauge group with bi-fundamental chiral multiplets $A_i$ and $B_j$, $i , j = 1, 2$, transforming in the $\left( {\bf N}, \overline{\bf N} \right)$ and $\left( \overline{\bf N}, {\bf N} \right)$ representations of the two gauge groups. This can be pictured as
\be
\begin{aligned}
\label{KW:quiver}
\begin{tikzpicture}[baseline, font=\footnotesize, scale=0.8]
\begin{scope}[auto,%
  every node/.style={draw, minimum size=0.5cm}, node distance=2cm];
\node[circle] (USp2k) at (-0.1, 0) {$N$};
\node[circle, right=of USp2k] (BN)  {$N$};
\end{scope}
\draw[decoration={markings, mark=at position 0.9 with {\arrow[scale=1.5]{>}}, mark=at position 0.95 with {\arrow[scale=1.5]{>}}}, postaction={decorate}, shorten >=0.7pt]  (USp2k) to[bend left=40] node[midway,above] {$A_{i}$} node[midway,above] {} (BN) ; 
\draw[decoration={markings, mark=at position 0.1 with {\arrow[scale=1.5]{<}}, mark=at position 0.15 with {\arrow[scale=1.5]{<}}}, postaction={decorate}, shorten >=0.7pt]  (USp2k) to[bend right=40] node[midway,above] {$B_{j}$}node[midway,above] {}  (BN) ;  
\end{tikzpicture}
\end{aligned}
\ee
It has a quartic superpotential,
\be
\label{KW:superpotential}
W = \Tr \left( A_1 B_1 A_2 B_2 - A_1 B_2 A_2 B_1 \right) \, .
\ee
We assign chemical potentials $\Delta_{1,2} \in (0, 2 \pi)$ to $A_i$ and $\Delta_{3,4} \in (0, 2 \pi)$ to $B_i$.
Invariance of the superpotential under the global symmetries requires
\begin{align}
\label{KW:constraints}
\sum_{I = 1}^{4} \Delta_I \in 2 \pi \mathbb{Z} \, , \qquad \qquad \sum_{I = 1}^{4} \fn_I = 2 \, .
\end{align}
For the Klebanov-Witten theory, the topologically twisted index can be written as
\begin{multline}
\label{initial Z full}
Z = \frac1{(N!)^2} \sum_{\fm, \wt\fm \in \bZ^N} \int_\cB \; \frac{\rd w}{2 \pi i w}\frac{\rd \wt w}{2\pi i \wt w } w^{\sum_{i = 1}^N  \fm_i}\, 
\wt w^{ \sum_{i=1}^N\wt\fm_i  }  \times \\
\times \int_\cC \; \prod_{i=1}^{N-1}  \frac{\rd x_i}{2\pi i x_i} \,  \frac{\rd \tilde x_i}{2\pi i \tilde x_i} \eta(q)^{4 (N-1)}
\prod_{i\neq j}^N \left[ \frac{\theta_1\left( \frac{x_i}{x_j} ; q\right)}{i \eta(q)} \frac{\theta_1\left( \frac{\tilde x_i}{\tilde x_j} ; q\right)}{i \eta(q)}\right] \times \\
\times \prod_{i,j=1}^N \prod_{a=1,2}
\left[ \frac{i \eta(q)}{\theta_1\left( \frac{x_i}{\tilde x_j} y_a ; q\right)} \right]^{\fm_i - \wt\fm_j - \fn_a +1}
\prod _{b=3,4} \left[ \frac{i \eta(q)}{\theta_1\left( \frac{\tilde x_j}{x_i} y_b ; q\right)} \right]^{\wt\fm_j - \fm_i - \fn_b +1} \, .
\end{multline}
Here, we assumed that eigenvalues $x_i$ and $\tx_i$ satisfy the $\SU(N)$ constraint $\prod_{i=1}^N x_i=\prod_{i=1}^N \tx_i=1$. 
Hence, the integration is performed over $(N-1)$ variables instead of $N$. In order to impose 
the $\SU(N)$ constraints for the magnetic fluxes, \ie\;
\be
\begin{aligned}
\label{KW:tracelessness condition}
\sum_{i = 1}^{N} \fm_i = \sum_{i = 1}^{N} \wt \fm_i = 0 \, ,
\end{aligned}
\ee
we have introduced two Lagrange multipliers $w = \e^{i v}$ and $\wt w=\e^{i\tv}$.
Now, we can resum over gauge magnetic fluxes $\fm_i \leq M - 1$ and $\wt \fm_j \geq 1 - M$ 
for some large positive integer cut-off $M$. We obtain 
\be
\begin{aligned}
Z = \frac1{(N!)^2}  \int_\cB & \; \frac{\rd w}{2 \pi i w} \, \frac{\rd \wt w}{2\pi i \wt w } \int_\cC \; \prod_{i=1}^{N-1} \frac{\rd x_i}{2\pi i x_i} \, \frac{\rd \tilde x_i}{2\pi i \tilde x_i}
\prod_{i\neq j}^N \left[ \frac{\theta_1\left( \frac{x_i}{x_j} ; q\right)}{i \eta(q)} \frac{\theta_1\left( \frac{\tilde x_i}{\tilde x_j} ; q\right)}{i \eta(q)}\right] \times \\
& \times \cP \, \prod_{i=1}^N \frac{ (\e^{iB_i})^M}{ \e^{iB_i} -1 } \prod_{j=1}^N \frac{ (\e^{i\wt B_j})^M}{ \e^{i\wt B_j} - 1} \, ,
\end{aligned}
\ee
where we defined the quantities
\be
\label{KW:A}
\cP = \eta(q)^{4 (N-1)} \prod_{i,j=1}^N \, \prod_{a=1,2} \left[ \frac{i \eta(q)}{\theta_1\left( \frac{x_i}{\tilde x_j} y_a ; q\right)} \right]^{1-\fn_a}
\prod_{b=3,4} \left[ \frac{i \eta(q)}{\theta_1\left( \frac{\tilde x_j}{x_i} y_b ; q\right)} \right]^{1-\fn_b} \, ,
\ee
and
\be
\begin{aligned}
\label{BA expressions}
\e^{iB_i} = w\prod_{j=1}^N \frac{\prod _{b=3,4} \theta_1\left( \frac{\tilde x_j}{x_i} y_b ; q\right)}
{\prod _{a=1,2} \theta_1\left( \frac{x_i}{\tilde x_j} y_a ; q\right)}  \, ,\qquad
\e^{i\wt B_j} = \wt w^{-1} \,\prod_{i=1}^N   \frac{\prod _{b=3,4} \theta_1\left( \frac{\tilde x_j}{x_i} y_b ; q\right)}{\prod _{a=1,2}
\theta_1\left( \frac{x_i}{\tilde x_j} y_a ; q\right)} \, .
\end{aligned}
\ee
Then, similarly to the case of $\mathcal{N}=4$ SYM, the following BAEs
\be
\label{KW:BAEs}
\e^{i B_i} = 1 \, ,\qquad \qquad \qquad \e^{i \wt B_j} = 1 \, .
\ee
determine the poles of the integrand.
In order to calculate the index we simply insert a Jacobian 
of the transformation from $\{\log x_i,\log \tx_i, \log w,\log \wt w\}$ to $\{\e^{i B_i},\e^{i \wt{B}_i}\}$ variables and evaluate everything else  
at the solutions to BAEs. In the final expression, the dependence on the cut-off $M$ disappears.
We can then write the partition function as,
\be
\label{index:KW:bethe}
Z = \sum_{I \in\mathrm{BAEs}}\frac{1}{\mathrm{det}\mathds{B}}
\prod_{i\neq j}^N \left[ \frac{\theta_1\left( \frac{x_i}{x_j} ; q\right)}{i \eta(q)} \frac{\theta_1\left( \frac{\tilde x_i}{\tilde x_j} ; q\right)}{i \eta(q)}\right] \cP \, ,
\ee
where $\mathds{B}$ is a $2N \times 2N$ matrix
\be
\label{Jacobian general:KW}
\mathds{B} = \frac{\partial \big( \e^{iB_1},\dots,\e^{i B_N}, \e^{i\wt B_1},\dots, \e^{i\wt B_N} \big) }
{ \partial\left( \log x_1,\dots,\log x_{N-1},\log w, \log \tx_1,\dots, \log \tx_{N-1}, \log \wt w\right)} \, .
\ee

\subsection{Twisted superpotential at high temperature}
\label{The twisted superpotential_KW}

Let us now look at the twisted superpotential at high temperature, \ie\;$\beta \to 0$ limit.
Taking the logarithm of the BAEs \eqref{KW:BAEs} we obtain
\be
\begin{aligned}
\label{KW:BAE:logarithm}
0 & = - 2 \pi i n_i + \log w - \sum_{j = 1}^{N} \left\{ \sum_{a = 1,2} \log\left[ \theta_1\left( \frac{x_i}{\tilde x_j} y_a ; q\right)\right]
- \sum_{b = 3,4} \log\left[ \theta_1\left( \frac{\tilde x_j}{x_i} y_b ; q\right)\right]  \right\} \, , \\
0 & = - 2 \pi i \tilde n_j - \log \wt w - \sum_{i = 1}^{N} \left\{ \sum_{a = 1,2} \log\left[ \theta_1\left( \frac{x_i}{\tilde x_j} y_a ; q\right) \right]
- \sum_{b = 3,4} \log\left[ \theta_1\left( \frac{\tilde x_j}{x_i} y_b ; q\right) \right]  \right\} \, ,
\end{aligned}
\ee
where $n_i\, , \tilde n_j$ are integers that parameterize the angular ambiguities.
In order to compute the high-temperature limit of the above BAEs, we go to the variables $u_i \, , \tilde u_j \, , \Delta_I \, , v\, , \tv$,
defined modulo $2 \pi$,
and employ the asymptotic expansions \eqref{dedekind:hight:S} and \eqref{theta:hight:S}.
We find
\be
\begin{aligned}
\label{BAE:KW:hight}
0 & = -2\pi i n_i +iv+ \frac{1}{\beta} \sum_{j = 1}^{N} \left[ \sum_{a = 1,2} F' \left(u_i - \tilde u_j + \Delta_a\right) 
- \sum_{b = 3,4} F' \left(\tilde u_j - u_i + \Delta_b\right)  \right] 
\, , \\
0 & = -2\pi i \tilde n_j - i \tv +  \frac{1}{\beta} \sum_{i = 1}^{N} \left[ \sum_{a = 1,2} F' \left(u_i - \tilde u_j + \Delta_a\right) 
- \sum_{b = 3,4} F' \left(\tilde u_j - u_i + \Delta_b\right)\right] 
\, ,
\end{aligned}
\ee
where the polynomial function $F' (u)$ is defined in \eqref{F:function}.
The BAEs \eqref{BAE:KW:hight} can be obtained as critical points of the twisted superpotential
\be
\label{KW:twisted superpotential}
\begin{aligned}
 \wt\cW(\{u_i, \tilde u_i\}, \Delta_I) & = 2\pi \sum_{i = 1}^{N} \left( n_i u_i - \tilde n_i \tilde u_i \right) -
  \sum_{i=1}^{N} \left( v\, u_i+\tv\,\tu_i \right) \\
  & + \frac{i}{\beta} \sum_{i,j=1}^{N} \left[ \sum_{a=1,2} F(u_i - \tilde u_j + \Delta_a) + \sum_{b=3,4} F(\tilde u_j - u_i + \Delta_b) \right]\, .
\end{aligned}
\ee

We next turn to find solutions to the BAEs \eqref{BAE:KW:hight}. The constraints (\ref{KW:constraints}) imply that 
$\sum_{I=1}^4 \Delta_I$ can only be $0,2\pi,4\pi,6\pi$ or  $8\pi$.
For the conifold theory, it turns out that the solutions with $\sum_{I=1}^4 \Delta_I = 0, 8\pi$ lead to a singular index,
those for $2 \pi$ and $6 \pi$ are related by a discrete symmetry of the index, \ie\;$y_I \to 1 / y_I \left( \Delta_I \to 2 \pi - \Delta_I \right)$, and
there are no consistent solutions for $\sum_{I=1}^4 \Delta_I = 4\pi$.
Thus, without loss of generality, we  assume again $\sum_{I=1}^4 \Delta_I = 2 \pi$ in the following. 

\paragraph*{The solution for $\fakebold{\sum_I \Delta_I = 2 \pi}$.}
We assume that
\be
\begin{aligned}
0 < \re \left( \tilde u_j - u_i \right) + \Delta_{3,4} < 2 \pi \, , \qquad \qquad - 2 \pi < \re \left( \tilde u_j - u_i \right) - \Delta_{1,2} < 0 \, , \qquad \forall \quad i, j \, .
\end{aligned}
\ee
Hence, the BAEs \eqref{BAE:KW:hight} become
\be
\begin{aligned}
\label{BAE:KW:hight:simplified}
0 & = - 2 \pi i n_j + i\,v- \frac{1}{\beta} \sum_{k = 1}^{N} \left[ \Delta_1 \Delta_2 - \Delta_3 \Delta_4 - 2 \pi  \left(\tilde u_k - u_j \right) \right] \, , \\
0 & = - 2 \pi i \tilde n_k -i\,\tv -\frac{1}{\beta} \sum_{j = 1}^{N} \left[ \Delta_1 \Delta_2 - \Delta_3 \Delta_4 - 2 \pi  \left(\tilde u_k - u_j \right)  \right] \, .
\end{aligned}
\ee
Here, we have already imposed the constraint $\sum_{I=1}^{4} \Delta_I = 2 \pi$.
Imposing the $\SU(N)$ constraints for $u_i\,,\, \tu_i$ we can rewrite the BAEs in the following form 
\begin{subequations}
 \begin{align}
  \label{BAE:KW:1}
  \frac{iN}{\beta}\,u_j &=n_j-\frac{v}{2\pi}+\frac{iN}{2\pi\beta}\left( \Delta_3\,\Delta_4-\Delta_1\,\Delta_2 \right)\, ,\quad \mbox{for}\quad j=1,\dots,N-1\, ,\\
  \label{BAE:KW:2}
  -\frac{iN}{\beta}\,\sum_{j=1}^{N-1} u_j &=n_N-\frac{v}{2\pi}+\frac{iN}{2\pi\beta}\left( \Delta_3\,\Delta_4-\Delta_1\,\Delta_2 \right)\, , \\
  \label{BAE:KW:3}
  \frac{iN}{\beta}\,\tu_j &=- \tilde n_j-\frac{\tv}{2\pi}-\frac{iN}{2\pi\beta}\left( \Delta_3\,\Delta_4-\Delta_1\,\Delta_2 \right)\, ,\quad \mbox{for}\quad j=1,\dots,N-1\, ,\\
  \label{BAE:KW:4}
  -\frac{iN}{\beta}\,\sum_{j=1}^{N-1} \tu_j &=-\tilde n_N-\frac{\tv}{2\pi}-\frac{iN}{2\pi\beta}\left( \Delta_3\,\Delta_4-\Delta_1\,\Delta_2 \right)\, .
 \end{align}
\end{subequations}
Equations \eqref{BAE:KW:1} and \eqref{BAE:KW:3} can be considered as equations defining $u_i$ and $\tu_i$. In order to find $v$ and $\tv$ we need
to sum $(N-1)$ equations \eqref{BAE:KW:1} with \eqref{BAE:KW:2} and equations  \eqref{BAE:KW:3} with \eqref{BAE:KW:4}. This leads to 
\be
\begin{aligned}
\label{KW:sol:final}
v & =\frac{iN}{\beta}\left( \Delta_3\,\Delta_4-\Delta_1\,\Delta_2 \right)+\frac{2\pi}{N}\sum_{j=1}^N n_j \, , \qquad
u_j =-\frac{i\beta}{N}\left( n_j-\frac{1}{N}\sum_{i=1}^N n_i \right) \, , \\
\tilde v & =-\frac{iN}{\beta}\left( \Delta_3\,\Delta_4-\Delta_1\,\Delta_2 \right)-\frac{2\pi}{N}\sum_{j=1}^N \tilde n_j \, ,\qquad
\tilde u_j =\frac{i\beta}{N}\left( \tilde n_j-\frac{1}{N}\sum_{i=1}^N \tilde n_i \right)\, .
\end{aligned}
\ee
According to our prescription, all solutions which lead to zeros of the off-diagonal vector multiplets should be avoided.
Therefore, the allowed parameter space for integers $n_i$ and $\tilde n_i$ is determined by
\be
\label{KW:integers:constraint}
 n_j - n_i \neq 0 \quad \text{ mod }~ N \, , \qquad \qquad \tilde n_j - \tilde n_i \neq 0 \quad \text{ mod }~ N \, .
\ee
Given the solution \eqref{KW:sol:final} to the BAEs, the integers $n_i$ and $\tilde n_i$ are defined modulo $N$ due to the $\beta$-periodicity of eigenvalues on $T^2$.
Thus we are left with $\{n_i, \tilde n_i\} \in \left[ 1,N \right]$.
The only possible choice is then given by $n_i = \tilde n_i = i$ and its permutations.

Finally, plugging the solution \eqref{KW:sol:final} to the BAEs back into \eqref{KW:twisted superpotential},
we obtain the ``on-shell" value of the twisted superpotential
\be
\label{SU(N):KW:on-shell:twisted superpotential}
\wt\cW(\Delta_I) = \frac{N^2}{2\beta} \sum_{a < b < c} \Delta_a \Delta_b \Delta_c \, ,
\ee
up to terms $\cO(\beta)$.
The relation between the ``on-shell'' twisted superpotential and the four-dimensional conformal anomaly
coefficients also holds for the conifold theory. The R-symmetry 't Hooft anomalies can be expressed as
\be
\begin{aligned}
\label{KWR3}
 \Tr R \left( \Delta_I \right) & = 2 \left(N^2 - 1\right) + N^2 \sum _{I = 1}^{4} \left( \frac{\Delta_I}{\pi} - 1 \right) = - 2 \, , \\
 \Tr R^3 \left( \Delta_I \right) & = 2 \left(N^2 - 1\right) + N^2 \sum _{I = 1}^{4} \left( \frac{\Delta_I}{\pi} - 1 \right)^3
 = \frac{3 N^2}{\pi^3} \sum_{a < b < c} \Delta_a \Delta_b \Delta_c - 2 \, ,
\end{aligned}
\ee
where we used $\Delta_I / \pi$ to parameterize the trial R-symmetry of the theory.
Hence, Eq.\,\eqref{SU(N):KW:on-shell:twisted superpotential} can be rewritten as
\be
\label{SU(N):KW:on-shell:twisted superpotential:TrR3}
\wt\cW(\Delta_I) = \frac{\pi^3}{6 \beta} \left[ \Tr R^3 \left( \Delta_I \right) - \Tr R \left( \Delta_I \right) \right]
= \frac{16 \pi^3}{27 \beta}  \left[ 3 c \left( \Delta_I \right) - 2 a( \Delta_I) \right] \, .
\ee
Here, we employed Eq.\,\eqref{generalac} to write the second equality.

\subsection{The topologically twisted index at high temperature}
\label{The index at high temperature_KW}

The twisted index, at high temperature, can be computed by evaluating the contribution of the saddle-point configurations to \eqref{index:KW:bethe}.
The procedure for computing the index is very similar to that presented in section \ref{The index at high temperature_SYM}.
The off-diagonal vector multiplet contributes
\begin{align}
\log \prod_{i\neq j}^N \left[ \frac{\theta_1\left( \frac{x_i}{x_j} ; q\right)}{i \eta(q)} \frac{\theta_1\left( \frac{\tilde x_i}{\tilde x_j} ; q\right)}{i \eta(q)}\right] =
- \frac{1}{\beta} \sum_{i \neq j}^{N} \left[ F' \left( u_i - u_j \right) + F' \left( \tu_i - \tu_j \right) \right] - i N (N - 1) \pi \, .
\end{align}
The quantity $\cP$, Eq.\,\eqref{KW:A}, contributes
\be
\begin{aligned}
 \log \cP & = - \frac{1}{\beta} \bigg\{ \frac{2\pi^2}{3} (N-1)
 + \sum_{i , j = 1}^{N} \sum_{ \substack{I = 1,2: + \\ I = 3, 4: -}} (\fn_I - 1) F' \left[ \pm \left( u_i - \tilde u_j \pm \Delta_I \right) \right] \bigg\} \\  
 & + \frac{i N^2 \pi}{2} \sum_{I = 1}^{4} \left( 1 - \fn_I \right) - 2 (N-1) \log \left( \frac{\beta}{2 \pi} \right) \, .
\end{aligned}
\ee
The Jacobian \eqref{Jacobian general:KW} has the following entries
\begin{subequations}
 \begin{align}
  \frac{\partial B_k}{\partial u_j} & = -\frac{\partial \wt B_k}{\partial \tu_j} =\frac{2\pi i}{\beta}N\delta_{kj} \, ,\quad \mbox{for} \quad k,j=1,2,\dots,N-1\, ,\\
  \frac{\partial B_N}{\partial u_k} & = -\frac{\partial \wt B_N}{\partial \tu_k}=-\frac{2\pi i}{\beta}N \, ,\quad 
  \frac{\partial B_k}{\partial v}=-\frac{\partial \wt B_k}{\partial \tv}=1 \, ,
  \quad \mbox{for} \quad k = 1, 2, \dots,N-1\, ,\\
  \frac{\partial B_N}{\partial v} &=- \frac{\partial \wt B_N}{\partial \tv}=1\, ,\quad
  \frac{\partial B_k}{\partial \tu_j}  = \frac{\partial \wt B_k}{\partial u_j} = \frac{\partial B_k}{\partial \tv} = \frac{\partial \wt B_k}{\partial v} =0\, ,\quad
  \mbox{for}\quad k,j=1,\dots,N\, .
 \end{align}
\end{subequations}

Now, it is straightforward to find the determinant of the matrix $\mathds{B}$:
\be
- \log \det \mathds{B} =2 (N-1) \left[ \log \left( \frac{\beta}{2 \pi} \right) - \frac{i \pi}{2}\right] -2 N \log N+\pi i N  \, .
\ee

The high-temperature limit of the index, at finite $N$, may then be written as
\be
\begin{aligned}
 \label{KW:index:final:hight}
 \log Z & = - \frac{1}{\beta} \bigg\{ \sum_{i \neq j}^{N} \left[ F' \left( u_i - u_j \right) + F' \left( \tu_i -\tu_j \right) \right]
 + \frac{2 \pi^2}{3} (N-1)
 \\ & + \sum_{i,j=1}^{N} \sum_{ \substack{I = 1,2: + \\ I = 3, 4: -}} (\fn_I - 1) F' \left[ \pm \left( u_i - \tilde u_j \pm \Delta_I \right) \right] \bigg\}
 - 2 N \log N + \pi i ( N + 1 ) \, .
\end{aligned}
\ee
Plugging the solution \eqref{KW:sol:final} to the BAEs back into the index \eqref{KW:index:final:hight} we find
\be
\begin{aligned}
\label{KW:index:final:hight:bethe:2d central charge}
\log Z = - \frac{N^2}{2 \beta} \sum_{ \substack{a<b \\ (\neq c)}} \Delta_a \Delta_b \fn_c  + \frac{2 \pi^2}{3 \beta} - 2 N \log N + \pi i ( N + 1 ) \, .
\end{aligned}
\ee
As in the case of $\cN=4$ SYM we can also write, to leading order in $1 / \beta$,
\be
\label{conifold:index:final:hight:bethe2}
 \log Z = \frac{\pi^2}{6 \beta} c_{l} \left(\Delta_I , \fn_I\right)
 = - \frac{16 \pi^3}{27 \beta}  \sum_{I=1}^{4} \fn_I \frac{\partial a (\Delta_I)}{\partial \Delta_I} \, ,
\ee
where the second equality is written assuming that $N$ is large.
Here, $c_l$ is the left-moving central charge of the two-dimensional $\cN = (0, 2)$ SCFT obtained by the twisted compactification on $S^2$.
This is related to the trial right-moving central charge $c_r$ by the gravitational anomaly, \ie\;$c_l = c_r - k$.
The central charge $c_r$ takes contribution from the 2D massless fermions, the gauginos
and the zero modes of the chiral fields (the difference between the number of modes of opposite chirality being  $\fn_I-1$) \cite{Benini:2012cz,Benini:2013cda,Benini:2015bwz},
\be
\begin{aligned}
 \label{c2d:anomalyconifold}
 c_{r} \left( \Delta_I , \fn_I \right) = - 3 \Tr  \gamma_3 R^2 \left( \Delta_I \right)
 &= - 3 \left[ 2 \left( N^2 -1 \right) + N^2 \sum_{I = 1}^{4} \left( \fn_I - 1 \right) \left( \frac{\Delta_I}{\pi} - 1 \right)^2 \right] \, ,
\end{aligned}
\ee
while the gravitational anomaly $k$ reads
\be
 k = - \Tr  \gamma_3 = - 2 \left( N^2 -1 \right) - N^2 \sum_{I = 1}^{4} \left( \fn_I - 1 \right) = 2 \, .
\ee

The extremization of $c_{r} \left( \Delta_I , \fn_I \right)$ with respect to the $\Delta_I$ reproduces
the exact central charge of the two-dimensional CFT \cite{Benini:2012cz,Benini:2013cda}. Notice that all the non-anomalous symmetries,
including the baryonic one, enter in the formula \eqref{c2d:anomalyconifold}, which depends on
three independent fluxes and three independent fugacities. As pointed out in \cite{Benini:2015bwz},
the inclusion of baryonic charges is crucial when performing $c$-extremization. 

For later convenience we introduce the following combinations of parameters
\be
 \Upsilon = 2 \sum_{I=1}^4 \left[ \fn_I^2 (\fn_I - 1) - \frac{\fn_1 \fn_2 \fn_3 \fn_4}{\fn_I} \right] \, ,
 \qquad \qquad \Pi = \sum_{ \substack{a<b \\ (\neq c)}} \fn_a \fn_b \fn_c^2 \, .
\ee
The trial central charge $c_r$ as a function of $\Delta_{1,2,3}$ is extremized at
\be
 \frac{\Delta_I}{2 \pi} = \frac{1}{\Upsilon} \left[ 2 \left( \fn_I^3 - \frac{\fn_1 \fn_2 \fn_3 \fn_4}{\fn_I} \right)
 - \fn_I \sum _{J=1}^4 \fn_J^2 \right] , \qquad I=1,2,3 \, .
\ee
At the critical point the function takes the value
\be
 c_r (\fn_I) = 12 \left( \frac{N^2 \Pi}{\Upsilon} + \frac12 \right) \, .
\ee
Let us note that in the saddle-point approximation we can write the density of states $d_{\text{micro}}$ as
\be
 d_{\text{micro}} (\fn_I , q_0) = \frac{\pi}{N^{2N + 3} \sqrt{2 \Upsilon}}
 \left( \frac{\pi}{3} \frac{c_l (\fn_I)}{S_{\text{Cardy}}} \right)^{5/2}
 I_{5 / 2} (S_{\text{Cardy}}) \, ,
\ee
where
\be
 S_{\text{Cardy}} = 2 \pi \sqrt{\frac{c_l(\fn_I) q_0}{6}} \, .
\ee

\section{High-temperature limit of a generic theory}                                         
\label{high-temp limit of the index}

We can easily generalize the previous discussion to the case of general four-dimensional $\cN = 1$ SCFTs.
Our goal is to compute the partition function of $\cN = 1$ gauge theories
on $S^2 \times T^2$ with  a partial topological A-twist along $S^2$.
We identify, as before, the modulus of the torus with the \emph{fictitious} inverse temperature $\beta$, and we are interested in the \emph{high-temperature} limit $(\beta \to 0)$ of the index.
As we take $\beta$ to zero, we can use the asymptotic expansions \eqref{dedekind:hight:S}
and \eqref{theta:hight:S} for the elliptic functions appearing in the supersymmetric path integral \eqref{path integral index}. 
We focus on quiver gauge theories with bi-fundamental and adjoint chiral multiplets and a number $|G|$ of $\SU(N)^{(a)}$ gauge groups.
Eigenvalues $u_i^{(a)}$ and  gauge magnetic fluxes $\fm_i^{(a)}$ have to satisfy the tracelessness condition, \ie\;
\be
\begin{aligned}
\label{tracelessness condition}
\sum_{i = 1}^{N} u_i^{(a)} = 0 \, , \qquad \qquad \sum_{i = 1}^{N} \fm_i^{(a)} = 0 \, .
\end{aligned}
\ee
The magnetic fluxes and the chemical potentials for the global symmetries of the theory fulfill the constraints \eqref{supconstraints:intro} and \eqref{supconstraints2:intro}.
We also assume that $0<\Delta_I<2 \pi$. 

As in the previous examples, the solution to the BAEs is given by
\be
\label{BAEs:sol:SU(N)}
u_i^{(a)} = \cO \left( \beta \right) \, , \qquad \forall \quad i, a \, ,
\ee
and exists (up to discrete symmetries) only for  $\sum_{I \in W} \Delta_I  = 2 \pi$, for each monomial term $W$ in the superpotential, as we checked in many examples.
Due to this constraint, $\Delta_I / \pi$ behaves at all effects like a trial R-symmetry of the theory.

\subsection{Twisted superpotential at high temperature}
\label{twisted superpotential at high-temp}

The general rules for constructing the \emph{high-temperature}
``on-shell" twisted superpotential, \ie\;$\wt\cW(\{u_i^{(a)}\} , \Delta_I) \big|_{\text{BAEs}}$,
of $\cN = 1$ quiver gauge theories to leading order in $1 / \beta$ are:
\begin{enumerate}
 \item A bi-fundamental field with chemical potential $\Delta_{(a,b)}$ transforming in the $({\bf N},\overline{\bf N})$ representation of $\SU(N)_a \times \SU(N)_b$, contributes
 \begin{equation}
 \label{twisted:superpotential:bi-fundamental}
  \frac{i N^2}{\beta} F \left( \Delta_{(a,b)} \right) \, ,
 \end{equation}
 where the function $F$ is defined in \eqref{F:function}.
 \item An adjoint field with chemical potential $\Delta_{(a,a)}$ contributes
 \begin{equation}
 \label{twisted superpotential  adjoint}
 \frac{i \left( N^2 - 1 \right)}{\beta} F \left( \Delta_{(a,a)} \right)  \, .
 \end{equation}
\end{enumerate}

\subsection{The topologically twisted index at high temperature}
\label{The index at high-temp}

Using the dominant solution \eqref{BAEs:sol:SU(N)} to the BAEs we can proceed to compute the topologically twisted index.
Here are the rules for constructing the logarithm of the index at high temperature to leading order in $1 / \beta$:
\begin{enumerate}
 \item For each group $a$, the contribution of the off-diagonal vector multiplet is
 \begin{equation}
  \label{index off-diag vector}
  - \frac{\left(N^2 - 1 \right)}{\beta} \frac{\pi^2}{3} \, .
 \end{equation}
 \item A bi-fundamental field with magnetic flux $\fn_{(a,b)}$ and chemical potential $\Delta_{(a,b)}$ transforming
 in the $({\bf N},\overline{\bf N})$ representation of $\SU(N)_a \times \SU(N)_b$, contributes
 \begin{equation}
 \label{index bi-fundamental}
 - \frac{N^2}{\beta} \left(\fn_{(a,b)} - 1 \right) F' \left( \Delta_{(a,b)} \right) \, .
 \end{equation}
 \item An adjoint field with magnetic flux $\fn_{(a,a)}$ and chemical potential $\Delta_{(a,a)}$, contributes
 \begin{equation}
 \label{index adjoint}
 - \frac{N^2 - 1}{\beta} \left(\fn_{(a,a)} - 1 \right) F' \left( \Delta_{(a,a)} \right) \, .
 \end{equation}    
\end{enumerate}

\subsection{An index theorem for the twisted matrix model}
\label{4d index theorem}

The high-temperature behavior of the index, to leading order in $1 / \beta$ and $N$,
can be captured by a simple universal formula involving the twisted superpotential and its derivatives.
Let us recall some of the essential ingredients that we need in the following.

The R-symmetry 't Hooft anomaly of UV four-dimensional $\cN = 1$ SCFTs is given by
\be
\begin{aligned}
\label{tHoof:linear:cubic:anomalies}
\Tr R^{\alpha} (\Delta_I) & = |G| \text{ dim } \SU(N) + \sum _{I} \text{ dim }\fR_I \left( \frac{\Delta_I}{\pi} - 1 \right)^{\alpha} \, ,
\end{aligned}
\ee
where the trace is taken over all the bi-fundamental fermions and gauginos
and $\text{dim }\fR_I$ is the dimension of the respective matter representation with R-charge $\Delta_I / \pi$.
On the other hand, the trial right-moving central charge of the IR two-dimensional $\cN = (0, 2)$ SCFT on $T^2$ can be computed from the spectrum of massless fermions \cite{Benini:2012cz,Benini:2013cda,Benini:2015bwz}.
These are gauginos with chirality $\gamma_3=1$ for all the gauge groups and  fermionic zero modes for each chiral field,
with a difference between the number  of fermions of opposite chiralities equal to $\fn_I-1$.
The central charge is related by the $\cN = 2$ superconformal algebra to the R-symmetry anomaly \cite{Benini:2012cz,Benini:2013cda},
and is given by
\be
\begin{aligned}
 \label{c2d:anomaly0}
 c_{r} \left( \Delta_I , \fn_I \right) & = - 3 \Tr  \gamma_3 R^2 \left( \Delta_I \right) 
 = - 3 \left[ |G| \text{ dim }\SU(N) + \sum_{I} \text{ dim }\fR_I \left( \fn_I - 1 \right) \left( \frac{\Delta_I}{\pi} - 1 \right)^2 \right]  \, .
\end{aligned}
\ee
By an explicit calculation we see that Eq.\,\eqref{c2d:anomaly0} can be rewritten as
\be
\begin{aligned}
\label{QFTrelation}
 c_{r} \left( \Delta_I, \fn_I \right) & = - 3 \Tr R^3 \left( \Delta_I \right) - \pi \sum_{I} \left[ \left( \fn_I - \frac{\Delta_I}{\pi} \right) \frac{\partial \Tr R^3 \left( \Delta_I \right)}{\partial \Delta_I} \right]
\, ,
\end{aligned}
\ee
where we used the relation \eqref{tHoof:linear:cubic:anomalies}.%
\footnote{Notice that, in evaluating the right hand side of \eqref{QFTrelation},
we can consider all the $\Delta_I$  as independent variables and impose the constraints $\sum_{I \in W} \Delta_I  = 2 \pi$
only after differentiation.
This is due to the form of the differential operator in \eqref{QFTrelation} and the constraints $\sum_{I \in W} \fn_I  = 2$.}
Moreover, the trial left-moving central charge of the two-dimensional $\cN = (0, 2)$ theory reads
\be
\label{c2d:left:right:gr}
c_l = c_r - k \, , 
\ee
where $k$ is the gravitational anomaly and is given by
\be
\label{gr:anomaly:theorem}
 k = - \Tr \gamma_3 = - |G| \text{ dim }\SU(N) - \sum_{I} \text{ dim }\fR_I \left( \fn_I - 1 \right) \, .
\ee

For theories of D3-branes with an AdS dual, to leading order in $N$, the linear R-symmetry 't Hooft anomaly of the four-dimensional theory vanishes, \ie\;$\Tr R= \cO(1)$ and $a=c$ \cite{Henningson:1998gx}.
Using the parameterization of a trial R-symmetry in terms of $\Delta_I / \pi$,  this is equivalent to
\be
\label{index theorem:constraint0}
\pi |G| + \sum_{I} \left( \Delta_I - \pi \right) = 0 \, ,
\ee
where the sum is taken over all matter fields (bi-fundamental and adjoint) in the quiver.
Similarly,  we have
\be
\label{index theorem:constraint1}
|G| + \sum_{I} \left( \fn_I - 1 \right) = 0 \, .
\ee
This is simply $k = - \Tr \gamma_3 = \cO(1)$, to leading order in $N$.

The {\it index theorem} can be expressed as:

\begin{theorem}
\label{theorem:1}
The topologically twisted index of any $\cN = 1$ $\SU(N)$ quiver gauge theory placed on  $S^2 \times T^2$ to leading order in $1/\beta$ is given  by
\be
 \label{index theorem:2d central charge}
 \log Z \left( \Delta_I, \fn_I \right) = \frac{\pi^2}{6 \beta} c_{l} \left( \Delta_I, \fn_I \right) \, ,
\ee
which is Cardy's universal formula for the asymptotic density of states in a CFT$_2$ \cite{Cardy:1986ie}. We can write the  extremal value of  the twisted superpotential $\wt\cW ( \Delta_I )$   as
\be
\begin{aligned}
 \label{index theorem: twisted superpotential:finiteN}
 \wt\cW ( \Delta_I ) \equiv - i \wt\cW ( \{u_i^{(a)}\}, \Delta_I )\big|_{\text{BAEs}}
 & = \frac{\pi^3}{6 \beta} \left[ \Tr R^3 ( \Delta_I ) - \Tr R ( \Delta_I ) \right]
 =  \frac{16 \pi^3}{27 \beta} \left[ 3 c ( \Delta_I ) - 2 a ( \Delta_I) \right] \, .
\end{aligned}
\ee
For theories of D3-branes at large $N$, the index can be recast as 
\be
\begin{aligned}
 \label{index theorem:attractor}
 \log Z \left( \Delta_I, \fn_I \right) & = - \frac{3}{\pi} \wt\cW ( \Delta_I )
 - \sum_{I} \left[ \left( \fn_I - \frac{\Delta_I}{\pi} \right) \frac{\partial \wt\cW( \Delta_I )}{\partial \Delta_I} \right]
 = \frac{\pi^2}{6 \beta} c_{r} \left( \Delta_I, \fn_I \right) \, ,
\end{aligned}
\ee
where $\wt\cW ( \Delta_I )$ reads
\be
\begin{aligned}
 \label{index theorem: twisted superpotential:largeN}
 \wt\cW ( \Delta_I )
 = \frac{16 \pi^3}{27 \beta} a ( \Delta_I) \, .
\end{aligned}
\ee
\end{theorem}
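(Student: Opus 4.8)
The plan is to build the theorem out of three ingredients already in place: the high-temperature rules for $\wt\cW$ and $\log Z$ collected in subsections \ref{twisted superpotential at high-temp} and \ref{The index at high-temp}, the dominant Bethe-ansatz saddle \eqref{BAEs:sol:SU(N)}, and a short list of elementary polynomial identities for the function $F$ of \eqref{F:function}. First I would evaluate the rules \eqref{twisted:superpotential:bi-fundamental}, \eqref{twisted superpotential  adjoint} and \eqref{index off-diag vector}--\eqref{index adjoint} on the solution $u_i^{(a)}=\mathcal{O}(\beta)$: because $F$ and $F'$ are smooth at the origin and the Bethe corrections are $\mathcal{O}(\beta)$ relative to the leading $1/\beta$ growth, this collapses $\wt\cW(\Delta_I)$ to $\tfrac1\beta\sum_I\dim\fR_I\,F(\Delta_I)$ and, to leading order in $1/\beta$, $\log Z(\Delta_I,\fn_I)$ to $-\tfrac1\beta\big[\,|G|\,\tfrac{\pi^2}{3}\dim\SU(N)+\sum_I\dim\fR_I(\fn_I-1)F'(\Delta_I)\,\big]$, where $\dim\fR_I=N^2$ for a bi-fundamental and $N^2-1$ for an adjoint and the sums run over all chiral multiplets. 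The whole statement then reduces to matching these polynomials against the 't Hooft-anomaly combinations of \eqref{tHoof:linear:cubic:anomalies}, \eqref{c2d:anomaly0} and \eqref{gr:anomaly:theorem}.

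The core of the argument is purely algebraic. On the relevant branch $0<\Delta<2\pi$ one has $F(\Delta)=\tfrac16\,\Delta(\Delta-\pi)(\Delta-2\pi)$ and $F'(\Delta)=\tfrac12\Delta^2-\pi\Delta+\tfrac{\pi^2}{3}$, so that the two field-wise identities $6F(\Delta)=(\Delta-\pi)^3-\pi^2(\Delta-\pi)$ and $\tfrac{6}{\pi^2}F'(\Delta)=3(\Delta/\pi-1)^2-1$ follow by direct expansion, with $\tfrac{\pi^2}{3}=F'(0)$ supplying the $\gamma_3=1$ gaugino contribution. Summing the first over the matter content and comparing with \eqref{tHoof:linear:cubic:anomalies} --- the constant $|G|\dim\SU(N)$ pieces cancelling in the difference $\Tr R^3-\Tr R$ --- turns $\wt\cW=\tfrac1\beta\sum_I\dim\fR_I F(\Delta_I)$ into \eqref{index theorem: twisted superpotential:finiteN}, the $a,c$ form being then just \eqref{generalac}. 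Feeding the second identity into \eqref{c2d:anomaly0} and using $c_l=c_r-k$ from \eqref{c2d:left:right:gr}--\eqref{gr:anomaly:theorem} turns the index rules into $\log Z=\tfrac{\pi^2}{6\beta}c_l$, i.e.\ \eqref{index theorem:2d central charge}. The superpotential constraints \eqref{supconstraints:intro} (together with $\sum_{I\in W}\Delta_I\in2\pi\mathbb{Z}$) enter only to guarantee the existence of the dominant saddle \eqref{BAEs:sol:SU(N)} and to justify working on the branch $0<\Delta_I<2\pi$.

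For the large-$N$ D3-brane refinement I would use $\Tr R=\mathcal{O}(1)$ --- equivalently \eqref{index theorem:constraint0}--\eqref{index theorem:constraint1} --- and $a=c$. Then \eqref{index theorem: twisted superpotential:finiteN} collapses to $\wt\cW=\tfrac{\pi^3}{6\beta}\Tr R^3=\tfrac{16\pi^3}{27\beta}a$, which is \eqref{index theorem: twisted superpotential:largeN}. Inserting $\wt\cW=\tfrac{\pi^3}{6\beta}\Tr R^3$ into the differential-operator identity \eqref{QFTrelation} rewrites $\tfrac{\pi^2}{6\beta}c_r$ as $-\tfrac3\pi\wt\cW-\sum_I(\fn_I-\Delta_I/\pi)\partial_{\Delta_I}\wt\cW$, so that \eqref{index theorem:attractor} follows once one notes that $c_r=c_l$ to leading order, since $k=-\Tr\gamma_3=\mathcal{O}(1)$.

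The main obstacle I anticipate is not the polynomial manipulation but keeping the subleading constants straight: carefully distinguishing $\dim\SU(N)=N^2-1$ from the $N^2$ of a bi-fundamental, and checking that the node-wise and per-superpotential-term constraints on the $\fn_I$ and $\Delta_I$ are precisely what make the residual ``long-range'' constants recombine into $\Tr R$ and $\Tr R^3$ with nothing left over. A secondary caveat is that the dominant saddle \eqref{BAEs:sol:SU(N)}, $u_i^{(a)}=\mathcal{O}(\beta)$, has so far been confirmed case by case for generic quivers rather than established in general; I would either take it as a hypothesis or restrict, as the statement implicitly does, to quivers for which it has been verified.
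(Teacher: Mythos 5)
Your proposal is correct, and for the first two claims --- the Cardy formula \eqref{index theorem:2d central charge} and the anomaly form of $\wt\cW$ in \eqref{index theorem: twisted superpotential:finiteN} --- it coincides with the paper's own proof: evaluate the high-temperature rules on the $u_i^{(a)}=\cO(\beta)$ saddle and match the resulting polynomials in $F$, $F'$ against $\Tr R^3-\Tr R$ and $c_l=c_r+\Tr\gamma_3$ using exactly the two field-wise identities you list. Where you genuinely diverge is in the derivation of the differential-operator formula \eqref{index theorem:attractor}. The paper proves its first equality \emph{directly from the matrix-model rules}: it promotes $\pi$ to a formal variable $\fakebold{\pi}$, uses Euler's relation for the degree-3 homogeneous function $\wt\cW(\Delta_I,\fakebold{\pi})$, and checks term by term that $-\frac{3}{\pi}\wt\cW-\sum_I(\fn_I-\Delta_I/\pi)\partial_{\Delta_I}\wt\cW$ reproduces each contribution \eqref{index off-diag vector}--\eqref{index adjoint}, with the constraint \eqref{index theorem:constraint0} needed to recover the vector-multiplet piece. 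You instead chain the already-proved Cardy formula with $c_l=c_r+\cO(1)$ at large $N$ and the purely anomaly-theoretic identity \eqref{QFTrelation}, together with $\wt\cW=\frac{\pi^3}{6\beta}\Tr R^3$. Both routes are valid; the paper's version has the advantage of establishing the operator identity at the level of the matrix model (so it manifestly parallels the three-dimensional index theorem of chapter \ref{ch:2}), while yours is shorter because it outsources the combinatorics to \eqref{QFTrelation}, which the paper verifies separately by the same homogeneity algebra. Your closing caveats --- the $N^2$ versus $N^2-1$ bookkeeping and the fact that the dominance of the saddle \eqref{BAEs:sol:SU(N)} is verified case by case rather than proven in general --- are both accurate and match the paper's own qualifications.
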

\begin{proof} Observe first that again we can consider all the $\Delta_I$ in \eqref{index theorem:attractor} as independent variables and impose the constraints $\sum_{I \in W} \Delta_I  = 2 \pi$ only after differentiation. This is due to the form of the differential operator in \eqref{index theorem:attractor} and $\sum_{I \in W} \fn_I  = 2$.
To prove the first equality in \eqref{index theorem:attractor}, we promote the explicit factors of $\pi$, appearing in \eqref{twisted:superpotential:bi-fundamental} and \eqref{twisted superpotential adjoint}, to a formal variable $\fakebold{\pi}$.
Notice that the ``on-shell'' twisted superpotential $\wt\cW(\Delta_I, \fakebold{\pi})$, at large $N$, is a homogeneous function of $\Delta_I$ and $\fakebold{\pi}$, \ie\;
\begin{equation}
  \wt\cW(\lambda \Delta_I, \lambda \fakebold{\pi}) = \lambda^3 \, \wt\cW(\Delta_I, \fakebold{\pi}) \, .
\end{equation}
Hence,
\begin{equation}\label{hom}
 \frac{\partial \wt\cW(\Delta_I, \fakebold{\pi})}{\partial \fakebold{\pi}} =
 \frac{1}{\fakebold{\pi}} \left[ 3 \, \wt\cW(\Delta_I) -\sum_I  \Delta_I \frac{\partial \wt\cW(\Delta_I)}{\partial \Delta_I} \right]\, .
\end{equation}
Now, we consider a generic quiver gauge theory with matters in bi-fundamental and adjoint representations of the gauge group.
They contribute to the twisted superpotential $\wt\cW (\Delta_I, \fakebold{\pi})$ as written in \eqref{twisted:superpotential:bi-fundamental} and \eqref{twisted superpotential adjoint}, respectively.
Let us calculate the derivative of $\wt\cW(\Delta_I, \fakebold{\pi})$ with respect to $\Delta_{I}$:
 \be
 \label{proof bf:Delta}
 - \sum_{I} \fn_I \frac{\partial \wt\cW(\Delta_I, \fakebold{\pi})}{\partial \Delta_I} =
 - \frac{N^2}{\beta} \sum_{I} \fn_I\; F' \left( \Delta_{I} \right) \, .
 \ee
Next, we take the derivative of the twisted superpotential with respect to $\fakebold{\pi}$:
\begin{align}
 \label{proof bf:pi}
 - \sum_{I} \frac{\partial \wt\cW(\Delta_I, \fakebold{\pi})}{\partial \fakebold{\pi}} =
 \frac{N^2}{\beta} \sum_{I} \; F' \left( \Delta_{I} \right)
 - \frac{N^2}{\beta} \sum_{I} \left( \frac{\fakebold{\pi}^2}{3} - \frac{\fakebold{\pi}}{3} \Delta_{I} \right) \, .
\end{align}
Using \eqref{hom} and combining \eqref{proof bf:Delta} with the first term of \eqref{proof bf:pi} as in the right hand side of \eqref{index theorem:attractor}, 
we obtain the contribution of matter fields \eqref{index bi-fundamental} and \eqref{index adjoint} to the index.
The contribution of the second term in \eqref{proof bf:pi} to \eqref{index theorem:attractor} can be written as
\be
 - \frac{N^2}{\beta} \frac{\fakebold{\pi}}{3} \sum_I \left( \fakebold{\pi} - \Delta_I \right) =
 - \frac{N^2}{\beta} \frac{\fakebold{\pi}^2}{3} |G| \, ,
\ee
where we used the constraint \eqref{index theorem:constraint0}.
This is precisely the contribution of the off-diagonal vector multiplets \eqref{index off-diag vector} to the index at large $N$.

Parameterizing the trial R-symmetry in terms of $\Delta_I / \pi$,
we can prove \eqref{index theorem: twisted superpotential:finiteN}:
\be
\begin{aligned}
 \wt\cW ( \Delta_I )
 & = \frac{1}{\beta} \sum_{I } \text{ dim } \fR_I \; F \left( \Delta_I \right)
 = \frac{1}{6 \beta} \sum_{I} \text{ dim } \fR_I \left[ \left(\Delta_I - \pi \right)^3 - \pi^2 \left( \Delta_I - \pi \right) \right] \\
 & = \frac{\pi^3}{6 \beta} \left[\sum _{I} \text{ dim } \fR_I \left( \frac{\Delta_I}{\pi} - 1 \right)^3
 - \sum _{I} \text{ dim } \fR_I \left( \frac{\Delta_I}{\pi} - 1 \right)\right] \\
 & = \frac{\pi^3}{6 \beta} \left[ \Tr R^3 \left( \Delta_I \right) - \Tr R \left( \Delta_I \right) \right] \, ,
\end{aligned}
\ee
which at large $N$, due to \eqref{index theorem:constraint0}, is equal to \eqref{index theorem: twisted superpotential:largeN}.

Finally, we need to show that the high-temperature limit of the index is given by the Cardy formula \eqref{index theorem:2d central charge}.
Bi-fundamental and adjoint fields contribute to the index according to \eqref{index bi-fundamental} and \eqref{index adjoint}, respectively.
We thus have
\be
\begin{aligned}
 \log Z \left( \Delta_I, \fn_I \right) & = - \frac{1}{\beta} \left[ \frac{\pi^2}{3} |G| \text{ dim }\SU(N) + \sum_{I} \text{ dim }\fR_I \; \left( \fn_I - 1 \right) F' \left( \Delta_I \right) \right] \\
 & = - \frac{\pi^2}{6 \beta} \left\{ 2 |G| \text{ dim }\SU(N) + \sum_{I} \text{ dim }\fR_I \; \left( \fn_I - 1 \right) \left[ 3 \left( \frac{\Delta_I}{\pi} - 1 \right)^2 - 1 \right] \right\} \\
 & = \frac{\pi^2}{6 \beta} \left[ c_{r} \left( \Delta_I, \fn_I \right) + \Tr \gamma_3 \right]
 = \frac{\pi^2}{6 \beta} c_l \left( \Delta_I, \fn_I \right) \, ,
\end{aligned}
\ee
where we used \eqref{c2d:anomaly0} and \eqref{c2d:left:right:gr} in the third and the fourth equality, respectively.
For quiver gauge theories fulfilling the constraint \eqref{index theorem:constraint1} the above formula reduces to the second equality in \eqref{index theorem:attractor} at large $N$. This completes the proof.
\end{proof}

It is worth stressing  that, when using formula \eqref{index theorem:attractor}, the linear  relations among the $\Delta_I$ can be imposed after differentiation.
It is always possible, ignoring some linear relations, to parameterize $\wt\cW (\Delta_I)$ such that it becomes a homogeneous function of degree 3 in the chemical potentials $\Delta_I$ \cite{Benvenuti:2006xg}.
With this parameterization the index theorem becomes
\be
 \log Z \left( \Delta_I, \fn_I \right) = - \sum_{I} \fn_I \frac{\partial \wt\cW ( \Delta_I )}{\partial \Delta_I} \, .
\ee
As we have seen, this is indeed the case for  $\cN = 4$ SYM and the Klebanov-Witten theory.
We note that our result is very similar to that obtained for the large $N$ limit of the topologically twisted index
of three-dimensional $\cN \geq 2$ Yang-Mills-Chern-Simons-matter theories placed on $S^2 \times S^1$.

%

\chapter[An extremization principle for the entropy of BPS black holes in AdS\texorpdfstring{$_5$}{(5)}]{An extremization principle for the entropy of BPS black holes in AdS$\bm{_5}$}
\label{ch:6}

\ifpdf
    \graphicspath{{Chapter6/Figs/Raster/}{Chapter6/Figs/PDF/}{Chapter6/Figs/}}
\else
    \graphicspath{{Chapter6/Figs/Vector/}{Chapter6/Figs/}}
\fi

\section{Introduction}
\label{sec:introduction}

As we have shown in the previous chapters, the entropy of a class of dyonic BPS black holes in AdS$_4$
with magnetic and electric charges $(\fn_i, q_i)$ can be obtained as the Legendre transform
of the topologically twisted index $Z (\fn_i , \Delta_i)$, which is a function of magnetic fluxes $\fn_i$
and chemical potentials $\Delta_i$ for the global symmetries of the dual field theory:
\be
 S_{\text{BH}} (\fn_i , q_i) = \log Z (\fn_i , \bar\Delta_i) - i \sum_i q_i \bar \Delta_i \, .
\ee
Here $\bar\Delta_i$ is the extremum of $\cI(\Delta_i)=\log Z(\fn_i,\Delta_i)  - i \sum_i q_i \Delta_i$.
This procedure dubbed ${\cal I}-$extremization as we explained in details in section \ref{sec:intro:I-extremization}.

It is natural to ask what would be the analogous of this construction in five dimensions.
In this chapter, we humbly look at the gravity side of the story and try to understand
what kind of extremization can reproduce the entropy of the supersymmetric rotating black holes.
Unfortunately, the details of the attractor mechanism for rotating black holes in five-dimensional
gauged supergravity are not known but we can nevertheless find an extremization principle for the entropy.
The final result is quite  surprising and intriguing. 

We consider  the class of supersymmetric rotating black holes found and studied in \cite{Gutowski:2004ez,Gutowski:2004yv,Chong:2005da,Chong:2005hr,Kunduri:2006ek}. They are asymptotic to  AdS$_5\times S^5$
and depend on three electric charges  $Q_I$ $(I = 1,2,3)$, associated with rotations in $S^5$, and two angular momenta $J_\phi, J_\psi$ in AdS$_5$.
Supersymmetry actually requires a constraint among the charges and only four of them are independent.
We show that the Bekenstein-Hawking entropy of the black holes can be obtained as the Legendre transform
of the quantity\footnote{Notice that one can write the very same entropy as the result of a different extremization
in the context of the Sen's entropy functional \cite{Sen:2005wa,Morales:2006gm}.
The two extremizations are over different quantities and use different charges.}  
\be
\begin{aligned}
  \label{modified:Casimir:SUSY:N=4}
 E= - i \pi  N^2  \frac{\Delta_1 \Delta_2 \Delta_3}{\omega_1 \omega_2} \, ,
\end{aligned} \ee
where $\Delta_I$ are  chemical potentials conjugated to the electric charges $Q_I$ and $\omega_{1,2}$
chemical potentials conjugated to the angular momenta $J_\phi, J_\psi$.
The constraint among charges is reflected in the following constraint among chemical potentials,
\be
\begin{aligned}
 \label{constraint:N=4}
 \Delta_1 + \Delta_2 + \Delta_3 + \omega_1 + \omega_2 = 1 \, .
\end{aligned}
\ee
To further motivate the result  \eqref{modified:Casimir:SUSY:N=4} we shall consider the case of equal angular momenta $J_\psi=J_\phi$.
In this limit, the black hole has an enhanced $\SU(2)\times \U(1)$ isometry
and it can be reduced along the $\U(1)$ to a static dyonic black hole in four dimensions.
We show that, upon dimensional reduction, the extremization problem
based on \eqref{modified:Casimir:SUSY:N=4} coincides with the attractor mechanism in four dimensions,
which is well understood for static BPS black holes \cite{Cacciatori:2009iz,DallAgata:2010ejj,Hristov:2010ri,Katmadas:2014faa,Halmagyi:2014qza,Klemm:2016wng}. 

It is curious to observe that the expression \eqref{modified:Casimir:SUSY:N=4} is {\it formally} identical
to the supersymmetric Casimir energy for ${\cal N}=4$ SYM, as derived, for example, in \cite{Bobev:2015kza} and reviewed in appendix \ref{AppC}.
It appears in the relation $Z_{{\cal N}=4}=\e^{- E} I$ between the partition function $Z_{{\cal N}=4}$
on $S^3\times S^1$ and the superconformal index $I$. Both the partition function and the superconformal index are functions
of a set of chemical potentials $\Delta_I$ $(I=1,2,3)$ and $\omega_i$ $(i = 1,2)$ associated with the R-symmetry generators $\U(1)^3\in \SO(6)$ and
the two angular momenta $\U(1)^2\in \SO(4)$, respectively. Since the symmetries that appear in the game must commute with
the preserved supercharge, the index and the partition function are actually functions of only four independent chemical potentials,
precisely as our quantity $E$. The constraint among chemical potentials is usually imposed as $\sum_{I=1}^3 \Delta_I +\sum_{i=1}^2 \omega_i =0$.
Since chemical potentials in our notations are periodic of period 1, our constraint \eqref{constraint:N=4} reflects a different
choice for the angular ambiguities. We comment about  the interpretation of this result in the discussion section,
leaving the proper understanding to future work.
 
The chapter is  organized as follows.
In section \ref{sec:5D gauged sugra} we give a short overview of $\cN=2$ $\text{D}=5$ FI gauged supergravity.
In section \ref{sec:AdS5 black holes} we review the basic features of the BPS rotating black holes of interest.
In section \ref{sec:Casimir and entropy}, we show that the Bekenstein-Hawking entropy of
the black hole can be obtained as the Legendre transform of the quantity \eqref{modified:Casimir:SUSY:N=4}.
In section  \ref{sec:a limiting case}, we perform the dimensional reduction of
the black holes with equal angular momenta down to four dimensions and we prove that
the extremization of \eqref{modified:Casimir:SUSY:N=4} is equivalent to
the attractor mechanism for four-dimensional static BPS black holes in gauged supergravity.
We conclude in section \ref{sec:discussion} with discussions. 

\section[\texorpdfstring{$\cN=2$}{N=2}, \texorpdfstring{${\rm D}=5$}{D=5} gauged supergravity]{$\fakebold{\cN=2}$, $\fakebold{{\rm D}=5}$ gauged supergravity}
\label{sec:5D gauged sugra}

The theory we shall be considering, following the conventions of \cite{Looyestijn:2010pb},
is the five-dimensional $\cN = 2$ FI gauged supergravity coupled to $n_{\rm V}$ vector multiplets.
It is based on a homogeneous real cubic polynomial
\bea
 \cV \left( L^I \right) = \frac16 C_{I J K} L^I L^J L^K \, ,
\eea
where $I, J , K = 1, \ldots , n_{{\rm V}}$ and $C_{I J K}$ is a fully symmetric third-rank tensor appearing in the Chern-Simons term.
Here, $L^I(\varphi^i)$ are real scalars satisfying the constraint $\cV=1$.
The action for the bosonic sector reads \cite{Gunaydin:1983bi,Gunaydin:1984ak}
\bea
 S^{(5)} = \int_{\bR^{4,1}} & \bigg[ \frac12 R^{(5)} \star_5 1 - \frac12 G_{I J} \rd L^I \wedge \star_5 \rd L^J
 - \frac12 G_{I J} F^I \wedge \star_5 F^J \\
 & - \frac{1}{12} C_{I J K} F^I \wedge F^J \wedge A^K + \chi^2 V \star_5 1 \bigg] \, ,
\eea
where $R^{(5)}$ is the Ricci scalar, $F^I \equiv \rd A^I$ is the Maxwell field strength, and $G_{I J}$ can be written in terms of $\cV$,
\bea
 \label{metric:gauge:kinetic:V}
 G_{I J} = - \frac12 \partial_I \partial_J \log \cV \big|_{\cV = 1} \, .
\eea
We also set $8 \pi G^{(5)}_{\text{N}} = 1$.
Furthermore, it is convenient to define
\bea
 \label{real:sections}
 L_I \equiv \frac16 C_{I J K} L^J L^K \, .
\eea
Therefore, we find that
\bea
 \label{metric:gauge:kinetic:sections}
 G_{I J} = \frac92 L_I L_J- \frac12 C_{I J K} L^K \, , \qquad \qquad L^I L_I = 1 \, ,
\eea
and
\bea
 L_I = \frac{2}{3} G_{I J} L^J \, , \qquad \qquad L^I = \frac{3}{2} G^{I J} L_J \, ,
\eea
where $G_{I K} G^{K J} = \delta^I_J$.
The inverse of $G_{I J}$ is given by
\bea
 G^{I J} = 2 L^I L^J - 6 C^{I J K} L_K \, ,
\eea
where $C^{I J K} \equiv C_{I J K}$. We then have
\bea
 L^I = \frac92 C^{I J K} L_J L_K \, .
\eea
The metric on the scalar manifold is defined by
\bea
 g_{i j} = \partial_i L^I \partial_j L^J G_{I J} \big|_{\cV = 1} \, ,
\eea
and the scalar potential reads
\bea
 V (L) = V_I V_J \left(6 L^I L^J - \frac92 g^{i j} \partial_i L^I \partial_j L^J \right) \, .
\eea
Here, $V_I$ are FI constants which are related to the vacuum value $\bar L_I$ of the scalars $L_I$,
\be
 \bar L_I = \xi^{-1} V_I \, ,
\ee
where $\xi^3 = \frac92 C^{I J K} V_I V_J V_K$ and the AdS$_5$ radius of curvature is given by $g^{-1} \equiv \left( \xi \chi \right)^{-1}$.
A useful relation of very special geometry is,
\bea
 g^{i j} \partial_i L^I \partial_j L^J = G^{I J} - \frac23 L^I L^J \, .
\eea
Thus,
\bea
 V(L) = 9 V_I V_J \left( L^I L^J - \frac12 G^{I J} \right) \, .
\eea

\section[Supersymmetric AdS\texorpdfstring{$_5$}{(5)} black holes in \texorpdfstring{$\U(1)^3$}{U(1)**3} gauged supergravity]{Supersymmetric AdS$\fakebold{_5}$ black holes in $\fakebold{\U(1)^3}$ gauged supergravity}
\label{sec:AdS5 black holes}

In this section we will briefly review a class of supersymmetric, asymptotically AdS, black holes
\cite{Gutowski:2004ez,Gutowski:2004yv,Chong:2005hr,Chong:2005da,Kunduri:2006ek}
of $D = 5$ $\U(1)^3$ gauged supergravity \cite{Gunaydin:1983bi,Gunaydin:1984ak}.
They can be embedded in type IIB supergravity as an asymptotically AdS$_5 \times S^5$ solution
which is exactly the decoupling limit of the rotating D3-brane \cite{Cvetic:1999xp}.
When lifted to type IIB supergravity they preserve only two real supercharges \cite{Gauntlett:2004cm}.
They are characterized by their mass, three electric charges and two  angular momenta with a constraint, and are
holographically dual to $1/16$ BPS states of $\cN = 4$ $\SU(N)$ SYM on $S^3 \times \mathbb{R}$ at large $N$.

We shall primarily be interested in the so-called $\cN = 2$ gauged STU model $(n_{\text{V}} = 3)$.
The only nonvanishing triple intersection numbers are $C_{1 2 3} = 1$ (and cyclic permutations). 
The bosonic sector of the theory comprises three gauge fields which correspond
to the Cartan subalgebra of the $\SO(6)$ isometry of $S^5$,
the metric, and three real scalar fields subject to the constraint
\be
 \label{XI:constraint}
 L^1 L^2 L^3 = 1 \, .
\ee
They take vacuum values $\bar L^I = 1$. 
The five-dimensional black hole metric can be written as \cite{Kunduri:2006ek}
\be
 \label{AdS5:generic:metric}
 \rd s^2 = - (H_1 H_2 H_3)^{-2/3} \left( \rd t + \omega_\psi \rd \psi + \omega_\phi \rd \phi \right)^2
 + \left( H_1 H_2 H_3 \right)^{1/3} h_{m n} \rd x^m \rd x^n \, ,
\ee
where
\be
 H_I = 1 + \frac{\sqrt{\Xi_a \Xi_b} \left( 1 + g^2 \mu_I \right) - \Xi_a \cos^2 \theta - \Xi_b \sin^2 \theta }{g^2 r^2} \, ,
\ee
\be
\begin{aligned}
 h_{m n} \rd x^m \rd x^n & = r^2 \bigg\{ \frac{\rd r^2}{\Delta_r} + \frac{\rd \theta^2}{\Delta_\theta}
 + \frac{\cos^2\theta}{\Xi_b^2} \left[ \Xi_b + \cos^2\theta \left( \rho^2 g^2 + 2 \left( 1+ b g \right) \left( a + b \right) g \right) \right] \rd \psi^2 \\
 & + \frac{\sin^2\theta}{\Xi_a^2} \left[ \Xi_a + \sin^2\theta \left( \rho^2 g^2 + 2 \left( 1+ a g \right) \left( a + b \right) g \right) \right] \rd \phi^2 \\
 & + \frac{2 \sin^2\theta \cos^2\theta}{\Xi_a \Xi_b} \left[ \rho^2 g^2 + 2 \left( a + b \right) g + \left( a + b \right)^2 g^2 \right] \rd \psi \rd \phi \bigg\} \, ,
\end{aligned} \ee
\be
\begin{aligned}
 \Delta_r & = r^2 \left[ g^2 r^2 + \left( 1 + a g + b g \right)^2 \right] \, , \qquad \qquad \; \;
 \Delta_\theta = \Xi_a \cos^2\theta + \Xi_b \sin^2\theta \, , \\
 \Xi_a & = 1 - a^2 g^2 \, , \qquad \Xi_b = 1 - b^2 g^2 \, , \qquad \qquad \rho^2 = r^2 + a^2 \cos^2\theta + b^2 \sin^2\theta \, ,
\end{aligned} \ee
\be
\begin{aligned}
 \omega_\psi & = - \frac{g \cos^2\theta}{r^2 \Xi_b} \left[ \rho^4 + \left( 2 r_m^2 + b^2 \right) \rho^2 + \frac12 \left( \beta_2 - a^2 b^2 + g^{-2} \left( a^2 - b^2 \right) \right) \right] \, , \\
 \omega_\phi & = - \frac{g \sin^2\theta}{r^2 \Xi_a} \left[ \rho^4 + \left( 2 r_m^2 + a^2 \right) \rho^2 + \frac12 \left( \beta_2 - a^2 b^2 - g^{-2} \left( a^2 - b^2 \right) \right) \right] \, ,
\end{aligned} \ee
and
\be
\begin{aligned}
 r_m^2 & = g^{-1} (a +b) + a b \, , \\
 \beta_2 & = \Xi_a \Xi_b \left( \mu_1 \mu_2 + \mu_1 \mu_3 + \mu_2 \mu_3 \right)
 - \frac{2 \sqrt{\Xi_a \Xi_b} \left(1 - \sqrt{\Xi_a \Xi_b} \right)}{g^2} \left( \mu_1 + \mu_2 + \mu_3 \right) + \frac{3 \left(1 - \sqrt{\Xi_a \Xi_b} \right)^2}{g^4} \, .
\end{aligned} \ee
The gauge coupling constant $g$ is fixed in terms of the AdS$_5$ radius of curvature, $g = 1 / \ell$.
The coordinates are $\left( t, r, \theta, \phi, \psi \right)$ where $r > 0$ corresponds to the exterior of the black hole, $0 \leq \theta \leq \pi / 2$ and $0 \leq \phi, \psi \leq 2 \pi$.
The scalars read
\be
 L^I = \frac{\left( H_1 H_2 H_3 \right)^{1/3}}{H_I} \, ,
\ee
while the gauge potentials are given by
\be
 A^I = H_I^{-1} \left( \rd t + \omega_\psi \rd \psi + \omega_\phi \rd \phi \right) + U_\psi^I \rd \psi + U_\phi^I \rd \phi \, ,
\ee
where
\be
\begin{aligned}
 U^I_\psi & = \frac{g \cos^2\theta}{\Xi_b} \left[ \rho^2 + 2 r_m^2 + b^2 - \sqrt{\Xi_a \Xi_b} \mu_I + g^{-2} \left( 1 - \sqrt{\Xi_a \Xi_b} \right) \right] \, , \\
 U_\phi^I & = \frac{g \sin^2\theta}{\Xi_a} \left[ \rho^2 + 2 r_m^2 + a^2 - \sqrt{\Xi_a \Xi_b} \mu_I + g^{-2} \left( 1 - \sqrt{\Xi_a \Xi_b} \right) \right] \, .
\end{aligned} \ee
The black hole is labeled by five parameters: $\mu_{1,2,3}, a, b$ where $g^{-1} > a$, $b \geq 0$.
Only four parameters are independent due to the constraint
\be
 \label{gravity:constraint}
 \mu_1 + \mu_2 + \mu_3 = \frac{1}{\sqrt{\Xi_a \Xi_b}} \left[ 2 r_m^2 + 3 g^{-2} \left( 1 - \sqrt{\Xi_a \Xi_b} \right) \right] \, .
\ee
Furthermore, regularity of the scalars for $r \geq 0$ entails that
\be
 g^2 \mu_I > \sqrt{\frac{\Xi_b}{\Xi_a}} - 1 \geq 0 \, ,
\ee
when $a \geq b$. If $a < b$ then the same expression remains valid with $a$ and $b$ interchanged.

\subsection[The asymptotic AdS\texorpdfstring{$_5$}{(5)} vacuum]{The asymptotic AdS$\bm{_5}$ vacuum}

This solution is expressed in the co-rotating frame.
The change of coordinates  $t = \wb t$, $\phi = \wb\phi - g \wb t$, $\psi = \wb\psi - g \wb t$, and $y^2 = r^2 + 2 r_m^2 / 3$
transforms the metric to a static frame at infinity. In order to bring the metric into
a manifestly asymptotically AdS$_5$ spacetime (in the global sense) as $y \to \infty$
we make the following change of coordinates \cite{Kunduri:2006ek}
\be
\begin{aligned}
  \Xi_a Y^2 \sin^2\Theta = \left( y^2 + a^2 \right) \sin^2 \theta \, , \qquad \qquad \Xi_b Y^2 \cos^2\Theta = \left( y^2 + b^2 \right) \cos^2 \theta \, .
\end{aligned} \ee
One gets the line element
\be
\begin{aligned}
 \rd s^2 \simeq - g^2 Y^2 \, \rd\wb{t}^2 + \frac{\rd Y^2}{g^2 Y^2} + Y^2 \left( \rd\Theta^2 + \sin^2\Theta \, \rd\wb\phi^2 + \cos^2\Theta \, \rd\wb\psi^2 \right) \, .
\end{aligned} \ee
The black hole has the Einstein universe $\bR \times S^3$ as its conformal boundary.
In the asymptotically static coordinates, the supersymmetric Killing vector field reads
\be
\begin{aligned}
 V = \frac{\partial}{\partial \wb{t}} + g \frac{\partial}{\partial\wb{\phi}} + g \frac{\partial}{\partial\wb{\psi}} \, ,
\end{aligned} \ee
which is timelike everywhere outside the black hole and is null on the conformal boundary.

\subsection{Properties of the solution}

It is convenient to define the following polynomials,
\be
 \gamma_1 = \mu_1 + \mu_2 + \mu_3 \, , \qquad \gamma_2 = \mu_1 \mu_2 + \mu_1 \mu_3 + \mu_2 \mu_3 \, , \qquad \gamma_3 = \mu_1 \mu_2 \mu_3 \, .
\ee
The black hole carries three $\U(1)^3 \subset \SO(6)$ electric charges in $S^5$ which are given by
\be
\begin{aligned}
 \label{BH:Q:5d:U(1)^3}
 Q_I = \frac{\pi}{4 G_{\rm N}^{(5)}} \left[ \mu_I + \frac{g^2}{2} \left( \gamma_2 - \frac{2 \gamma_3}{\mu_I} \right) \right] \, , \quad
 \text{ for } \quad I = 1, 2, 3 \, ,
\end{aligned} \ee
and two $\U(1)^2 \subset \SO(4)$ angular momenta in AdS$_5$ that read
\be
\begin{aligned}
 \label{BH:J:5d:U(1)^3}
 J_\psi & = \frac{\pi}{4 G_{\rm N}^{(5)}} \left[ \frac{g \gamma_2}{2} + g^3 \gamma_3 + g^{-3} \big(\sqrt{\Xi_a / \Xi_b} - 1 \big) \cJ \right] \, , \\
 J_\phi & = \frac{\pi}{4 G_{\rm N}^{(5)}} \left[ \frac{g \gamma_2}{2} + g^3 \gamma_3 + g^{-3} \big(\sqrt{\Xi_b / \Xi_a} - 1 \big) \cJ \right] \, .
\end{aligned} \ee
Here, $G_{\rm N}^{(5)}$ is the five-dimensional Newton constant and we defined
\be
 \cJ \equiv \prod_{I = 1}^{3} \left( 1 + g^2 \mu_I \right) \, .
\ee
The mass of the black holes is determined by the BPS condition
\be
\begin{aligned}
 \label{BH:BPS relation}
 M = g |J_\phi| + g |J_\psi| + |Q_1| + |Q_2| + |Q_3| \, ,
\end{aligned} \ee
which yields
\be
\begin{aligned}
 \label{BH:M:5d:U(1)^3}
 M = \frac{\pi}{4 G_{\rm N}^{(5)}} \left[
 \gamma_1 + \frac{3 g^2 \gamma_2}{2} + 2 g^4 \gamma_3
 + \frac{\left( \sqrt{\Xi_a} - \sqrt{\Xi_b} \right)^2}{g^2 \sqrt{\Xi_a \Xi_b}} \cJ \right] \, . ~~
\end{aligned} \ee
The solution has a regular event horizon at $r_{\rm h} = 0$ only for nonzero angular momenta in AdS$_5$.
The angular velocities of the horizon, measured with respect to the azimuthal coordinates $\psi$ and $\phi$
of the asymptotically static frame at infinity, are
\be
 \Omega_\psi = \Omega_\phi = g \, .
\ee
The near-horizon geometry is a fibration of AdS$_2$ over a non-homogeneously squashed $S^3$ \cite{Kunduri:2007qy} with area
\be
\begin{aligned}
 \label{BH:area:5d:U(1)^3}
 \text{Area} = 2 \pi^2
 \sqrt{ \gamma_3 \left( 1 + g^2 \gamma_1 \right) - \frac{g^2 \gamma_2^2}{4}
 - \frac{\left( \sqrt{\Xi_a} - \sqrt{\Xi_b} \right)^2}{g^6 \sqrt{\Xi_a \Xi_b}} \cJ} \, . ~~
\end{aligned} \ee
Positivity of the expression within the square root ensures the absence of closed causal curves near $r = 0$.
The Bekenstein-Hawking entropy of the black hole is proportional to its horizon area
and can be compactly written in terms of the physical charges as \cite{Kim:2006he}
\be
\begin{aligned}
 \label{BH:entropy:5d:U(1)^3}
 S_{\text{BH}} & = \frac{\text{Area}}{4 G_{\rm N}^{(5)}}
 = \frac{2 \pi}{g} \sqrt{Q_1 Q_2 + Q_2 Q_3 + Q_1 Q_3 - \frac{\pi}{4 G_{\rm N}^{(5)} g} \left( J_\phi + J_\psi \right)} \, .
\end{aligned} \ee
Finally, let
\be
\begin{aligned}
 \cX_I = \left( 1 + g^2 \mu_I \right) \sqrt{\Xi_a \Xi_b} - \Delta_\theta \, .
\end{aligned} \ee
The values of the scalar fields at the horizon read
\be
\begin{aligned}
 L^I (r_{\rm h}) = \frac{\left( \cX_1 \cX_2 \cX_3 \right)^{1/3}}{\cX_I} \, .
\end{aligned} \ee

In the next section we will obtain  the Bekenstein-Hawking entropy \eqref{BH:entropy:5d:U(1)^3} of the BPS black hole
from an extremization principle.

\section{An extremization principle for the  entropy}
\label{sec:Casimir and entropy}

We shall now extremize the quantity  \eqref{modified:Casimir:SUSY:N=4}, 
and show that the extremum precisely reproduces the entropy of the multi-charge BPS black hole discussed in the previous section.

Let us first introduce some notation that facilitates the comparison with supergravity:
\be
\begin{aligned}
 X^I \equiv \Delta_I \, , \qquad \qquad X^0_{\pm} \equiv \omega_1 \pm \omega_2 \, ,
\end{aligned} \ee
where $I = 1, 2 , 3$.
We shall also use $J^{\pm} \equiv J_\phi \pm J_\psi$,
\be
\begin{aligned}
 J^{+} & = \frac{\pi}{4 G_{\text{N}}^{(5)}} \left[ g \gamma_2 + 2 g^3 \gamma_3 + \frac{\left( \sqrt{\Xi_a} - \sqrt{\Xi_b} \right)^2}{g^3 \sqrt{\Xi_a \Xi_b}} \cJ \right] \, , \\
 J^{-} & = \frac{\pi}{4 G_{\text{N}}^{(5)}} \frac{\Xi_b - \Xi_a}{g^3 \sqrt{\Xi_a \Xi_b}} \cJ \, .
\end{aligned} \ee
Thus, we can rewrite the quantity  \eqref{modified:Casimir:SUSY:N=4}  as
\be
\begin{aligned}
\label{eq:Casimir:rewrittenX}
 E= - \frac{2 \pi^2 i}{g^3 G_{{\rm N}}^{(5)}} \frac{X^1 X^2 X^3}{\left( X^0_{+} \right)^2 - \left( X^0_{-} \right)^2} \, ,
\end{aligned} \ee
where we used the standard relation between gravitational and QFT parameters in the large $N$ limit,
\be
\begin{aligned}
 \frac{\pi}{2 g^3 G_{{\rm N}}^{(5)}} = N^2 \, .
\end{aligned} \ee
In the following we set the unit of the AdS$_5$ curvature $g = 1$.
The entropy of the BPS black hole, at leading order, can be obtained by extremizing the quantity\footnote{This is not the only possible choice of signs. There
are various sign ambiguities in the superconformal index literature as well as in the black hole one that should be fixed in a proper comparison between gravity and field theory.}
\be
\begin{aligned}
 \label{rotating:attractor:4d}
 \cI_{\text{sugra}} = - E\left( X_{\pm}^0, X^I \right) + 2 \pi i \sum_{I = 1}^{3} Q_I X^I - \pi i \left( J^{+} X_{+}^0 + J^{-} X^0_{-} \right) 
  \, ,
\end{aligned} \ee
with respect to $X^I$, $X_{\pm}^0$ and subject to the constraint \eqref{constraint:N=4},
\be
\begin{aligned}
\label{eq:constraintX}
 X_{+}^0 + \sum_{I = 1}^{3} X^I = 1 \, .
\end{aligned} \ee
At this stage, we find it more convenient to work in the basis $z^{\alpha}$ $(\alpha = 0 , 1 , 2 , 3)$ which is related to $\left( X_{\pm}^0 , X^I \right)$ by
\be
\begin{aligned}
 X_{-}^{0} = \frac{z^0}{1 + z^1 + z^2 + z^3} \, , \qquad  X_{+}^{0} = \frac{1}{1 + z^1 + z^2 + z^3} \, ,
 \qquad X^{1,2,3} = \frac{z^{1,2,3}}{1 + z^1 + z^2 + z^3} \, .
\end{aligned} \ee
Hence, in terms of the variables $z^\alpha$ the extremization equations can be written as
\be
\begin{aligned}
 \label{extremization:scalars}
 & \left[ (z^0)^2 -1 \right] \left\{ \left[ (z^0)^2 - 1 \right] c^i + \frac{z^1 z^2 z^3}{z^i} \right\} - 2 z^1 z^2 z^3 = 0 \, , \quad \text{ for } \quad i = 1, 2, 3 \, ,\\
 & c^0 \left[ (z^0)^2 - 1 \right]^2 - 2 z^0 z^1 z^2 z^3 = 0 \, ,
\end{aligned} \ee
where we defined the constants
\be
\begin{aligned}
 c^0 = \frac{\cJ \left( \Xi_b - \Xi_a \right)}{8 \sqrt{\Xi_a \Xi_b}} \, , \qquad \qquad
 c^i = \frac{\cJ}{4} \left( \frac{1}{1 + \mu_i} - \frac{\Xi_b + \Xi_a}{2 \sqrt{\Xi_a \Xi_b}} \right) \, .
\end{aligned} \ee
With an explicit computation one can check that the value of $\cI_{\text{sugra}}\left( z^\alpha \right)$
at the critical point precisely reproduces the entropy of the black hole,
\be
\begin{aligned}
\label{eq:EntropyMatch}
  \cI_{\text{sugra}} \big|_{\text{crit}} \left( J^{\pm} , Q_I \right) = S_\text{BH} \left( J^{\pm} , Q_I \right) \, .
 \end{aligned} \ee
It is remarkable that the solution to the extremization equations \eqref{extremization:scalars} is complex; however,
at the saddle-point it becomes a real function of the black hole charges.
We conclude that  the extremization of the quantity  \eqref{modified:Casimir:SUSY:N=4}  yields exactly the
Bekenstein-Hawking entropy of the $1 / 16$ BPS black holes in AdS$_5 \times S^5$.

So far the discussion was completely general.
In the next section, we will analyze the case $J_\phi = J_\psi$, for which the solution to
the extremization equations takes a remarkably simple form.

\section[Dimensional reduction in the limiting case: \texorpdfstring{$J_\phi = J_\psi$}{J[phi]=J[psi]}]{Dimensional reduction in the limiting case: $\fakebold{J_\phi = J_\psi}$}
\label{sec:a limiting case}

We gain some important insight by considering the dimensional reduction of the
five-dimensional BPS black holes when the two angular momenta are equal.
The black hole metric on the squashed sphere then has an enhanced isometry $\SU(2) \times \U(1) \subset \SO(4)$.
If we choose the appropriate Hopf coordinates we can dimensionally reduce the solution along the $\U(1)$
down to four-dimensional gauged supergravity.
As discussed in \cite{Hristov:2014eza}, it turns out that such a dimensional
reduction makes sense not only for asymptotically flat solutions where
first discovered in \cite{Gaiotto:2005gf,Gaiotto:2005xt,Behrndt:2005he,Banerjee:2011ts}
but also for the asymptotically AdS solutions in the gauged supergravity considered here.
A crucial difference is that the lower-dimensional vacuum will no longer be maximally symmetric
but will instead be of the hyperscaling-violating Lifshitz (hvLif) type \cite{Hristov:2014eza}.

The reason for looking at  the limit $J_\phi = J_\psi$ is simple: due to the $\SU(2)$ symmetry  the lower-dimensional solution
is guaranteed to be static and the horizon metric is a direct product AdS$_2 \times S^2$ geometry, as will be shown in due course.
Since the attractor mechanism for static BPS black holes in four-dimensional $\cN = 2$ gauged supergravity has been completely
understood \cite{Cacciatori:2009iz,Hristov:2010ri,DallAgata:2010ejj} we can fit the reduced solution in this framework.

\subsection{The near-horizon geometry}
\label{ssec:nh geometry}

We begin by taking the near-horizon limit, $r \to 0$, of the BPS black hole solution presented in section \ref{sec:AdS5 black holes}.
We set $a = b$, corresponding to the equal angular momenta $(J_{\phi} = J_{\psi})$,
and adopt the notation $\Xi_a = \Xi_b \equiv \Xi$.

Let us first introduce the following coordinates,
\be
 \psi \equiv \frac12 (\chi + \varphi) \, , \qquad \qquad \phi \equiv \frac12 (\chi - \varphi) \, , \qquad \qquad \theta \equiv \frac12 \vartheta \, ,
\ee
where $\vartheta, \varphi, \chi$ are the Euler angles of $S^3$ with $0 \leq \vartheta \leq \pi$, $0 \leq \varphi < 2 \pi$, $0 \leq \chi < 4\pi$.
The near-horizon geometry then reads
\be
\begin{aligned}
 \label{J1:J2:near-horizon:AdS2:M3}
 \rd s^2 & = R_{\text{AdS}_2}^2 \rd s^2_{\text{AdS}_2}
 + \gamma_3^{1/3} \, \rd s^2_{\cM_3} \, , \\
 L^I & = \frac{\gamma_3^{1/3}}{\mu_I} \, , \qquad \qquad
 A^I = \frac{\gamma_3^{1/3}}{\mu_I} R_{\text{AdS}_2} \, \tilde r \, \rd \tilde{t}
 + g \left( \gamma_1 - \mu_I - \frac{\gamma_2}{2 \mu_I} \right) \, \rd \gamma \, ,
\end{aligned} \ee
where we defined
\be
\begin{aligned}
 \rd s^2_{\text{AdS}_2} & = - \tilde r^2 \rd \tilde t^2 + \frac{\rd \tilde r^2}{\tilde r^2} \, , \qquad \qquad
 R_{\text{AdS}_2}^2 = \frac{\gamma_3^{1/3} }{4 (1 + g^2 \gamma_1 )} \, , \\
 \tilde r & = \frac{r^2}{4 R_{\text{AdS}_2}^2} \, , \qquad \qquad \qquad
 \tilde t = \frac{2}{\Xi \sqrt{\gamma_3^{1/3} (1 + g^2 \gamma_1) }} \, t \, ,
\end{aligned} \ee
\be
\begin{aligned}
 \label{J1:J2:near-horizon:M3}
 \rd s^2_{\cM_3} & = \rd s^2_{S^3}
 - \left[ \Gamma^2 \gamma_3^{1/3} - \frac{a g (4 + 5 a g)}{\Xi} \right] \rd \gamma^2
 + 2 R_{\text{AdS}_2} \, \Gamma \tilde r \, \rd \tilde t \, \rd \gamma \, , \\
 \Gamma & = \frac{g \left( 3 a^4 + 4 a^2 r_m^2 + \beta_2 \right)}{2 \Xi^2 \gamma_3^{2/3}} \, ,
\end{aligned} \ee
and
\be
\begin{aligned}
 \rd s^2_{S^3} =  \frac14 \left( \rd\vartheta^2 + \rd\varphi^2 + \rd \chi^2 + 2 \cos \vartheta \, \rd\varphi \, \rd\chi \right)
 = \frac14 \sum_{i = 1}^{3} \sigma_i \, , \qquad
 \rd\gamma = \frac{\sigma_3}{2} \, .
\end{aligned} \ee
Here, $\sigma_i$ $(i = 1 , 2, 3)$ are the right-invariant $\SU(2)$ one-forms,
\be
\begin{aligned}
 \sigma_1 & = - \sin \chi \, \rd\vartheta + \cos \chi \sin \vartheta \, \rd\varphi \, , \\
 \sigma_2 & = \cos \chi \, \rd\vartheta + \sin \chi \sin \vartheta \, \rd\varphi \, , \\
 \sigma_3 & = \rd\chi + \cos \vartheta \, \rd\varphi \, .
\end{aligned} \ee
Notice that $\rd s^2_{S^2} = \sigma_1^2 + \sigma_2^2 = \rd \vartheta^2 + \sin^2 \vartheta \, \rd \varphi^2$.
Due to the constraint \eqref{gravity:constraint} we can simplify
\be
 \frac{a (4 + 5 a g)}{\Xi} = g \gamma_1 \, , \qquad \qquad
 \Gamma = \frac{g \gamma_2}{2 \gamma_3^{2/3}} \, .
\ee
Upon a further rescaling of the time coordinate
\be
 \tilde t = - \frac{1}{2} \sqrt{4 - \frac{g^2 \gamma_2^2}{\gamma_3 (1 + g^2 \gamma_1)}} \, \tau \, ,
\ee
the near-horizon metric with squashed AdS$_2 \times_w S^3$ geometry and the gauge fields can be brought to the form:
\bea
 \label{J1:J2:final:near-horizon}
 \rd s^2_{(5)} & = R^2_{\text{AdS}_2}\ \rd s^2_{\text{AdS}_2}
 + \frac{R_{S^2}^2}{4} \left[ \rd s^2_{S^2} + \upsilon \left( \sigma_3 - \alpha \, \tilde r \rd \tau \right)^2 \right] \, , \\
 L^I & = \frac{\gamma_3^{1/3}}{\mu_I} \, , \qquad \qquad
 A^I_{(5)} = e^I \, \tilde r \, \rd \tau - f^I \, \sigma_3 \, .
\eea
Here, we defined the constants
\bea
 \alpha & = \frac{g \gamma_2}{2 \left( 1 + g^2 \gamma_1 \right)  \sqrt{\gamma_3 \upsilon}} \, , \qquad \qquad
 && R_{S^2}^2 = \gamma_3^{1/3} \, , \\
 e^I & = - \frac{\sqrt{\gamma_3 \upsilon}}{2 \mu_I (1 + g^2 \gamma_1) } \, ,
 && f^I = \frac{g}{4} ( \mu_I - \gamma_1 ) + \frac{g \gamma_3}{4 \mu_I^2} \, ,
\eea
and
\be
 \upsilon = 1 + g^2 \gamma_1 - \frac{g^2 \gamma_2^2}{4 \gamma_3} \, .
\ee
Note that we added the subscript $(5)$ in order to emphasize that these are five-dimensional quantities
which will next be related to a solution in four dimensions via dimensional reduction along the $\chi$ direction.

\subsection[Dimensional reduction on the Hopf fibres of squashed \texorpdfstring{$S^3$}{S**3}]{Dimensional reduction on the Hopf fibres of squashed $\fakebold{S^3}$}
\label{ssec:reduction along S3}

In five-dimensional supergravity theories, including $n_{\rm V}$ Abelian gauge fields $A_{(5)}^I$
and real scalar fields $L^I$ $(I = 1 , \ldots , n_{\rm V})$ coupled to gravity, the rules for reducing the bosonic fields are the following \cite{Andrianopoli:2004im,Behrndt:2005he,Cardoso:2007rg,Aharony:2008rx,Looyestijn:2010pb}:
\be
\begin{aligned}
 \label{reduction:rules:5dBH}
 \rd s_{(5)}^2 & = \e^{2 \phi} \, \rd s_{(4)}^2 + \e^{- 4 \phi} \, \left( \rd x^5 - A_{(4)}^0 \right)^2 \, , \qquad && \rd x^5 = \rd \chi \, , \\
 A_{(5)}^I & = A_{(4)}^I + \re z^I \left( \rd x^5 - A_{(4)}^0 \right) \, , \\
 L^I & = \e^{2 \phi} \im z^I \, , && \e^{-6 \phi} = \frac16 C_{I J K}  \im z^I  \im z^J  \im z^K \, ,
\end{aligned} \ee
where $\rd s_{(4)}^2$ denotes the four-dimensional line element,
the $A_{(4)}^\Lambda$ $( \Lambda = 0, I )$ are the four-dimensional Abelian gauge fields
and $z^I = X^I / X^0$ are the complex scalar fields in four dimensions.
The four-dimensional theory has $n_{\rm V}$ Abelian vector multiplets,
parameterizing a special K\"ahler manifold $\cM$ with metric $g_{i \bar{j}}$,
in addition to the gravity multiplet (thus a total of $n_{\rm V}+1$ gauge fields and $n_{\rm V}$ complex scalars).
The action of the bosonic part of the 4D $\cN=2$ FI gauged supergravity reads \cite{Andrianopoli:1996cm,Louis:2002ny}%
\footnote{\label{convention:action:metric}We follow
the conventions of \cite{Looyestijn:2010pb}, which is different from \cite{Andrianopoli:1996cm}
by factors of two in the gauge kinetic terms and the scalar potential $V(z, \bar{z} )$.
One can swap between the conventions by rescaling the four-dimensional metric $g_{\mu \nu} \to \frac12 g_{\mu \nu}$ and then multiplying the action by $2$.
This will modify the definition of the symplectic-dual gauge field strength $G_\Lambda$ by a factor of 2, see \eqref{symplectic-dual:F}.}
\bea
 S^{(4)} = \int_{\bR^{3,1}} & \bigg[ \frac{1}{2} R^{(4)} \star_4 1 + \frac14 \im \cN_{\Lambda \Sigma} F^\Lambda \wedge \star_4 F^{\Sigma}
 + \frac14 \re \cN_{\Lambda \Sigma} F^{\Lambda} \wedge F^{\Sigma} \\
 & - g_{i \bar{j}} D z^{i} \wedge \star_4 D \bar{z}^{\bar{j}}
 - V(z, \bar{z}) \star_4 1 \bigg] \, ,
\eea
where $i , \bar{j}= 1 , I$ and we already set $8 \pi G_{{\rm N}}^{(4)} = 1$.
Here $V$ is the scalar potential of the theory,
\bea
 V(z, \bar{z}) = 
 2 g^2 \left( U^{\Lambda \Sigma} - 3 \e^{\cK} \bar{X}^\Lambda X^\Sigma \right) \xi_\Lambda \xi_\Sigma \, ,
\eea
where $\xi_\Lambda$ are the constant quaternionic moment maps (known as FI parameters) and
\bea
 U^{\Lambda \Sigma} = - \frac{1}{2} \left( \im \cN \right)^{-1 | \Lambda \Sigma } - \e^{\cK} \bar{X}^\Lambda X^\Sigma \, .
\eea
The matrix $\cN_{\Lambda \Sigma}$ of the gauge kinetic term is given by \eqref{IIA:period:matrix}.
The special geometry prepotential $\cF \left( X^\Lambda \right)$ reads
\be
 \label{4d:prepotential:cubic}
 \cF ( X^\Lambda ) = - \frac16 \frac{C_{I J K} X^I X^J X^K}{X^0} = - \frac{X^1 X^2 X^3}{X^0} \, ,
\ee 
where in the second equality we employed the five-dimensional supergravity data for the STU model from section \ref{sec:AdS5 black holes}.
The K\"ahler potential is given by
\be
 \label{4d:Kahler:prepotential}
 \e^{- \cK (z , \bar{z})} = i ( \bar{X}^\Lambda F_\Lambda - X^\Lambda \bar{F}_\Lambda )
 = \frac{4 i}{3} C_{I J K} \im z^I \im z^J \im z^K
 = 8 \e^{- 6 \phi} \, ,
\ee
where due to the symmetries of the theory we can set $X^0 = 1$.
In the last equality we used \eqref{reduction:rules:5dBH}.
The K\"ahler metric can be written as
\be
 \label{Kahler:metric:scalars}
 g_{I J} = \partial_I \partial_{J} \cK(z , \bar{z})
 = - \frac{1}{4 \e^{- 6 \phi}} \left( C_{I J} - \frac{C_I C_J}{4 \e^{- 6 \phi}} \right) \, ,
\ee
where we introduced the following notation
\bea
 C_{I J} = C_{I J K} \im z^K \, , \qquad \qquad C_I = C_{I J K} \im z^J \im z^K \, .
\eea

In $\cN = 2$ gauged supergravity in four dimensions the $\U(1)_R$ symmetry, rotating the gravitini,
is gauged by a linear combination of the (now four) Abelian gauge fields.
Three of the FI parameters $g_\Lambda$ can be directly read off from the five-dimensional theory: $g_1 = g_2 = g_3 = 1$.%
\footnote{In consistent models one can always apply an
electric-magnetic duality transformation so that the corresponding gauging becomes purely electric, \ie, $g^\Lambda = 0$.}
The last coefficient, $g_0$, measuring how the Kaluza-Klein gauge potential $A^0_{(4)}$
enters the R-symmetry, can be left arbitrary for the moment.
This can be achieved by a Scherk-Schwarz reduction when allowing a particular reduction ansatz for the gravitino
as explained in \cite{Andrianopoli:2004im,Andrianopoli:2004xu,Andrianopoli:2005jv,Looyestijn:2010pb,Hristov:2014eba}.
The prepotential \eqref{4d:prepotential:cubic} and the FI parameters uniquely specify the four-dimensional $\cN = 2$ gauged supergravity Lagrangian and BPS variations.

Now, we can proceed with the explicit reduction of the line element \eqref{J1:J2:final:near-horizon} on the Hopf fibres of $S^3$ viewed as a $\U(1)$ bundle over $S^2 \cong \bC\bP^1$.
We thus identify $x^5$ with $\chi$. The four-dimensional solution takes the form\footnote{We have rescaled the time coordinate,
$\tilde{\tau} \equiv - R_{S^2} R_{\text{AdS}_2} \sqrt{\upsilon} \, \tau / 2$,
in order to put the AdS$_2$ part of the metric in the canonical coordinates.}
\be
\begin{aligned}
\label{eq:4dsolution}
 \rd s_{(4)}^2 & = - \e^{2 U} \, \rd \tilde{\tau}^2 + \e^{-2 U} \rd r^2+ \e^{2 \left( V - U \right)}
 \left( \rd \vartheta^2 + \sin^2\vartheta \, \rd \varphi^2 \right) \, , \\
 A_{(4)}^0 & = \tilde{q}_{(4)}^0(r) \, \rd \tilde{\tau} - \cos\vartheta \, \rd \varphi \, ,
 \qquad \qquad A_{(4)}^I = \tilde{q}_{(4)}^I(r) \, \rd \tilde{\tau} \, ,
\end{aligned} \ee
where
\be
\begin{aligned}
 \e^{U} & = \frac{\sqrt{2}}{R_{\text{AdS}_2} R_{S^2}^{1/2} \upsilon^{1/4}} \, r \, ,
 && \e^{V} = \frac{R_{S^2}}{2 R_{\text{AdS}_2}} \, r \, , \\
 \tilde{q}_{(4)}^0(r) & = - \frac{2 \alpha}{R^2_{\text{AdS}_2} R_{S^2} \upsilon^{1/2}} \, r \, ,
 \qquad && \tilde{q}_{(4)}^I(r) = - \frac{2 \left( e^I - f^I \alpha \right)}{R^2_{\text{AdS}_2} R_{S^2} \upsilon^{1/2}} \, r \, .
\end{aligned} \ee
The complex scalars are given by
\be
\begin{aligned}
 \label{4d:physical:scalars}
 z^I = - f^I + \frac{i}{2} R_{S^2} \upsilon^{1/2} L^I
 = - f^I + \frac{i}{2} \frac{\upsilon^{1/2} \gamma_3^{1/2}}{\mu_I} \, .
\end{aligned} \ee
Employing \eqref{em:charges:4d:scalars} we can compute the conserved electric charges.
After some work they read
\be
\begin{aligned}
 \label{J1:J2:em:charges:4d:final}
 q_{0} & = \frac{g}{8} \left( \gamma_2 + 2 g^2 \gamma_3 \right)
 = \frac{G_{\text{N}}^{(5)}}{\pi} J_{\phi} \, , \\
 q_I & = - \frac{1}{4} \left[ \mu_I + \frac{g^2}{2} \left( \gamma_2 - \frac{2 \gamma_3}{\mu_I} \right) \right] 
 = - \frac{G_{\text{N}}^{(5)}}{\pi} Q_I \, .
 \end{aligned} \ee
This is in agreement with \cite{Astefanesei:2011pz}.
The magnetic charges of the four-dimensional solution can be directly read off from the spherical components of the gauge fields $A^\Lambda_{(4)}$ \eqref{eq:4dsolution},
\be
\begin{aligned}
 p^0 = 1 \, , \qquad \qquad p^I = 0 \, .
\end{aligned} \ee
The entropy of the four-dimensional black hole precisely equals the entropy of the rotating black hole in five dimensions,
\be
\begin{aligned}
 \label{4d:entropy:5d}
 S_{\text{BH}}^{(4)} = \frac{\text{Area}^{(4)}}{4 G^{(4)}_{\rm N}} = \frac{\pi \e^{2 \left( V - U \right)}}{G^{(4)}_{\rm N}} = \frac{\pi^2 R_{S^2}^3 \upsilon^{1/2}}{2 G^{(5)}_{\rm N}} = S_{\text{BH}}^{(5)} \, ,
\end{aligned} \ee
upon using the standard relation 
\be
\begin{aligned}
 \frac{1}{G_{\rm N}^{(4)}} = \frac{4 \pi}{G_{\rm N}^{(5)}}\, .
\end{aligned} \ee

\subsection{Attractor mechanism in four dimensions}
\label{ssec:the attractor mechanism}

The BPS equations for the near-horizon solution \eqref{ansatz:metric:gauge field:4d}
with constant scalar fields $z^i$ imply that \cite{DallAgata:2010ejj}:\footnote{From comparing \eqref{attractor:initial:4d} with equations (3.5) and (3.8) in \cite{DallAgata:2010ejj},
we see that they differ by a factor of $2$. This is due to our different convention for the action (see footnote \ref{convention:action:metric}).}
\be
\begin{aligned}
 \label{attractor:initial:4d}
 \cZ - 2 i \e^{2 \left( V - U \right)} \cL = 0 \, , \qquad \qquad
 D_j \left( \cZ - 2 i \e^{2 \left( V - U \right)} \cL \right) = 0 \, ,
\end{aligned}
\ee
where $\cZ$ and $\cL$, are respectively the central charge and the superpotential of the black hole
defined in \eqref{intro:central charge:superpotential}, and $D_j = \partial_j + \frac12 \partial_j \cK$.
They can be rewritten as
\be
\begin{aligned}
 \label{attractor:4d}
 \partial_j \frac{\cZ}{\cL} = 0 \, , \qquad \qquad
 - i \frac{\cZ}{\cL} = 2 \e^{2 \left( V - U \right)} \, .
\end{aligned}
\ee
This is the attractor mechanism discussed in section \ref{ch:1:attractor:mechanism}.

We can extremize the quantity $- i \frac{\cZ}{\cL}$ under the following gauge fixing constraint,
which precisely corresponds to \eqref{constraint:N=4},
\be
\begin{aligned}
 \label{eq:gaugefixing}
 g_0 X^0 + \sum_{I = 1}^{3} X^I = 1 \, ,
\end{aligned} \ee
where we plugged in the explicit values for the FI parameters, \ie, $g_1 = g_2 = g_3 = 1$.
The real part of the sections $X^\Lambda$ are constrained in the range $0 < X^\Lambda < 1$.
We find that
\be
\begin{aligned}
 \label{J1:J2:attractor:equations}
 \partial_I \left[ \sum_{I = 1}^{3} X^I \left( q_I - \frac{q_0}{g_0} \right) + \frac{q_0}{g_0}
 - \frac{g_0^2 X^1 X^2 X^3}{\left( 1 - X^1 - X^2 - X^3 \right)^2} \right] = 0 \, , \quad \text{ for } \quad I = 1 , 2 , 3 \, ,
\end{aligned} \ee
where $\partial_I \equiv \partial / \partial X^I$.
The sections at the horizon are obtained from
\be
\begin{aligned}
 X^{0} = \frac{1}{g_0 \left( 1 + z^1 + z^2 + z^3 \right)} \, , \qquad \qquad X^{1,2,3} = \frac{z^{1,2,3}}{1 + z^1 + z^2 + z^3} \, .
\end{aligned} \ee

We are now in a position to determine the value of the FI parameter $g_0$.
Partial topological A-twist along $S^2$ ensures that $\cN = 2$ supersymmetry is preserved in four dimensions \cite{Witten:1991zz}.
This leads to the following Dirac-like quantization condition \cite{Cacciatori:2009iz,DallAgata:2010ejj,Hristov:2010ri}:
\be
\begin{aligned}
 g_{\Lambda} p^{\Lambda}  = 1 = g_0 p^0 \, ,
\end{aligned} \ee
which fixes the value of $g_0 = 1$.
It is straightforward to check that, substituting the values for the physical scalars at the horizon \eqref{4d:physical:scalars},
the charges \eqref{J1:J2:em:charges:4d:final}, and setting $g_0 = 1$, Eq.\,\eqref{J1:J2:attractor:equations} is fulfilled.
The scalars $\bar{z}^i(r_{\rm h})$ at the horizon are determined in terms of the black hole charges $q_I$ by virtue of the attractor equations:
\be
\begin{aligned}
 q_I - q_0 = \left( 2 + \frac{1}{\bar{z}^I} \right) \bar{z}^1 \bar{z}^2 \bar{z}^3 \, , \qquad \text{ for } \quad I = 1 , 2 , 3 \, .
\end{aligned} \ee
The value of $- i \frac{\cZ}{\cL}$ at the critical point yields,
\be
\begin{aligned}
 - i \frac{\cZ}{\cL} \bigg|_{\rm crit} \left( q_\Lambda \right) = 2 \e^{2 \left( V - U \right)}
 = \frac{2 G^{(4)}_{\rm N}}{\pi} S^{(4)}_{\rm BH} \left( q_\Lambda \right)
 \, .
\end{aligned} \ee
The holding of the four-dimensional BPS attractor mechanism for
the dimensionally reduced near-horizon geometry \eqref{eq:4dsolution}
proves that the dimensional reduction preserves the full
amount of supersymmetries originally present in five dimensions.

Due to the very suggestive form of the attractor equations \eqref{J1:J2:attractor:equations}
it is now not hard to compare them with the five-dimensional extremization.

\subsection{Comparison with five-dimensional extremization}
\label{ssec:ESUSY:extremization}

Consider the quantity $E$ in \eqref{eq:Casimir:rewrittenX}
rewritten in terms of the chemical potentials for $J^{\pm}$ and $Q_{1,2,3}$.
Recall that we are focusing on the case with equal angular momenta, \ie, $J_{\psi} = J_{\phi}$ (so $J^{-} = 0)$.
Extremizing \eqref{rotating:attractor:4d} with respect to $X^0_{-}$ fixes the value of $X^0_{-} = 0$.
Thus, the black hole entropy is obtained by extremizing the quantity
\be
\begin{aligned}
 \label{static:attractor:4d}
 \cI_{\text{sugra}} \big|_{J^- = 0} =   \frac{2 \pi^2 i}{G_{{\rm N}}^{(5)}} \frac{X^1 X^2 X^3}{\left( X^0_{+} \right)^2} + 2 \pi i \sum_{I = 1}^{3} Q_I X^I  - \pi i J^{+} X_{+}^0 
\, ,
\end{aligned} \ee
subject to the constraint \eqref{eq:constraintX}.
Identifying $X^{0}$ in \eqref{J1:J2:attractor:equations}
with $X^{0}_{+}$ in \eqref{static:attractor:4d},
and using $g_0 = 1$ together with \eqref{J1:J2:em:charges:4d:final},
we find that the extremization of $\cI_{\text{sugra}}$ 
corresponds to the four-dimensional attractor mechanism on the gravity side and they lead
to the same entropy.

\section{Discussion}
\label{sec:discussion}

We have shown that the entropy of a supersymmetric rotating black hole in AdS$_5$ with electric charges $Q_I$ $(I = 1, 2 , 3)$
and angular momenta $J_\phi \equiv J_1$, $J_\psi \equiv J_2$ can be obtained as the Legendre
transform of the quantity $-E$ in \eqref{modified:Casimir:SUSY:N=4}:
\be\label{E-extremization}
 S_{\text{BH}}(Q_I,J_i) = - E(\Delta_I,\omega_i)  +  2 \pi i  \bigg( \sum_{I=1}^3 Q_I \Delta_I -  \sum_{i = 1}^2 J_i \omega_i \bigg) \bigg |_{\bar\Delta_I, \bar\omega_i} \, ,
\ee
where $\bar\Delta_I$ and  $\bar\omega_i$ are the extrema of the functional on the right hand side.

The result is quite intriguing and deserves a better explanation and understanding. We leave a more careful analysis for the future. For the moment, let us just make few preliminary observations.

The quantity $E$ can be interpreted  as a combination of 't Hooft anomaly polynomials that arise studying the partition function
$Z_{{\cal N}=4}(\Delta_I,\omega_i)$ on $S^3 \times S^1$ or the superconformal index
$I(\Delta_I,\omega_i)$ for ${\cal N}=4$ SYM \cite{Bobev:2015kza,Brunner:2016nyk}.
Some explicit expressions are given in appendix \ref{AppC}.
In particular, $E$ is  {\it formally} equal to the supersymmetric Casimir energy of ${\cal N}=4$ SYM as a function of the chemical potentials (see for example equation (4.50) in \cite{Bobev:2015kza} and appendix \ref{AppC}).
However, this analogy is only formal since we are imposing the constraint \eqref{constraint:N=4}. Chemical potentials are only defined modulo 1, so the constraint to be imposed on them also suffers from angular ambiguities.
Consistency of the index and partition function just requires  $\sum_{I=1}^3 \Delta_I +\sum_{i=1}^2 \omega_i \in \mathbb{Z}$. To recover the known expressions for the supersymmetric Casimir energy and for consistency with gauge anomaly cancellations \cite{Assel:2015nca,Bobev:2015kza},
one needs to impose  $\sum_{I=1}^3 \Delta_I +\sum_{i=1}^2 \omega_i =0$, and this contrasts with \eqref{constraint:N=4}. 

It would be tempting to interpret  the Legendre transform \eqref{E-extremization} as a result of the saddle-point approximation
of a Laplace integral of $Z_{{\cal N}=4}$ in the limit of large charges (large $N$).\footnote{We are ignoring here potential sign ambiguities in the definition of charges.}
Ignoring angular ambiguities, $E$ is  the leading contribution at order $N^2$
of the logarithm of the partition function $Z_{{\cal N}=4}$ on $S^3\times S^1$.
Indeed, $\log Z_{{\cal N}=4} = - E + \log I$ \cite{Assel:2014paa,Lorenzen:2014pna,Assel:2015nca,Bobev:2015kza,Martelli:2015kuk,Genolini:2016sxe,Brunner:2016nyk}
and the index is a quantity independent of $N$ for generic values for the chemical potentials  \cite{Kinney:2005ej}.
In these terms, the result would be completely analogous to the connection between asymptotically
AdS$_4\times S^7$ (or AdS$_4\times S^6$) back hole entropy and the topologically twisted index of ABJM (or D2$_k$)
\cite{Benini:2015eyy,Benini:2016rke,Hosseini:2017fjo,Benini:2017oxt} .

The appearance of the supersymmetric Casimir energy can be surprising since the entropy counts the degeneracy of ground states of the system. 
However, the dimensional reduction to four dimensions performed in section \ref{sec:a limiting case} offers a different perspective on this point.
The dimensionally reduced black hole is static but not asymptotically AdS. Let us assume that we can still use holography.
In the dimensional reduction,  a magnetic flux $p^0$ is turned on for the graviphoton. This means that supersymmetry
is preserved with a topological twist.  The same should be true for the boundary theory.
It is then tempting to speculate that, upon dimensional reduction, the partition function $Z_{{\cal N}=4}$
becomes the topologically twisted index of the boundary three-dimensional theory \cite{Benini:2015noa}.
The supersymmetric Casimir energy, which is the leading contribution  of $\log Z_{{\cal N}=4}$ at large $N$
then becomes the leading contribution of the three-dimensional topologically twisted index and the latter is known to
correctly account  for the microstates of four-dimensional black holes.

The above discussion ignores completely the angular ambiguities and the role of the  constraint \eqref{constraint:N=4},
which should be further investigated. For sure, the result of the extremization of $E$ is complex and lies in the region
where the chemical potentials satisfy \eqref{constraint:N=4}. Unfortunately, we are not aware of a general discussion
of the possible regularizations  of $Z_{{\cal N}=4}$ that takes into account the angular ambiguities. Moreover, there is some recent claim \cite{Papadimitriou:2017kzw,An:2017ihs} 
of the presence of an anomaly in the supersymmetry transformations leading to a modification of the BPS condition in gravity
that would be interesting to investigate further in this context. 

Both the constraint \eqref{constraint:N=4} and the analogous of the more traditional one $\sum_{I=1}^3 \Delta_I +\sum_{i=1}^2 \omega_i =0$
have been used in the literature to explore different features of $Z_{{\cal N}=4}$ or the index.
The traditional constraint has been used in the analysis of the high-temperature limit of the index
\cite{Ardehali:2015hya,Ardehali:2015bla} (see also \cite{Shaghoulian:2016gol,DiPietro:2016ond}) and in the study of factorization properties \cite{Nieri:2015yia}.
On the other hand, the importance of \eqref{constraint:N=4} has been stressed in  \cite{Brunner:2016nyk}
where the constraint has been used to extract the supersymmetric Casimir energy directly from the superconformal index.\footnote{Interestingly, the same constraint is also used in relating
the universal part of supersymmetric R\'enyi entropy to an equivariant integral of the anomaly polynomial \cite{Yankielowicz:2017xkf}.} See appendix \ref{AppC} for more details.
In the low temperature limit, which can be obtained by rescaling $\Delta_I \to \beta\Delta_I, \omega_i \to \beta \omega_i$ and taking large $\beta$, the angular ambiguity in the constraint disappears.
 
Finally, it is worth mentioning that angular ambiguities also played a prominent r\^ole in the evaluation of the saddle-point
for the topologically twisted index of ABJM and D2$_k$, and the comparison with the entropy of AdS$_4$ black holes.


%

\chapter{Discussion and future directions}
\label{ch:7}

\ifpdf
    \graphicspath{{Chapter7/Figs/Raster/}{Chapter7/Figs/PDF/}{Chapter7/Figs/}}
\else
    \graphicspath{{Chapter7/Figs/Vector/}{Chapter7/Figs/}}
\fi

The main goal of this dissertation was to give an explanation for the microscopic origin of the
Bekenstein-Hawking entropy for a class of static BPS black holes (strings) in AdS$_{4,5}$.
The specific theories we focused on are consistent truncations of string or M-theory.
To this aim, we studied the topologically twisted index for a variety of 3D $\cN=2$
and 4D $\cN=1$ gauge theories in the large $N$ limit.
The index is a function of background magnetic fluxes and chemical potentials for the global symmetries of the theory,
and can be reduced to a matrix model using the technique of supersymmetric localization \cite{Benini:2015noa}.
Using the method introduced in \cite{Benini:2015eyy}, we solved a number of such matrix models.
These computations reveal the characteristic $N^{3/2} / N^{5/3} / N^{2}$ scaling of the number of degrees of
freedom on $N$ coincident $\text{M}2 / \text{D}2_k / \text{D}3$-branes.

An obvious follow-up is to find new examples of AdS$_4$ M-theory and massive type IIA black holes directly
in eleven or ten dimensions (see, for example, \cite{Katmadas:2015ima}) or in some other consistent truncations
of eleven-dimensional or ten-dimensional supergravity where to test our results.
In particular, in chapter \ref{ch:2} we evaluated the index for field theories dual to a variety of $\cN \geq 2$
M-theory backgrounds, including such well-known solutions as AdS$_4 \times N^{0,1,0}/\bZ_k$,
AdS$_4 \times V^{5,2}/\bZ_k$, and AdS$_4 \times Q^{1,1,1}/\bZ_k$.
One feature of these quivers compared to ABJM is the presence of many baryonic symmetries
that couple to the vector multiplets arising from nontrivial two-cycles (and thus by Poincar\'e duality five-cycles)
in the Sasaki-Einstein 7-manifold. As is evident from our analyses,
such background fluxes for baryonic symmetries do not show up in the large $N$ limit of the index,
while they affect the details of the black hole entropy.
Solving this apparent puzzle provides rather intricate tests of the proposed AdS$_4 /$CFT$_3$ dualities.

We studied the 3D matrix models in the limit where $N \gg k_a$ and the Chern-Simons levels $k_a$ are kept fixed.
It would be most interesting to develop a new method to study the topologically twisted index beyond the leading large $N$ contribution.
Recent attempts in doing so can be found in \cite{Liu:2017vll}, where the $\log N$ coefficient of the index
was extracted using numerical techniques.
There are two main subtleties in going beyond the numerical methods employed therein:
first, the tail contributions seem to prevent one from a systematic large $N$ expansion of the index;
secondly, the imaginary part of the index needs a better understanding. As we saw in the D2$_k$ theory,
the latter was very important in order to make the black hole entropy a real quantity.


We briefly discussed, in section \ref{ch1:susy background}, the refinement of the index by the angular momentum on $S^2$.
In the path integral formulation this corresponds to turning on an $\Omega$-background on $S^2$.
The holographic dual description of this setup is given by a dyonic, rotating supersymmetric black hole in AdS$_4$,
preserving (at least) two real supercharges.
Another interesting generalization of our results is to solve such matrix models \eqref{conclusions:path:refined}.

%



\begin{appendices} 

\chapter{Special functions}

In this appendix we review the special functions and their properties which we used in this dissertation.

\section{Polylogarithms} 

The polylogarithm function $\Li_n(z)$ is defined by a power series
\be
\Li_n(z) = \sum_{k=1}^\infty \frac{z^k}{k^n} \, ,
\ee
in the complex plane over the open unit disk, and by analytic continuation outside the disk.
For $z = 1$ the polylogarithm reduces to the Riemann zeta function
\be
\Li_n(1) = \zeta(n)\, , \qquad \text{for} \quad \re n >1 \, .
\ee
The polylogarithm for $n=0$ and $n=1$ is
\be
\Li_0(z) = \frac{z}{1-z}\, , \qquad \qquad  \Li_1(z) = - \log(1-z) \, .
\ee
Notice that $\Li_0(z)$ and $\Li_1(z)$ diverge at $z=1$.
For $n\geq 1$, the functions have a branch point at $z=1$ and we shall take the principal determination with a cut $[1,+\infty)$ along the real axis.
The polylogarithms fulfill the following relations
\be
\partial_u \Li_n(\e^{iu}) = i \Li_{n-1}(\e^{iu}) \;,\qquad\qquad \Li_n(\e^{iu}) = i \int_{+i\infty}^u \Li_{n-1}(\e^{iu'})\, \rd u' \;.
\ee
The functions $\Li_n(\e^{iu})$ are periodic under $u \to u+2\pi$ and have branch cut discontinuities along the vertical line $[0, -i\infty)$ and its images.
For $0< \re u < 2\pi$, polylogarithms satisfy the following inversion formul\ae{}\footnote{The inversion formul\ae{} in the domain $-2\pi < \re u < 0$ are obtaind by sending $u \to -u$.}
\bea
\label{reflection formulae}
\Li_0(\e^{iu}) + \Li_0(\e^{-iu}) &= -1 \\
\Li_1(\e^{iu}) - \Li_1(\e^{-iu}) &= -iu + i\pi \\
\Li_2(\e^{iu}) + \Li_2(\e^{-iu}) &= \frac{u^2}2 - \pi u + \frac{\pi^2}3 \\
\Li_3(\e^{iu}) - \Li_3(\e^{-iu}) &= \frac i6 u^3 - \frac{i\pi}2 u^2 + \frac{i\pi^2}3 u \, .
\eea
One can find the formul\ae{} in the other regions by periodicity.

\section{Eta and theta functions}
\label{Elliptic functions}

The Dedekind eta function is defined by
\be
\eta(q) = \eta(\tau) = q^{\frac{1}{24}} \prod_{n=1}^{\infty} \left(1 - q^{n} \right) \, , \qquad \qquad \im \tau > 0 \, ,
\label{dedekind:eta}
\ee
where $q = \e^{2 \pi i \tau}$.
It has the following modular properties
\be\label{modular_eta}
\eta(\tau + 1) = \e^{\frac{i \pi }{12}} \, \eta (\tau)\, , \qquad \qquad \eta\left( - \frac{1}{\tau} \right) = \sqrt{- i \tau} \, \eta (\tau) \, .
\ee
The Jacobi theta function reads
\bea
\theta_1 (x;q) = \theta_1 (u;\tau) & = - i q^{\frac18} x^{\frac12} \prod_{k=1}^{\infty} \left( 1- q^k \right) \left( 1-x q^k \right) \left( 1- x^{-1} q^{k-1} \right) \\
& = -i \sum_{n \in \mathbb{Z}} (-1)^n \e^{i u \left( n+ \frac12 \right)} \e^{\pi i \tau \left( n+ \frac12 \right)^2}\, ,
\label{theta:function}
\eea
where $x = \e^{i u}$ and $q$ is as before.
The function $\theta_1 (u; \tau)$ has simple zeros in $u$ at $u = 2 \pi \mathbb{Z} + 2 \pi \tau \mathbb{Z}$ and no poles.
Its modular properties are,
\be\label{modular_theta}
\theta_1\left(u;\tau+1\right) = \e^{\frac{i \pi}{4}} \, \theta_1\left(u;\tau\right)\, , \qquad \qquad
\theta_1\left( \frac{u}{\tau};-\frac{1}{\tau}\right) = - i \sqrt{- i \tau} \, \e^{\frac{i u^2}{4 \pi \tau}} \, \theta_1\left( u; \tau\right) \, .
\ee
We also note the following useful formula,
\be
\label{theta:function:shift}
\theta_1\left( q^m x;q\right) = (-1)^{-m}\,x^{-m} q^{-\frac{m^2}{2}}\theta_1(x;q) \, , \qquad \qquad m \in \bZ \, .
\ee

The asymptotic behavior of the $\eta(q)$ and $\theta_1(x; q)$ as $q \to 1$ can be derived by using their modular properties.
To this purpose, we first need to perform an $S$-transformation, \ie\;$\tau\to -1/\tau$, and then expand the
resulting functions in series of $q$, which is now a small parameter in the $\tau\to i0$ limit. 

Let us start with the Dedekind $\eta$-function.
The action of modular transformation is written in (\ref{modular_eta}). 
We will identify the ``inverse temperature'' $\beta$ with the modular parameter $\tau$ of the torus: $\tau=i\beta/2\pi$.
Then, expanding the $S$-transformed $\eta$-function we get 
\be
 \label{dedekind:hight:S}
  \log\left[\eta(\tau)\right] = -\frac{1}{2}\log\left( -i \tau \right) + \log\left[ \eta\left( -\frac{1}{\tau}\right)\right]
   = - \frac{1}{2}\log\left( \frac{\beta}{2\pi}\right)-\frac{\pi^2}{6\beta} + \cO\left( \e^{- 1 / \beta} \right) \, .
\ee
Similarly, we can consider the asymptotic expansion of the Jacobi $\theta$-function:
\bea
 \log\left[\theta_1(u;\tau) \right] & =
 \frac{i \pi}{2} - \frac{1}{2}\log\left( -i \tau \right) - \frac{i u^2 }{4 \pi  \tau}
 + \log\left[ \theta_1\left(\frac{u}{\tau}; -\frac{1}{\tau}\right)\right] \\
 & = -\frac{\pi^2}{2\beta} - \frac{u^2}{2\beta}-\frac{1}{2} \log\left( \frac{\beta}{2\pi} \right)
 + \log\left[2 \sinh\left(\frac{\pi u}{\beta} \right)\right]
 + \cO \left( \e^{- 1 / \beta} \right) \, ,
\eea
Writing $2\sinh\left(\frac{\pi u}{\beta} \right) = \e^{\pi u/\beta}\left( 1 - \e^{- 2 \pi u/\beta} \right)$,
we have the following expansion
\be
 \log\left[2\sinh\left(\frac{\pi u}{\beta} \right)\right]
 = \frac{\pi}{\beta} u \sign\left[ \re (u) \right]
 - \sum_{k=1}^{\infty} \frac{1}{k} \, \e^{- \frac{2 k \pi}{\beta} u \sign\left[ \re (u) \right]} \, .
\ee
Putting all pieces together, we find
\bea
\label{theta:hight:S}
\log\left[ \theta_1(u;\tau) \right] = -\frac{\pi^2}{2\beta} - \frac{u^2}{2\beta}-\frac{1}{2} \log\left( \frac{\beta}{2\pi} \right)
+ \frac{\pi}{\beta} u \sign\left[ \re(u) \right]
+ \cO \left( \e^{- 1 / \beta} \right) \, .
\eea

%

\chapter[BPS black holes in \texorpdfstring{$\cN=2$}{N=2} dyonic STU gauged supergravity]{BPS black holes in $\fakebold{\cN=2}$ dyonic STU gauged supergravity}
\label{IIA:appendix}


In this appendix we write down a black hole ansatz and derive the corresponding BPS equations.
The black hole can be embedded in massive type IIA supergravity and is asymptotic to
the $\mathcal{N} = 2$ supersymmetric AdS$_4 \times S^6$ background of \cite{Guarino:2015jca}.

\section{Black hole ansatz}
We are interested in supersymmetric asymptotically AdS$_4$ black holes,
which in \cite{Klemm:2016wng} were considered for general models with hypermultiplets and dyonic gaugings,
extending earlier works \cite{Cacciatori:2009iz,DallAgata:2010ejj,Hristov:2010ri,Halmagyi:2013sla,Katmadas:2014faa,Halmagyi:2014qza}.
Reviewing these results, we write down the bosonic field ansatz and the final form of the supersymmetry conditions to be solved,
which also imply all equations of motion. The metric is given by
\begin{equation}
{\rm d} s^2 = - \e^{2U(r)} {\rm d} t^2 + \e^{-2U(r)} {\rm d} r^2 + \e^{2(\psi(r) - U(r))}{\rm d}\Omega_{\kappa}^2\ ,\label{eq:ansatzmet} 
\end{equation}
where ${\rm d}\Omega_{\kappa}^2={\rm d}\theta^2+f_{\kappa}^2(\theta){\rm d}\varphi^2$ defines the metric on a surface $\Sigma_\fg$ of constant scalar curvature
$2\kappa$, with $\kappa\in\{+1,-1\}$, and
\begin{equation}
f_\kappa(\theta) = \frac{1}{\sqrt{\kappa}} \sin(\sqrt{\kappa}\theta) = 
\left\{\begin{array}{c@{\quad}l} \sin\theta\, & \kappa=+1\,, \\                                             
                                             \sinh\theta\, & \kappa=-1\,. \end{array}\right. 
\end{equation} 
The scalar fields depend only on the radial coordinate $r$, while the electric and magnetic gauge fields $(A^\Lambda$, $A_{\Lambda})$ and the tensor fields $(B^\Lambda$, $B_\Lambda)$ are given by
\begin{equation}
A^\Lambda = A_t^\Lambda {\rm d} t- \kappa p^\Lambda f^\prime_\kappa(\theta){\rm d}\phi\ , \qquad
A_\Lambda = A_{\Lambda t}{\rm d} t - \kappa e_\Lambda f^\prime_\kappa(\theta){\rm d}\phi\ ,
\end{equation}
\begin{equation}
B^\Lambda = 2\kappa p^{\prime\,\Lambda} f^\prime_\kappa(\theta){\rm d} r\wedge{\rm d}\phi\ , \qquad 
B_\Lambda = -2\kappa e^{\prime}_\Lambda f^\prime_\kappa(\theta){\rm d} r\wedge{\rm d}\phi\ .
\end{equation}
In the theory we consider the only relevant tensor field is $B^0$, as the rest can be consistently decoupled.
The magnetic and electric charges $(p^\Lambda, e_\Lambda)$ are defined as
\begin{equation}
p^{\Lambda} \equiv \frac1{\mbox{vol}(\Sigma_\fg)}\int_{\Sigma_\fg} F^{\Lambda}\ , \quad   e_{\Lambda} \equiv \frac1{\mbox{vol}(\Sigma_\fg)}\int_{\Sigma_\fg} G_{\Lambda}\ , \quad
\mbox{vol}(\Sigma_\fg) = \int f_\kappa(\theta){\rm d}\theta\wedge{\rm d}\phi\ .
\label{eq:charges}
\end{equation}
Note that the charges can depend on the radial coordinate in general,
since the Maxwell equations are sourced by the hypermultiplet scalars due to the gauging.

\section{BPS and Maxwell equations}
The above ansatz is subject to a set of conditions required for supersymmetry, which after a number of manipulations can be recast into a set of algebraic and first order differential equations, given by (3.74) in \cite{Klemm:2016wng} in a manifestly symplectic covariant way,
\begin{align}
\begin{split}
\label{app:BHequations}
{\cal E}  &= 0\ , \\
\psi^{\prime} &= -2\kappa \e^{-U} \im(\e^{-i\alpha}\mathcal{L})\ , \\
\alpha^{\prime} + A_r &= 2\kappa \e^{-U} \re(\e^{-i\alpha} {\cal L})\ , \\
q^{\prime\,u} &= \kappa \e^{-U} h^{uv}\im (\e^{-i\alpha}\partial_v {\cal L})\ ,  \\
{\cal Q}^\prime &= -4 \e^{2\psi - 3U} {\cal H}\Omega \re(\e^{-i\alpha}{\cal V})\ ,\\
\end{split}
\end{align}
where
\begin{align}
	{\cal E} \equiv 2 \e^{2\psi}\left(\e^{-U} \im(\e^{-i\alpha}{\cal V})\right)^{\prime} - \kappa \e^{2(\psi -
U)}\Omega{\cal M} {\cal Q}^x {\cal P}^x + 4 \e^{2\psi-U}(\alpha^{\prime} + A_r)\re
(\e^{-i\alpha} {\cal V}) +{\cal Q}\ .
\end{align}
As earlier introduced, ${\cal Q} = (p^{\Lambda}, e_{\Lambda})$, ${\cal P}^x = (P^{x, \Lambda}, P^x_{\Lambda})$
and ${\cal K}^u = (k^{u, \Lambda}, k^u_{\Lambda})$. $A_\mu$ is the $\U(1)$ K\"ahler connection,
and $\alpha$ an a priori arbitrary phase of the Killing spinor,
which depends only on the radial coordinate (derivatives with respect to which are given by primes). Furthermore,
\begin{align}\label{app:quantities}
{\cal Q}^x & \equiv g \langle {\cal P}^x, {\cal Q}\rangle  = g P^x_{\Lambda} p^{\Lambda} - g P^{x, \Lambda} e_{\Lambda}\ , \quad
{\cal W}^x  \equiv g \langle {\cal P}^x, {\cal V}\rangle  = g  P^x_{\Lambda} L^{\Lambda} - g P^{x, \Lambda} M_{\Lambda}\ ,
\nn \\[.2cm]
{\cal Z} & \equiv \langle {\cal Q}, {\cal V}\rangle  = e_{\Lambda} L^{\Lambda} - p^{\Lambda} M_{\Lambda}\ , \quad
{\cal L}  \equiv {\textstyle \sum_x} {\cal Q}^x {\cal W}^x \ , 
\end{align}
and
\begin{equation}
{\cal H} \equiv g^2 ({\cal K}^u)^T h_{u v} {\cal K}^v \ , \quad
{\cal M} =\left(\begin{array}{cc}
 {\rm I} + 
 {\rm R}  {\rm I}^{-1}  {\rm R} & \,\,-  {\rm R} {\rm I} ^{-1} \\
- {\rm I} ^{-1}  {\rm R} &  {\rm I}^{-1} \\
\end{array}\right) \ , \quad \Omega = \left(\begin{array}{cc} 0 & -{\bf 1} \\ {\bf 1} & 0 \end{array}\right) \ .
\end{equation}
 
The above equations are further supplemented by the constraints
\begin{equation}\label{app:constr}
{\cal H} \Omega {\cal Q} = 0 \ , \qquad {\cal K}^u h_{uv} q'^v = 0 \ , \qquad {\cal Q}^x {\cal Q}^x = 1 \ .
\end{equation}

As already noted, in addition to the BPS equations, the Maxwell equations need to be imposed. The rest of the equations of motion then follow. The Maxwell equations sourced by the hypermultiplet scalars evaluated on the specified bosonic ansatz lead to \cite{Klemm:2016wng} a pair of coupled first order differential equations
\begin{equation}
	{\cal A}_t' = -\e^{2(U-\psi)} \Omega {\cal M} {\cal Q}\ , \qquad {\cal Q}' = -2 \e^{2 (\psi-2 U)} {\cal H} \Omega {\cal A}_t \ .
\end{equation}
They are immediately satisfied given the fifth row of \eqref{app:BHequations} together with the extra constraint
\begin{equation}
	2 \e^U {\cal H} \Omega \re (\e^{-i \alpha} {\cal V}) = {\cal H} \Omega {\cal A}_t\ .
\end{equation}

\subsection{Solution to the constraints}

Without making any further assumptions, we can already solve for some of the scalar fields using the constraints \eqref{app:constr}
that need to hold everywhere in spacetime. The first equation in \eqref{app:constr} gives
\begin{equation}
	g p^0 - m e_0 = 0 \ , \qquad (\zeta^2+\tilde{\zeta}^2) \sum_{I=1}^3 p^I = 0 \ ,
\end{equation}
while the last one further fixes
\begin{equation}
g\sum_{I=1}^3 p^I = \pm 1 \ .
\end{equation}
Hence,
\begin{equation}
\zeta = \tilde{\zeta} = 0 \ .
\end{equation}
The second equation in \eqref{app:constr} then yields 
\begin{equation}
	\sigma = \text{const.} \, .
\end{equation}
Following the above results, 
\bea\label{app:quantities-simplified}
{\cal Q}^x & = g{\tsumI} p^I \delta^{x, 3}  = \pm 1 \delta^{x, 3} \equiv \lambda \delta^{x, 3} \ , \\[.2cm]
{\cal W}^x & = g \e^{{\cal K}/2}\left[ \tsumI X^I -\tfrac{1}{2} \e^{2 \phi}(X^0 - c F_0) \right] \delta^{x, 3} \ , \\[.2cm]
{\cal Z} & = \e^{{\cal K}/2} \tsumI ( e_I X^I - p^I F_I) + \e^{{\cal K}/2} e_0 (X^0 - c F_0) \ ,  \\[.2cm]
{\cal L} &= \lambda g \e^{{\cal K}/2}\left[ \tsumI X^I -\tfrac{1}{2}\e^{2 \phi}(X^0 - c F_0) \right] ,
\eea
and the only components of the matrix ${\cal H}$ that remain non-vanishing are
\begin{align}
{\cal H}_{0 0} = \frac14 \e^{4 \phi}\ , \quad {\cal H}^{0 0} = \frac14 c^2 \e^{4 \phi}\ , \quad {\cal H}_{0}{}^{0} = {\cal H}^0{}_0 = \frac14 c \e^{4 \phi}\ .
\end{align}
We have already solved for three of the four hypermultiplet scalars, so it is worth writing explicitly the differential equation that determines the remaining scalar $\phi$, coming from the fourth equation of \eqref{app:BHequations}:
\begin{equation}\label{app:phi}
	\phi' = - g \kappa \lambda \e^{{\cal K}/2 - U} \im \left(\e^{-i \alpha} (X^0 - c F_0)\right)\ . 	
\end{equation}
The scalar $\phi$ is exactly the source that does not allow the charges $p^0 = c e_0$ to be conserved as it appears in the matrix ${\cal H}$ on the right-hand side of the Maxwell equations, 
\begin{equation}\label{app:charge}
	p'^0 = c e'_0 = -c \e^{2 \psi - 3 U} \e^{4 \phi} \re \left(\e^{-i \alpha} (X^0 - c F_0)\right)\ .
\end{equation}
Therefore, the charges $p^0$, $e_0$ cannot actually ``be felt'' by the field theory at the asymptotic AdS$_4$ boundary. 

The equations have been simplified, and are given by the scalar equation \eqref{app:phi}, the Maxwell equation \eqref{app:charge}, and the first three equations in \eqref{app:BHequations}. Note that in the absence of the hypermultiplet equations, \eqref{app:phi} and \eqref{app:charge} are solved trivially, and the remaining equations in \eqref{app:BHequations} can be solved analytically using standard special geometry identities. Here, we are unable to present an analytic solution for the full black hole geometry, exactly due to the complication of solving \eqref{app:phi} and \eqref{app:charge}.
We are however able to present an analytic solution for the two end-points of the black hole geometry, due to the extra condition of the scalars and charges being constant.

Before moving to the ``constant scalars and charges'' case, let us give the relevant components of the matrix ${\cal M}$ which allow us to write down the first equation in \eqref{app:BHequations}:
\begin{align}
{\cal M}^{00} &= -8 \e^{\cal K} |s|^2 |t|^2 |u|^2, \quad {\cal M}^0{}_0 =  -8 \e^{\cal K} \re (s)\ \re (t)\ \re (u) \ , \nn \\
{\cal M}^{01} &=-8 \e^{\cal K} \re (t)\ \re (u) |s|^2 \ , \nn \\
{\cal M}^{02} &=-8 \e^{\cal K} \re (s)\ \re (u) |t|^2 \ , \nn \\ 
{\cal M}^{03} &=-8 \e^{\cal K} \re (s)\ \re (t) |u|^2 \ , \nn \\
{\cal M}^{10} &= {\cal M}^{01} , \quad
{\cal M}^1{}_0 = -8 \e^{\cal K} \re (s) \ , \nn \\
{\cal M}^{11} &=-8 \e^{\cal K} |s|^2 \ , \quad 
{\cal M}^{12} =-8 \e^{\cal K} \re (s) \re(t) \ , \quad {\cal M}^{13} = -8 \e^{\cal K} \re (s) \re (u) 
\ , \nn \\
{\cal M}^{20} &= {\cal M}^{02} , \quad {\cal M}^2{}_0 = -8 \e^{\cal K} \re (t) \ , \nn \\
{\cal M}^{21} &= {\cal M}^{12} \ , \quad
{\cal M}^{22} = -8 \e^{\cal K} |t|^2 \ , \quad 
{\cal M}^{23} = -8 \e^{\cal K} \re (t) \re(u) \ , \nn \\
{\cal M}^{30} &= {\cal M}^{03} , \quad {\cal M}^3{}_0 = -8 \e^{\cal K} \re (u) \ , \nn \\
{\cal M}^{31} &= {\cal M}^{13} \ , \quad
{\cal M}^{32} = {\cal M}^{23} \ , \quad 
{\cal M}^{33} = -8 \e^{\cal K} |u|^2 \ , \nn \\
{\cal M}_{00} &= -8 \e^{\cal K} , \quad {\cal M}_0{}^0 = {\cal M}_1{}^1 = {\cal M}_2{}^2 = {\cal M}_3{}^3 =  -8 \e^{\cal K} \re (s)\ \re (t)\ \re (u)\ , \nn \\
{\cal M}_0{}^{1} &= {\cal M}^1{}_0 \ , \quad  {\cal M}_0{}^{2}= {\cal M}^2{}_0 \ , \quad  {\cal M}_0{}^{3} = {\cal M}^2{}_0 \ , \nn \\
{\cal M}_1{}^0 &= -8 \e^{\mathcal{K}} \re (s) |t|^2 |u|^2 \ , \quad 
{\cal M}_2{}^0 =  -8 \e^{\mathcal{K}} \re (t) |s|^2 |u|^2 \ , \quad  
{\cal M}_3{}^0 =  -8 \e^{\mathcal{K}} \re (u) |s|^2 |t|^2 \ , \nn \\
{\cal M}_1{}^2 &= -8 \e^{\cal K} |t|^2 \re (u) \ , \quad  {\cal M}_1{}^3 = -8 \e^{\cal K} |u|^2 \re (t)
\ , \nn \\ 
{\cal M}_2{}^1 &= -8 \e^{\cal K} |s|^2 \re (u) \ , \quad {\cal M}_2{}^3 = -8 \e^{\cal K} |u|^2 \re (s) \ ,  
 \nn \\
{\cal M}_3{}^1 &= -8 \e^{\cal K} |s|^2 \re (t) \ , \quad {\cal M}_3{}^2 = -8 \e^{\cal K} |t|^2 \re (s) \ .
\end{align}

\section{Constant scalars and charges}

The condition that all scalars and charges are constant, 
\begin{equation}
	s'= t'=u' = 0\ , \quad q'^u = 0\ , \quad {\cal Q}' = 0\ ,
\end{equation}
(based on the symmetries of AdS$_4$ and AdS$_2 \times \Sigma_\fg$), upon imposed on \eqref{app:phi}-\eqref{app:charge} yields
\begin{equation}
X^0 - c F_0 = 0 \ . 
\end{equation}
This is a strong constraint on the special K\"ahler manifold, leading to 
\begin{equation}
	s t u = -c \ ,
\end{equation}
and therefore  $(X^1,X^2,X^3,F_1, F_2,F_3) = (c/(t u),-t,-u, t u, -c/t, -c/u)$, which are consistent with the prepotential
\begin{equation}
	{\cal F}^\star = -\frac{3}{2} (-c)^{1/3} (X^1 X^2 X^3)^{2/3}\ .
\end{equation}

\subsection[Asymptotic AdS\texorpdfstring{$_4$}{(4)}]{Asymptotic AdS$_{\fakebold{4}}$}

The constant scalars and charges assumption holds for the AdS$_4$ vacuum, which satisfies the BPS equations asymptotically with
\begin{equation}
U = \log(r/L_{{\rm AdS}_4}), \qquad \psi = \log(r^2/L_{{\rm AdS}_4}),
\end{equation}
and
\begin{equation}
s = t = u = \e^{i \pi/3} c^{1/3}, \qquad \e^{2\phi} = 2 c^{-2/3} \ .
\end{equation}
If substitute the above field configuration in \eqref{app:BHequations}, as $r \rightarrow \infty$, we find 
\begin{equation}
	\alpha =- \frac{\pi}{6} \ , \qquad L_{{\rm AdS}_4} = \frac{1}{g}\ \frac{c^{1/6}}{3^{1/4}}\ , 
\end{equation}
which can be easily seen to solve the second and third equations in \eqref{app:BHequations}. The remaining one, ${\cal E} = 0$, is also asymptotically solved as can be verified by
\bea
 \frac{2}{L_{{\rm AdS}_4}} \im (\e^{-i \alpha} {\cal V}) &
 = - \kappa \Omega {\cal M} {\cal Q}^x {\cal P}^x \\
 & =3 g c \e^{\cal K}(c^{1/3}, -2 c^{-1/3},-2 c^{-1/3},-2 c^{-1/3}, c^{-2/3}, 1,1,1) \ .
\eea
Note that there is no way of fixing the asymptotic values of the massive vector charges $p^0 = c e_0$, but in the process we have fixed uniquely $\lambda$ to be aligned with $\kappa$ so that
\begin{equation}
	\kappa \lambda = -1 \quad \text{or} \qquad \lambda = -\kappa \ ,
\end{equation}
for a choice of positive electric coupling constant $g > 0$. 

\subsection{Near-horizon geometry}

The near-horizon equations are more involved than the asymptotic ones but we are again in the constant scalar case which guarantees that $s t u = -c$ solving the fourth and fifth equation in \eqref{app:BHequations}. The general near-horizon solution was analyzed in \cite{Guarino:2017pkw} but here we make an inspired ansatz for the scalars in a way that enforces $s t u = -c$:
\begin{equation}
	s = \frac{\e^{i \pi/3} c^{1/3} H^1}{(H^1 H^2 H^3)^{1/3}}\ , \quad t = \frac{\e^{i \pi/3} c^{1/3} H^2}{(H^1 H^2 H^3)^{1/3}}\ , \quad u = \frac{\e^{i \pi/3} c^{1/3} H^3}{(H^1 H^2 H^3)^{1/3}}\ ,
\end{equation}
under the condition that  $H^1 + H^2 + H^3 = 1$. With this ansatz we have imposed equal phases of the three scalars meaning we are killing some of the degrees of freedom, and practically restricting the solution to what we call ``purely magnetic'' solution (see the discussion in the main body of chapter \ref{ch:4}). The metric function ansatz is naturally given by
\begin{equation}
U = \log(r/L_{{\rm AdS}_2}), \qquad \psi = \log(L_{\Sigma_\fg} \cdot r / L_{{\rm AdS}_2}),
\end{equation}
where $L_{{\rm AdS}_2}$ is the radius of AdS$_2$ and $L_{\Sigma_\fg}$ that of the surface $\Sigma_\fg$. With this ansatz we solve the second and third equation in \eqref{app:BHequations} by setting
\begin{equation}
	\alpha = -\frac{\pi}{6}\ , \qquad L_{{\rm AdS}_2} = \frac{\e^{-{\cal K}/2} (H^1 H^2 H^3)^{1/3}}{2 g c^{1/3}}\ .
\end{equation}
The remaining symplectic vector of equations ${\cal E} = 0$ can be solved in several steps. The condition that $p^0 = c e_0$ imposes the constraint that ${\cal E}^0 = c {\cal E}_0$ which leads to
\begin{equation}
	\e^{2 \phi} = \frac{2 c^{-2/3}}{3 (H^1 H^2 H^3)^{1/3}}\ , \qquad p^0 = c e_0 = \frac{g c^{1/3}}{3 \sqrt{3} (H^1 H^2 H^3)^{1/3}} L^2_{\Sigma_\fg}\ , 
\end{equation}
while the electric charges are fixed by the components ${\cal E}_{1,2,3}$ to be
\begin{equation}
	e_1 = e_2 = e_3 = e = - \frac{g}{\sqrt{3}}  L^2_{\Sigma_\fg}\ .
\end{equation}
Note that the electric charges are equal and eventually fixed in terms of the magnetic charges, so they are not independent degrees of freedom. However, from the explicit expression it is clear that the value of $e$ is strictly not allowed to vanish, in accordance with the results in \cite{Guarino:2017eag,Guarino:2017pkw}. Finally, equations ${\cal E}^{1,2,3} = 0$ become
\begin{equation}
	\frac{2 g L^2_{\Sigma_\fg}}{3 \sqrt{3} c^{1/3} (H^1 H^2 H^3)^{2/3}}  = \frac{p^1}{H^1 (3 H^1-2)} = \frac{p^2}{H^2 (3 H^2-2)} = \frac{p^3}{H^3 (3 H^3-2)}\ ,
\end{equation}
which are solved by 
\begin{equation}
	L^2_{\Sigma_\fg} = - \frac{\sqrt{3}}{2 g} c^{1/3} (H^1 H^2 H^3)^{2/3} \sum_{I=1}^{3} \frac{p^I}{H^I} \, ,
\end{equation}
together with
\begin{equation}
3 H^I = 1 \pm \sum_{J, K} \frac{\left| \epsilon_{I J K} \big( p^J - p^K \big) \right|}{2 \sqrt{\big(\sqrt{\Theta} \pm p^I\big)^2 - p^J p^K}} \ ,
\end{equation}
where $\epsilon_{I J K}$ is the Levi--Civita symbol and $\Theta$ is defined in \eqref{theta}.

%


\chapter[Maxwell charges in \texorpdfstring{$\cN=2$}{N=2}, \texorpdfstring{${\rm D }=4$}{D=4} gauged supergravity]{Maxwell charges in $\fakebold{\cN=2}$, $\fakebold{{\rm D}=4}$ gauged supergravity}
\label{app:4d gauged sugra}

In this appendix we compute the Maxwell charges of a family of BPS black holes in ${\rm D }=4$, $\cN=2$ gauged supergravity
with prepotential \eqref{4d:prepotential:cubic}.
The ansatz for the metric and gauge fields is
\bea
 \label{ansatz:metric:gauge field:4d}
 \rd s^2 & = - \e^{2 U} \, \rd \tau^2 + \e^{-2 U} \, \rd r^2
 + \e^{2 \left( V - U \right)} \left( \rd \vartheta^2 + \sin^2\vartheta \, \rd \varphi^2 \right) \, , \\
 A^\Lambda & = \tilde{q}^\Lambda(r) \, \rd \tau - p^\Lambda(r) \cos\vartheta \, \rd \varphi \, .
\eea
The black hole magnetic and electric charges are then given by \cite{DallAgata:2010ejj}
\bea
 \label{em:charges:4d:period}
 p^\Lambda & \equiv \frac{1}{4 \pi} \int_{S^2} F^\Lambda \, , \\
 q_\Lambda & \equiv \frac{1}{4 \pi} \int_{S^2} G_\Lambda
 = \e^{2 \left( V - U \right)} \im \cN_{\Lambda \Sigma} \, \tilde{q}'^{\Sigma}
 + \re \cN_{\Lambda \Sigma} \,  p^{\Sigma} \, ,
\eea
where we defined the symplectic-dual gauge field strength,
\be
 \label{symplectic-dual:F}
 G_\Lambda \equiv 2 \frac{\delta \cL}{\delta F^{\Lambda}}
 =  \im \cN_{\Lambda \Sigma} \star_4 F^{\Sigma}
 + \re \cN_{\Lambda \Sigma} F^{\Sigma} \, ,
\ee
such that $\left( F^\Lambda , G_\Lambda \right)$ transforms as a $\left( 2 , n_{\rm V} + 2 \right)$
symplectic vector under electric-magnetic duality transformations.
Using \eqref{IIA:period:matrix} and \eqref{4d:prepotential:cubic} we find that
\bea
 \label{period:matrix:4d:scalars}
 \im \cN_{I J} & = \cG_{I J} \, , && \re \cN_{I J} = - C_{I J K} \re z^K \, ,\\
 \im \cN_{I 0} & = - \cG_{I J} \re z^J \, , && \re \cN_{I 0} = + \frac12 C_{I J K} \re z^J \re z^K \, ,\\
 \im \cN_{0 0} & = - \left( \e^{-6 \phi} - \cG_{I J} \re z^I \re z^J \right) \, , \qquad && \re \cN_{0 0} = - \frac13 C_{I J K} \re z^I \re z^J \re z^K  \, ,
\eea
where we defined
\bea
 \cG_{I J} \equiv C_{I J} - \frac{C_I C_J}{4 \e^{- 6 \phi}} \, .
\eea
Therefore, the electric charges read
\bea
 q_{0} & = - \e^{2 \left( V - U \right)} \tilde{q}'^{0}
 \left[ \e^{- 6 \phi} + \cG_{I J} \re z^J \left( \frac{\tilde{q}'^{I}}{\tilde{q}'^{0}} - \re z^I \right) \right]
+ \frac{p^{0}}{2} C_{I J K} \re z^J \re z^K \left( \frac{p^I}{p^0} - \frac23 \re z^I \right) \, , \\
 q_{I} & = \e^{2 \left( V - U \right)} \tilde{q}'^{0} \cG_{I J} \left( \frac{\tilde{q}'^{J}}{\tilde{q}'^{0}} - \re z^J \right)
 - p^{0} C_{I J K} \re z^K \left( \frac{p^J}{p^0} - \frac12 \re z^J \right) \, .
\eea

In the gauged STU model $(n_{\rm V} = 3)$ the only nonvanishing intersection numbers are $C_{1 2 3} = 1$ (and cyclic permutations).
Hence,
\bea
 \cG_{I J} =
 \begin{cases}
 - \frac{\im z^1 \im z^2 \im z^3}{ \left( \im z^I \right)^2} &\text{if } I = J \\
 0 &\text{otherwise}
 \end{cases} \, ,
\eea
and
\bea
 \label{em:charges:4d:scalars}
 q_{0} & = - \e^{2 \left( V - U \right)} \tilde{q}'^{0} \im z^{123}
 \left[ 1 - \sum_{I = 1}^{3} \frac{\re z^I}{\left( \im z^I \right)^2}
 \left( \frac{\tilde{q}'^{I}}{\tilde{q}'^{0}} - \re z^I \right) \right]
 - 2 p^0 \re z^{123} + \sum_{\substack{ I < J \\ ( \neq K)}} \re z^I \re z^J p^K \, , \\
 q_I & = - \e^{2 \left( V - U \right)} \tilde{q}'^{0} \frac{\im z^{123}}{\left( \im z^I \right)^2}
 \left( \frac{\tilde{q}'^{I}}{\tilde{q}'^{0}} - \re z^I \right)
 + \frac{\re z^{123}}{\re z^I} \bigg( p^0 - \sum_{J (\neq I)} \frac{p^J}{\re z^J} \bigg) \, ,
\eea
where we employed the following notation
\bea
 \im z^{123} \equiv \im z^1 \im z^2 \im z^3 \, , \qquad \qquad
 \re z^{123} \equiv \re z^1 \re z^2 \re z^3 \, .
\eea

%


\chapter{Supersymmetric Casimir energy}
\label{AppC}

The partition function of an $\cN = 1$ supersymmetric gauge theory with a non-anomalous $\U(1)_R$ symmetry
on a Hopf surface $\cH_{p, q} \simeq S^3 \times S^1$, with a complex structure characterized by two parameters $p$, $q$, can be written as
\bea\label{Zindex}
 Z \left[ \cH_{p,q} \right] = \e^{ - E_{\text{susy}}} I(p, q) \, .
\eea
Here, $I(p, q)$ is the superconformal index \cite{Kinney:2005ej,Romelsberger:2005eg} 
\bea I(p, q)= {\rm  Tr} (-1)^F  p^{  h_1 + r/2} q^{h_2+r/2} \, ,\eea
where $h_1$ and $h_2$ are the generators of rotation in orthogonal planes and $r$ is the superconformal R-symmetry.
$E_{\text{susy}}$ is the supersymmetric Casimir energy \cite{Assel:2014paa,Lorenzen:2014pna,Assel:2015nca},
\bea\label{Ecas}
 E_{\text{susy}} (b_1 , b_2) = \frac{4 \pi}{27} \frac{\left( |b_1| + |b_2| \right)^3}{|b_1| |b_2|} \left( 3 c - 2 a \right) - \frac{4 \pi}{3} \left( |b_1| + |b_2| \right) \left( c - a \right) \, ,
\eea 
where $p = \e^{- 2 \pi |b_1|}$, $q = \e^{- 2 \pi |b_2|}$, and $a$, $c$ are the central charges of the four-dimensional $\cN = 1$ theory.
We can extrapolate this result to include flavor symmetries by considering $a$ and $c$ as trial central charges,
depending on a set of chemical potentials $\hat\Delta_I$,
\bea\label{central charges}
 a(\hat \Delta_I) = \frac{9}{32}  {\rm  Tr} R(\hat \Delta_I)^3 - \frac{3}{32}{\rm  Tr} R(\hat \Delta_I) \, ,
 \qquad  c(\hat \Delta_I) = \frac{9}{32}  {\rm  Tr} R(\hat \Delta_I)^3 - \frac{5}{32}{\rm  Tr} R(\hat \Delta_I) \, ,
\eea
where $R$ is a choice of $\U(1)_R$ symmetry and the trace is over all fermions in the theory.
The supersymmetric Casimir energy can be also interpreted as the vacuum 
expectation value $\langle H_{\rm susy} \rangle$ of the Hamiltonian which generates time translations \cite{Assel:2015nca}.
It can also be obtained by integrating the anomaly polynomials in six dimensions \cite{Bobev:2015kza}.
In particular, the supersymmetric Casimir energy for $\cN = 4$ SYM with $\SU(N)$ gauge group,
where $a=c$, reads\footnote{The R-symmetry 't Hooft anomalies for $\cN = 4$ SYM are given by
$\Tr R(\hat \Delta_I)=(N^2-1) [\sum_{I=1}^3 (\hat\Delta_I-1)+1]=0$ and
$\Tr R(\hat \Delta_I)^3 = (N^2-1)  [\sum_{I=1}^3 (\hat\Delta_I-1)^3+1] =3 (N^2-1) \hat \Delta_1 \hat \Delta_2 \hat \Delta_3$.
The $\hat\Delta_I$ are the R-symmetries of the three adjoint chiral multiplets of $\cN=4$ SYM and
satisfy (\ref{constraint2:N=4}).}
\bea
 \label{Casimir:SUSY2:N=4}
 E_{\text{susy}} = \frac{\pi}{8} \left( N^2 - 1 \right) \frac{\left( |b_1| + |b_2| \right)^3}{|b_1| |b_2|} \hat \Delta_1 \hat \Delta_2 \hat \Delta_3 \, ,
\eea
where $\hat \Delta_{1, 2, 3}$ are the chemical potentials for the Cartan generators of the R-symmetry, fulfilling the constraint
\bea
 \label{constraint2:N=4}
 \hat \Delta_1 + \hat \Delta_2 + \hat \Delta_3 = 2 \, . 
\eea
We can rewrite  Eq.\,\eqref{Casimir:SUSY2:N=4} in the notation used in the main text as 
\bea
  \label{modified:Casimir:SUSY2:N=4}
 E_{\text{susy}} = - i \pi  (N^2-1)  \frac{\Delta_1 \Delta_2 \Delta_3}{\omega_1 \omega_2} 
 \, ,
\eea
where we defined $\Delta_I =i  \left( |b_1| + |b_2| \right) \hat \Delta_I / 2$ and $\omega_i=-i |b_i|$. 
They satisfy the constraint
\bea
 \label{modified:constraint2:N=4}
 \Delta_1 + \Delta_2 + \Delta_3 +\omega_1+\omega_2 = 0 \, .
\eea
When extended to the complex plane $\Delta_I$ and $\omega_i$ are defined modulo 1. 

The constraints \eqref{modified:constraint2:N=4} and \eqref{constraint:N=4} are closely related to the absence of pure and mixed gauge anomalies. Let us briefly see why.
Consider again a generic $\cN=1$ supersymmetric gauge theory. The constraint \eqref{modified:constraint2:N=4} is modified to
\bea
 \label{modified:constraint3:N=4}
\sum_{I\in W_a} \Delta_I +\sum_{i=1}^2 \omega_i = 0 \, ,
\eea
where $\Delta_I$ is the chemical potential for the $I$th field and the sum is restricted  to the fields entering the superpotential term $W_a$.
There is one constraint for each monomial $W_a$ in the superpotential of the theory.
The path integral on $S^3\times S^1$ localizes to a matrix model where one-loop determinants must be regularized.
$E_{\text{susy}}$  arises from the following regularization factors \cite{Assel:2015nca,Bobev:2015kza}
\bea \Psi(u)=  i \pi \frac{(\sum_{i=1}^2 \omega_i)^3}{24 \omega_1\omega_2}\left [ ( \hat u -1)^3 -\frac{\sum_{i=1}^2 \omega_i^2}{(\sum_{i=1}^2 \omega_i)^2} (\hat u-1)\right] \, ,\eea
where $\hat u =- 2 u/\sum_{i=1}^2\omega_i$ and $u$ is a chemical potential for gauge or flavor symmetries.
More precisely, denoting with $z$ the gauge variables, $E_{\text{susy}}$ gets an additive contribution $\Psi(z+\Delta_I)$ from each chiral multiplet and $-\Psi(z)$ for each vector multiplet. One can pull out regularization factors  from the matrix model only if they are independent of the gauge variables.
The constraint \eqref{modified:constraint3:N=4} implies that $\sum_{I\in W_a} \hat \Delta_I=2$, where again $\hat \Delta_I =- 2 \Delta_I/\sum_{i=1}^2\omega_i$.
Hence $\hat\Delta_I$ parameterizes a trial R-symmetry of the theory. One can see that, if all (pure and mixed) gauge anomalies cancel,  $E_{\text{susy}}$ is indeed independent of $z$ if the chemical potential satisfy \eqref{modified:constraint3:N=4} \cite{Bobev:2015kza}. The final result for 
$E_{\text{susy}}$ is then easily computed to be,
\bea\label{ES}  E_{\text{susy}} =  i \pi \frac{(\sum_{i=1}^2 \omega_i)^3}{24 \omega_1\omega_2}\left [ {\rm  Tr} R(\hat \Delta_I)^3 -\frac{\sum_{i=1}^2 \omega_i^2}{(\sum_{i=1}^2 \omega_i)^2} {\rm  Tr} R(\hat \Delta_I)\right] \, . \eea
Using \eqref{central charges} and $\omega_i=-i |b_i|$ we recover \eqref{Ecas}.

A similar quantity constructed from 
\bea \tilde\Psi(u)= i \pi \frac{(\sum_{i=1}^3 \omega_i)^3}{24 \omega_1\omega_2\omega_3}\left [ ( \hat u -1)^3 -\frac{\sum_{i=1}^3 \omega_i^2}{(\sum_{i=1}^3 \omega_i)^2} (\hat u-1)\right] \, ,\eea
appears in the modular transformation of the integrand of the matrix model \cite{Brunner:2016nyk}.\footnote{This can be
expressed in terms of Bernoulli polynomials as $\tilde\Psi(u)=  \frac{\pi i}{3} B_{3,3}(u; \omega_i)$  \cite{Brunner:2016nyk}.}
Here the angular momentum fugacities are written as $p=\e^{-2\pi i \omega_1/\omega_3},
q=\e^{-2\pi i \omega_2/\omega_3}$,  the gauge fugacity as $\e^{2\pi u/\omega_3}$, and
$\hat u =- 2 u/\sum_{i=1}^3\omega_i$. In a theory with no gauge anomalies, the sum of $\tilde\Psi(z+\Delta_I)$ from each chiral multiplet and $-\tilde\Psi(z)$ for each vector multiplet is  independent of the
gauge variables $z$  and can be pulled out of the integral if the constraint
\bea
 \label{modified:constraint4:N=4}
\sum_{I\in W_a} \Delta_I +\sum_{i=1}^3 \omega_i = 0 \, ,
\eea
is fulfilled \cite{Spiridonov:2012ww,Brunner:2016nyk}. The sum then becomes
\bea
 \label{FS}
 \varphi=  i \pi \frac{(\sum_{i=1}^3 \omega_i)^3}{24 \omega_1\omega_2\omega_3}
 \left [ {\rm  Tr} R(\hat \Delta_I)^3 -
 \frac{\sum_{i=1}^3 \omega_i^2}{(\sum_{i=1}^3 \omega_i)^2} {\rm  Tr} R(\hat \Delta_I)\right] \, .
\eea
This observation has been used in  \cite{Brunner:2016nyk} to write the index as
\bea
 \label{FS2}
 I(p,q,\Delta_I) = \e^{\varphi} I^{{\rm mod}} (\omega_i,\Delta_I) \, ,
\eea
where $I^{{\rm mod}}$ is a modified matrix model depending on modified elliptic gamma functions. 
As noticed in \cite{Brunner:2016nyk}, the supersymmetric Casimir energy can be extracted from $\varphi$ in the low-temperature limit. 
Indeed, in the limit $\omega_3 = 1/\beta \to 0$, $\varphi$  becomes exactly $\beta E_{\text{susy}}$.
It is interesting to notice that, for $\cN = 4$ SYM,  by setting one of the $\omega_i$
equal to $-1$, $\varphi$ reduces to our quantity \eqref{modified:Casimir:SUSY:N=4}
and \eqref{modified:constraint4:N=4}  to the constraint \eqref{constraint:N=4}. 
This is an observation to explore further. In particular,
it would be interesting to understand the physical meaning of $I^{{\rm mod}}$ and  the decomposition
\eqref{FS2}.

\end{appendices}

\begin{spacing}{0.9}


\bibliographystyle{ytphys}
\cleardoublepage
\bibliography{References/references} 



\end{spacing}

\printthesisindex 

\end{document}